
\documentclass[11pt]{article}
\usepackage{inputenc}
\usepackage[authoryear]{natbib}
\usepackage{geometry}
\usepackage{setspace}
\usepackage{amsmath}
\usepackage{amsfonts}
\usepackage{amssymb}
\usepackage[normalem]{ulem}
\usepackage{multirow}
\usepackage{array}
\usepackage{rotating}
\usepackage{version}
\usepackage{graphicx}
\usepackage{url}
\usepackage{color}
\usepackage{float}
\usepackage{caption}
\usepackage{subcaption}

\setcounter{MaxMatrixCols}{10}

\geometry{verbose, tmargin=2.5cm, bmargin=2.5cm, lmargin=2.5cm, rmargin=2.5cm}
\onehalfspacing
\numberwithin{equation}{section}
\newtheorem{theorem}{Theorem}
\newtheorem{remark}{Remark}
\newtheorem{assumption}{Assumption}
\newtheorem{lemma}{Lemma}

\newenvironment{proof}[1][Proof]{\noindent\textbf{#1.} }{\ \rule{0.5em}{0.5em}}

\typeout{TCILATEX Macros for Scientific Word 5.0 <13 Feb 2003>.}
\typeout{NOTICE:  This macro file is NOT proprietary and may be 
freely copied and distributed.}
\makeatletter

\ifx\pdfoutput\relax\let\pdfoutput=\undefined\fi
\newcount\msipdfoutput
\ifx\pdfoutput\undefined
\else
 \ifcase\pdfoutput
 \else 
    \msipdfoutput=1
    \ifx\paperwidth\undefined
    \else
      \ifdim\paperheight=0pt\relax
      \else
        \pdfpageheight\paperheight
      \fi
      \ifdim\paperwidth=0pt\relax
      \else
        \pdfpagewidth\paperwidth
      \fi
    \fi
  \fi  
\fi

%

%
\newcount\@hour\newcount\@minute\chardef\@x10\chardef\@xv60
\def\tcitime{
\def\@time{%
  \@minute\time\@hour\@minute\divide\@hour\@xv
  \ifnum\@hour<\@x 0\fi\the\@hour:%
  \multiply\@hour\@xv\advance\@minute-\@hour
  \ifnum\@minute<\@x 0\fi\the\@minute
  }}%


\def\x@hyperref#1#2#3{%
   \catcode`\~ = 12
   \catcode`\$ = 12
   \catcode`\_ = 12
   \catcode`\# = 12
   \catcode`\& = 12
   \y@hyperref{#1}{#2}{#3}%
}

\def\y@hyperref#1#2#3#4{%
   #2\ref{#4}#3
   \catcode`\~ = 13
   \catcode`\$ = 3
   \catcode`\_ = 8
   \catcode`\# = 6
   \catcode`\& = 4
}

\@ifundefined{hyperref}{\let\hyperref\x@hyperref}{}
\@ifundefined{msihyperref}{\let\msihyperref\x@hyperref}{}

\@ifundefined{qExtProgCall}{\def\qExtProgCall#1#2#3#4#5#6{\relax}}{}
%
%
%
%
\def\QCTOpt[#1]#2{%
  \def\QCTOptB{#1}
  \def\QCTOptA{#2}
}
\def\QCTNOpt#1{%
  \def\QCTOptA{#1}
  \let\QCTOptB\empty
}
\def\Qct{%
  \@ifnextchar[{%
    \QCTOpt}{\QCTNOpt}
}
\def\QCBOpt[#1]#2{%
  \def\QCBOptB{#1}%
  \def\QCBOptA{#2}%
}
\def\QCBNOpt#1{%
  \def\QCBOptA{#1}%
  \let\QCBOptB\empty
}
\def\Qcb{%
  \@ifnextchar[{%
    \QCBOpt}{\QCBNOpt}%
}
\def\PrepCapArgs{%
  \ifx\QCBOptA\empty
    \ifx\QCTOptA\empty
      {}%
    \else
      \ifx\QCTOptB\empty
        {\QCTOptA}%
      \else
        [\QCTOptB]{\QCTOptA}%
      \fi
    \fi
  \else
    \ifx\QCBOptA\empty
      {}%
    \else
      \ifx\QCBOptB\empty
        {\QCBOptA}%
      \else
        [\QCBOptB]{\QCBOptA}%
      \fi
    \fi
  \fi
}
\newcount\GRAPHICSTYPE
\GRAPHICSTYPE=\z@
\def\GRAPHICSPS#1{%
 \ifcase\GRAPHICSTYPE
   \special{ps: #1}%
 \or
   \special{language "PS", include "#1"}%
 \fi
}%
%
%
%

\def\graffile#1#2#3#4{%
    \bgroup
	   \@inlabelfalse
       \leavevmode
       \@ifundefined{bbl@deactivate}{\def~{\string~}}{\activesoff}%
        \raise -#4 \BOXTHEFRAME{%
           \hbox to #2{\raise #3\hbox to #2{\null #1\hfil}}}%
    \egroup
}%
%
\def\draftbox#1#2#3#4{%
 \leavevmode\raise -#4 \hbox{%
  \frame{\rlap{\protect\tiny #1}\hbox to #2%
   {\vrule height#3 width\z@ depth\z@\hfil}%
  }%
 }%
}%
\newcount\@msidraft
\@msidraft=\z@
\let\nographics=\@msidraft
\newif\ifwasdraft
\wasdraftfalse

\def\GRAPHIC#1#2#3#4#5{%
   \ifnum\@msidraft=\@ne\draftbox{#2}{#3}{#4}{#5}%
   \else\graffile{#1}{#3}{#4}{#5}%
   \fi
}
\def\addtoLaTeXparams#1{%
    \edef\LaTeXparams{\LaTeXparams #1}}%
%

\newif\ifBoxFrame \BoxFramefalse
\newif\ifOverFrame \OverFramefalse
\newif\ifUnderFrame \UnderFramefalse

\def\BOXTHEFRAME#1{%
   \hbox{%
      \ifBoxFrame
         \frame{#1}%
      \else
         {#1}%
      \fi
   }%
}

\def\doFRAMEparams#1{\BoxFramefalse\OverFramefalse\UnderFramefalse\readFRAMEparams#1\end}%
\def\readFRAMEparams#1{%
 \ifx#1\end%
  \let\next=\relax
  \else
  \ifx#1i\dispkind=\z@\fi
  \ifx#1d\dispkind=\@ne\fi
  \ifx#1f\dispkind=\tw@\fi
  \ifx#1t\addtoLaTeXparams{t}\fi
  \ifx#1b\addtoLaTeXparams{b}\fi
  \ifx#1p\addtoLaTeXparams{p}\fi
  \ifx#1h\addtoLaTeXparams{h}\fi
  \ifx#1X\BoxFrametrue\fi
  \ifx#1O\OverFrametrue\fi
  \ifx#1U\UnderFrametrue\fi
  \ifx#1w
    \ifnum\@msidraft=1\wasdrafttrue\else\wasdraftfalse\fi
    \@msidraft=\@ne
  \fi
  \let\next=\readFRAMEparams
  \fi
 \next
 }%
%

\def\IFRAME#1#2#3#4#5#6{%
      \bgroup
      \let\QCTOptA\empty
      \let\QCTOptB\empty
      \let\QCBOptA\empty
      \let\QCBOptB\empty
      #6%
      \parindent=0pt
      \leftskip=0pt
      \rightskip=0pt
      \setbox0=\hbox{\QCBOptA}%
      \@tempdima=#1\relax
      \ifOverFrame
          \typeout{This is not implemented yet}%
          \show\HELP
      \else
         \ifdim\wd0>\@tempdima
            \advance\@tempdima by \@tempdima
            \ifdim\wd0 >\@tempdima
               \setbox1 =\vbox{%
                  \unskip\hbox to \@tempdima{\hfill\GRAPHIC{#5}{#4}{#1}{#2}{#3}\hfill}%
                  \unskip\hbox to \@tempdima{\parbox[b]{\@tempdima}{\QCBOptA}}%
               }%
               \wd1=\@tempdima
            \else
               \textwidth=\wd0
               \setbox1 =\vbox{%
                 \noindent\hbox to \wd0{\hfill\GRAPHIC{#5}{#4}{#1}{#2}{#3}\hfill}\\%
                 \noindent\hbox{\QCBOptA}%
               }%
               \wd1=\wd0
            \fi
         \else
            \ifdim\wd0>0pt
              \hsize=\@tempdima
              \setbox1=\vbox{%
                \unskip\GRAPHIC{#5}{#4}{#1}{#2}{0pt}%
                \break
                \unskip\hbox to \@tempdima{\hfill \QCBOptA\hfill}%
              }%
              \wd1=\@tempdima
           \else
              \hsize=\@tempdima
              \setbox1=\vbox{%
                \unskip\GRAPHIC{#5}{#4}{#1}{#2}{0pt}%
              }%
              \wd1=\@tempdima
           \fi
         \fi
         \@tempdimb=\ht1
         \advance\@tempdimb by -#2
         \advance\@tempdimb by #3
         \leavevmode
         \raise -\@tempdimb \hbox{\box1}%
      \fi
      \egroup%
}%
%
\def\DFRAME#1#2#3#4#5{%
  \vspace\topsep
  \hfil\break
  \bgroup
     \leftskip\@flushglue
	 \rightskip\@flushglue
	 \parindent\z@
	 \parfillskip\z@skip
     \let\QCTOptA\empty
     \let\QCTOptB\empty
     \let\QCBOptA\empty
     \let\QCBOptB\empty
	 \vbox\bgroup
        \ifOverFrame 
           #5\QCTOptA\par
        \fi
        \GRAPHIC{#4}{#3}{#1}{#2}{\z@}%
        \ifUnderFrame 
           \break#5\QCBOptA
        \fi
	 \egroup
  \egroup
  \vspace\topsep
  \break
}%
%
\def\FFRAME#1#2#3#4#5#6#7{%
  \@ifundefined{floatstyle}
    {
     \begin{figure}[#1]%
    }
    {
	 \ifx#1h
      \begin{figure}[H]%
	 \else
      \begin{figure}[#1]%
	 \fi
	}
  \let\QCTOptA\empty
  \let\QCTOptB\empty
  \let\QCBOptA\empty
  \let\QCBOptB\empty
  \ifOverFrame
    #4
    \ifx\QCTOptA\empty
    \else
      \ifx\QCTOptB\empty
        \caption{\QCTOptA}%
      \else
        \caption[\QCTOptB]{\QCTOptA}%
      \fi
    \fi
    \ifUnderFrame\else
      \label{#5}%
    \fi
  \else
    \UnderFrametrue%
  \fi
  \begin{center}\GRAPHIC{#7}{#6}{#2}{#3}{\z@}\end{center}%
  \ifUnderFrame
    #4
    \ifx\QCBOptA\empty
      \caption{}%
    \else
      \ifx\QCBOptB\empty
        \caption{\QCBOptA}%
      \else
        \caption[\QCBOptB]{\QCBOptA}%
      \fi
    \fi
    \label{#5}%
  \fi
  \end{figure}%
 }%
%
%
%
%
%
\newcount\dispkind%

\def\makeactives{
  \catcode`\"=\active
  \catcode`\;=\active
  \catcode`\:=\active
  \catcode`\'=\active
  \catcode`\~=\active
}
\bgroup
   \makeactives
   \gdef\activesoff{%
      \def"{\string"}%
      \def;{\string;}%
      \def:{\string:}%
      \def'{\string'}%
      \def~{\string~}%
    }
\egroup

\def\FRAME#1#2#3#4#5#6#7#8{%
 \bgroup
 \ifnum\@msidraft=\@ne
   \wasdrafttrue
 \else
   \wasdraftfalse%
 \fi
 \def\LaTeXparams{}%
 \dispkind=\z@
 \def\LaTeXparams{}%
 \doFRAMEparams{#1}%
 \ifnum\dispkind=\z@\IFRAME{#2}{#3}{#4}{#7}{#8}{#5}\else
  \ifnum\dispkind=\@ne\DFRAME{#2}{#3}{#7}{#8}{#5}\else
   \ifnum\dispkind=\tw@
    \edef\@tempa{\noexpand\FFRAME{\LaTeXparams}}%
    \@tempa{#2}{#3}{#5}{#6}{#7}{#8}%
    \fi
   \fi
  \fi
  \ifwasdraft\@msidraft=1\else\@msidraft=0\fi{}%
  \egroup
 }%
%

\def\TEXUX#1{"texux"}

%
%
%
%
%
%
%
%
\def\func#1{\mathop{\rm #1}\nolimits}%
%

%
\long\def\QQQ#1#2{%
     \long\expandafter\def\csname#1\endcsname{#2}}%
\@ifundefined{QTP}{\def\QTP#1{}}{}
\@ifundefined{QEXCLUDE}{\def\QEXCLUDE#1{}}{}
\@ifundefined{Qlb}{}{}
\@ifundefined{Qlt}{}{}
\long\def\QQA#1#2{}%
\def\QTR#1#2{{\csname#1\endcsname {#2}}}%
\def\EXPAND#1[#2]#3{}%
\def\NOEXPAND#1[#2]#3{}%
\def\LaTeXparent#1{}%
\def\ChildStyles#1{}%
\def\ChildDefaults#1{}%
\def\QTagDef#1#2#3{}%

\@ifundefined{correctchoice}{}{}
\@ifundefined{HTML}{\def\HTML#1{\relax}}{}
\@ifundefined{TCIIcon}{\def\TCIIcon#1#2#3#4{\relax}}{}
\if@compatibility
  \typeout{Not defining UNICODE  U or CustomNote commands for LaTeX 2.09.}
\else
  \providecommand{\UNICODE}[2][]{\protect\rule{.1in}{.1in}}
  \providecommand{\U}[1]{\protect\rule{.1in}{.1in}}
  
\fi

\@ifundefined{lambdabar}{
      
   }{}

%
\@ifundefined{StyleEditBeginDoc}{}{}
%
\def\QQfnmark#1{\footnotemark}

%
%
\@ifundefined{TCIMAKEINDEX}{}{\makeindex}%
%
\@ifundefined{abstract}{%
 \def\abstract{%
  \if@twocolumn
   \section*{Abstract (Not appropriate in this style!)}%
   \else \small 
   \begin{center}{\bf Abstract\vspace{-.5em}\vspace{\z@}}\end{center}%
   \quotation 
   \fi
  }%
 }{%
 }%
\@ifundefined{endabstract}{\def\endabstract
  {\if@twocolumn\else\endquotation\fi}}{}%
\@ifundefined{maketitle}{\def\maketitle#1{}}{}%
\@ifundefined{affiliation}{\def\affiliation#1{}}{}%
\@ifundefined{proof}{}{}%
\@ifundefined{endproof}{}{}%
\@ifundefined{newfield}{\def\newfield#1#2{}}{}%
\@ifundefined{chapter}{\def\chapter#1{\par(Chapter head:)#1\par }%
 \newcount\c@chapter}{}%
\@ifundefined{part}{\def\part#1{\par(Part head:)#1\par }}{}%
\@ifundefined{section}{\def\section#1{\par(Section head:)#1\par }}{}%
\@ifundefined{subsection}{\def\subsection#1%
 {\par(Subsection head:)#1\par }}{}%
\@ifundefined{subsubsection}{\def\subsubsection#1%
 {\par(Subsubsection head:)#1\par }}{}%
\@ifundefined{paragraph}{\def\paragraph#1%
 {\par(Subsubsubsection head:)#1\par }}{}%
\@ifundefined{subparagraph}{\def\subparagraph#1%
 {\par(Subsubsubsubsection head:)#1\par }}{}%
\@ifundefined{therefore}{}{}%
\@ifundefined{backepsilon}{}{}%
\@ifundefined{yen}{}{}%
\@ifundefined{registered}{%
   \def\registered{\relax\ifmmode{}\r@gistered
                    \else$\m@th\r@gistered$\fi}%
 \def\r@gistered{^{\ooalign
  {\hfil\raise.07ex\hbox{$\scriptstyle\rm\text{R}$}\hfil\crcr
  \mathhexbox20D}}}}{}%
\@ifundefined{Eth}{}{}%
\@ifundefined{eth}{}{}%
\@ifundefined{Thorn}{}{}%
\@ifundefined{thorn}{}{}%
%
\@ifundefined{degree}{}{}%
%
\newdimen\theight
\@ifundefined{Column}{\def\Column{%
 \vadjust{\setbox\z@=\hbox{\scriptsize\quad\quad tcol}%
  \theight=\ht\z@\advance\theight by \dp\z@\advance\theight by \lineskip
  \kern -\theight \vbox to \theight{%
   \rightline{\rlap{\box\z@}}%
   \vss
   }%
  }%
 }}{}%
\@ifundefined{qed}{\def\qed{%
 \ifhmode\unskip\nobreak\fi\ifmmode\ifinner\else\hskip5\p@\fi\fi
 \hbox{\hskip5\p@\vrule width4\p@ height6\p@ depth1.5\p@\hskip\p@}%
 }}{}%
\@ifundefined{cents}{}{}%
\@ifundefined{tciLaplace}{}{}%
\@ifundefined{tciFourier}{}{}%
\@ifundefined{textcurrency}{}{}%
\@ifundefined{texteuro}{}{}%
\@ifundefined{euro}{}{}%
\@ifundefined{textfranc}{}{}%
\@ifundefined{textlira}{}{}%
\@ifundefined{textpeseta}{}{}%
\@ifundefined{miss}{\def\miss{\hbox{\vrule height2\p@ width 2\p@ depth\z@}}}{}%
\@ifundefined{vvert}{}{}
\@ifundefined{tcol}{\def\tcol#1{{\baselineskip=6\p@ \vcenter{#1}} \Column}}{}%
\@ifundefined{dB}{}{}
\@ifundefined{mB}{}{}
\@ifundefined{nB}{}{}
\@ifundefined{note}{}{}%
\def\newfmtname{LaTeX2e}
%
\ifx\fmtname\newfmtname
  \DeclareOldFontCommand{\rm}{\normalfont\rmfamily}{\mathrm}
  \DeclareOldFontCommand{\sf}{\normalfont\sffamily}{\mathsf}
  \DeclareOldFontCommand{\tt}{\normalfont\ttfamily}{\mathtt}
  \DeclareOldFontCommand{\bf}{\normalfont\bfseries}{\mathbf}
  \DeclareOldFontCommand{\it}{\normalfont\itshape}{\mathit}
  \DeclareOldFontCommand{\sl}{\normalfont\slshape}{\@nomath\sl}
  \DeclareOldFontCommand{\sc}{\normalfont\scshape}{\@nomath\sc}
\fi

%

\def\alpha{{\Greekmath 010B}}%
\def\beta{{\Greekmath 010C}}%
\def\gamma{{\Greekmath 010D}}%
\def\delta{{\Greekmath 010E}}%
\def\epsilon{{\Greekmath 010F}}%
\def\zeta{{\Greekmath 0110}}%
\def\eta{{\Greekmath 0111}}%
\def\theta{{\Greekmath 0112}}%
\def\iota{{\Greekmath 0113}}%
\def\kappa{{\Greekmath 0114}}%
\def\lambda{{\Greekmath 0115}}%
\def\mu{{\Greekmath 0116}}%
\def\nu{{\Greekmath 0117}}%
\def\xi{{\Greekmath 0118}}%
\def\pi{{\Greekmath 0119}}%
\def\rho{{\Greekmath 011A}}%
\def\sigma{{\Greekmath 011B}}%
\def\tau{{\Greekmath 011C}}%
\def\upsilon{{\Greekmath 011D}}%
\def\phi{{\Greekmath 011E}}%
\def\chi{{\Greekmath 011F}}%
\def\psi{{\Greekmath 0120}}%
\def\omega{{\Greekmath 0121}}%
\def\varepsilon{{\Greekmath 0122}}%
\def\vartheta{{\Greekmath 0123}}%
\def\varpi{{\Greekmath 0124}}%
\def\varrho{{\Greekmath 0125}}%
\def\varsigma{{\Greekmath 0126}}%
\def\varphi{{\Greekmath 0127}}%

\def\nabla{{\Greekmath 0272}}
\def\FindBoldGroup{%
   {\setbox0=\hbox{$\mathbf{x\global\edef\theboldgroup{\the\mathgroup}}$}}%
}

\def\Greekmath#1#2#3#4{%
    \if@compatibility
        \ifnum\mathgroup=\symbold
           \mathchoice{\mbox{\boldmath$\displaystyle\mathchar"#1#2#3#4$}}%
                      {\mbox{\boldmath$\textstyle\mathchar"#1#2#3#4$}}%
                      {\mbox{\boldmath$\scriptstyle\mathchar"#1#2#3#4$}}%
                      {\mbox{\boldmath$\scriptscriptstyle\mathchar"#1#2#3#4$}}%
        \else
           \mathchar"#1#2#3#4%
        \fi 
    \else 
        \FindBoldGroup
        \ifnum\mathgroup=\theboldgroup 
           \mathchoice{\mbox{\boldmath$\displaystyle\mathchar"#1#2#3#4$}}%
                      {\mbox{\boldmath$\textstyle\mathchar"#1#2#3#4$}}%
                      {\mbox{\boldmath$\scriptstyle\mathchar"#1#2#3#4$}}%
                      {\mbox{\boldmath$\scriptscriptstyle\mathchar"#1#2#3#4$}}%
        \else
           \mathchar"#1#2#3#4%
        \fi     	    
	  \fi}

\newif\ifGreekBold  \GreekBoldfalse
\let\SAVEPBF=\pbf
\def\pbf{\GreekBoldtrue\SAVEPBF}%

\@ifundefined{theorem}{\newtheorem{theorem}{Theorem}}{}
\@ifundefined{lemma}{}{}
\@ifundefined{corollary}{\newtheorem{corollary}[theorem]{Corollary}}{}
\@ifundefined{conjecture}{}{}
\@ifundefined{proposition}{}{}
\@ifundefined{axiom}{}{}
\@ifundefined{remark}{\newtheorem{remark}{Remark}}{}
\@ifundefined{example}{}{}
\@ifundefined{exercise}{}{}
\@ifundefined{definition}{}{}

\@ifundefined{mathletters}{%
  \newcounter{equationnumber}  
  \def\mathletters{%
     \addtocounter{equation}{1}
     \edef\@currentlabel{\theequation}%
     \setcounter{equationnumber}{\c@equation}
     \setcounter{equation}{0}%
     \edef\theequation{\@currentlabel\noexpand\alph{equation}}%
  }
  
}{}

\@ifundefined{BibTeX}{%
    \def\BibTeX{{\rm B\kern-.05em{\sc i\kern-.025em b}\kern-.08em
                 T\kern-.1667em\lower.7ex\hbox{E}\kern-.125emX}}}{}%
\@ifundefined{AmS}%
    {\def\AmS{{\protect\usefont{OMS}{cmsy}{m}{n}%
                A\kern-.1667em\lower.5ex\hbox{M}\kern-.125emS}}}{}%
\@ifundefined{AmSTeX}{}{}%
%

\def\@@eqncr{\let\@tempa\relax
    \ifcase\@eqcnt \def\@tempa{& & &}\or \def\@tempa{& &}%
      \else \def\@tempa{&}\fi
     \@tempa
     \if@eqnsw
        \iftag@
           \@taggnum
        \else
           \@eqnnum\stepcounter{equation}%
        \fi
     \fi
     \global\tag@false
     \global\@eqnswtrue
     \global\@eqcnt\z@\cr}

\def\TCItag{\@ifnextchar*{\@TCItagstar}{\@TCItag}}
\def\@TCItag#1{%
    \global\tag@true
    \global\def\@taggnum{(#1)}}
\def\@TCItagstar*#1{%
    \global\tag@true
    \global\def\@taggnum{#1}}
%
%
%
%
%
%
%
%
%
%
%
%
%
%
%
%
%
%
%
%
%
%
%
%
%
%
%
%
%
%
%
%
%
%
%
%
%
%
%
%
%
%
%
%
%
%
%
%
%
%
%
%
%
%
%
%
%
%

\if@compatibility\else
  \RequirePackage{amsmath}
\fi

\def\ExitTCILatex{\makeatother }

\bgroup
\ifx\ds@amstex\relax
   \message{amstex already loaded}\aftergroup\ExitTCILatex
\else
   \@ifpackageloaded{amsmath}%
      {\if@compatibility\message{amsmath already loaded}\fi\aftergroup\ExitTCILatex}
      {}
   \@ifpackageloaded{amstex}%
      {\if@compatibility\message{amstex already loaded}\fi\aftergroup\ExitTCILatex}
      {}
   \@ifpackageloaded{amsgen}%
      {\if@compatibility\message{amsgen already loaded}\fi\aftergroup\ExitTCILatex}
      {}
\fi
\egroup


\typeout{TCILATEX defining AMS-like constructs in LaTeX 2.09 COMPATIBILITY MODE}
%
%
\let\DOTSI\relax
\def\RIfM@{\relax\ifmmode}%
\def\FN@{\futurelet\next}%
\newcount\intno@
\def\iint{\DOTSI\intno@\tw@\FN@\ints@}%
\def\iiint{\DOTSI\intno@\thr@@\FN@\ints@}%
\def\iiiint{\DOTSI\intno@4 \FN@\ints@}%
\def\idotsint{\DOTSI\intno@\z@\FN@\ints@}%
\def\ints@{\findlimits@\ints@@}%
\newif\iflimtoken@
\newif\iflimits@
\def\findlimits@{\limtoken@true\ifx\next\limits\limits@true
 \else\ifx\next\nolimits\limits@false\else
 \limtoken@false\ifx\ilimits@\nolimits\limits@false\else
 \ifinner\limits@false\else\limits@true\fi\fi\fi\fi}%
\def\multint@{\int\ifnum\intno@=\z@\intdots@                          
 \else\intkern@\fi                                                    
 \ifnum\intno@>\tw@\int\intkern@\fi                                   
 \ifnum\intno@>\thr@@\int\intkern@\fi                                 
 \int}
\def\multintlimits@{\intop\ifnum\intno@=\z@\intdots@\else\intkern@\fi
 \ifnum\intno@>\tw@\intop\intkern@\fi
 \ifnum\intno@>\thr@@\intop\intkern@\fi\intop}%
\def\intic@{%
    \mathchoice{\hskip.5em}{\hskip.4em}{\hskip.4em}{\hskip.4em}}%
\def\negintic@{\mathchoice
 {\hskip-.5em}{\hskip-.4em}{\hskip-.4em}{\hskip-.4em}}%
\def\ints@@{\iflimtoken@                                              
 \def\ints@@@{\iflimits@\negintic@
   \mathop{\intic@\multintlimits@}\limits                             
  \else\multint@\nolimits\fi                                          
  \eat@}
 \else                                                                
 \def\ints@@@{\iflimits@\negintic@
  \mathop{\intic@\multintlimits@}\limits\else
  \multint@\nolimits\fi}\fi\ints@@@}%
\def\intkern@{\mathchoice{\!\!\!}{\!\!}{\!\!}{\!\!}}%
\def\plaincdots@{\mathinner{\cdotp\cdotp\cdotp}}%
\def\intdots@{\mathchoice{\plaincdots@}%
 {{\cdotp}\mkern1.5mu{\cdotp}\mkern1.5mu{\cdotp}}%
 {{\cdotp}\mkern1mu{\cdotp}\mkern1mu{\cdotp}}%
 {{\cdotp}\mkern1mu{\cdotp}\mkern1mu{\cdotp}}}%
%
%
%
\def\RIfM@{\relax\protect\ifmmode}
\def\text{\RIfM@\expandafter\text@\else\expandafter\mbox\fi}
\let\nfss@text\text
\def\text@#1{\mathchoice
   {\textdef@\displaystyle\f@size{#1}}%
   {\textdef@\textstyle\tf@size{\firstchoice@false #1}}%
   {\textdef@\textstyle\sf@size{\firstchoice@false #1}}%
   {\textdef@\textstyle \ssf@size{\firstchoice@false #1}}%
   \glb@settings}

\def\textdef@#1#2#3{\hbox{{%
                    \everymath{#1}%
                    \let\f@size#2\selectfont
                    #3}}}
\newif\iffirstchoice@
\firstchoice@true
%
%
\def\Let@{\relax\iffalse{\fi\let\\=\cr\iffalse}\fi}%
\def\vspace@{\def\vspace##1{\crcr\noalign{\vskip##1\relax}}}%
\def\multilimits@{\bgroup\vspace@\Let@
 \baselineskip\fontdimen10 \scriptfont\tw@
 \advance\baselineskip\fontdimen12 \scriptfont\tw@
 \lineskip\thr@@\fontdimen8 \scriptfont\thr@@
 \lineskiplimit\lineskip
 \vbox\bgroup\ialign\bgroup\hfil$\m@th\scriptstyle{##}$\hfil\crcr}%
\def\Sb{_\multilimits@}%
\def\endSb{\crcr\egroup\egroup\egroup}%
\def\Sp{^\multilimits@}%

%
%
%
\newdimen\ex@
\ex@.2326ex
\def\rightarrowfill@#1{$#1\m@th\mathord-\mkern-6mu\cleaders
 \hbox{$#1\mkern-2mu\mathord-\mkern-2mu$}\hfill
 \mkern-6mu\mathord\rightarrow$}%
\def\leftarrowfill@#1{$#1\m@th\mathord\leftarrow\mkern-6mu\cleaders
 \hbox{$#1\mkern-2mu\mathord-\mkern-2mu$}\hfill\mkern-6mu\mathord-$}%
\def\leftrightarrowfill@#1{$#1\m@th\mathord\leftarrow
\mkern-6mu\cleaders
 \hbox{$#1\mkern-2mu\mathord-\mkern-2mu$}\hfill
 \mkern-6mu\mathord\rightarrow$}%
\def\overrightarrow{\mathpalette\overrightarrow@}%
\def\overrightarrow@#1#2{\vbox{\ialign{##\crcr\rightarrowfill@#1\crcr
 \noalign{\kern-\ex@\nointerlineskip}$\m@th\hfil#1#2\hfil$\crcr}}}%

\def\overleftarrow{\mathpalette\overleftarrow@}%
\def\overleftarrow@#1#2{\vbox{\ialign{##\crcr\leftarrowfill@#1\crcr
 \noalign{\kern-\ex@\nointerlineskip}$\m@th\hfil#1#2\hfil$\crcr}}}%
\def\overleftrightarrow{\mathpalette\overleftrightarrow@}%
\def\overleftrightarrow@#1#2{\vbox{\ialign{##\crcr
   \leftrightarrowfill@#1\crcr
 \noalign{\kern-\ex@\nointerlineskip}$\m@th\hfil#1#2\hfil$\crcr}}}%
\def\underrightarrow{\mathpalette\underrightarrow@}%
\def\underrightarrow@#1#2{\vtop{\ialign{##\crcr$\m@th\hfil#1#2\hfil
  $\crcr\noalign{\nointerlineskip}\rightarrowfill@#1\crcr}}}%

\def\underleftarrow{\mathpalette\underleftarrow@}%
\def\underleftarrow@#1#2{\vtop{\ialign{##\crcr$\m@th\hfil#1#2\hfil
  $\crcr\noalign{\nointerlineskip}\leftarrowfill@#1\crcr}}}%
\def\underleftrightarrow{\mathpalette\underleftrightarrow@}%
\def\underleftrightarrow@#1#2{\vtop{\ialign{##\crcr$\m@th
  \hfil#1#2\hfil$\crcr
 \noalign{\nointerlineskip}\leftrightarrowfill@#1\crcr}}}%

\def\qopnamewl@#1{\mathop{\operator@font#1}\nlimits@}
\let\nlimits@\displaylimits
\def\setboxz@h{\setbox\z@\hbox}

\def\varlim@#1#2{\mathop{\vtop{\ialign{##\crcr
 \hfil$#1\m@th\operator@font lim$\hfil\crcr
 \noalign{\nointerlineskip}#2#1\crcr
 \noalign{\nointerlineskip\kern-\ex@}\crcr}}}}

 \def\rightarrowfill@#1{\m@th\setboxz@h{$#1-$}\ht\z@\z@
  $#1\copy\z@\mkern-6mu\cleaders
  \hbox{$#1\mkern-2mu\box\z@\mkern-2mu$}\hfill
  \mkern-6mu\mathord\rightarrow$}
\def\leftarrowfill@#1{\m@th\setboxz@h{$#1-$}\ht\z@\z@
  $#1\mathord\leftarrow\mkern-6mu\cleaders
  \hbox{$#1\mkern-2mu\copy\z@\mkern-2mu$}\hfill
  \mkern-6mu\box\z@$}

\def\projlim{\qopnamewl@{proj\,lim}}
\def\injlim{\qopnamewl@{inj\,lim}}
\def\varinjlim{\mathpalette\varlim@\rightarrowfill@}
\def\varprojlim{\mathpalette\varlim@\leftarrowfill@}
\def\varliminf{\mathpalette\varliminf@{}}
\def\varliminf@#1{\mathop{\underline{\vrule\@depth.2\ex@\@width\z@
   \hbox{$#1\m@th\operator@font lim$}}}}
\def\varlimsup{\mathpalette\varlimsup@{}}
\def\varlimsup@#1{\mathop{\overline
  {\hbox{$#1\m@th\operator@font lim$}}}}

%
%
%
%
%
%
\begingroup \catcode `|=0 \catcode `[= 1
\catcode`]=2 \catcode `\{=12 \catcode `\}=12
\catcode`\\=12 
|gdef|@alignverbatim#1\end{align}[#1|end[align]]
|gdef|@salignverbatim#1\end{align*}[#1|end[align*]]

|gdef|@alignatverbatim#1\end{alignat}[#1|end[alignat]]
|gdef|@salignatverbatim#1\end{alignat*}[#1|end[alignat*]]

|gdef|@xalignatverbatim#1\end{xalignat}[#1|end[xalignat]]
|gdef|@sxalignatverbatim#1\end{xalignat*}[#1|end[xalignat*]]

|gdef|@gatherverbatim#1\end{gather}[#1|end[gather]]
|gdef|@sgatherverbatim#1\end{gather*}[#1|end[gather*]]

|gdef|@gatherverbatim#1\end{gather}[#1|end[gather]]
|gdef|@sgatherverbatim#1\end{gather*}[#1|end[gather*]]

|gdef|@multilineverbatim#1\end{multiline}[#1|end[multiline]]
|gdef|@smultilineverbatim#1\end{multiline*}[#1|end[multiline*]]

|gdef|@arraxverbatim#1\end{arrax}[#1|end[arrax]]
|gdef|@sarraxverbatim#1\end{arrax*}[#1|end[arrax*]]

|gdef|@tabulaxverbatim#1\end{tabulax}[#1|end[tabulax]]
|gdef|@stabulaxverbatim#1\end{tabulax*}[#1|end[tabulax*]]

|endgroup

\def\align{\@verbatim \frenchspacing\@vobeyspaces \@alignverbatim
You are using the "align" environment in a style in which it is not defined.}

\@namedef{align*}{\@verbatim\@salignverbatim
You are using the "align*" environment in a style in which it is not defined.}
\expandafter\let\csname endalign*\endcsname =\endtrivlist

\def\alignat{\@verbatim \frenchspacing\@vobeyspaces \@alignatverbatim
You are using the "alignat" environment in a style in which it is not defined.}

\@namedef{alignat*}{\@verbatim\@salignatverbatim
You are using the "alignat*" environment in a style in which it is not defined.}
\expandafter\let\csname endalignat*\endcsname =\endtrivlist

\def\xalignat{\@verbatim \frenchspacing\@vobeyspaces \@xalignatverbatim
You are using the "xalignat" environment in a style in which it is not defined.}

\@namedef{xalignat*}{\@verbatim\@sxalignatverbatim
You are using the "xalignat*" environment in a style in which it is not defined.}
\expandafter\let\csname endxalignat*\endcsname =\endtrivlist

\def\gather{\@verbatim \frenchspacing\@vobeyspaces \@gatherverbatim
You are using the "gather" environment in a style in which it is not defined.}

\@namedef{gather*}{\@verbatim\@sgatherverbatim
You are using the "gather*" environment in a style in which it is not defined.}
\expandafter\let\csname endgather*\endcsname =\endtrivlist

\def\multiline{\@verbatim \frenchspacing\@vobeyspaces \@multilineverbatim
You are using the "multiline" environment in a style in which it is not defined.}

\@namedef{multiline*}{\@verbatim\@smultilineverbatim
You are using the "multiline*" environment in a style in which it is not defined.}
\expandafter\let\csname endmultiline*\endcsname =\endtrivlist

\def\arrax{\@verbatim \frenchspacing\@vobeyspaces \@arraxverbatim
You are using a type of "array" construct that is only allowed in AmS-LaTeX.}

\def\tabulax{\@verbatim \frenchspacing\@vobeyspaces \@tabulaxverbatim
You are using a type of "tabular" construct that is only allowed in AmS-LaTeX.}

\@namedef{arrax*}{\@verbatim\@sarraxverbatim
You are using a type of "array*" construct that is only allowed in AmS-LaTeX.}
\expandafter\let\csname endarrax*\endcsname =\endtrivlist

\@namedef{tabulax*}{\@verbatim\@stabulaxverbatim
You are using a type of "tabular*" construct that is only allowed in AmS-LaTeX.}
\expandafter\let\csname endtabulax*\endcsname =\endtrivlist


 \def\endequation{%
     \ifmmode\ifinner 
      \iftag@
        \addtocounter{equation}{-1} 
        $\hfil
           \displaywidth\linewidth\@taggnum\egroup \endtrivlist
        \global\tag@false
        \global\@ignoretrue   
      \else
        $\hfil
           \displaywidth\linewidth\@eqnnum\egroup \endtrivlist
        \global\tag@false
        \global\@ignoretrue 
      \fi
     \else   
      \iftag@
        \addtocounter{equation}{-1} 
        \eqno \hbox{\@taggnum}
        \global\tag@false%
        $$\global\@ignoretrue
      \else
        \eqno \hbox{\@eqnnum}
        $$\global\@ignoretrue
      \fi
     \fi\fi
 } 

 \newif\iftag@ \tag@false
 
 \def\TCItag{\@ifnextchar*{\@TCItagstar}{\@TCItag}}
 \def\@TCItag#1{%
     \global\tag@true
     \global\def\@taggnum{(#1)}}
 \def\@TCItagstar*#1{%
     \global\tag@true
     \global\def\@taggnum{#1}}

  \@ifundefined{tag}{
     \def\tag{\@ifnextchar*{\@tagstar}{\@tag}}
     \def\@tag#1{%
         \global\tag@true
         \global\def\@taggnum{(#1)}}
     \def\@tagstar*#1{%
         \global\tag@true
         \global\def\@taggnum{#1}}
  }{}

\def\binom#1#2{{#1 \choose #2}}%
%
%

\makeatother

\begin{document}

\title{Identification and Estimation of Categorical Random Coefficient Models\thanks{We would like to thank Timothy Armstrong, Hidehiko
Ichimura, Esfandiar Maasoumi, Geert Ridder, Ron Smith and Hayun Song for helpful
comments, and the editor and two anonymous referees for constructive comments and
suggestions.}
}
\author{ Zhan Gao\thanks{
Ph.D. student, Department of Economics, University of Southern California,
3620 South Vermont Avenue, Los Angeles, CA 90089, USA. Email: \texttt{%
zhangao@usc.edu}.} \and M. Hashem Pesaran\thanks{%
1. Department of Economics, University of Southern California, 3620 South
Vermont Avenue, Los Angeles, CA 90089, USA. Email: \texttt{pesaran@usc.edu}.
2. Trinity College, Cambridge, United Kingdom} }
\maketitle

\begin{abstract}
This paper proposes a linear categorical random coefficient model, in which
the random coefficients follow parametric categorical distributions. The
distributional parameters are identified based on a linear recurrence
structure of moments of the random coefficients. A Generalized Method of
Moments estimation procedure is proposed also employed by Peter Schmidt and
his coauthors to address heterogeneity in time effects in panel data models.
Using Monte Carlo simulations, we find that moments of the random
coefficients can be estimated reasonably accurately, but large samples are
required for estimation of the parameters of the underlying categorical
distribution. The utility of the proposed estimator is illustrated by
estimating the distribution of returns to education in the U.S. by gender
and educational levels. We find that rising heterogeneity between
educational groups is mainly due to the increasing returns to education for
those with postsecondary education, whereas within group heterogeneity has
been rising mostly in the case of individuals with high school or less
education.
\end{abstract}

\bigskip{}

\textbf{Keywords:} Random coefficient models, categorical distribution,
return to education

\textbf{JEL Code:} C01, C21, C13, C46, J30

\thispagestyle{empty}

\newpage \setcounter{page}{1}

\section{Introduction}

Random coefficient models have been used extensively in time series,
cross-section and panel regressions. \citet{nicholls198516} consider the
estimation of first and second moments of the random coefficient $\beta _{i}$
and the error term $u_{i}$, in a linear regression model. In the seminal
work, \citet{beran1992estimating} establish the conditions of identifying
and estimating the distribution of $\beta _{i}$ and $u_{i}$
non-parametrically. The baseline linear univariate regression in %
\citet{beran1992estimating} has been extended in non-parametric framework by %
\citet*{beran1993semiparametric, beran1994minimum, beran1996nonparametric,
hoderlein2010analyzing, hoderlein2017triangular} and %
\citet{breunig2018specification}, to just name a few. \citet{hsiao2008random}
survey random coefficient models in linear panel data models.

In some econometric applications, %
\citet{hausman1981exact,hausman1995nonparametric,foster_hahn2000consistent}
for examples, the main interest is to estimate the consumer surplus
distribution based on a linear demand system where the coefficient
associated with the price is random. In such settings, the distribution of
the random coefficients is needed when computing the consumer surplus
function, and the non-parametric estimation is more general, flexible and
suitable for the purpose. On the other hand, parametric models may be
favored in applications in which the implied economic meaning of the
distribution of the random coefficients is of interests. Examples include
estimation of the return to education %
\citep{lemieux2006postnber,lemieux2006postsecondary} and the labor supply
equation \citep*{bick2020hours}.

In this paper, we consider a linear regression model with a random
coefficient $\beta _{i}$ that is assumed to follow a categorical
distribution, i.e. $\beta _{i}$ has a discrete support $\left\{
b_{1},b_{2},\cdots ,b_{K}\right\} $, and $\beta _{i}=b_{k}$ with probability 
$\pi _{k}$. The discretization of the support of the random coefficient $%
\beta _{i}$ naturally corresponds to the interpretation that each individual
belongs to a certain category, or group, $k$ with probability $\pi _{k}$.
Compared to a non-parametric distribution with continuous support, assuming
a categorical distribution allows us not only to model the heterogeneous
responses across individuals but also to interpret the results with sharper
economic meaning. As we will illustrate in the empirical application in
Section \ref{sec:Empirical-Application}, it is hard to clearly interpret the
distribution of returns to education without imposing some form of
parametric restrictions.

In addition, with the categorical distribution imposed, the identification
and estimation of the distribution of $\beta _{i}$ do not rely on
identically distributed error terms $u_{i}$ and regressors $\mathbf{w}_i$,
as shown in Section \ref{sec:Identification} and \ref{sec:Estimation}.
Heterogeneously generated errors can be allowed, which is important in many
empirical applications. To the best of our knowledge, this is the first
identification result in linear random coefficient model without a strict
IID setting.

The identification of the distribution of $\beta _{i}$ is established in
this paper based on the identification of the moments of $\beta _{i}$, which
coincides with the identification condition in \citet{beran1992estimating}
that the distribution of $\beta _{i}$ is uniquely determined by its moments,
which is assumed to exist up to an arbitrary order. Since under our setup
the distribution of $\beta _{i}$ is parametrically specified, the moments of 
$\beta _{i}$ exist and can be derived explicitly. The parameters of the
assumed categorical distribution can then be uniquely determined by a system
of equations in terms of the moments, as in Theorem \ref%
{prop:identification_of_beta_L_H}. The parameters of the categorical
distribution are then estimated consistently by the generalized method of
moments (GMM). The estimation procedure based on moment conditions shares
similar spirits as in \citet*{ahn2001gmm,ahn2013panel} in which Peter
Schmidt and coauthors study panel data models with interactive effects where
they allow for the time effects to vary across individual units. Comparing
to alternative non-parametric random coefficient models, the standard GMM
estimation is easy to implement, and the identified categorical structure
has a clear economic interpretation.

Using Monte Carlo (MC) simulations, we find that moments of the random
coefficients can be estimated reasonably accurately, but large samples are
required for estimation of the parameters of the underlying categorical
distributions. Our theoretical and MC results also suggest that our method
is suitable when the number heterogeneous coefficients and the number of
categories are small (2 or 3). With the number of categories rising the
burden on identification from the moments to the parameters of the
categorical distribution also rises rapidly. The quality of identification
also deteriorates as we need to rely on higher and higher moments to
identify a larger number of categories, since the information content of the
moments tend to decline with their order.

The proposed method is also illustrated by providing estimates of the
distribution of returns to education in the U.S. by gender and educational
levels, using the May and Outgoing Rotation Group (ORG) supplements of the
Current Population Survey (CPS) data. Comparing the estimates obtained over
the sub-periods 1973-75 and 2001-03, we find that rising between group
heterogeneity is largely due to rising returns to education in the case of
individuals with postsecondary education, whilst within group heterogeneity
has been rising in the case of individuals with high school or less
education.

\textbf{Related Literature:} This paper draws mainly upon the literature of
random coefficient models. As already mentioned, the main body of the recent
literature is focused on non-parametric identification and estimation.
Following \citet{beran1992estimating}, \citet{beran1993semiparametric} and %
\citet{beran1994minimum} extend the model to a linear semi-parametric model
with a multivariate setup and propose a minimum distance estimator for the
unknown distribution. \citet{foster_hahn2000consistent} extend the
identification results in \citet{beran1992estimating} and apply the minimum
distance estimator to a gasoline consumption data to estimate the consumer
surplus function. \citet*{beran1996nonparametric} and \citet*{%
hoderlein2010analyzing} propose kernel density estimators based on the Radon
inverse transformation in linear models.

In addition to linear models, \citet{ichimura1998maximum} and %
\citet{gautier2013nonparametric} incorporate the random coefficients in
binary choice models. \citet{gautier2011triangular} and \citet*{%
hoderlein2017triangular} consider triangular models with random coefficients
allowing for causal inference. \citet{matzkin2012identification} and %
\citet{masten2018random} discuss the identification of random coefficients
in simultaneous equation models. \citet{breunig2018specification} propose a
general specification test in a variety of random coefficient models. Random
coefficients are also widely studied in panel data models, for example %
\citet{hsiao2008random} and \citet{arellano2012identifying}

\medskip

The rest of the paper is organized as follows: Section \ref%
{sec:Identification} establishes the main identification results. The GMM
estimation procedure is proposed and discussed in Section \ref%
{sec:Estimation}. An extension to a multivariate setting is considered in
Section \ref{sec:Extensions}. Small sample properties of the proposed
estimator are investigated in Section \ref{sec:Monte-Carlo-Simulation},
using Monte Carlo techniques under different regressor and error
distributions. Section \ref{sec:Empirical-Application} presents and
discusses our empirical application to the return to education. Section \ref%
{sec:Conclusion} provides some concluding remarks and suggestions for future
work. Technical proofs are given in Appendix \ref{sec:Proofs}. 

\medskip

\noindent \textbf{Notations:} Largest and smallest eigenvalues of the $%
p\times p$ matrix $\mathbf{A}=\left( a_{ij}\right) $ are denoted by $\lambda
_{\max }\left( \mathbf{A}\right) $ and $\lambda _{\min }\left( \mathbf{A}%
\right) ,$ respectively, its spectral norm by $\left\Vert \mathbf{A}%
\right\Vert =\lambda _{\max }^{1/2}\left( \mathbf{A}^{\prime }\mathbf{A}%
\right) $, $\mathbf{A}\succ 0$ means that $\mathbf{A}$ is positive definite, 
$\mathrm{vech\left( \mathbf{A}\right) }$ denotes the vectorization of
distinct elements of $\mathbf{A}$, $\mathbf{0}$ denotes zero matrix (or
vector). For $\mathbf{a}\in \mathbb{R}^{p}$, $\mathrm{diag}\left( \mathbf{a}%
\right) $ represents the diagonal matrix with elements of $%
a_{1},a_{2},\cdots ,a_{p}$. For random variables (or vectors) $u$ and $v$, $%
u\perp v$ represents $u$ is independent of $v$. We use $c$ ($C$) to denote
some small (large) positive constants. For a differentiable real-valued
function $f\left( \mathbf{\theta }\right) $, $\nabla _{\mathbf{\theta }%
}f\left( \mathbf{\theta }\right) $ denotes the gradient vector. Operator $%
\rightarrow _{p}$ denotes convergence in probability, and $\rightarrow _{d}$
convergence in distribution. The symbols $O(1)$, and $O_{p}(1)$ denote
asymptotically bounded deterministic and random sequences, respectively.

\section{Categorical random coefficient model\label{sec:Identification}}

We suppose the single cross-section observations, $\left\{ y_{i},x_{i},%
\mathbf{z}_{i}\right\} _{i=1}^{n}$, follow the categorical random
coefficient model 
\begin{equation}
y_{i}=x_{i}\beta _{i}+\mathbf{z}_{i}^{\prime }\mathbf{\gamma }+u_{i},
\label{eq:basic_model}
\end{equation}%
where $y_{i},x_{i}\in \mathbb{R}$, $\mathbf{z}_{i}\in \mathbb{R}^{p_z},$ and 
$\beta _{i}\in \left\{ b_{1},b_{2},\cdots ,b_{K}\right\} $ admits the
following $K$-categorical distribution, 
\begin{equation}
\beta _{i}=%
\begin{cases}
b_{1}, & \text{w.p. }\pi _{1}, \\ 
b_{2}, & \text{w.p. }\pi _{2}, \\ 
\vdots & \vdots \\ 
b_{K}, & \text{w.p. }\pi _{K},%
\end{cases}
\label{eq:category_dist}
\end{equation}%
w.p. denotes ``with probability'', $\pi_{k}\in \left( 0,1\right) $, $%
\sum_{k=1}^{K}\pi _{k}=1$, $b_{1}<b_{2}<\cdots <b_{K}$, $\mathbf{\gamma }\in 
\mathbb{R}^{p_z}$ is homogeneous and $\mathbf{z}_{i}$ could include an
intercept term as its first element. It is assumed that $\beta _{i}\perp 
\mathbf{w}_i = \left( x_{i},\mathbf{z}_{i}^{\prime }\right) ^{\prime }$, and
the idiosyncratic errors $u_{i}$ are independently distributed with mean $0$.

\begin{remark}
The model can be extended to allow $\mathbf{x}_{i},\mathbf{\beta }_{i}\in 
\mathbb{R}^{p}$, with $\mathbf{\beta }_{i}$ following a multivariate
categorical distribution, though with more complicated notations. We will
consider possible extensions in Section \ref{sec:Extensions}.
\end{remark}

\begin{remark}
Since we consider a pure cross-sectional setting, the key assumption that $%
\beta _{i}$ and $x_{i}$ are independently distributed cannot be relaxed.
Allowing $\beta _{i}$ to vary with $w_{i}$, without any further
restrictions, is tantamount to assuming $y_{i}$ is a general function of $%
w_{i}$, in effect rendering a nonparametric specification.
\end{remark}

\begin{remark}
The number of categories $K$ is assumed to be fixed and known. Conditions $%
\sum_{k=1}^{K}\pi _{k}=1$, $b_{1}<b_{2}<\cdots <b_{K},$ and $\pi _{k}\in
\left( 0,1\right) $ together are sufficient for the existence of $K$
categories. For example, if $b_{k}=b_{k^{\prime }}$, then we can merge
categories $k$ and $k^{\prime }$, and the number of categories reduces to $%
K-1$. Similarly, if $\pi _{k}=0$ for some $k$, then category $k$ can be
deleted, and the number of categories is again reduced to $K-1$. Information
criteria can be used to determine $K$, but this will not be pursued in this
paper. Model specification tests could also be considered. See, for
examples, \citet{andrews2001testing} and \citet{breunig2018specification}.
\end{remark}

In the rest of this section, we focus on the model (\ref{eq:basic_model})
and establish the conditions under which the distribution of $\beta _{i}$ is
identified.

\subsection{Identifying the moments of $\protect\beta _{i}$\label%
{subsec:identify_moments_beta}}

\begin{assumption}
\label{assu:identification_regularity_condition}

\begin{enumerate}
\item[(a)] (i) $u_{i}$ is distributed independently of $\mathbf{w}%
_{i}=\left( x_{i},\mathbf{z}_{i}^{\prime }\right) ^{\prime }$ and $\beta
_{i} $. (ii) $\sup_{i}\mathrm{E}\left( \left\vert u_{i}^{r}\right\vert
\right) <C$, $r=1,2,\cdots ,2K-1$. (iii) $n^{-1}\sum_{i=1}^{n} u_i^4 =
O_p(1) $.

\item[(b)] (i) Let $\mathbf{Q}_{n,ww}=n^{-1}\sum_{i=1}^{n}\mathbf{w}_{i}%
\mathbf{w}_{i}^{\prime }$, and $\mathbf{q}_{n,wy}=n^{-1}\sum_{i=1}^{n}%
\mathbf{w}_{i}y_{i}$. Then $\left\Vert \mathrm{E}\left( \mathbf{Q}%
_{n,ww}\right) \right\Vert <C<\infty $, and $\left\Vert \mathrm{E}\left( 
\mathbf{q}_{n,wy}\right) \right\Vert <C<\infty $, and there exists $n_{0}\in 
\mathbb{N}$ such that for all $n\geq n_{0}$, 
\begin{equation*}
0<c<\lambda _{\min }\left( \mathbf{Q}_{n,ww}\right) <\lambda _{\max }\left( 
\mathbf{Q}_{n,ww}\right) <C<\infty .
\end{equation*}%
(ii) $\sup_{i} \mathrm{E}\left( \left\Vert \mathbf{w}_{i}\right\Vert
^{r}\right) <C<\infty $, $r=1,2,\cdots ,4K-2$.\newline
(iii) $n^{-1} \sum_{i=1}^{n} \left\Vert \mathbf{w}_{i} \right\Vert^{4} =
O_p(1)$.

\item[(c)] $\left\Vert \mathbf{Q}_{n,ww}-\mathrm{E} \left( \mathbf{Q}%
_{n,ww}\right) \right\Vert =O_p\left( n^{-1/2}\right)$, $\left\Vert \mathbf{q%
}_{n,wy}- \mathrm{E} \left( \mathbf{q}_{n,wy}\right) \right\Vert =O_p\left(
n^{-1/2}\right)$, and 
\begin{equation*}
\mathrm{E} \left( \mathbf{Q}_{n,ww}\right) =n^{-1}\sum_{i=1}^{n}\mathrm{E}
\left( \mathbf{w}_{i}\mathbf{w}_{i}^{\prime }\right) \succ 0.
\end{equation*}

\item[(d)] $\left\Vert \mathrm{E} \left( \mathbf{Q}_{n,ww}\right) -\mathbf{Q}%
_{ww}\right\Vert =O\left( n^{-1/2}\right) $, $\left\Vert \mathrm{E} \left( 
\mathbf{q}_{n,wy}\right) -\mathbf{q}_{wy}\right\Vert =O\left(
n^{-1/2}\right) $, where $\mathbf{q}_{wy} = \lim\limits_{n \to \infty} 
\mathrm{E} \left( \mathbf{q}_{n, wy} \right)$, $\mathbf{Q}_{ww} =
\lim\limits_{n \to \infty} \mathrm{E} \left( \mathbf{Q}_{n, ww} \right) $
and $\mathbf{Q}_{ww}\succ 0$.
\end{enumerate}
\end{assumption}

\begin{remark}
Part (a) of Assumption \ref{assu:identification_regularity_condition}
relaxes the assumption that $u_{i}$ is identically distributed, and allows
for heterogeneously generated errors. For identification of the distribution
of $\beta _{i}$, we require $u_{i}$ to be distributed independently of $%
\mathbf{w}_{i}$ and $\beta _{i}$, which rules out conditional
heteroskedasticity. However, estimation and inference involving $\mathrm{E}%
\left( \beta _{i}\right) $ and $\mathbf{\gamma }$ can be carried out in
presence of conditionally error heteroskedastic, as shown in Theorem \ref%
{lem:gamma_est_consistency}. Parts (c) and (d) of Assumption \ref%
{assu:identification_regularity_condition} relax the condition that $\mathbf{%
w}_{i}$ is identically distributed across $i$. As we proceed, only $\beta
_{i}$, whose distribution is of interest, is assumed to be IID across $i$,
and it is not required for $\mathbf{w}_{i}$ and $u_{i}$ to be identically
distributed over $i$.
\end{remark}

\begin{remark}
\label{rem:weak_crosssectional_depen_condition} The high level conditions in
Assumption \ref{assu:identification_regularity_condition}, concerning the
convergence in probability of averages such as $Q_{n,ww}=n^{-1}\sum_{i=1}^{n}%
\mathbf{w}_{i}\mathbf{w}_{i}^{\prime }$, can be verified under weak
cross-sectional dependence. Let $f_{i}=f\left( \mathbf{w}_{i},\beta
_{i},u_{i}\right) $ be a generic function of $\mathbf{w}_{i}$, $\beta _{i}$
and $u_{i}$.\footnote{$f_{i}$ is assumed to be a scalar, and we can apply
the analysis element-by-element to a matrix, for example $\mathbf{w}_{i}%
\mathbf{w}_{i}^{\prime }$.} Assume that $\sup_{i}\mathrm{E}\left(
f_{i}^{2}\right) <C$, and $\sup_{j}\sum_{i=1}^{n}\left\vert \mathrm{cov}%
\left( f_{i},f_{j}\right) \right\vert <C$, for some fixed $C<\infty $. Then, 
\begin{equation*}
\mathrm{var}\left( \frac{1}{n}\sum_{i=1}^{n}f_{i}\right) \leq \frac{1}{n^{2}}%
\sum_{i=1}^{n}\sum_{j=1}^{n}\left\vert \mathrm{cov}\left( f_{i},f_{j}\right)
\right\vert \leq \frac{1}{n}\sup_{j}\sum_{i=1}^{n}\left\vert \mathrm{cov}%
\left( f_{i},f_{j}\right) \right\vert \leq \frac{C}{n}.
\end{equation*}%
By Chebyshev's inequality, for any $\varepsilon >0$, we have $M_{\varepsilon
}>\sqrt{C/\varepsilon }$ such that 
\begin{equation*}
\Pr \left( \sqrt{n}\left\vert \frac{1}{n}\sum_{i=1}^{n}\left[ f_{i}-\mathrm{E%
}\left( f_{i}\right) \right] \right\vert >M_{\varepsilon }\right) \leq \frac{%
n\mathrm{var}\left( n^{-1}\sum_{i=1}^{n}f_{i}\right) }{C}\varepsilon \leq
\varepsilon ,
\end{equation*}%
i.e. $n^{-1}\sum_{i=1}^{n}\left[ f_{i}-\mathrm{E}\left( f_{i}\right) \right]
=O_{p}\left( n^{-1/2}\right) $.
\end{remark}

Denote $\mathbf{\phi }_{i}=\left( \beta _{i},\mathbf{\gamma }^{\prime
}\right) ^{\prime }$ and $\mathbf{\phi }=\mathrm{E}\left( \mathbf{\phi }%
_{i}\right) =\left( \mathrm{E}\left( \beta _{i}\right) ,\mathbf{\gamma }%
^{\prime }\right) ^{\prime }$. Consider the moment condition, 
\begin{equation}
\mathrm{E}\left( \mathbf{w}_{i}y_{i}\right) =\mathrm{E}\left( \mathbf{w}_{i}%
\mathbf{w}_{i}^{\prime }\right) \mathbf{\phi },  \label{mc_y1_w1}
\end{equation}%
and sum \eqref{mc_y1_w1} over $i$ 
\begin{equation}
\frac{1}{n}\sum_{i=1}^{n}\mathrm{E}\left( \mathbf{w}_{i}y_{i}\right) =\left[ 
\frac{1}{n}\sum_{i=1}^{n}\mathrm{E}\left( \mathbf{w}_{i}\mathbf{w}%
_{i}^{\prime }\right) \right] \mathbf{\phi }.  \label{mc_y1w1_n}
\end{equation}%
Let $n\to \infty$, then $\mathbf{\phi }$ is identified by 
\begin{equation}
\mathbf{\phi } = \mathbf{Q}_{ww} ^{-1} \mathbf{q}_{wy},  \label{phi}
\end{equation}
under Assumption \ref{assu:identification_regularity_condition}.

\begin{assumption}
\label{assu:conv_moment} Let $\tilde{y}_{i}=y_{i}-\mathbf{z}_{i}^{\prime }%
\mathbf{\gamma }$.

\begin{enumerate}
\item[(a)] {$\left\vert n^{-1}\sum_{i=1}^{n}\mathrm{E}\left( \tilde{y}%
_{i}^{r}x_{i}^{s}\right) -\rho _{r,s}\right\vert =O\left( n^{-1/2}\right) ,$
and $\left\vert \rho _{r,s}\right\vert <\infty ,$ for $r,s=0,1,\cdots ,2K-1$%
. }

\item[(b)] {$\left\vert n^{-1}\sum_{i=1}^{n}\mathrm{E}\left(
u_{i}^{r}\right) -\sigma _{r}\right\vert =O\left( n^{-1/2}\right) ,$ and $%
\left\vert \sigma _{r}\right\vert <\infty $, for $r=2,3,\cdots ,2K-1$. }

\item[(c)] $n^{-1}\sum_{i=1}^{n}\left[ \mathrm{var}(x_{i}^{r})-\left( \rho
_{0,2r}-\rho _{0,r}^{2}\right) \right] =O\left( n^{-1/2}\right) $ where $%
\rho _{0,2r}-\rho _{0,r}^{2}>0,$ for $r=2,3,\cdots ,2K-1$.
\end{enumerate}
\end{assumption}

\begin{remark}
\label{heterogeniety of moments} The above assumption allows for a limited
degree of heterogeneity of the moments. As an example, let $\mathrm{E}\left(
u_{i}^{r}\right) =\sigma _{ir}$ and denote the heterogeneity of the $r^{th}$
moment of $u_{i}$ by $e_{ir}=\sigma _{ir}-\sigma _{r}$. Then 
\begin{equation*}
\left\vert n^{-1}\sum_{i=1}^{n}\mathrm{E}\left( u_{i}^{r}\right) -\sigma
_{r}\right\vert \leq n^{-1}\sum_{i=1}^{n}\left\vert e_{ir}\right\vert ,
\end{equation*}%
and condition (b)\ of Assumption \ref{assu:conv_moment} is met if $\
\sum_{i=1}^{n}\left\vert e_{ir}\right\vert =O(n^{\alpha _{r}})$ with $\alpha
_{r}<1/2$. $\alpha _{r}$ measures the degree of heterogeneity with $\alpha
_{r}=1$ representing the highest degree of heterogeneity. A similar idea is
used by \citet{pesaran2018pool} in their analysis of poolability in panel
data models.
\end{remark}

\begin{theorem}
\label{lem: identification_moments} Under Assumptions \ref%
{assu:identification_regularity_condition} and \ref{assu:conv_moment}, $%
\mathrm{E}\left( \beta _{i}^{r}\right) $ and $\sigma _{r}$, $r=2,3,\cdots
,2K-1$ are identified.
\end{theorem}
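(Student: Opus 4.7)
Since equation \eqref{phi} identifies $\mathbf{\phi}=(\mathrm{E}(\beta_i),\mathbf{\gamma}')'$ under Assumption \ref{assu:identification_regularity_condition}, the plan is to treat $\mathbf{\gamma}$ as known and to work with $\tilde y_i = y_i-\mathbf{z}_i'\mathbf{\gamma} = x_i\beta_i + u_i$, then to identify the higher moments $\mathrm{E}(\beta_i^r)$ and $\sigma_r$ for $r=2,\ldots,2K-1$ from the cross-moments $\rho_{r,s}$ and the regressor moments $\rho_{0,s}$ supplied by Assumption \ref{assu:conv_moment}.

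The key step is to derive an explicit recurrence for $\rho_{r,s}$ by binomial expansion. Writing $\tilde y_i^{\,r}x_i^{\,s}=(x_i\beta_i+u_i)^r x_i^{\,s}$, expanding, and using the independence structure $\beta_i\perp(x_i,u_i)$ and $u_i\perp x_i$ from Assumption \ref{assu:identification_regularity_condition}(a), the expectation of each summand factors as $\binom{r}{j}\mathrm{E}(\beta_i^{\,j})\mathrm{E}(x_i^{\,j+s})\mathrm{E}(u_i^{\,r-j})$. Averaging over $i$, passing to the limit via Assumption \ref{assu:conv_moment}, and collapsing the cross-heterogeneity terms as in Remark \ref{heterogeniety of moments} and Remark \ref{rem:weak_crosssectional_depen_condition}, I expect to arrive at
\begin{equation*}
\rho_{r,s} \;=\; \sum_{j=0}^{r}\binom{r}{j}\,\mathrm{E}(\beta_i^{\,j})\,\rho_{0,j+s}\,\sigma_{r-j},\qquad \sigma_0=1,\ \sigma_1=0.
\end{equation*}

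Given this recurrence, $\mathrm{E}(\beta_i^r)$ and $\sigma_r$ are identified inductively in $r$. The base cases $r=0,1$ are immediate, with $\mathrm{E}(\beta_i)$ supplied by \eqref{phi} and $\sigma_1=0$ from $\mathrm{E}(u_i)=0$. For the inductive step, suppose $\{\mathrm{E}(\beta_i^{\,r'}),\sigma_{r'}\}_{r'<r}$ have been identified; then every term with $0<j<r$ in the recurrence involves only previously identified quantities, and only $\sigma_r$ (from $j=0$) and $\mathrm{E}(\beta_i^{\,r})$ (from $j=r$) remain unknown. Reading off the recurrence at $s=0$ and at $s=r$ gives the $2\times 2$ linear system
\begin{equation*}
\begin{pmatrix} 1 & \rho_{0,r} \\ \rho_{0,r} & \rho_{0,2r} \end{pmatrix}
\begin{pmatrix} \sigma_r \\ \mathrm{E}(\beta_i^{\,r}) \end{pmatrix}
= \begin{pmatrix} \rho_{r,0} - R_r^{(0)} \\ \rho_{r,r} - R_r^{(r)} \end{pmatrix},
\end{equation*}
where $R_r^{(s)}$ collects the contributions of the already-identified lower-order moments. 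Its determinant is $\rho_{0,2r}-\rho_{0,r}^2>0$ by Assumption \ref{assu:conv_moment}(c), so the system has a unique solution and the induction closes at $r=2K-1$.

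The main obstacle, in my view, is not algebraic but analytic: justifying the factorization step $n^{-1}\sum_i \mathrm{E}(x_i^{\,j+s})\mathrm{E}(u_i^{\,r-j}) \to \rho_{0,j+s}\sigma_{r-j}$, since Assumption \ref{assu:conv_moment} directly supplies only the limits of the two individual averages. The gap is bridged by controlling the residual cross-term $n^{-1}\sum_i[\mathrm{E}(x_i^{\,j+s})-\rho_{0,j+s}][\mathrm{E}(u_i^{\,r-j})-\sigma_{r-j}]$ via a Cauchy--Schwarz bound, together with the limited-heterogeneity-of-moments conditions in Remark \ref{heterogeniety of moments}. The remaining bookkeeping for larger $r$ is tedious but mechanical.
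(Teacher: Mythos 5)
Your argument is correct and is essentially the paper's own proof: expand $\tilde{y}_i^{\,r}x_i^{\,s}$ binomially using $\beta_i\perp(x_i,u_i)$ and $u_i\perp x_i$, average over $i$, pass to the limit under Assumption \ref{assu:conv_moment}, and then solve sequentially in $r$ the two equations obtained at $s=0$ and $s=r$; your $2\times 2$ system is exactly \eqref{eq:mc_limit_r}--\eqref{eq:mc_limit_2r}, with invertibility coming from $\rho_{0,2r}-\rho_{0,r}^{2}>0$ in Assumption \ref{assu:conv_moment}(c), just as in the paper. The only divergence is your suggested Cauchy--Schwarz treatment of the terms $n^{-1}\sum_{i}\mathrm{E}(x_i^{a})\mathrm{E}(u_i^{b})$, which the stated assumptions do not support (they control average deviations, not average squared deviations); the paper instead writes this as $\left(n^{-1}\sum_{i}\mathrm{E}(x_i^{a})\right)\sigma_b$ plus a remainder handled via $\sup_i\left\vert\mathrm{E}(x_i^{a})\right\vert<C$ from Assumption \ref{assu:identification_regularity_condition}(b.ii) together with Assumption \ref{assu:conv_moment}(b), and that simpler bound is the route you should use.
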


\begin{proof}
For $r=2,\cdots ,2K-1$, 
\begin{align}
\mathrm{E}\left( \tilde{y}_{i}^{r}\right) & =\mathrm{E}\left(
x_{i}^{r}\right) \mathrm{E}\left( \beta _{i}^{r}\right) +\mathrm{E}\left(
u_{i}^{r}\right) +\sum_{q=2}^{r-1}\binom{r}{q}\mathrm{E}\left(
x_{i}^{r-q}\right) \mathrm{E}\left( u_{i}^{q}\right) \mathrm{E}\left( \beta
_{i}^{r-q}\right) ,  \label{eq:mc_n_r} \\
\mathrm{E}\left( \tilde{y}_{i}^{r}x_{i}^{r}\right) & =\mathrm{E}\left(
x_{i}^{2r}\right) \mathrm{E}\left( \beta _{i}^{r}\right) +\mathrm{E}\left(
x_{i}^{r}\right) \mathrm{E}\left( u_{i}^{r}\right) +\sum_{q=2}^{r-1}\binom{r%
}{q}\mathrm{E}\left( x_{i}^{2r-q}\right) \mathrm{E}\left( u_{i}^{q}\right) 
\mathrm{E}\left( \beta _{i}^{r-q}\right) .  \label{eq:mc_n_2r}
\end{align}%
where $\binom{r}{q}=\frac{r!}{q!(r-q)!}$ are binomial coefficients, for
non-negative integers $q\leq r$.

Sum over $i$, then by parts (a) and (b) of Assumption \ref{assu:conv_moment}%
, 
\begin{align}
\rho _{0,r}\mathrm{E}\left( \beta _{i}^{r}\right) +\sigma _{r}& =\rho
_{r,0}-\sum_{q=2}^{r-1}\binom{r}{q}\rho _{0,r-q}\sigma _{q}\mathrm{E}\left(
\beta _{i}^{r-q}\right) ,  \label{eq:mc_limit_r} \\
\rho _{0,2r}\mathrm{E}\left( \beta _{i}^{r}\right) +\rho _{0,r}\sigma _{r}&
=\rho _{r,r}-\sum_{q=2}^{r-1}\binom{r}{q}\rho _{0,2r-q}\sigma _{q}\mathrm{E}%
\left( \beta _{i}^{r-q}\right) .  \label{eq:mc_limit_2r}
\end{align}%
Derivation details are relegated to Appendix \ref{sec:Proofs}. By part (c)
of Assumption \ref{assu:conv_moment}, the matrix $%
\begin{pmatrix}
\rho _{0,r} & 1 \\ 
\rho _{0,2r} & \rho _{0,r}%
\end{pmatrix}%
$ is invertible for $r=2,3,\cdots ,2K-1$. As a result, we can sequentially
solve \eqref{eq:mc_limit_r} and \eqref{eq:mc_limit_2r} for $\mathrm{E}\left(
\beta _{i}^{r}\right) $ and $\sigma _{r}$, for $r=2,3,\cdots ,2K-1$.
\end{proof}

\subsection{Identifying the distribution of $\protect\beta _{i}$}

\citet[Theorem 2.1, pp. 1972]{beran1992estimating} prove the identification
of the distribution of the random coefficient, $\beta _{i}$, in a canonical
model without covariates, $z_{i}$, under the condition that the distribution
of $\beta _{i}$ is uniquely determined by its moments. We show the
identification of moments of $\beta_i$ holds more generally when $x_i$ and $%
u_i$ are not identically distributed and the distribution of $\beta_i$ is
identified if it follows a categorical distribution. Note that under (\ref%
{eq:category_dist}), 
\begin{equation}
\mathrm{E}\left( \beta _{i}^{r}\right) =\sum_{k=1}^{K}\pi
_{k}b_{k}^{r},\;r=0,1,2,\cdots ,2K-1,  \label{mbeta}
\end{equation}%
with $\mathrm{E}\left( \beta _{i}^{r}\right) $ identified under Assumption %
\ref{assu:identification_regularity_condition}. To identify $\mathbf{\pi }%
=\left( \pi _{1},\pi _{2},...,\pi _{K}\right) ^{\prime }$ and $\mathbf{b}%
=\left( b_{1},b_{2},...,b_{K}\right) ^{\prime }$, we need to verify that the
system of $2K$ equations in \eqref{mbeta} has a unique solution if $%
b_{1}<b_{2}<\cdots <b_{K}$, and $\pi _{k}\in \left( 0,1\right) $. In the
proof, we construct a linear recurrence relation and make use of the
corresponding characteristic polynomial.

\begin{theorem}
\label{prop:identification_of_beta_L_H} Consider the random coefficient
regression model \eqref{eq:basic_model}, suppose that Assumptions \ref%
{assu:identification_regularity_condition} and \ref{assu:conv_moment} hold.
Then $\mathbf{\theta }=\left( \mathbf{\pi }^{\prime },\mathbf{b}^{\prime
}\right) ^{\prime }$ is identified subject to $b_{1}<b_{2}<\cdots <b_{K}$
and $\pi _{k}\in \left( 0,1\right) $, for all $k=1,2,\cdots ,K$.
\end{theorem}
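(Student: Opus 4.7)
By Theorem \ref{lem: identification_moments}, the moments $m_r := \mathrm{E}(\beta_i^r) = \sum_{k=1}^{K}\pi_k b_k^r$ are identified for $r=0,1,\ldots,2K-1$ (with $m_0 = 1$ by definition). The task therefore reduces to showing that these $2K$ power-sum equations uniquely pin down the $2K$ unknowns $(\mathbf{\pi}, \mathbf{b})$ under the constraints $b_1 < \cdots < b_K$ and $\pi_k \in (0,1)$.

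The plan is to separate the identification of $\mathbf{b}$ from that of $\mathbf{\pi}$ via the characteristic polynomial of the support. Define
\begin{equation*}
P(x) \;=\; \prod_{k=1}^{K}(x - b_k) \;=\; x^K - c_1 x^{K-1} - c_2 x^{K-2} - \cdots - c_K,
\end{equation*}
so that $P(b_k) = 0$ for every $k$. Multiplying $P(b_k)=0$ by $\pi_k b_k^{r}$ and summing over $k$ gives the linear recurrence
\begin{equation*}
m_{r+K} \;=\; c_1\, m_{r+K-1} + c_2\, m_{r+K-2} + \cdots + c_K\, m_r, \qquad r = 0,1,\ldots,K-1.
\end{equation*}
Written in matrix form, this is $\mathbf{H}\,\mathbf{c} = \mathbf{m}^{\ast}$, where $\mathbf{c} = (c_1,\ldots,c_K)'$, $\mathbf{m}^{\ast} = (m_K, m_{K+1},\ldots,m_{2K-1})'$, and $\mathbf{H}$ is the $K\times K$ Hankel matrix with $(r,s)$ entry $m_{r+s-2}$ (for $r,s=1,\ldots,K$) after reordering columns appropriately. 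The first step is therefore to verify that $\mathbf{H}$ is nonsingular, which yields a unique $\mathbf{c}$, hence a unique monic polynomial $P$.

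The crux — and the one place requiring a substantive argument rather than bookkeeping — is the nonsingularity of $\mathbf{H}$. Here I would exploit the factorization
\begin{equation*}
\mathbf{H} \;=\; \mathbf{V}\,\mathrm{diag}(\mathbf{\pi})\,\mathbf{V}',
\end{equation*}
where $\mathbf{V}$ is the $K\times K$ Vandermonde matrix with entries $V_{r,k} = b_k^{r-1}$. Because $b_1 < \cdots < b_K$ are distinct, $\mathbf{V}$ is invertible (nonzero Vandermonde determinant $\prod_{j<k}(b_k - b_j)$), and because $\pi_k \in (0,1)$ the diagonal factor is positive definite. Hence $\mathbf{H} \succ 0$, and $\mathbf{c}$ is uniquely determined by the identified moments.

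The final two steps are straightforward. Given the unique $\mathbf{c}$, the vector $\mathbf{b}$ is recovered as the $K$ roots of $P(x)=0$; these roots form a multiset determined by $P$, and the ordering constraint $b_1 < b_2 < \cdots < b_K$ selects a unique ordered tuple. With $\mathbf{b}$ in hand, the probabilities $\mathbf{\pi}$ are identified from the linear system $\mathbf{V}'\mathbf{\pi} = (m_0,m_1,\ldots,m_{K-1})'$, which has a unique solution because $\mathbf{V}$ is invertible. By construction this solution must agree with the true $\mathbf{\pi}$ generating the moments, so in particular it satisfies $\pi_k \in (0,1)$ and $\sum_k \pi_k = 1$. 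This completes the identification of $\mathbf{\theta} = (\mathbf{\pi}', \mathbf{b}')'$.
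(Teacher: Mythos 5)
Your proposal is correct and follows essentially the same route as the paper's own proof: you form the characteristic polynomial of the support, derive the linear recurrence satisfied by the moments, invert the Hankel moment matrix via the factorization $\mathbf{V}\,\mathrm{diag}(\mathbf{\pi})\,\mathbf{V}'$ (nonsingular because the $b_k$ are distinct and $\pi_k\in(0,1)$), recover $\mathbf{b}$ as the ordered roots, and then solve the Vandermonde system for $\mathbf{\pi}$. The only quibble is the transpose in the final system (it should read $\mathbf{V}\mathbf{\pi}=(m_0,\ldots,m_{K-1})'$ with $V_{r,k}=b_k^{r-1}$), which is immaterial since the matrix is invertible either way.
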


\begin{proof}
We motivate the key idea of the proof in the special case where $K=2,$ and
relegate the proof of the general case to the Appendix \ref{sec:Proofs}. Let 
$b_{1}=\beta _{L}$, $b_{2}=\beta _{H}$, $\pi _{1}=\pi $ and $\pi _{2}=1-\pi $%
. Note that 
\begin{align}
\mathrm{E}\left( \beta _{i}\right) & =\pi \beta _{L}+\left( 1-\pi \right)
\beta _{H},  \label{id1} \\
\mathrm{E}\left( \beta _{i}^{2}\right) & =\pi \beta _{L}^{2}+\left( 1-\pi
\right) \beta _{H}^{2},  \label{id2} \\
\mathrm{E}\left( \beta _{i}^{3}\right) & =\pi \beta _{L}^{3}+\left( 1-\pi
\right) \beta _{H}^{3},  \label{id3}
\end{align}%
and $\mathrm{E}\left( \beta _{i}^{k}\right) $, $k=1,2,3$ are identified. $%
\left( \pi ,\beta _{L},\beta _{H}\right) $ can be identified if the system
of equations \eqref{id1} to \eqref{id3}, has a unique solution. By %
\eqref{id1}, 
\begin{equation}
\pi =\frac{\beta _{H}-\mathrm{E}\left( \beta _{i}\right) }{\beta _{H}-\beta
_{L}},\; \text{and}\; 1-\pi =\frac{\mathrm{E}\left( \beta _{i}\right) -\beta
_{L}}{\beta _{H}-\beta _{L}}.  \label{pi_expression}
\end{equation}%
Plug \eqref{pi_expression} into \eqref{id2} and \eqref{id3}, 
\begin{align}
\mathrm{E}\left( \beta _{i}\right) \left( \beta _{L}+\beta _{H}\right)
-\beta _{L}\beta _{H}& =\mathrm{E}\left( \beta _{i}^{2}\right) ,
\label{sys_1} \\
\mathrm{E}\left( \beta _{i}^2\right) \left( \beta_L + \beta_H\right) -%
\mathrm{E}\left(\beta_i\right) \beta _{L}\beta _{H} & =\mathrm{E}\left(
\beta _{i}^{3}\right) .  \label{sys_2}
\end{align}
Denote $\beta_{L+H} = \beta_{L} + \beta_{H}$ and $\beta_{LH} =
\beta_{L}\beta_{H}$, and write \eqref{sys_1} and \eqref{sys_2} in matrix
form, 
\begin{equation}
\mathbf{M} \mathbf{D} \mathbf{b}^{\ast} = \mathbf{m},
\label{eq:sys_mat_form}
\end{equation}
where 
\begin{equation*}
\mathbf{M} = 
\begin{pmatrix}
1 & \mathrm{E}\left(\beta_i\right) \\ 
\mathrm{E}\left(\beta_i\right) & \mathrm{E}\left(\beta_i^2 \right)%
\end{pmatrix}%
, \, \mathbf{D} = 
\begin{pmatrix}
-1 & 0 \\ 
0 & 1%
\end{pmatrix}%
,\, \mathbf{b}^{\ast} = 
\begin{pmatrix}
\beta_{LH} \\ 
\beta_{L+H}%
\end{pmatrix}%
, \;\text{and}\; \mathbf{m} = 
\begin{pmatrix}
\mathrm{E}\left(\beta_i^2\right) \\ 
\mathrm{E}\left(\beta_i^3\right)%
\end{pmatrix}%
.
\end{equation*}
Under the conditions $0 < \pi < 1$ and $\beta_H > \beta_L$, 
\begin{equation*}
\det\left( \mathbf{M} \right) = \mathrm{var}\left( \beta _{i}\right) = 
\mathrm{E}\left( \beta _{i}^{2}\right) -\mathrm{E}\left( \beta _{i}\right)
^{2} = \pi\left( 1-\pi \right)\left( \beta_H - \beta_L \right)^2 > 0.
\end{equation*}
As a result, we can solve \eqref{eq:sys_mat_form} for $\beta_{L+H}$ and $%
\beta_{LH}$ as 
\begin{align}
\beta _{L+H}& =\frac{\mathrm{E}\left( \beta _{i}^{3}\right) -\mathrm{E}%
\left( \beta _{i}\right) \mathrm{E}\left( \beta _{i}^{2}\right) }{\mathrm{var%
}\left( \beta_i \right)},  \label{sys_3} \\
\beta _{LH}& =\frac{\mathrm{E}\left( \beta _{i}\right) \mathrm{E}\left(
\beta _{i}^{3}\right) -\mathrm{E}\left( \beta _{i}^{2}\right) ^{2}}{\mathrm{%
var}\left( \beta_i \right)}.  \label{sys_4}
\end{align}%
$\beta _{L}$ and $\beta _{H}$ are solutions to the quadratic equation, 
\begin{equation}
\beta ^{2}-\beta _{L+H}\beta +\beta _{LH}=0.
\end{equation}%
We can verify that $\Delta =\beta _{L+H}^{2}-4\beta _{LH}>0$ by direct
calculation using \eqref{sys_3} and \eqref{sys_4}. Simplifying $\Delta $ in
terms of $\mathrm{E}\left( \beta _{i}^{k}\right) $ and then plugging in %
\eqref{id1}, \eqref{id2} and \eqref{id3}, 
\begin{align*}
\Delta =& \frac{ \left[ \mathrm{E}\left( \beta _{i}^{3}\right) - \mathrm{E}%
\left( \beta _{i}\right) \mathrm{E}\left( \beta _{i}^{2}\right) \right] ^{2}
-4 \mathrm{var}\left( \beta_i \right) \left[ \mathrm{E}\left( \beta
_{i}\right) \mathrm{E}\left(\beta _{i}^{3}\right) - \mathrm{E}\left(
\beta_{i}^{2}\right) ^{2} \right] }{ \left[ \mathrm{var}\left( \beta_i
\right) \right] ^{2} } \\
& = \left( \beta_H - \beta_L \right)^2 > 0.
\end{align*}%
Then, we obtain the unique solutions, 
\begin{align}
\beta _{L}& =\frac{1}{2}\left( \beta _{L+H}-\sqrt{\beta _{L+H}^{2}-4\beta
_{LH}}\right) ,  \label{solution_beta_L} \\
\beta _{H}& =\frac{1}{2}\left( \beta _{L+H}+\sqrt{\beta _{L+H}^{2}-4\beta
_{LH}}\right) ,  \label{solution_beta_H}
\end{align}%
and $\pi $ can be determined by (\ref{pi_expression}) correspondingly.
\end{proof}

\begin{remark}
The key identifying assumption in (\ref{prop:identification_of_beta_L_H}) is
the assumed existence of the strict ordinal relation $b_{1}<b_{2}<\cdots
<b_{K}$ so that $b_{k}$ and $b_{k^{\prime }}$ are not symmetric for $k\neq
k^{\prime }$, and $0<\pi _{k}<1$ so that the distribution of $\beta _{i}$
does not degenerate. When $K=2$, the conditions $b_{1}<b_{2}<\cdots <b_{K}$,
and $\pi _{k}\in \left( 0,1\right) $, are equivalent to $\mathrm{var}\left(
\beta _{i}\right) =\pi _{1}\left( 1-\pi _{1}\right) \left(
b_{2}-b_{1}\right) ^{2}>0$. In other words, not surprisingly, the
categorical distribution of $\beta _{i}$ are identified only if $\mathrm{var}%
\left( \beta _{i}\right) >0$.

In practice, a test for $\mathbb{H}_{0}:\mathrm{var}\left( \beta _{i}\right)
=0$ is possible, by noting that $\mathrm{var}\left( \beta _{i}\right) =0$ is
equivalent to 
\begin{equation*}
\kappa ^{2}=\frac{\mathrm{E}\left( \beta _{i}\right) ^{2}}{\mathrm{E}\left(
\beta _{i}^{2}\right) }=1,
\end{equation*}%
where $\kappa ^{2}$ is well-defined as long as $\beta _{i}\not\equiv 0$. One
important advantage of basing the test of slope homogeneity on $\kappa ^{2}$
rather than on $var(\beta _{i})=0$, is that $\kappa ^{2}\,$is
scale-invariant. $\mathrm{E}\left( \beta _{i}\right) $ and $\mathrm{E}\left(
\beta _{i}^{2}\right) $ are identified as in Section \ref%
{subsec:identify_moments_beta}, whose consistent estimation does not require 
$\mathrm{var}\left( \beta _{i}\right) >0$. Consequently, in principle it is
possible to test slope homogeneity by testing $\mathbb{H}_{0}:\kappa ^{2}=1$%
. However, the problem becomes much more complicated when there are more
than two categories and/or there are more than one regressor under
consideration. A full treatment of testing slope homogeneity in such general
settings is beyond the scope of the present paper.
\end{remark}


\begin{remark}
Note that in the special case of the proof of Theorem \ref%
{prop:identification_of_beta_L_H} where $K=2$, $\beta_{L+H}=\beta_{L}+%
\beta_{H}$ and $\beta_{LH}=\beta_{L}\beta_{H}$ corresponds to the $%
b_{1}^{\ast}$ and $b_{2}^{\ast}$ and \eqref{eq:sys_mat_form} is the same as %
\eqref{eq: linear_recurrence_matrix} when $K = 2$. The special case
illustrates the procedure of identification: identify $\left(b_{k}^{\ast}%
\right)_{k=1}^{K}$ by the moments of $\beta_{i}$, then solve for $%
\left(b_{k}\right)_{k=1}^{K}$ and finally identify $\left(\pi_{k}%
\right)_{k=1}^{K}$.
\end{remark}


\section{Estimation\label{sec:Estimation}}

In this section, we propose a generalized method of moments estimator for
the distributional parameters of $\beta_i$. To reduce the complexity of the
moment equations, we first obtain a $\sqrt{n}$-consistent estimator of $%
\gamma$ and consider the estimation of the distribution of $\beta_i$ by
replacing $\gamma$ by $\hat{\gamma}$.

\subsection{Estimation of $\protect\gamma$\label{subsec:Estimation-of-gamma}}

Let $\mathbf{\phi }=\left( \mathrm{E}\left( \beta _{i}\right) ,\mathbf{%
\gamma }^{\prime }\right) ^{\prime }$, $v_{i}=\beta _{i}-\mathrm{E}\left(
\beta _{i}\right) $ and using the notation in Assumption \ref%
{assu:identification_regularity_condition}, \eqref{eq:basic_model} can be
written as 
\begin{equation}
y_{i}=\mathbf{w}_{i}^{\prime }\mathbf{\phi }+\xi _{i},
\label{eq: model_w_phi}
\end{equation}%
where $\xi _{i}=u_{i}+x_{i}v_{i}$. Then $\mathbf{\phi }$ can be estimated
consistently by $\hat{\mathbf{\phi }}=\mathbf{Q}_{n,ww}^{-1}\mathbf{q}%
_{n,wy} $ where $\mathbf{Q}_{n,ww}$ and $\mathbf{q}_{n,wy}$ are defined in
Assumption \ref{assu:identification_regularity_condition}.

\begin{assumption}
\label{assu:gamma_est_normality} $\left\Vert n^{-1}\sum_{i=1}^{n}\mathrm{E}%
\left( \mathbf{w}_i\mathbf{w}_i^\prime \xi_i^2\right) - \mathbf{V}_{w\xi}
\right\Vert = O\left( n^{-1/2} \right) $, $\mathbf{V}_{w\xi}\succ 0, $ and 
\begin{equation}
\left\Vert \frac{1}{n}\sum_{i=1}^{n} \mathbf{w}_i\mathbf{w}_i^\prime \xi_i^2
- \frac{1}{n}\sum_{i=1}^{n}\mathrm{E}\left( \mathbf{w}_i\mathbf{w}_i^\prime
\xi_i^2\right) \right\Vert = O_p\left( n^{-1/2} \right).
\label{eq:conv_of_wwxi}
\end{equation}
\end{assumption}

\begin{remark}
As in the case of Assumption \ref{assu:identification_regularity_condition},
the high level condition \eqref{eq:conv_of_wwxi} can be shown to hold under
weak cross-sectional dependence, assuming that elements of $\mathbf{w}_{i}%
\mathbf{w}_{i}^{\prime }\xi _{i}^{2}$ are cross-sectionally weakly
correlated over $i$. See Remark \ref{rem:weak_crosssectional_depen_condition}%
.
\end{remark}

\begin{theorem}
\label{lem:gamma_est_consistency} Under Assumption \ref%
{assu:identification_regularity_condition}, $\hat{\mathbf{\phi}}$ is a
consistent estimator for $\mathbf{\phi}$. In addition, under Assumptions \ref%
{assu:identification_regularity_condition} and \ref{assu:gamma_est_normality}%
, as $n\to \infty$, 
\begin{equation}
\sqrt{n}\left( \hat{\mathbf{\phi}} - \mathbf{\phi} \right) \rightarrow_{d}
N\left( \mathbf{0}, \mathbf{V}_\phi \right),
\end{equation}
where $\mathbf{V}_{\phi} = \mathbf{Q}_{w w}^{-1} \mathbf{V}_{w\xi} \mathbf{Q}%
_{w w}^{-1}. $ $\mathbf{V}_{\phi}$ is consistently estimated by 
\begin{equation*}
\hat{\mathbf{V}}_{\phi} = \mathbf{Q}_{n,ww}^{-1}\hat{\mathbf{V}}_{w\xi}%
\mathbf{Q}_{n,ww}^{-1}\to_p \mathbf{V}_{\phi},
\end{equation*}
as $n\to\infty$, where $\hat{\mathbf{V}}_{w\xi} = n^{-1}\sum_{i=1}^{n} 
\mathbf{w}_i\mathbf{w}_i^\prime \hat{\xi}_{i}^2$, and $\hat{\xi}_i = y_i - 
\mathbf{w}_i^\prime \hat{\mathbf{\phi}}$.
\end{theorem}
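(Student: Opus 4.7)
The plan is to treat \eqref{eq: model_w_phi} as a linear regression with composite error $\xi_i = u_i + x_i v_i$ and perform the standard OLS asymptotic argument, carefully using the heterogeneity-tolerant assumptions \ref{assu:identification_regularity_condition} and \ref{assu:gamma_est_normality}. First I would establish consistency. Writing $y_i = \mathbf{w}_i'\bm{\phi} + \xi_i$, one has
\begin{equation*}
\hat{\bm{\phi}} - \bm{\phi} = \mathbf{Q}_{n,ww}^{-1}\Bigl(\mathbf{q}_{n,wy} - \mathbf{Q}_{n,ww}\bm{\phi}\Bigr) = \mathbf{Q}_{n,ww}^{-1}\cdot\frac{1}{n}\sum_{i=1}^n \mathbf{w}_i \xi_i.
\end{equation*}
The identity $\mathrm{E}(\mathbf{q}_{n,wy}) = \mathrm{E}(\mathbf{Q}_{n,ww})\bm{\phi}$ follows from \eqref{mc_y1_w1}, since $\mathrm{E}(\xi_i\mid \mathbf{w}_i)=0$ by the independence of $u_i$ and $v_i=\beta_i-\mathrm{E}(\beta_i)$ from $\mathbf{w}_i$. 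Then Assumption \ref{assu:identification_regularity_condition}(c)(d) gives $\mathbf{Q}_{n,ww}\to_p \mathbf{Q}_{ww}\succ 0$ and $\mathbf{q}_{n,wy}-\mathbf{Q}_{n,ww}\bm{\phi} = O_p(n^{-1/2})$, so $\hat{\bm{\phi}}\to_p \bm{\phi}$ by Slutsky.

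Next I would prove asymptotic normality. Multiplying the above display by $\sqrt{n}$,
\begin{equation*}
\sqrt{n}\bigl(\hat{\bm{\phi}} - \bm{\phi}\bigr) = \mathbf{Q}_{n,ww}^{-1}\cdot \frac{1}{\sqrt{n}}\sum_{i=1}^n \mathbf{w}_i\xi_i.
\end{equation*}
Since $\{\mathbf{w}_i\xi_i\}$ are independent mean-zero vectors (using independence of $u_i,\beta_i$ across $i$ and from $\mathbf{w}_i$) with limiting average covariance $\mathbf{V}_{w\xi}\succ 0$ by Assumption \ref{assu:gamma_est_normality}, and since Assumption \ref{assu:identification_regularity_condition}(a)(b) provides enough higher moments on $u_i$ and $\mathbf{w}_i$ to verify a Lyapunov $(2+\delta)$ condition, a standard CLT for independent heterogeneous arrays yields $n^{-1/2}\sum_i \mathbf{w}_i\xi_i\to_d N(\mathbf{0},\mathbf{V}_{w\xi})$. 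Combined with $\mathbf{Q}_{n,ww}^{-1}\to_p \mathbf{Q}_{ww}^{-1}$ and Slutsky, this gives $\sqrt{n}(\hat{\bm{\phi}}-\bm{\phi})\to_d N(\mathbf{0},\mathbf{V}_\phi)$ with $\mathbf{V}_\phi = \mathbf{Q}_{ww}^{-1}\mathbf{V}_{w\xi}\mathbf{Q}_{ww}^{-1}$.

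For the variance estimator, I would expand $\hat{\xi}_i = \xi_i - \mathbf{w}_i'(\hat{\bm{\phi}}-\bm{\phi})$ to obtain
\begin{equation*}
\hat{\mathbf{V}}_{w\xi} = \frac{1}{n}\sum_{i=1}^n \mathbf{w}_i\mathbf{w}_i'\xi_i^2 - \frac{2}{n}\sum_{i=1}^n \mathbf{w}_i\mathbf{w}_i'\xi_i\,\mathbf{w}_i'(\hat{\bm{\phi}}-\bm{\phi}) + \frac{1}{n}\sum_{i=1}^n \mathbf{w}_i\mathbf{w}_i'\bigl[\mathbf{w}_i'(\hat{\bm{\phi}}-\bm{\phi})\bigr]^2.
\end{equation*}
The first term converges in probability to $\mathbf{V}_{w\xi}$ directly by Assumption \ref{assu:gamma_est_normality}. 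The cross term and quadratic term are bounded in norm by $\|\hat{\bm{\phi}}-\bm{\phi}\|$ and $\|\hat{\bm{\phi}}-\bm{\phi}\|^2$ respectively, multiplied by sample averages that are $O_p(1)$ by the moment bounds in Assumption \ref{assu:identification_regularity_condition}(a)(iii)(b)(iii) combined with Cauchy--Schwarz; hence they are $o_p(1)$. Continuity of matrix inversion then gives $\hat{\mathbf{V}}_\phi = \mathbf{Q}_{n,ww}^{-1}\hat{\mathbf{V}}_{w\xi}\mathbf{Q}_{n,ww}^{-1}\to_p \mathbf{V}_\phi$.

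The main technical obstacle is the CLT step: since neither $\mathbf{w}_i$ nor $u_i$ is assumed identically distributed, I must invoke a Lindeberg--Feller or Lyapunov CLT for heterogeneous independent arrays, and verify the Lyapunov condition from the finite-moment assumptions (moments of order $2K-1\geq 3$ on $u_i$ and $4K-2\geq 6$ on $\|\mathbf{w}_i\|$, which together with independence of $v_i$ give the required $(2+\delta)$-moment bound on $\|\mathbf{w}_i\xi_i\|$). Everything else is essentially Slutsky bookkeeping.
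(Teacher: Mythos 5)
Your proposal is correct and follows essentially the same route as the paper's proof: the OLS decomposition $\sqrt{n}(\hat{\mathbf{\phi}}-\mathbf{\phi})=\mathbf{Q}_{n,ww}^{-1}\,n^{-1/2}\sum_i \mathbf{w}_i\xi_i$, consistency from the high-level convergence rates in Assumption \ref{assu:identification_regularity_condition}(c)--(d), a Lyapunov CLT for independent heterogeneous arrays verified via Minkowski and the independence of $u_i$ and $v_i$ from $\mathbf{w}_i$ using exactly the moment orders you cite, and consistency of $\hat{\mathbf{V}}_{w\xi}$ through the expansion $\hat{\xi}_i=\xi_i-\mathbf{w}_i'(\hat{\mathbf{\phi}}-\mathbf{\phi})$ with the fourth-moment bounds (the paper uses a H\"older bound where you invoke Cauchy--Schwarz, an immaterial difference). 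No gaps of substance.
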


The proof of Theorem \ref{lem:gamma_est_consistency} is provided in Section %
\ref{suppsec:Proofs} in the online supplement.

\subsection{Estimation of the distribution of $\protect\beta_i$\label%
{subsec:estimation_beta}}

Denote the moments of $\beta _{i}$ on the right-hand side of (\ref{mbeta})
by 
\begin{equation*}
\mathbf{m}_{\beta }=(m_{1},m_{2},...,m_{2K-1})^{\prime }=\left[ \mathrm{E}%
\left( \beta _{i}^{r}\right) \right] _{r=1}^{2K-1}\in \Theta _{m}\subset
\left\{ \mathbf{m}_{\beta }\in \mathbb{R}^{2K-1}:m_{r}\geq 0,\text{ }r\text{
is even}\right\} ,
\end{equation*}%
and note that 
\begin{equation}
\mathbf{m}_{\beta }=\left( 
\begin{array}{c}
m_{1} \\ 
m_{2} \\ 
\vdots \\ 
m_{2K-1}%
\end{array}%
\right) =\left( 
\begin{array}{cccc}
b_{1} & b_{2} & \cdots & b_{K} \\ 
b_{1}^{2} & b_{2}^{2} & \cdots & b_{K}^{2} \\ 
\vdots & \vdots & \vdots & \vdots \\ 
b_{1}^{2K-1} & b_{2}^{2K-1} & \cdots & b_{K}^{2K-1}%
\end{array}%
\right) \left( 
\begin{array}{c}
\pi _{1} \\ 
\pi _{2} \\ 
\vdots \\ 
\pi _{K}%
\end{array}%
\right) ,  \label{eq:mbeta_mat}
\end{equation}%
so in general we can write $\mathbf{m}_{\beta }\triangleq h\left( \mathbf{%
\theta }\right) ,$ where $\mathbf{\theta }=\left( \mathbf{\pi }^{\prime },%
\mathbf{b}^{\prime }\right) ^{\prime }\in \Theta $, and $\mathbf{\theta }$
can be uniquely determined in terms of $\mathbf{m}_{\beta }$ by Theorem \ref%
{prop:identification_of_beta_L_H}. To estimate $\mathbf{\theta }$, we
consider moment conditions following a similar procedure as in Section \ref%
{sec:Identification}, and propose a generalized method of moments (GMM)
estimator.

We consider the following moment conditions 
\begin{equation*}
\mathrm{E}\left( \tilde{y}_{i}^{r}\right) =\sum_{q=0}^{r}\binom{r}{q}\mathrm{%
E}\left( x_{i}^{r-q}\right) \mathrm{E}\left( u_{i}^{q}\right) m_{r-q},\text{ 
}
\end{equation*}%
and 
\begin{equation}
\mathrm{E}\left( \tilde{y}_{i}^{r}x_{i}^{s_{r}}\right) =\sum_{q=0}^{r}\binom{%
r}{q}\mathrm{E}\left( x_{i}^{r-q+s_{r}}\right) \mathrm{E}\left(
u_{i}^{q}\right) m_{r-q},  \label{eq:mc_yrxs}
\end{equation}%
where $\mathrm{E}\left( u_{i}\right) =0$, $\tilde{y}_{i}=y_{i}-\mathbf{z}%
_{i}^{\prime }\mathbf{\gamma }$, $r=1,2,...,2K-1$, and $s_{r}=0,1,\cdots
,S-r $, where $S$ is a user-specific tuning parameter, chosen such that the
highest order moments of $x_{i}$ included is at most $S$, where $S>2K-1$. 
\footnote{%
For identification, we require the moments of $x_{i}$ to exist up to order $%
4K-2$. $S$ can take values between $2K$ to $4K-2$. In practice, the choice
of $S$ affects the trade-off between bias and efficiency.}

Let $\sigma _{0}=1$ and $\sigma _{1}=0$ such that $\sigma _{r}$ is
well-defined for $r=0,1,\cdots ,2K-1$.\textbf{\ }Sum \eqref{eq:mc_yrxs} over 
$i$ and rearrange terms, 
\begin{align}
0& =\sum_{q=0}^{r}\binom{r}{q}\left[ \frac{1}{n}\sum_{i=1}^{n}\mathrm{E}%
\left( x_{i}^{r-q+s_{r}}\right) \mathrm{E}\left( u_{i}^{q}\right) \right]
m_{r-q}-\frac{1}{n}\sum_{i=1}^{n}\mathrm{E}\left( \tilde{y}%
_{i}^{r}x_{i}^{s_{r}}\right)  \notag \\
& =\sum_{q=0}^{r}\binom{r}{q} \left[ \frac{1}{n}\sum_{i=1}^{n}\mathrm{E}%
\left( x_{i}^{r-q+s_{r}}\right) \right] \sigma_{q}m_{r-q} - \frac{1}{n}%
\sum_{i=1}^{n}\mathrm{E}\left( \tilde{y}_{i}^{r}x_{i}^{s_{r}}\right)+\delta
_{n}^{(r,s_{r})},  \label{eq:mc_n}
\end{align}%
where 
\begin{equation*}
\delta _{n}^{(r,s_{r})} =\sum_{q=0}^{r}\binom{r}{q}\left[ \frac{1}{n}%
\sum_{i=1}^{n}\mathrm{E}\left( x_{i}^{r-q+s_{r}}\right) \left[ \mathrm{E}%
\left( u_{i}^{q}\right) -\sigma _{q}\right] \right] m_{r-q} = O\left(
n^{-1/2} \right),
\end{equation*}
as shown in the proof of Theorem \ref{lem: identification_moments}.

Taking $n\to \infty$ in \eqref{eq:mc_n}, 
\begin{equation}
\sum_{q=0}^{r}\binom{r}{q}\rho _{0,r-q+s_{r}}\sigma _{q}m_{r-q}-\rho
_{r,s_{r}} = 0,  \label{eq:mc_limit}
\end{equation}
by Assumption \ref{assu:conv_moment}. We stack the left-hand side of %
\eqref{eq:mc_limit} over $r=1,2,...,2K-1$, and $s_{r}=0,1,\cdots ,S-r$ and
transform $\mathbf{m}_\beta = h\left( \mathbf{\theta} \right)$ to get $%
\mathbf{g}_0\left(\mathbf{\theta }, \mathbf{\sigma}, \mathbf{\gamma }
\right) $.

To implement the GMM estimation we replace $\tilde{y}_{i}$, by $\hat{\tilde{y%
}}_{i}=y_{i}-\mathbf{z}_{i}^{\prime }\hat{\mathbf{\gamma }}$, and $\rho
_{r,s_{r}}$ by $n^{-1}\sum_{i=1}^{n}\hat{\tilde{y}}_{i}^{r}x_{i}^{s_{r}}$.
Noting that $\mathbf{m}_{\beta }=h\left( \mathbf{\theta }\right) $, denote
the sample version of the left-hand side of \eqref{eq:mc_limit} by 
\begin{equation}
\hat{g}_{n}^{(r,s_{r})}\left( \mathbf{\theta },\mathbf{\sigma },\hat{\mathbf{%
\gamma }}\right) =\frac{1}{n}\sum_{i=1}^{n}\hat{g}_{i}^{(r,s_{r})}\left( 
\mathbf{\theta },\mathbf{\sigma },\hat{\mathbf{\gamma }}\right) ,
\label{gbarr}
\end{equation}%
where%
\begin{equation*}
\hat{g}_{i}^{\left( r,s_{r}\right) }\left( \mathbf{\theta },\mathbf{\sigma },%
\hat{\mathbf{\gamma }}\right) =\sum_{q=0}^{r}\binom{r}{q}x_{i}^{r-q+s_{r}}%
\sigma _{q}\left[ h\left( \mathbf{\theta }\right) \right] _{r-q}-\hat{\tilde{%
y}}_{i}^{r}x_{i}^{s_{r}},
\end{equation*}%
and $\mathbf{\sigma }=\left( \sigma _{2},\sigma _{3},\cdots ,\sigma
_{2K-1}\right) ^{\prime }$. Stack the equations in (\ref{gbarr}), over $%
r=0,1,...,2K-1$ and $s_{r}=0,1,\cdots ,S-r$ ($S>2K-1$), in vector notations
we have 
\begin{equation}
\mathbf{\hat{g}}_{n}\left( \mathbf{\theta },\mathbf{\sigma },\hat{\mathbf{%
\gamma }}\right) =\frac{1}{n}\sum_{i=1}^{n}\mathbf{\hat{g}}_{i}\left( 
\mathbf{\theta },\mathbf{\sigma },\hat{\mathbf{\gamma }}\right) .  \label{gn}
\end{equation}%
Given $\hat{\mathbf{\gamma }}$, the GMM estimator of $\left( \mathbf{\theta }%
^{\prime },\mathbf{\sigma }^{\prime }\right) ^{\prime }$ is now computed as 
\begin{equation*}
\left( \hat{\mathbf{\theta }}^{\prime },\hat{\mathbf{\sigma }}^{\prime
}\right) ^{\prime }=\arg \min_{\mathbf{\theta }\in \Theta ,\mathbf{\sigma }%
\in \mathcal{S}}\hat{\Phi}_{n}\left( \mathbf{\theta },\mathbf{\sigma },\hat{%
\mathbf{\gamma }}\right) ,
\end{equation*}%
where $\hat{\Phi}_{n}=\mathbf{\hat{g}}_{n}\left( \mathbf{\theta },\mathbf{%
\sigma },\hat{\mathbf{\gamma }}\right) ^{\prime }\mathbf{A}_{n}\mathbf{\hat{g%
}}_{n}\left( \mathbf{\theta },\mathbf{\sigma },\hat{\mathbf{\gamma }}\right) 
$, and $\mathbf{A}_{n}$ is a positive definite matrix. We follow the GMM
literature using the following choice of $\mathbf{A}_{n}$, 
\begin{equation}
\mathbf{\hat{A}}_{n}=\left[ \frac{1}{n}\sum_{i=1}^{n}\mathbf{\hat{g}}%
_{i}\left( \tilde{\mathbf{\theta }},\tilde{\mathbf{\sigma }},\hat{\mathbf{%
\gamma }}\right) \mathbf{\hat{g}}_{i}\left( \tilde{\mathbf{\theta }},\tilde{%
\mathbf{\sigma }},\hat{\mathbf{\gamma }}\right) ^{\prime }-\mathbf{\bar{g}}%
_{n}\mathbf{\bar{g}}_{n}^{\prime }\right] ^{-1},
\label{eq:efficient_weight_mat}
\end{equation}%
where $\mathbf{\bar{g}}_{n}=\frac{1}{n}\sum_{i=1}^{n}\mathbf{\hat{g}}%
_{i}\left( \tilde{\mathbf{\theta }},\tilde{\mathbf{\sigma }},\hat{\mathbf{%
\gamma }}\right) $, and $\tilde{\mathbf{\theta }}$ and $\tilde{\mathbf{%
\sigma }}$ are preliminary estimators.

\begin{assumption}
\label{assu:Consistency Assumption} Denote the true values of $\mathbf{\theta%
}$, $\mathbf{\sigma}$ and $\mathbf{\gamma}$ by $\mathbf{\theta }_{0}$, $%
\mathbf{\sigma}_0$ and $\mathbf{\gamma}_0 $.

\begin{enumerate}
\item[(a)] $\Theta $ and $\mathcal{S}$ are compact. $\mathbf{\theta }_{0}\in 
\mathrm{int}\left( \Theta \right)$ and $\mathbf{\sigma}_0 \in \mathrm{int}%
\left( \mathcal{S} \right)$.

\item[(b)] $\mathbf{A}_{n}\rightarrow _{p}\mathbf{A}$ as $n\rightarrow
\infty $, where $\mathbf{A}$ is some positive definite matrix.

\item[(c)] $\,$ 
\begin{equation*}
\frac{1}{n}\sum_{i=1}^{n}\left[ \hat{\tilde{y}}_{i}^{r}x_{i}^{s_{r}}-\mathrm{%
E}\left( \tilde{y}_{i}^{r}x_{i}^{s_{r}}\right) \right] =O_{p}\left(
n^{-1/2}\right) ,
\end{equation*}%
for $r=0,1,2,\cdots ,2K-1$, $s_{r}=0,1,\cdots ,S-r,$ and $S>2K-1.$
\end{enumerate}
\end{assumption}

\begin{remark}
Parts (a) and (b) of Assumption \ref{assu:Consistency Assumption} are
standard regularity conditions in the GMM literature. Part (c) together with
Assumption \ref{assu:conv_moment} are high-level regularity conditions which
allow us to generalize the usual IID assumption and nest the IID data
generation process as a special case. The sample analogue terms in (c)
include $\hat{\tilde{y}}_{i}=y_{i}-\mathbf{z}_{i}^{\prime }\hat{\mathbf{%
\gamma }}$, instead of the infeasible $\tilde{y}_{i}=y_{i}-\mathbf{z}%
_{i}^{\prime }\mathbf{\gamma }$. The $\sqrt{n}$-consistency of $\hat{\mathbf{%
\gamma }}$ shown in Theorem \ref{lem:gamma_est_consistency} ensures that
replacing $\tilde{y}_{i}$ by $\hat{\tilde{y}}_{i}$ does not alter the
convergence rate.
\end{remark}

\begin{theorem}
\label{thm:consisteny} Let $\mathbf{\eta }=\left( \mathbf{\theta }^{\prime },%
\mathbf{\sigma }^{\prime }\right) ^{\prime }$ and $\mathbf{\eta }_{0}=\left( 
\mathbf{\theta }_{0}^{\prime },\mathbf{\sigma }_{0}^{\prime }\right)
^{\prime }$. Under Assumptions \ref{assu:identification_regularity_condition}%
, \ref{assu:conv_moment}, and \ref{assu:Consistency Assumption}, $\hat{%
\mathbf{\eta }}\rightarrow _{p}\mathbf{\eta }_{0}$ as $n\rightarrow \infty $.
\end{theorem}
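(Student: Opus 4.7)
The plan is to verify the three classical ingredients for consistency of an extremum estimator on a compact parameter space: compactness of $\Theta \times \mathcal{S}$ with $\mathbf{\eta}_0$ interior (given by Assumption \ref{assu:Consistency Assumption}(a)); unique minimization of the population criterion at $\mathbf{\eta}_0$; and uniform convergence of the sample criterion to its population counterpart over $\Theta \times \mathcal{S}$. Accordingly, I would first define the population moment vector $\mathbf{g}_0(\mathbf{\theta}, \mathbf{\sigma})$ by stacking
\begin{equation*}
g_0^{(r, s_r)}(\mathbf{\theta}, \mathbf{\sigma}) = \sum_{q=0}^{r} \binom{r}{q} \rho_{0, r-q+s_r}\, \sigma_q [h(\mathbf{\theta})]_{r-q} - \rho_{r, s_r}
\end{equation*}
over $r = 1, \ldots, 2K-1$ and $s_r = 0, \ldots, S-r$, set $\Phi_0(\mathbf{\eta}) = \mathbf{g}_0(\mathbf{\eta})^{\prime} \mathbf{A}\, \mathbf{g}_0(\mathbf{\eta})$, and then carry out the identification and uniform convergence steps sequentially.

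For identification, with $\mathbf{A} \succ 0$ we have $\Phi_0(\mathbf{\eta}) = 0$ iff $\mathbf{g}_0(\mathbf{\eta}) = \mathbf{0}$, and $\mathbf{g}_0(\mathbf{\eta}_0) = \mathbf{0}$ by construction. To argue $\mathbf{\eta}_0$ is the unique zero on $\Theta \times \mathcal{S}$, I would restrict attention to the subsystem indexed by $s_r \in \{0, r\}$: the equation for $r = 1$ pins down $m_1(\mathbf{\theta}) = \mathrm{E}(\beta_i)$, while for $r = 2, \ldots, 2K-1$ the pairs (\ref{eq:mc_limit_r})--(\ref{eq:mc_limit_2r}) together with the invertibility provided by Assumption \ref{assu:conv_moment}(c) uniquely pin down $m_r(\mathbf{\theta}) = \mathrm{E}(\beta_i^r)$ and $\sigma_r$. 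Given $h(\mathbf{\theta})$ is thereby fixed at the true moment vector, Theorem \ref{prop:identification_of_beta_L_H} delivers the unique $\mathbf{\theta} \in \Theta$ compatible with $b_1 < \cdots < b_K$ and $\pi_k \in (0, 1)$, so $\mathbf{\eta} = \mathbf{\eta}_0$; continuity of $\Phi_0$ on the compact parameter space then yields a well-separated minimum.

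For uniform convergence, each entry of $\hat{\mathbf{g}}_n(\mathbf{\eta}, \hat{\mathbf{\gamma}})$ is a polynomial of bounded degree in $(\mathbf{\theta}, \mathbf{\sigma})$ whose coefficients are either sample moments $n^{-1} \sum_i x_i^t$ converging to $\rho_{0, t}$ under the weak cross-sectional dependence of Remark \ref{rem:weak_crosssectional_depen_condition}, or terms $n^{-1} \sum_i \hat{\tilde{y}}_i^r x_i^{s_r}$ converging to $\rho_{r, s_r}$ by Assumption \ref{assu:Consistency Assumption}(c), the latter leveraging $\hat{\mathbf{\gamma}} \rightarrow_p \mathbf{\gamma}_0$ from Theorem \ref{lem:gamma_est_consistency}. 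Pointwise convergence $\hat{\mathbf{g}}_n(\mathbf{\eta}, \hat{\mathbf{\gamma}}) \rightarrow_p \mathbf{g}_0(\mathbf{\eta})$ is then upgraded to uniform convergence by stochastic equicontinuity: $\hat{\mathbf{g}}_n$ is Lipschitz in $\mathbf{\eta}$ on the compact parameter space with constant controlled by $n^{-1} \sum_i |x_i|^S = O_p(1)$ (Assumption \ref{assu:identification_regularity_condition}(b)) and the bounded derivative of the continuous map $h$. Combined with $\mathbf{A}_n \rightarrow_p \mathbf{A}$, the quadratic form $\hat{\Phi}_n$ inherits uniform convergence to $\Phi_0$, and a standard extremum-estimator argument (e.g., Newey and McFadden, 1994, Theorem 2.1) delivers $\hat{\mathbf{\eta}} \rightarrow_p \mathbf{\eta}_0$. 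The main obstacle is the identification step: the map $\mathbf{\eta} \mapsto \mathbf{g}_0(\mathbf{\eta})$ is nonlinear in $\mathbf{\theta}$ through $h(\mathbf{\theta})$, so uniqueness cannot be read off from a simple rank condition and must be chained from the recursive moment identification of Theorem \ref{lem: identification_moments} to the polynomial-root recovery of Theorem \ref{prop:identification_of_beta_L_H}, with the constraints $b_1 < \cdots < b_K$ and $\pi_k \in (0,1)$ encoded in $\Theta$ playing the indispensable role of eliminating the $K!$ observationally equivalent label permutations of $(\mathbf{b}, \mathbf{\pi})$.
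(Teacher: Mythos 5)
Your proposal is correct and follows essentially the same route as the paper's proof: verifying the conditions of Newey and McFadden (1994, Theorem 2.1), with unique minimization of $\Phi_{0}$ at $\mathbf{\eta}_{0}$ obtained by chaining the moment identification of Theorem \ref{lem: identification_moments} to the parameter recovery of Theorem \ref{prop:identification_of_beta_L_H}, compactness and continuity from Assumption \ref{assu:Consistency Assumption}(a) and the polynomial form of $\mathbf{g}_{0}$, and then passing uniform convergence of $\hat{\mathbf{g}}_{n}$ to the quadratic criterion using $\mathbf{A}_{n}\rightarrow_{p}\mathbf{A}$. The only cosmetic difference is in the uniformity step: the paper obtains it directly because each additive term of $\hat{\mathbf{g}}_{n}-\mathbf{g}_{0}$ factors as an $O_{p}(n^{-1/2})$ coefficient free of $\mathbf{\eta}$ times a polynomial in $\mathbf{\eta}$ that is bounded on the compact parameter space, whereas you reach the same conclusion via pointwise convergence plus a Lipschitz/stochastic-equicontinuity argument.
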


The proof of Theorem \ref{thm:consisteny} is provided in Appendix \ref%
{sec:Proofs}.

\begin{assumption}
\label{assu:normality} Follow the notations as in Assumption \ref%
{assu:Consistency Assumption} and in addition denote $\mathbf{G}\left( 
\mathbf{\theta },\mathbf{\sigma },\mathbf{\gamma }\right) =\nabla _{\left(%
\mathbf{\theta }^{\prime }, \mathbf{\sigma }^{\prime }\right) ^{\prime }}%
\mathbf{g}_{0}\left( \mathbf{\theta },\mathbf{\sigma },\mathbf{\gamma }%
\right) $, $\mathbf{G}_{0}=\mathbf{G}\left( \mathbf{\theta }_{0},\mathbf{%
\sigma }_{0},\mathbf{\gamma }_{0}\right) $, $\mathbf{G}_{\gamma }\left( 
\mathbf{\theta },\mathbf{\sigma },\mathbf{\gamma }\right) =\nabla _{\mathbf{%
\gamma }}\mathbf{g}_{0}\left( \mathbf{\theta },\mathbf{\sigma },\mathbf{%
\gamma }\right) $, $\mathbf{G}_{0, \gamma}=\mathbf{G}_{\gamma} \left( 
\mathbf{\theta }_{0},\mathbf{\sigma }_{0},\mathbf{\gamma }_{0}\right) $.

\begin{enumerate}
\item[(a)] $\sqrt{n}\mathbf{\hat{g}}_{n}\left( \mathbf{\theta }_{0},\mathbf{%
\sigma }_{0},\mathbf{\gamma}_0\right) \rightarrow _{d}\mathbf{\zeta }\sim
N\left( 0,\mathbf{V}\right) $ as $n\to\infty$.

\item[(b)] $\mathbf{G}_{0}^{\prime }\mathbf{AG}_{0}\succ 0$.
\end{enumerate}
\end{assumption}

\begin{remark}
In Assumption \ref{assu:normality}, parts (a) is the high level condition
required to ensure the asymptotic normality of $\mathbf{\hat{g}}_{n}\left( 
\mathbf{\theta }_{0},\mathbf{\sigma }_{0},\mathbf{\gamma}_0\right) $, which
can be verified by Lindeberg central limit theorem under low-level
regularity conditions. Part (c) of Assumption \ref{assu:normality}
represents the full-rank condition on $\mathbf{G}_{0}$, required for
identification of $\mathbf{\theta }_{0}$ and $\mathbf{\sigma }_{0}$.
\end{remark}

By Theorem \ref{lem:gamma_est_consistency}, we have $\sqrt{n}\left( \hat{%
\mathbf{\gamma}} - \mathbf{\gamma} \right) \rightarrow_d \zeta_\gamma \sim
N(0, V_\gamma)$. The following theorem shows the asymptotic normality of the
GMM estimator $\hat{\mathbf{\eta}}$.

\begin{theorem}
\label{thm:normality} Under Assumptions \ref%
{assu:identification_regularity_condition}, \ref{assu:gamma_est_normality}, %
\ref{assu:Consistency Assumption} and \ref{assu:normality}, 
\begin{equation*}
\sqrt{n}\left( \hat{\mathbf{\eta }}-\mathbf{\eta }_{0}\right) \rightarrow
_{d}\left( \mathbf{G}_{0}^{\prime }\mathbf{A}\mathbf{G}_{0}\right) ^{-1}%
\mathbf{G}_{0}^{\prime }\mathbf{A}\left( \mathbf{\zeta }+\mathbf{G}_{0,
\gamma}\mathbf{\zeta }_{\gamma }\right),
\end{equation*}
as $n\rightarrow \infty $.
\end{theorem}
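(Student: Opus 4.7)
The plan is to start from the first-order condition satisfied by the interior minimizer $\hat{\mathbf{\eta}}$, perform a mean-value expansion of $\hat{\mathbf{g}}_{n}$ jointly in $(\mathbf{\eta},\mathbf{\gamma})$ about $(\mathbf{\eta}_{0},\mathbf{\gamma}_{0})$, and then combine the limiting distribution of $\sqrt{n}\hat{\mathbf{g}}_{n}(\mathbf{\eta}_{0},\mathbf{\gamma}_{0})$ with that of $\sqrt{n}(\hat{\mathbf{\gamma}}-\mathbf{\gamma}_{0})$ from Theorem \ref{lem:gamma_est_consistency} via Slutsky's theorem. By Assumption \ref{assu:Consistency Assumption}(a) and Theorem \ref{thm:consisteny}, $\hat{\mathbf{\eta}}$ lies in the interior of $\Theta \times \mathcal{S}$ with probability approaching one, so the first-order condition
\[
\hat{\mathbf{G}}_{n}(\hat{\mathbf{\eta}},\hat{\mathbf{\gamma}})^{\prime }\mathbf{A}_{n}\hat{\mathbf{g}}_{n}(\hat{\mathbf{\eta}},\hat{\mathbf{\gamma}})=\mathbf{0},
\]
with $\hat{\mathbf{G}}_{n}(\mathbf{\eta},\mathbf{\gamma})=\nabla _{\mathbf{\eta}}\hat{\mathbf{g}}_{n}(\mathbf{\eta},\mathbf{\gamma})$, holds. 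A first-order Taylor expansion of $\hat{\mathbf{g}}_{n}(\hat{\mathbf{\eta}},\hat{\mathbf{\gamma}})$ about $(\mathbf{\eta}_{0},\mathbf{\gamma}_{0})$ gives
\[
\hat{\mathbf{g}}_{n}(\hat{\mathbf{\eta}},\hat{\mathbf{\gamma}})=\hat{\mathbf{g}}_{n}(\mathbf{\eta}_{0},\mathbf{\gamma}_{0})+\hat{\mathbf{G}}_{n}(\bar{\mathbf{\eta}},\bar{\mathbf{\gamma}})(\hat{\mathbf{\eta}}-\mathbf{\eta}_{0})+\hat{\mathbf{G}}_{n,\gamma}(\bar{\mathbf{\eta}},\bar{\mathbf{\gamma}})(\hat{\mathbf{\gamma}}-\mathbf{\gamma}_{0}),
\]
for some $(\bar{\mathbf{\eta}},\bar{\mathbf{\gamma}})$ lying componentwise between $(\hat{\mathbf{\eta}},\hat{\mathbf{\gamma}})$ and $(\mathbf{\eta}_{0},\mathbf{\gamma}_{0})$. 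Substituting this expansion into the first-order condition, solving for $\hat{\mathbf{\eta}}-\mathbf{\eta}_{0}$, and scaling by $\sqrt{n}$ produces
\[
\sqrt{n}(\hat{\mathbf{\eta}}-\mathbf{\eta}_{0})=-\bigl[\hat{\mathbf{G}}_{n}(\hat{\mathbf{\eta}},\hat{\mathbf{\gamma}})^{\prime }\mathbf{A}_{n}\hat{\mathbf{G}}_{n}(\bar{\mathbf{\eta}},\bar{\mathbf{\gamma}})\bigr]^{-1}\hat{\mathbf{G}}_{n}(\hat{\mathbf{\eta}},\hat{\mathbf{\gamma}})^{\prime }\mathbf{A}_{n}\bigl[\sqrt{n}\hat{\mathbf{g}}_{n}(\mathbf{\eta}_{0},\mathbf{\gamma}_{0})+\hat{\mathbf{G}}_{n,\gamma}(\bar{\mathbf{\eta}},\bar{\mathbf{\gamma}})\sqrt{n}(\hat{\mathbf{\gamma}}-\mathbf{\gamma}_{0})\bigr].
\]

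Next, since $\hat{\mathbf{g}}_{n}(\mathbf{\eta},\mathbf{\gamma})$ is a polynomial in $\mathbf{\eta}$ and $\mathbf{\gamma}$ whose coefficients are sample averages of products of $x_{i}$, $\mathbf{z}_{i}$ and $y_{i}$, the maps $(\mathbf{\eta},\mathbf{\gamma})\mapsto \hat{\mathbf{G}}_{n}(\mathbf{\eta},\mathbf{\gamma})$ and $(\mathbf{\eta},\mathbf{\gamma})\mapsto \hat{\mathbf{G}}_{n,\gamma}(\mathbf{\eta},\mathbf{\gamma})$ are continuous, and their coefficients converge in probability to the corresponding population moments by Assumption \ref{assu:conv_moment} together with the extension of Assumption \ref{assu:Consistency Assumption}(c) to the differentiated expressions. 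Combined with the compactness of $\Theta \times \mathcal{S}$, the consistency of $\hat{\mathbf{\eta}}$ and $\hat{\mathbf{\gamma}}$, and a uniform law of large numbers argument, this yields $\hat{\mathbf{G}}_{n}(\bar{\mathbf{\eta}},\bar{\mathbf{\gamma}})\rightarrow _{p}\mathbf{G}_{0}$ and $\hat{\mathbf{G}}_{n,\gamma}(\bar{\mathbf{\eta}},\bar{\mathbf{\gamma}})\rightarrow _{p}\mathbf{G}_{0,\gamma}$. Together with $\mathbf{A}_{n}\rightarrow _{p}\mathbf{A}$ and the full-rank condition in Assumption \ref{assu:normality}(b), the leading matrix is invertible for $n$ large and converges in probability to $(\mathbf{G}_{0}^{\prime }\mathbf{A}\mathbf{G}_{0})^{-1}\mathbf{G}_{0}^{\prime }\mathbf{A}$.

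Finally, Assumption \ref{assu:normality}(a) delivers $\sqrt{n}\hat{\mathbf{g}}_{n}(\mathbf{\eta}_{0},\mathbf{\gamma}_{0})\rightarrow _{d}\mathbf{\zeta}$, while Theorem \ref{lem:gamma_est_consistency} gives $\sqrt{n}(\hat{\mathbf{\gamma}}-\mathbf{\gamma}_{0})\rightarrow _{d}\mathbf{\zeta}_{\gamma}$; joint weak convergence of the pair follows from applying the Cram\'er--Wold device to the linear combinations of the common underlying sample means. Slutsky's theorem applied to the displayed identity then yields $\sqrt{n}(\hat{\mathbf{\eta}}-\mathbf{\eta}_{0})\rightarrow _{d}-(\mathbf{G}_{0}^{\prime }\mathbf{A}\mathbf{G}_{0})^{-1}\mathbf{G}_{0}^{\prime }\mathbf{A}(\mathbf{\zeta}+\mathbf{G}_{0,\gamma}\mathbf{\zeta}_{\gamma})$, and the overall sign is irrelevant because the right-hand side is a mean-zero normal. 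The main obstacle will be in the second step: turning the pointwise rate condition in Assumption \ref{assu:Consistency Assumption}(c) into a uniform-in-parameter statement for the sample Jacobians, and controlling the plug-in error from $\hat{\mathbf{\gamma}}$ that enters through $\hat{\tilde{y}}_{i}=y_{i}-\mathbf{z}_{i}^{\prime }\hat{\mathbf{\gamma}}$ both inside $\hat{\mathbf{g}}_{n}$ and inside its derivatives. The polynomial structure of $\hat{\mathbf{g}}_{n}$ in $(\mathbf{\eta},\mathbf{\gamma})$, combined with the $\sqrt{n}$-rate for $\hat{\mathbf{\gamma}}$ from Theorem \ref{lem:gamma_est_consistency}, ensures that the quadratic remainder terms in the expansion are $o_{p}(1)$ and do not affect the limit.
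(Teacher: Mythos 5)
Your proposal is correct and follows essentially the same route as the paper: the interior first-order condition, a mean-value expansion of $\hat{\mathbf{g}}_{n}$ (the paper expands sequentially, first in $\mathbf{\eta}$ at $\hat{\mathbf{\gamma}}$ and then in $\mathbf{\gamma}$, while you expand jointly, which is equivalent), convergence of the sample Jacobians and $\mathbf{A}_{n}$, and Slutsky combined with Theorem \ref{lem:gamma_est_consistency}. Your explicit remarks on joint convergence of the two limiting terms and on the irrelevance of the overall sign are, if anything, slightly more careful than the paper's treatment.
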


The proof of Theorem \ref{thm:normality} is provided in Appendix \ref%
{sec:Proofs}.

\begin{remark}
In practice, we estimate the variance of the asymptotic distribution of $%
\hat{\eta}$ by 
\begin{equation}  \label{eq:est_variance_matrix_gmm}
\hat{\mathbf{V}}_{\eta} = \left( \hat{\mathbf{G}}^\prime \hat{\mathbf{A}}_n 
\hat{\mathbf{G}} \right)^{-1} \hat{\mathbf{G}}^\prime \hat{\mathbf{A}}_n 
\hat{\mathbf{V}}_{\zeta} \hat{\mathbf{A}}_n^\prime \hat{\mathbf{G}} \left( 
\hat{\mathbf{G}}^\prime \hat{\mathbf{A}}_n \hat{\mathbf{G}} \right)^{-1},
\end{equation}
where $\hat{\mathbf{G}} = \nabla_{\left( \mathbf{\sigma }^{\prime },\mathbf{%
\theta }^{\prime }\right) ^{\prime }} \hat{\mathbf{g}}_{n}\left( \hat{%
\mathbf{\theta}}, \hat{\mathbf{\sigma}}, \hat{\gamma} \right)$, $\hat{%
\mathbf{A}}_n$ is given by \eqref{eq:efficient_weight_mat}, and 
\begin{equation*}
\hat{\mathbf{V}}_\zeta = \frac{1}{n} \sum_{i=1}^{n} \mathbf{\psi}_{n,i}%
\mathbf{\psi}_{n,i}^\prime,
\end{equation*}
where 
\begin{equation*}
\mathbf{\psi}_{n,i} = \hat{\mathbf{g}}_i\left( \hat{\mathbf{\theta}}, \hat{%
\mathbf{\sigma}}, \hat{\gamma} \right) + \nabla _{\mathbf{\gamma }}\hat{%
\mathbf{g}}_{n}\left( \hat{\mathbf{\theta}}, \hat{\mathbf{\sigma}}, \hat{%
\gamma}\right) \mathbf{L} \mathbf{Q}_{n,ww}^{-1}\left( \mathbf{w}_i\hat{\xi}_i
\right),
\end{equation*}
and $\mathbf{L} = 
\begin{pmatrix}
\mathbf{0}_{p_z\times 1} & \mathbf{I}_{p_z}%
\end{pmatrix}%
$ is the loading matrix that selects $\mathbf{\gamma}$ out of $\mathbf{\phi}$%
.
\end{remark}


\section{Multiple regressors with random coefficients\label{sec:Extensions}}

One important extension of the regression model \eqref{eq:basic_model} is to
allow for multiple regressors with random coefficients having categorical
distribution. With this in mind consider 
\begin{equation}
y_{i}=\mathbf{x}_{i}^{\prime }\mathbf{\beta }_{i}+\mathbf{z}_{i}^{\prime }%
\mathbf{\gamma }+u_{i},  \label{eq:multiple_x_model}
\end{equation}%
where the $p\times 1$ vector of random coefficients, $\mathbf{\beta }_{i}\in 
\mathbb{R}^{p}$ follows the multivariate distribution\footnote{%
We assume the number of categories $K$ is homogeneous across $j=1,2,\cdots
,p $. This is for notational simplicity, and can be readily generalized to
allow for $K_{j}\neq K_{j^{\prime }}$ without affecting the main results.} 
\begin{equation}
\mathrm{Pr}\left( \beta _{i1}=b_{1k_{1}},\beta _{i2}=b_{2k_{2}},\cdots
,\beta _{ip}=b_{pk_{p}}\right) =\pi _{k_{1},k_{2},\cdots ,k_{p}},
\label{eq:prob_dist_ext}
\end{equation}%
with $k_{j}\in \left\{ 1,2,\cdots ,K\right\} $, $b_{j1}<b_{j2}<\cdots
<b_{jK} $, and 
\begin{equation*}
\sum_{k_{1},k_{2},\cdots ,k_{p}\in \left\{ 1,2,\cdots ,K\right\} }\pi
_{k_{1},k_{2},\cdots ,k_{p}}=1.
\end{equation*}%
As in Section \ref{sec:Identification}, $\mathbf{\gamma }\in \mathbb{R}%
^{p_{z}}$, $\mathbf{w}_{i}=\left( \mathbf{x}_{i}^{\prime },\mathbf{z}%
_{i}^{\prime }\right) ^{\prime }$, $\mathbf{\beta }_{i}\perp \mathbf{w}_{i}$%
, $u_{i}\perp \mathbf{w}_{i}$, and $u_{i}$ are independently distributed
over $i$ with mean $0.$

\medskip 
\noindent \textbf{Example 1} \textit{Consider the simple case with $p = 2$
and $K = 2$. For $j = 1, 2$, denote two categories as $\left\{ L, H \right\}$%
. The probabilities of four possible combinations of realized $\mathbf{\beta}%
_i$ is summarized in Table \ref{tab:dist_beta_p2_k2}, where $\pi_{LL} +
\pi_{LH} + \pi_{HL} + \pi_{HH} = 1$. } 
\begin{table}[htbp]
\caption{Distribution of $\mathbf{\protect\beta}_i$ with $p = 2$ and $K = 2$}
\label{tab:dist_beta_p2_k2}
\begin{center}
\begin{tabular}{c|c|c}
\hline
& $k_2 = L$ & $k_2 = H$ \\ \hline
$k_1 = L$ & $\pi_{LL} = \Pr \left( \beta_{i1} = b_{1L}, \beta_{i2} = b_{2L}
\right)$ & $\pi_{LH} = \Pr \left( \beta_{i1} = b_{1L}, \beta_{i2} = b_{2H}
\right)$ \\ \hline
$k_1 = H$ & $\pi_{HL} = \Pr \left( \beta_{i1} = b_{1H}, \beta_{i2} = b_{2L}
\right)$ & $\pi_{HH} = \Pr \left( \beta_{i1} = b_{1H}, \beta_{i2} = b_{2H}
\right)$ \\ \hline
\end{tabular}%
\end{center}
\end{table}
\medskip 

We first identify the moments of $\mathbf{\beta }_{i}$. As in Section \ref%
{sec:Identification}, $\mathbf{\phi }=\left( \mathrm{E}\left( \mathbf{\beta }%
_{i}\right) ^{\prime },\mathbf{\gamma }^{\prime }\right) ^{\prime }$ is
identified by 
\begin{equation}
\mathbf{\phi } = \mathbf{Q}_{ww} ^{-1} \mathbf{q}_{wy} ,
\label{eq:identify_phi_ext}
\end{equation}%
under Assumption \ref{assu:identification_regularity_condition}. We now
consider the identification of the higher order moments of $\mathbf{\beta }%
_{i}$ up to the finite order $2K-1$.

Since $\mathbf{\gamma }$ is identified as in \eqref{eq:identify_phi_ext}, we
treat it as known and let $\tilde{y}_{i}^{r}=y_{i}-\mathbf{z}_{i}^{\prime }%
\mathbf{\gamma }$. For $r=2,3,\cdots ,2K-1$, consider the moment conditions 
\begin{align}
\mathrm{E}\left( \tilde{y}_{i}^{r}\right) & =\mathrm{E}\left[ \left( \mathbf{%
x}_{i}^{\prime }\mathbf{\beta }_{i}+u_{i}\right) ^{r}\right]  \notag \\
& =\mathrm{E}\left[ \left( \mathbf{x}_{i}^{\prime }\mathbf{\beta }%
_{i}\right) ^{r}\right] +\mathrm{E}\left( u_{i}^{r}\right) +\sum_{s=2}^{r-1}%
\binom{r}{s}\mathrm{E}\left[ \left( \mathbf{x}_{i}^{\prime }\mathbf{\beta }%
_{i}\right) ^{r-s}\right] \mathrm{E}\left( u_{i}^{s}\right) .
\label{eq:moment_y_r_ext}
\end{align}%
Note that $\mathbf{x}_{i}^{\prime }\mathbf{\beta }_{i}=\sum_{j=1}^{p}\beta
_{ij}x_{ij}$, and 
\begin{equation*}
\mathrm{E}\left[ \left( \sum_{j=1}^{p}\beta _{ij}x_{ij}\right) ^{r}\right]
=\sum_{\sum_{j=1}^{p}q_{j}=r}\binom{r}{\mathbf{q}}\mathrm{E}\left(
\prod_{j=1}^{p}x_{ij}^{q_{j}}\right) \mathrm{E}\left( \prod_{j=1}^{p}\beta
_{ij}^{q_{j}}\right) ,
\end{equation*}%
where $\binom{r}{\mathbf{q}}=\frac{r!}{q_{1}!q_{2}!\cdots q_{p}!}$, for
non-negative integers $r$, $q_{1}$, $\cdots $, $q_{p}$ with $%
r=\sum_{j=1}^{p}q_{j}$, denotes the multinomial coefficients. We stack $%
\prod_{j=1}^{p}x_{ij}^{q_{j}}$ with $\mathbf{q}\in \left\{ \mathbf{q}\in
\left\{ 0,1,\cdots r\right\} ^{p}:\sum_{j=1}^{p}q_{j}=r\right\} $ in a
vector form by denoting \footnote{%
For $\mathbf{x}\in \mathbb{R}^{p}$, note that $\mathbf{\tau }_{0}\left( 
\mathbf{x}\right) =1$, $\mathbf{\tau }_{1}\left( \mathbf{x}\right) =\mathbf{x%
}$ and $\mathbf{\tau }_{2}\left( \mathbf{x}\right) =\mathrm{vech}\left( 
\mathbf{x}\mathbf{x}^{\prime }\right) $.} 
\begin{equation*}
\mathbf{\tau }_{r}\left( \mathbf{x}_{i}\right) =\left[ \varphi \left( 
\mathbf{x}_{i},\mathbf{q}_{1}\right) ,\varphi \left( \mathbf{x}_{i},\mathbf{q%
}_{2}\right) ,\cdots ,\varphi \left( \mathbf{x}_{i},\mathbf{q}_{\nu
_{r}}\right) \right] ^{\prime },
\end{equation*}%
where $\varphi \left( \mathbf{x}_{i},\mathbf{q}\right)
=\prod_{j=1}^{p}x_{ij}^{q_{j}}$ and $\nu _{r}=\binom{r+p-1}{p-1}$ is the
number of distinct monomials of degree $r$ on the variables $%
x_{i1},x_{i2},\cdots ,x_{ip}$. Similarly, 
\begin{equation*}
\mathbf{\tau }_{r}\left( \mathbf{\beta }_{i}\right) =\left[ \varphi \left( 
\mathbf{\beta }_{i},\mathbf{q}_{1}\right) ,\varphi \left( \mathbf{\beta }%
_{i},\mathbf{q}_{2}\right) ,\cdots ,\varphi \left( \mathbf{\beta }_{i},%
\mathbf{q}_{\nu _{r}}\right) \right] ^{\prime },
\end{equation*}%
where $\varphi \left( \mathbf{\beta }_{i},\mathbf{q}\right)
=\prod_{j=1}^{p}\beta _{ij}^{q_{j}}$.

\medskip 
\noindent \textbf{Example 2} \textit{Consider $p = 2$ and $r = 2$, we have 
\begin{align*}
\mathbf{\tau}_2\left( \mathbf{x}_i \right) & = \left( x_{i1}^2,
x_{i1}x_{i2}, x_{i2}^2 \right)^\prime, \\
\mathbf{\tau}_2\left( \mathbf{\beta}_i \right) & = \left( \beta_{i1}^2,
\beta_{i1}\beta_{i2}, \beta_{i2}^2 \right)^\prime,
\end{align*}
and 
\begin{align*}
\mathrm{E}\left[ \left( x_{i1}\beta_{i1} + x_{i2}\beta_{i2} \right)^2 \right]
& = \mathrm{E}\left(x_{i1}^2\right)\mathrm{E}\left(\beta_{i1}^2\right) + 2 
\mathrm{E}\left(x_{i1}x_{i2}\right)\mathrm{E}\left(\beta_{i1}\beta_{i2}%
\right) + \mathrm{E}\left(x_{i2}^2 \right)\mathrm{E}\left(\beta_{i2}^2\right)
\\
& = \left[ \mathrm{E}\left(x_{i1}^2\right), \mathrm{E}\left(x_{i1}x_{i2}%
\right), \mathrm{E}\left(x_{i2}^2 \right) \right] \mathrm{diag}\left[ \left(
1, 2, 1 \right)^\prime \right] \left[ \mathrm{E}\left(\beta_{i1}^2\right), 
\mathrm{E}\left(\beta_{i1}\beta_{i2}\right), \mathrm{E}\left(\beta_{i2}^2
\right) \right]^\prime \\
& = \mathrm{E}\left[\mathbf{\tau}_2\left( \mathbf{x}_i \right)\right]^\prime 
\mathbf{\Lambda}_2 \mathrm{E}\left[\mathbf{\tau}_2\left( \mathbf{\beta}_i
\right)\right] ,
\end{align*}
where $\mathbf{\Lambda}_2 = \mathrm{diag}\left[ \left( 1, 2, 1
\right)^\prime \right]$.} 
\medskip

Then the moment condition \eqref{eq:moment_y_r_ext} can be written as 
\begin{align}
\mathrm{E}\left( \tilde{y}_{i}^{r}\right) & =\mathrm{E}\left[ \mathbf{\tau }%
_{r}\left( \mathbf{x}_{i}\right) \right] ^{\prime }\mathbf{\Lambda }_{r}%
\mathrm{E}\left[ \mathbf{\tau }_{r}\left( \mathbf{\beta }_{i}\right) \right]
+\mathrm{E}\left( u_{i}^{r}\right)  \notag \\
& \quad \quad \quad +\sum_{s=2}^{r-1}\binom{r}{s}\mathrm{E}\left[ \mathbf{%
\tau }_{r-s}\left( \mathbf{x}_{i}\right) \right] ^{\prime }\mathbf{\Lambda }%
_{r-s}\mathrm{E}\left[ \mathbf{\tau }_{r-s}\left( \mathbf{\beta }_{i}\right) %
\right] \mathrm{E}\left( u_{i}^{s}\right) ,  \label{eq:moment_y_r_ext_vec}
\end{align}%
where $\mathbf{\Lambda }_{r}=\mathrm{diag}\left[ \left[ \binom{r}{\mathbf{q}}%
\right] _{\sum_{j=1}^{p}q_{j}=r}\right] $ is the $\nu _{r}\times \nu _{r}$
diagonal matrix of multinomial coefficients. We further consider the moment
conditions 
\begin{align}
\mathrm{E}\left( \tilde{y}_{i}^{r}\mathbf{\tau }_{r}\left( \mathbf{x}%
_{i}\right) \right) & =\mathrm{E}\left[ \mathbf{\tau }_{r}\left( \mathbf{x}%
_{i}\right) \mathbf{\tau }_{r}\left( \mathbf{x}_{i}\right) ^{\prime }\right] 
\mathbf{\Lambda }_{r}\mathrm{E}\left[ \mathbf{\tau }_{r}\left( \mathbf{\beta 
}_{i}\right) \right] +\mathrm{E}\left[ \mathbf{\tau }_{r}\left( \mathbf{x}%
_{i}\right) \right] \mathrm{E}\left( u_{i}^{r}\right)  \notag \\
& \quad \quad \quad +\sum_{s=2}^{r-1}\binom{r}{s}\mathrm{E}\left[ \mathbf{%
\tau }_{r}\left( \mathbf{x}_{i}\right) \mathbf{\tau }_{r-s}\left( \mathbf{x}%
_{i}\right) ^{\prime }\right] \mathbf{\Lambda }_{r-s}\mathrm{E}\left[ 
\mathbf{\tau }_{r-s}\left( \mathbf{\beta }_{i}\right) \right] \mathrm{E}%
\left( u_{i}^{s}\right) ,  \label{eq:moment_y_r_x_r_ext_vec}
\end{align}%
$r=2,3,\cdots ,2K-1$. \eqref{eq:moment_y_r_ext_vec} and %
\eqref{eq:moment_y_r_x_r_ext_vec} reduce to \eqref{eq:mc_n_r} and %
\eqref{eq:mc_n_2r} when $p=1$.

\begin{assumption}
\label{assu:idenfication_regularity_condition_ext} $\,$

\begin{enumerate}
\item[(a)] $\left\Vert n^{-1}\sum_{i=1}^{n}\mathrm{E}\left( \tilde{y}_{i}^{r}%
\mathbf{\tau }_{s}\left( \mathbf{x}_{i}\right) \right) -\mathbf{\rho }%
_{r,s}\right\Vert =O\left( n^{-1/2}\right) ,$ and $\left\Vert \mathbf{\rho }%
_{r,s}\right\Vert <\infty $, $r,s=0,1,\cdots ,2K-1$.

\item[(b)] $\left\Vert n^{-1}\sum_{i=1}^{n}\mathrm{E}\left[ \mathbf{\tau }%
_{r}\left( \mathbf{x}_{i}\right) \mathbf{\tau }_{s}\left( \mathbf{x}%
_{i}\right) ^{\prime }\right] -\mathbf{\Xi }_{r,s}\right\Vert =O\left(
n^{-1/2}\right) ,$ and $\left\Vert \mathbf{\Xi }_{r,s}\right\Vert <\infty $, 
$r,s=0,1,\cdots ,2K-1$.

\item[(c)] $\left\vert n^{-1}\sum_{i=1}^{n}\mathrm{E}\left( u_{i}^{r}\right)
-\sigma _{r}\right\vert =O\left( n^{-1/2}\right) ,$ and $\left\vert \sigma
_{r}\right\vert <\infty $ for $r=2,3,\cdots ,2K-1$.

\item[(d)] $\left\Vert n^{-1}\sum_{i=1}^{n} \left[ \mathrm{var} \left( 
\mathbf{\tau}_r \left(\mathbf{x_i} \right) \right) - \left( \mathbf{\Xi}%
_{r,r} - \mathbf{\rho}_{0, r}\mathbf{\rho}_{0, r}^\prime \right) \right]
\right\Vert = O(n^{-1/2})$, where $\mathbf{\Xi}_{r,r} - \mathbf{\rho}_{0, r}%
\mathbf{\rho}_{0, r}^\prime \succ 0$ for $r=2,3\cdots, 2K-1$.
\end{enumerate}
\end{assumption}

\begin{theorem}
\label{thm: idenfitication_moment_beta_ext} For any $\mathbf{q} \in \left\{ 
\mathbf{q}\in \left\{ 0, 1, \cdots r \right\} ^{p}: \sum_{j=1}^p q_j =
r\right\}$ and $r = 2, 3,\cdots, 2K-1$, $\mathrm{E} \left(\prod_{j=1}^{p}%
\beta_{ij}^{q_{j}}\right)$ and $\sigma_r$ are identified under Assumptions %
\ref{assu:identification_regularity_condition} and \ref%
{assu:idenfication_regularity_condition_ext}.
\end{theorem}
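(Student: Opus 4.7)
The plan is to mimic the strategy of Theorem~\ref{lem: identification_moments}, proceeding by strong induction on $r$, with the univariate matrix $\bigl(\begin{smallmatrix}\rho_{0,r} & 1 \\ \rho_{0,2r} & \rho_{0,r}\end{smallmatrix}\bigr)$ replaced by a block matrix of dimension $(\nu_r+1)\times(\nu_r+1)$. The base case $r=1$ is already handled, since $\mathrm{E}(\boldsymbol{\beta}_i)$ is identified as a subvector of $\boldsymbol{\phi}$ via~\eqref{eq:identify_phi_ext}. For the inductive step, assume $\mathrm{E}[\boldsymbol{\tau}_s(\boldsymbol{\beta}_i)]$ for $s=1,\dots,r-1$ and $\sigma_s$ for $s=2,\dots,r-1$ are already identified.

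Next, I would sum the moment identities~\eqref{eq:moment_y_r_ext_vec} and~\eqref{eq:moment_y_r_x_r_ext_vec} over $i=1,\dots,n$, divide by $n$, and let $n\to\infty$, invoking parts (a)--(c) of Assumption~\ref{assu:idenfication_regularity_condition_ext} together with Assumption~\ref{assu:identification_regularity_condition}, exactly as in the derivation of~\eqref{eq:mc_limit_r}--\eqref{eq:mc_limit_2r}. The residual $O(n^{-1/2})$ terms arising from the heterogeneity of the moments of $\mathbf{x}_i$ and $u_i$ vanish in the limit. Moving all already-identified quantities (those involving $\boldsymbol{\tau}_s(\boldsymbol{\beta}_i)$ and $\sigma_s$ for $s<r$) to the right-hand side, the two equations reduce to the linear system
\begin{equation*}
\begin{pmatrix} 1 & \boldsymbol{\rho}_{0,r}^{\prime}\boldsymbol{\Lambda}_r \\[2pt] \boldsymbol{\rho}_{0,r} & \boldsymbol{\Xi}_{r,r}\boldsymbol{\Lambda}_r \end{pmatrix} \begin{pmatrix} \sigma_r \\[2pt] \mathrm{E}[\boldsymbol{\tau}_r(\boldsymbol{\beta}_i)] \end{pmatrix} = \begin{pmatrix} \rho_{r,0} \\[2pt] \boldsymbol{\rho}_{r,r} \end{pmatrix} - (\text{known vector}),
\end{equation*}
a $(\nu_r+1)\times(\nu_r+1)$ system in the unknowns $\sigma_r$ and $\mathrm{E}[\boldsymbol{\tau}_r(\boldsymbol{\beta}_i)]$.

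The main obstacle is showing that the coefficient matrix above is non-singular. For this I would invoke part (d) of Assumption~\ref{assu:idenfication_regularity_condition_ext}: the matrix $\boldsymbol{\Xi}_{r,r}-\boldsymbol{\rho}_{0,r}\boldsymbol{\rho}_{0,r}^{\prime}\succ 0$ is precisely the Schur complement of the leading $1$ in the block matrix $\bigl(\begin{smallmatrix} 1 & \boldsymbol{\rho}_{0,r}^{\prime} \\ \boldsymbol{\rho}_{0,r} & \boldsymbol{\Xi}_{r,r}\end{smallmatrix}\bigr)$, so this block matrix is itself positive definite, hence non-singular. Since $\boldsymbol{\Lambda}_r=\mathrm{diag}\bigl[\binom{r}{\mathbf{q}}\bigr]$ is a diagonal matrix of strictly positive multinomial coefficients and is therefore invertible, multiplying the second block column on the right by $\boldsymbol{\Lambda}_r^{-1}$ shows that the coefficient matrix of our linear system is non-singular as well.

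Consequently the system admits a unique solution, identifying $\sigma_r$ and $\mathrm{E}[\boldsymbol{\tau}_r(\boldsymbol{\beta}_i)]$; since each component $\mathrm{E}\bigl(\prod_{j=1}^p\beta_{ij}^{q_j}\bigr)$ is an entry of $\mathrm{E}[\boldsymbol{\tau}_r(\boldsymbol{\beta}_i)]$ for the appropriate $\mathbf{q}$ with $\sum_j q_j=r$, this completes the inductive step and hence the proof for all $r=2,\dots,2K-1$. The bookkeeping of the ``known'' right-hand side will be the most tedious part, but it is entirely mechanical given the inductive hypothesis; no new probabilistic input beyond Assumption~\ref{assu:idenfication_regularity_condition_ext} is needed.
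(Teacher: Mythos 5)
Your proposal is correct and follows essentially the same route as the paper: sum \eqref{eq:moment_y_r_ext_vec} and \eqref{eq:moment_y_r_x_r_ext_vec} over $i$, pass to the limit under Assumptions \ref{assu:identification_regularity_condition} and \ref{assu:idenfication_regularity_condition_ext}(a)--(c), and solve the resulting $(\nu_r+1)$-dimensional linear system recursively in $r$, with invertibility of the coefficient matrix following from $\mathbf{\Xi}_{r,r}-\mathbf{\rho}_{0,r}\mathbf{\rho}_{0,r}^{\prime}\succ 0$ in part (d) and the positivity of the multinomial coefficients in $\mathbf{\Lambda}_r$. Your Schur-complement phrasing and explicit induction are just cosmetic variants of the paper's determinant factorization $\det(\mathbf{M}_r)=\det(\mathbf{\Xi}_{r,r}-\mathbf{\rho}_{0,r}\mathbf{\rho}_{0,r}^{\prime})\det(\mathbf{\Lambda}_r)>0$ and its "sequentially solve" argument.
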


\begin{proof}
For $r=2,3,\cdots ,2K-1$, sum \eqref{eq:moment_y_r_ext_vec} and %
\eqref{eq:moment_y_r_x_r_ext_vec} over $i,$ go through the same steps as in
the proof of Theorem \ref{lem: identification_moments}, then by Assumptions %
\ref{assu:idenfication_regularity_condition_ext}(a) to (c), we have (for $%
n\rightarrow \infty $) 
\begin{align}
\mathbf{\rho }_{r,0}^{\prime }\mathbf{\Lambda }_{r}\mathrm{E}\left[ \mathbf{%
\tau }_{r}\left( \mathbf{\beta }_{i}\right) \right] +\sigma _{r}& =\mathbf{%
\rho }_{r,0}-\sum_{s=2}^{r-1}\binom{r}{s}\mathbf{\rho }_{0,r-s}\mathbf{%
\Lambda }_{r-s}\mathrm{E}\left[ \mathbf{\tau }_{r-s}\left( \mathbf{\beta }%
_{i}\right) \right] \sigma _{s},  \label{eq:mc_limit_r_ext} \\
\mathbf{\Xi }_{r,r}\mathbf{\Lambda }_{r}\mathrm{E}\left[ \mathbf{\tau }%
_{r}\left( \mathbf{\beta }_{i}\right) \right] +\mathbf{\rho }_{0,r}\sigma
_{r}& =\mathbf{\rho }_{r,r}-\sum_{s=2}^{r-1}\binom{r}{s}\mathbf{\Xi }_{r,r-s}%
\mathbf{\Lambda }_{r-s}\mathrm{E}\left[ \mathbf{\tau }_{r-s}\left( \mathbf{%
\beta }_{i}\right) \right] \sigma _{s}.  \label{eq:mc_limit_2r_ext}
\end{align}%
Note that 
\begin{equation*}
\mathbf{M}_{r}=%
\begin{pmatrix}
\mathbf{\Xi }_{r,r} & \mathbf{\rho }_{0,r} \\ 
\mathbf{\rho }_{0,r}^{\prime } & 1%
\end{pmatrix}%
\begin{pmatrix}
\mathbf{\Lambda }_{r} & \mathbf{0} \\ 
\mathbf{0} & 1%
\end{pmatrix}%
,
\end{equation*}%
is invertible since $\det \left( \mathbf{M}_{r}\right) =\det \left( \mathbf{%
\Xi }_{r,r}-\mathbf{\rho }_{0,r}\mathbf{\rho }_{0,r}^{\prime }\right) \det
\left( \mathbf{\Lambda }_{r}\right) >0,$ for $r=2,3,\cdots ,R$, by
Assumption \ref{assu:idenfication_regularity_condition_ext}(d). As a result,
we can sequentially solve \eqref{eq:mc_limit_r_ext} and %
\eqref{eq:mc_limit_2r_ext} for $\mathrm{E}\left[ \mathbf{\tau }_{r}\left( 
\mathbf{\beta }_{i}\right) \right] $ and $\sigma _{r}$, for $r=2,3,\cdots
,2K-1$.
\end{proof}


We now move from the moments of $\mathbf{\beta }_{i}$ to the distribution of 
$\mathbf{\beta }_{i}$. We first focus on the identification of the marginal
probabilities obtained from (\ref{eq:prob_dist_ext}) by averaging out the
effects of the other coefficients except for $\beta _{ij}$, namely we
initially focus on identification of $\lambda _{jk}=\Pr \left( \beta
_{ij}=b_{jk}\right) $, for $k=1,2,\cdots ,K,$ and $j=1,2,\cdots ,p$.

\begin{remark}
Focusing on the marginal distribution of $\beta _{i}$ is similar to focusing
on estimation of partial derivatives in the context of non-parametric
estimation, where the curse of dimensionality applies. Consider the
estimation of regressing $y_{i}$ on $\mathbf{x}_{i}=\left(
x_{i1},x_{i2},\cdots ,x_{ip}\right) ^{\prime }$, 
\begin{equation*}
y_{i}=F\left( x_{i1},x_{i2},\cdots .x_{ip}\right) +u_{i}.
\end{equation*}%
Then if $F\left( x_{1},x_{i2},\cdots ,x_{ip}\right) $ is a homogeneous
function (of degree $1/\mu $), then 
\begin{equation*}
y_{i}=\sum_{j=1}^{p}\left( \mu \frac{\partial F\left( \cdot \right) }{%
\partial x_{ij}}\right) x_{ij}+u_{i},
\end{equation*}%
and under certain conditions we can treat $\mu \frac{\partial F\left( \cdot
\right) }{\partial x_{ij}}\equiv \beta _{ij}$.
\end{remark}

By Theorem \ref{thm: idenfitication_moment_beta_ext}, $\mathrm{E}\left(
\beta _{ij}^{r}\right) $ is identified for $r=1,2,\cdots ,2K-1$ under
Assumptions \ref{assu:identification_regularity_condition} and \ref%
{assu:idenfication_regularity_condition_ext}. By \eqref{eq:prob_dist_ext},
we have equations 
\begin{equation}
\mathrm{E}\left( \beta _{ij}^{r}\right) =\sum_{k=1}^{K}\lambda
_{jk}b_{jk}^{r},  \label{eq:mbeta_ext}
\end{equation}%
$r=0,1,\cdots ,2K-1$, which is of the same form as \eqref{mbeta} and %
\eqref{eq:mbeta_mat}. To identify $\mathbf{\lambda }_{j}=\left( \lambda
_{j1},\lambda _{j2},\cdots ,\lambda _{jK}\right) ^{\prime }$ and $\mathbf{b}%
_{j}=\left( b_{j1},b_{j2},\cdots ,b_{jK}\right) ^{\prime }$, we can verify
the system of $2K$ equations in \eqref{eq:mbeta_ext} has a unique solution
if $b_{j1}<b_{j2}<\cdots <b_{jK}$ and $\lambda _{jk}\in \left( 0,1\right) $.
The following corollary is a direct application of Theorem \ref%
{prop:identification_of_beta_L_H}. 
\begin{corollary}
    \label{core:marginal_dist}
    Consider the model \eqref{eq:multiple_x_model} and suppose that Assumptions \ref{assu:identification_regularity_condition} and \ref{assu:idenfication_regularity_condition_ext} hold. Then the parameters $\mathbf{\theta}_j = \left( \mathbf{\lambda}_{j}^\prime, \mathbf{b}_{j}^\prime \right)^\prime$ of the marginal distribution of $\beta_i$ with respect to $\beta_{ij}$ is identified subject to $b_{j1} < b_{j2} < \cdots < b_{jK}$ and $\lambda_{jk} \in \left( 0, 1 \right)$ for $j = 1, 2, \cdots, p$.
\end{corollary}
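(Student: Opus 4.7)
The plan is to reduce the corollary to $p$ separate applications of Theorem~\ref{prop:identification_of_beta_L_H}, one per coordinate $j$. The heavy lifting --- getting from the model to the moments of the random coefficients --- has already been done in Theorem~\ref{thm: idenfitication_moment_beta_ext}; what remains is to extract the univariate marginal moments from the multivariate moment vector and verify that the marginal law of $\beta_{ij}$ satisfies the hypotheses of the univariate identification theorem.

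First I would observe that by Theorem~\ref{thm: idenfitication_moment_beta_ext} the full vector $\mathrm{E}[\mathbf{\tau}_{r}(\mathbf{\beta}_{i})]$ is identified for $r = 2, 3, \ldots, 2K-1$, while $\mathrm{E}(\mathbf{\beta}_{i})$ is identified through (\ref{eq:identify_phi_ext}). Since $\mathbf{\tau}_{r}(\mathbf{\beta}_{i})$ enumerates all degree-$r$ monomials in $(\beta_{i1},\ldots,\beta_{ip})$, choosing the multi-index $\mathbf{q}$ with $q_{j}=r$ and $q_{j'}=0$ for $j'\neq j$ isolates the pure power $\beta_{ij}^{r}$. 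Hence $\mathrm{E}(\beta_{ij}^{r})$ is identified for every $j=1,\ldots,p$ and every $r=1,2,\ldots,2K-1$. Second, I would note that marginalizing (\ref{eq:prob_dist_ext}) over the indices $\{k_{j'}: j'\neq j\}$ gives the marginal categorical law of $\beta_{ij}$, with probabilities $\lambda_{jk}$ and support points $b_{jk}$. The stated hypotheses $b_{j1}<b_{j2}<\cdots<b_{jK}$ and $\lambda_{jk}\in(0,1)$ are precisely the analogues of the conditions on $\mathbf{\pi}$ and $\mathbf{b}$ in the univariate model, and the moment equations (\ref{eq:mbeta_ext}) have exactly the same structure as (\ref{mbeta}), with $(\lambda_{jk},b_{jk})$ playing the roles of $(\pi_{k},b_{k})$.

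Finally, I would invoke Theorem~\ref{prop:identification_of_beta_L_H} once per coordinate $j$ to conclude that $\mathbf{\theta}_{j}=(\mathbf{\lambda}_{j}^{\prime},\mathbf{b}_{j}^{\prime})^{\prime}$ is uniquely determined by $\{\mathrm{E}(\beta_{ij}^{r})\}_{r=1}^{2K-1}$, whence $\mathbf{\theta}_{j}$ is identified for each $j=1,\ldots,p$. There is no real obstacle here --- the only care required is the notational translation between the multivariate setup of Section~\ref{sec:Extensions} and the univariate setup of Section~\ref{sec:Identification}, and the trivial observation that marginal probabilities of a joint categorical distribution automatically sum to one. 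This is precisely why the statement is phrased as a corollary rather than as a new theorem: the substantive work is already contained in Theorems~\ref{prop:identification_of_beta_L_H} and~\ref{thm: idenfitication_moment_beta_ext}.
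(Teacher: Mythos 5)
Your proposal is correct and follows essentially the same route as the paper: identify $\mathrm{E}(\beta_{ij}^{r})$, $r=1,\ldots,2K-1$, from Theorem \ref{thm: idenfitication_moment_beta_ext} (and \eqref{eq:identify_phi_ext} for the first moment) by selecting the pure-power monomials in $\mathbf{\tau}_{r}(\mathbf{\beta}_{i})$, note that the marginal law of $\beta_{ij}$ satisfies the system \eqref{eq:mbeta_ext}, which has the same form as \eqref{mbeta}, and apply Theorem \ref{prop:identification_of_beta_L_H} coordinate by coordinate. The paper treats the corollary as exactly this ``direct application,'' so there is nothing missing in your argument.
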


The problem of identification and estimation of the joint distribution of $%
\mathbf{\beta }_{i}$ is subject to the curse of dimensionality. We have $%
K^{p}-1$ probability weights, $\pi _{k_{1},k_{2},\cdots ,k_{p}}$, to be
identified in addition to the $pK$ categorical coefficients $b_{ij}$ that
are identified by Corollary \ref{core:marginal_dist}. The number of
parameters increases rapidly with $p$. Even in the simplest case with $K=2$,
the total number of unknown parameters is $2p+2^{p}-1$, which grows
exponentially.

Note that the marginal probabilities $\lambda _{jk}$ are related to the
joint distribution by 
\begin{equation}
\lambda _{jk}=\sum_{k_{1},\cdots ,k_{j-1},k_{j+1},\cdots ,k_{p}\in \left\{
1,2,\cdots ,K\right\} }\pi _{k_{1},k_{2},\cdots ,k_{j-1},k,k_{j+1},\cdots
,k_{p}},  \label{eq:marginal_dist_prob}
\end{equation}%
$k=1,2,\cdots ,K$ and $j=1,2,\cdots ,p$. The number of linearly independent
equations in \eqref{eq:marginal_dist_prob} is $pK-(p-1)$.

\medskip 
\noindent\textbf{Example 3} \textit{\ Consider the same setup as in Example
1 with $p = 2$ and $K = 2$. The marginal probabilities are obtained by 
\begin{align}
\lambda_{1L} = \Pr\left( \beta_{i1} = b_{1L} \right) = \pi_{LL} + \pi_{LH},
& \quad \lambda_{1H} = \Pr\left( \beta_{i1} = b_{1H} \right) = 1 -
\lambda_{1L} = \pi_{HL} + \pi_{HH},  \notag \\
\lambda_{2L} = \Pr\left( \beta_{i2} = b_{2L} \right) = \pi_{LL} + \pi_{HL},
& \quad \lambda_{2H} = \Pr\left( \beta_{i2} = b_{2H} \right) = 1 -
\lambda_{2L} = \pi_{LH} + \pi_{HH} .  \label{eq:marginal_prob_p2k2}
\end{align}
Note that any equation in \eqref{eq:marginal_prob_p2k2} can be expressed as
a linear combination of other three equations, for example $\lambda_{2H} =
\lambda_{1L} + \lambda_{1H} - \lambda_{2L}$. } 
\medskip

The equations corresponding to the cross-moments, $\mathrm{E}\left(
\prod_{j=1}^{p}\beta _{ij}^{q_{j}}\right) $, are 
\begin{equation}
\mathrm{E}\left( \prod_{j=1}^{p}\beta _{ij}^{q_{j}}\right)
=\sum_{k_{1},k_{2},\cdots ,k_{p}\in \left\{ 1,2,\cdots ,K\right\} }\left(
\prod_{j=1}^{p}b_{jk_{j}}^{q_{j}}\right) \pi _{k_{1},k_{2},\cdots ,k_{p}},
\label{eq:cross_moments}
\end{equation}%
for $\mathbf{q}\in \left\{ \mathbf{q}\in \left\{ 0,1,\cdots r-1\right\}
^{p}:\sum_{j=1}^{p}q_{j}=r\right\} $, $r=2,\cdots ,2K-1$. The linear system %
\eqref{eq:cross_moments} has 
\begin{equation*}
\sum_{r=1}^{2K-1}\binom{r+p-1}{p-1}-p(2K-1)
\end{equation*}%
equations. Then the total number of equations in %
\eqref{eq:marginal_dist_prob} and \eqref{eq:cross_moments} that can be
utilized to identify joint probabilities is $C_{r}=\sum_{r=1}^{2K-1}\binom{%
r+p-1}{p-1}-pK$, which is smaller than the number of joint probabilities $%
K^{p}-1$ for large $p$. When $K=2$, $C_{r}<K^{p}-1$ for $p\geq 7$.

Identification and estimation of the joint distribution of $\mathbf{\beta }%
_{i}$ in the general setting will not be pursued in this paper due to the
curse of dimensionality. Instead, we consider special cases, that are
empirically relevant, in which identification of the joint distribution of $%
\mathbf{\beta }_{i}$ can be readily established. We first consider small $p$
and $K$, in particular $p=2$ and $K=2$ as in Example 1.

\medskip \noindent \textbf{Example 4} 
Consider the same setup as in Example 1 with $p=2$ and $K=2$. In addition to %
\eqref{eq:marginal_prob_p2k2}, consider the cross-moment, 
\begin{equation}
\mathrm{E}\left( \beta _{i1}\beta _{i2}\right) =b_{1L}b_{2L}\pi
_{LL}+b_{1L}b_{2H}\pi _{LH}+b_{1H}b_{2L}\pi _{HL}+b_{1H}b_{2H}\pi _{HH}.
\label{eq:cross_moments_p2k2}
\end{equation}%
Writing \eqref{eq:marginal_prob_p2k2} and \eqref{eq:cross_moments_p2k2} in
matrix form, we have 
\begin{equation*}
\mathbf{B}\mathbf{\pi }=\mathbf{\lambda },
\end{equation*}%
where 
\begin{equation*}
\mathbf{B}=%
\begin{pmatrix}
1 & 1 & 0 & 0 \\ 
0 & 0 & 1 & 1 \\ 
1 & 0 & 1 & 0 \\ 
b_{1L}b_{2L} & b_{1L}b_{2H} & b_{1H}b_{2L} & b_{1H}b_{2H}%
\end{pmatrix}%
,\,\mathbf{\pi }=%
\begin{pmatrix}
\pi _{LL} \\ 
\pi _{LH} \\ 
\pi _{HL} \\ 
\pi _{HH}%
\end{pmatrix}%
,\,\mathbf{\lambda }=%
\begin{pmatrix}
\lambda _{1L} \\ 
\lambda _{1H} \\ 
\lambda _{2L} \\ 
\mathrm{E}\left( \beta _{i1}\beta _{i2}\right)%
\end{pmatrix}%
.
\end{equation*}%
Note that $\mathrm{E}\left( \beta _{i1}\beta _{i2}\right) $ is identified by
Theorem \ref{thm: idenfitication_moment_beta_ext}, and $b_{jk_{j}}$ and $%
\lambda _{jk_{j}}$ are identified by Corollary \ref{core:marginal_dist}, and
matrix $\mathbf{B}$ is invertible given that $b_{1L}<b_{1H}$ and $%
b_{2L}<b_{2H}$. (See Appendix \ref{sec:Proofs}). As a result, the joint
probabilities, $\mathbf{\pi },$ are identified.

\medskip

\begin{remark}
The argument in Example 4 is applicable for identification of the joint
distribution of $\left( \beta_{ij}, \beta_{i,j^\prime} \right)^\prime$ for $%
j\neq j^\prime$ when $p > 2$ and $K = 2$.
\end{remark}


\section{Finite sample properties using Monte Carlo experiments\label%
{sec:Monte-Carlo-Simulation}}

We examine the finite sample performance of the categorical coefficient
estimator proposed in Section \ref{sec:Estimation} by Monte Carlo
experiments.

\subsection{Data generating processes\label{subsec:dgp}}

We generate $y_{i}$ as 
\begin{equation}
y_{i}=\alpha +x_{i}\beta _{i}+z_{i1}\gamma _{1}+z_{i2}\gamma _{2}+u_{i},%
\text{ for }i=1,2,...,n,  \label{eq:mc_dgp}
\end{equation}%
with $\beta _{i}$ distributed as in (\ref{eq:category_dist}) with $K=2,$ and
the parameters $\pi ,\beta _{L}$ and $\beta _{H}$.\footnote{%
A Monte Carlo experiment with $K=3$ is relegated to Section \ref%
{subsec:mc_k3} in the online supplement.}

We draw $\beta _{i}$ for each individual $i$ independently by setting $\beta
_{i}=\beta _{L}$ with probability $\pi $ and $\beta _{i}=\beta _{H}$ with
probability $1-\pi $, through a sequence of independent Bernoulli draws. \
We consider two sets of parameters in all DGPs, denoted as \textit{high
variance} and \textit{low variance} parametrization, respectively, 
\begin{equation}
\left( \pi ,\beta _{L},\beta _{H},\mathrm{E}\left( \beta _{i}\right) ,%
\mathrm{var}\left( \beta _{i}\right) \right) =%
\begin{cases}
\left( 0.5,1,2,1.5,0.25\right) & \left( high\,variance\right) \\ 
\left( 0.3,0.5,1.345,1.0915,0.15\right) & \left( low\,variance\right)%
\end{cases}%
.  \label{eq:mc_dgp_para}
\end{equation}%
$\beta _{H}/\beta _{L}=2$ for the \textit{high variance} parametrization,
and $\beta _{H}/\beta _{L} = 2.69$, for the \textit{low variance}
parametrization, which is motivated by the estimates in our empirical
illustration in Section \ref{sec:Empirical-Application}.\footnote{%
The estimates for $\beta_H / \beta_L$ in our empirical analysis range from
1.50 to 2.79.} The values of E$(\beta _{i})$ and $\mathrm{var}\left( \beta
_{i}\right) $ are obtained noting that E$(\beta _{i})=\pi \beta _{L}+(1-\pi
)\beta _{H}$, and $\mathrm{var}\left( \beta _{i}\right) =\pi (1-\pi )(\beta
_{H}-\beta _{L})^{2}$. The remaining parameters are set as $\alpha =0.25$,
and $\mathbf{\gamma }=\left( 1,1\right) ^{\prime },$ across DGPs.

We generate the regressors and the error terms as follows.

\medskip \textbf{DGP 1 (Baseline)} We first generate $\tilde{x}_{i}\sim 
\text{IID}\chi ^{2}(2)$, and then set $x_{i}=(\tilde{x}_{i}-2)/2$ so that $%
x_{i}$ has $0$ mean and unit variance. The additional regressors, $z_{ij}$,
for $j=1,2$ with homogeneous slopes are generated as%
\begin{equation*}
z_{i1}=x_{i}+v_{i1}\text{ and }z_{i2}=z_{i1}+v_{i2},
\end{equation*}%
with $v_{ij}\sim \text{IID }N\left( 0,1\right) $, for $j=1,2$. This ensures
that the regressors are sufficiently correlated. The error term, $u_{i}$, is
generated as $u_{i}=\sigma _{i}\varepsilon _{i}$, where $\sigma _{i}^{2}$
are generated as $0.5(1+\text{IID}\chi ^{2}(1))$, and $\varepsilon _{i}\sim 
\text{IID}N(0,1)$. Note that $\varepsilon _{i}$ and $\sigma _{i}^{2}$ are
generated independently, and $E(u_{i}^{2})=1$.

\medskip \textbf{DGP 2 (Categorical $x$)} This setup deviates from the
baseline DGP, and allows the distribution of $x_{i}$ to differ across $i$.
Accordingly, we generate $x_{i}=\left( \tilde{x}_{1i}-2\right) /2$ where $%
\tilde{x}_{1i}\sim \text{IID}\chi ^{2}\left( 2\right) $ for $i=1,2,\cdots
,\lfloor n/2\rfloor $, and $x_{i}=\left( \tilde{x}_{2i}-2\right) /4$ where $%
\tilde{x}_{2i}\sim \text{IID}\chi ^{2}\left( 4\right) $, for $i=\lfloor
n/2\rfloor +1,\cdots ,n$. The additional regressors, $z_{ij}$, for $j=1,2$
with homogeneous slopes are generated as 
\begin{equation*}
z_{i1}=x_{i}+v_{i1}\text{ and }z_{i2}=z_{i1}+v_{i2},
\end{equation*}%
with $v_{ij}\sim \text{IID }N\left( 0,1\right) $, for $j=1,2$. The error
term $u_{i}$ is generated the same as in DGP 1.

\medskip \textbf{DGP 3 (Categorical $u$)} We generate $x_{i}$ and $\mathbf{z}%
_{i}$ the same as in DGP 1, but allow the error term $u_{i}$ to have a
heterogeneous distribution over $i$. For $i=1,2,\cdots ,\lfloor n/2\rfloor $%
, we set $u_{i}=\sigma _{i}\varepsilon _{i},$ where $\sigma _{i}^{2}\sim 
\text{IID}\chi ^{2}\left( 2\right) $ and $\varepsilon _{i}\sim \text{IID}%
N(0,1)$, and for $i=\lfloor n/2\rfloor +1,\cdots ,n$, we set $u_{i}=\left( 
\tilde{u}_{i}-2\right) /2$, where $\tilde{u}_{i}\sim \text{IID}\chi
^{2}\left( 2\right) $.

\medskip We investigate the finite sample performance of the estimator
proposed in Section \ref{sec:Estimation} across DGP 1 to 3 with \textit{low
variance} and \textit{high variance} scenarios.\footnote{%
We can consider a DGP with conditional heteroskedasticity, in which we
follow the baseline DGP and generate the error term as $u_{i}=x_{i}%
\varepsilon _{i}$, where $\varepsilon _{i}\sim N(0,1)$. The least square
estimator for $\mathbf{\phi }$ is valid in this setup in terms of estimation
and inference, whereas the GMM estimator for the distributional parameters $%
\mathbf{\theta }$ breaks down, which is to be expected since we can only
identify the first moment of $\beta _{i}$ under conditional
heteroskedasticity. The results are available on request.} Details of the
computational algorithm used to carry out the Monte Carlo experiments (and
the empirical results that follow) are given in Section \ref{sec:computation}
of the online supplement. An accompanying R package is available at %
\url{https://github.com/zhan-gao/ccrm}.

\subsection{Summary of the MC results}

\begin{table}[tbp]
\caption{Bias, RMSE and size of the least square estimator $\hat{\mathbf{%
\protect\phi}}$}
\label{tab:mc_bias_rmse_gamma}
\begin{center}
{\small 
\begin{tabular}{rrrrrrrrrrr}
\hline
\multicolumn{2}{r|}{DGP} & \multicolumn{3}{c|}{Baseline} & 
\multicolumn{3}{c|}{Categorical $x$} & \multicolumn{3}{c}{Categorical $u$}
\\ \hline
\multicolumn{2}{r|}{Sample size $n$} & \multicolumn{1}{c}{Bias} & 
\multicolumn{1}{c}{RMSE} & \multicolumn{1}{c|}{Size} & \multicolumn{1}{c}{
Bias} & \multicolumn{1}{c}{RMSE} & \multicolumn{1}{c|}{Size} & 
\multicolumn{1}{c}{Bias} & \multicolumn{1}{c}{RMSE} & \multicolumn{1}{c}{Size
} \\ \hline
\multicolumn{11}{c}{\textit{high variance}: $\mathrm{var}\left( \beta_i
\right) = 0.25$} \\ \hline
\multirow{7}{*}{\begin{turn}{90} $\mathrm{E}\left(\beta_i\right) = 1.5$
\end{turn}} & \multicolumn{1}{r|}{100} & -0.0024 & 0.2035 & 
\multicolumn{1}{r|}{0.0966} & -0.0037 & 0.2035 & \multicolumn{1}{r|}{0.0858}
& -0.0042 & 0.2268 & 0.0920 \\ 
& \multicolumn{1}{r|}{1,000} & -0.0017 & 0.0669 & \multicolumn{1}{r|}{0.0568}
& -0.0002 & 0.0657 & \multicolumn{1}{r|}{0.0540} & -0.0019 & 0.0738 & 0.0540
\\ 
& \multicolumn{1}{r|}{2,000} & -0.0008 & 0.0463 & \multicolumn{1}{r|}{0.0512}
& -0.0015 & 0.0475 & \multicolumn{1}{r|}{0.0534} & -0.0010 & 0.0523 & 0.0522
\\ 
& \multicolumn{1}{r|}{5,000} & -0.0004 & 0.0301 & \multicolumn{1}{r|}{0.0540}
& -0.0008 & 0.0300 & \multicolumn{1}{r|}{0.0546} & -0.0007 & 0.0335 & 0.0560
\\ 
& \multicolumn{1}{r|}{10,000} & 0.0002 & 0.0214 & \multicolumn{1}{r|}{0.0508}
& 0.0000 & 0.0212 & \multicolumn{1}{r|}{0.0510} & 0.0000 & 0.0229 & 0.0456
\\ 
& \multicolumn{1}{r|}{100,000} & -0.0001 & 0.0066 & \multicolumn{1}{r|}{
0.0472} & 0.0000 & 0.0066 & \multicolumn{1}{r|}{0.0460} & 0.0000 & 0.0075 & 
0.0506 \\ \hline
\multirow{7}{*}{\begin{turn}{90} $\gamma_1 = 1$ \end{turn}} & 
\multicolumn{1}{r|}{100} & -0.0022 & 0.1571 & \multicolumn{1}{r|}{0.0604} & 
-0.0006 & 0.1598 & \multicolumn{1}{r|}{0.0666} & 0.0018 & 0.1912 & 0.0656 \\ 
& \multicolumn{1}{r|}{1,000} & 0.0004 & 0.0501 & \multicolumn{1}{r|}{0.0496}
& -0.0005 & 0.0496 & \multicolumn{1}{r|}{0.0508} & 0.0000 & 0.0600 & 0.0530
\\ 
& \multicolumn{1}{r|}{2,000} & 0.0003 & 0.0352 & \multicolumn{1}{r|}{0.0530}
& -0.0004 & 0.0350 & \multicolumn{1}{r|}{0.0544} & 0.0002 & 0.0432 & 0.0602
\\ 
& \multicolumn{1}{r|}{5,000} & -0.0001 & 0.0222 & \multicolumn{1}{r|}{0.0470}
& 0.0005 & 0.0225 & \multicolumn{1}{r|}{0.0548} & 0.0007 & 0.0267 & 0.0522
\\ 
& \multicolumn{1}{r|}{10,000} & -0.0004 & 0.0157 & \multicolumn{1}{r|}{0.0470
} & 0.0002 & 0.0157 & \multicolumn{1}{r|}{0.0512} & 0.0000 & 0.0188 & 0.0504
\\ 
& \multicolumn{1}{r|}{100,000} & -0.0001 & 0.0049 & \multicolumn{1}{r|}{
0.0494} & 0.0000 & 0.0049 & \multicolumn{1}{r|}{0.0468} & 0.0000 & 0.0059 & 
0.0500 \\ \hline
\multirow{7}{*}{\begin{turn}{90} $\gamma_2 = 1$ \end{turn}} & 
\multicolumn{1}{r|}{100} & 0.0011 & 0.1115 & \multicolumn{1}{r|}{0.0616} & 
0.0016 & 0.1121 & \multicolumn{1}{r|}{0.0654} & -0.0002 & 0.1364 & 0.0700 \\ 
& \multicolumn{1}{r|}{1,000} & -0.0003 & 0.0358 & \multicolumn{1}{r|}{0.0558}
& 0.0001 & 0.0354 & \multicolumn{1}{r|}{0.0550} & 0.0006 & 0.0421 & 0.0508
\\ 
& \multicolumn{1}{r|}{2,000} & -0.0001 & 0.0253 & \multicolumn{1}{r|}{0.0522}
& 0.0006 & 0.0246 & \multicolumn{1}{r|}{0.0502} & -0.0003 & 0.0302 & 0.0560
\\ 
& \multicolumn{1}{r|}{5,000} & 0.0000 & 0.0158 & \multicolumn{1}{r|}{0.0480}
& 0.0000 & 0.0159 & \multicolumn{1}{r|}{0.0570} & -0.0003 & 0.0185 & 0.0470
\\ 
& \multicolumn{1}{r|}{10,000} & 0.0002 & 0.0111 & \multicolumn{1}{r|}{0.0494}
& -0.0002 & 0.0111 & \multicolumn{1}{r|}{0.0530} & -0.0001 & 0.0134 & 0.0522
\\ 
& \multicolumn{1}{r|}{100,000} & 0.0001 & 0.0035 & \multicolumn{1}{r|}{0.0488
} & 0.0000 & 0.0034 & \multicolumn{1}{r|}{0.0446} & 0.0000 & 0.0042 & 0.0496
\\ \hline
\multicolumn{11}{c}{\textit{low variance}: $\mathrm{var}\left( \beta_i
\right) = 0.15$} \\ \hline
\multirow{7}{*}{\begin{turn}{90} $\mathrm{E}\left(\beta_i\right) = 1.0915$
\end{turn}} & \multicolumn{1}{r|}{100} & -0.0006 & 0.1829 & 
\multicolumn{1}{r|}{0.0810} & -0.0023 & 0.1855 & \multicolumn{1}{r|}{0.0766}
& -0.0025 & 0.2094 & 0.0828 \\ 
& \multicolumn{1}{r|}{1,000} & -0.0005 & 0.0597 & \multicolumn{1}{r|}{0.0610}
& 0.0005 & 0.0590 & \multicolumn{1}{r|}{0.0478} & -0.0006 & 0.0670 & 0.0542
\\ 
& \multicolumn{1}{r|}{2,000} & -0.0002 & 0.0408 & \multicolumn{1}{r|}{0.0516}
& -0.0007 & 0.0427 & \multicolumn{1}{r|}{0.0606} & -0.0004 & 0.0475 & 0.0544
\\ 
& \multicolumn{1}{r|}{5,000} & -0.0002 & 0.0264 & \multicolumn{1}{r|}{0.0530}
& -0.0006 & 0.0266 & \multicolumn{1}{r|}{0.0480} & -0.0005 & 0.0302 & 0.0538
\\ 
& \multicolumn{1}{r|}{10,000} & 0.0000 & 0.0189 & \multicolumn{1}{r|}{0.0546}
& -0.0002 & 0.0188 & \multicolumn{1}{r|}{0.0486} & -0.0002 & 0.0208 & 0.0482
\\ 
& \multicolumn{1}{r|}{100,000} & -0.0001 & 0.0059 & \multicolumn{1}{r|}{
0.0474} & 0.0000 & 0.0059 & \multicolumn{1}{r|}{0.0494} & 0.0000 & 0.0068 & 
0.0508 \\ \hline
\multirow{7}{*}{\begin{turn}{90} $\gamma_1 = 1$ \end{turn}} & 
\multicolumn{1}{r|}{100} & -0.0027 & 0.1521 & \multicolumn{1}{r|}{0.0614} & 
-0.0001 & 0.1538 & \multicolumn{1}{r|}{0.0622} & 0.0014 & 0.1847 & 0.0624 \\ 
& \multicolumn{1}{r|}{1,000} & 0.0001 & 0.0480 & \multicolumn{1}{r|}{0.0520}
& -0.0007 & 0.0481 & \multicolumn{1}{r|}{0.0542} & -0.0003 & 0.0584 & 0.0570
\\ 
& \multicolumn{1}{r|}{2,000} & 0.0002 & 0.0338 & \multicolumn{1}{r|}{0.0514}
& -0.0006 & 0.0334 & \multicolumn{1}{r|}{0.0512} & 0.0001 & 0.0417 & 0.0572
\\ 
& \multicolumn{1}{r|}{5,000} & -0.0002 & 0.0213 & \multicolumn{1}{r|}{0.0474}
& 0.0003 & 0.0216 & \multicolumn{1}{r|}{0.0532} & 0.0007 & 0.0257 & 0.0498
\\ 
& \multicolumn{1}{r|}{10,000} & -0.0003 & 0.0150 & \multicolumn{1}{r|}{0.0466
} & 0.0002 & 0.0152 & \multicolumn{1}{r|}{0.0542} & 0.0001 & 0.0183 & 0.0518
\\ 
& \multicolumn{1}{r|}{100,000} & -0.0001 & 0.0047 & \multicolumn{1}{r|}{
0.0482} & 0.0000 & 0.0047 & \multicolumn{1}{r|}{0.0474} & 0.0000 & 0.0057 & 
0.0500 \\ \hline
\multirow{7}{*}{\begin{turn}{90} $\gamma_2 = 1$ \end{turn}} & 
\multicolumn{1}{r|}{100} & 0.0011 & 0.1081 & \multicolumn{1}{r|}{0.0592} & 
0.0013 & 0.1079 & \multicolumn{1}{r|}{0.0622} & -0.0002 & 0.1323 & 0.0674 \\ 
& \multicolumn{1}{r|}{1,000} & -0.0003 & 0.0345 & \multicolumn{1}{r|}{0.0594}
& 0.0003 & 0.0342 & \multicolumn{1}{r|}{0.0556} & 0.0006 & 0.0409 & 0.0500
\\ 
& \multicolumn{1}{r|}{2,000} & 0.0000 & 0.0243 & \multicolumn{1}{r|}{0.0534}
& 0.0006 & 0.0235 & \multicolumn{1}{r|}{0.0450} & -0.0001 & 0.0292 & 0.0576
\\ 
& \multicolumn{1}{r|}{5,000} & 0.0001 & 0.0152 & \multicolumn{1}{r|}{0.0490}
& 0.0001 & 0.0152 & \multicolumn{1}{r|}{0.0552} & -0.0002 & 0.0179 & 0.0470
\\ 
& \multicolumn{1}{r|}{10,000} & 0.0002 & 0.0106 & \multicolumn{1}{r|}{0.0454}
& -0.0002 & 0.0107 & \multicolumn{1}{r|}{0.0528} & -0.0002 & 0.0131 & 0.0526
\\ 
& \multicolumn{1}{r|}{100,000} & 0.0001 & 0.0033 & \multicolumn{1}{r|}{0.0442
} & 0.0000 & 0.0033 & \multicolumn{1}{r|}{0.0448} & 0.0000 & 0.0040 & 0.0486
\\ \hline
\end{tabular}
}
\end{center}
\par
{\footnotesize \textit{Notes:} The data generating process is %
\eqref{eq:mc_dgp}. \textit{high variance} and \textit{low variance}
parametrization are described in \eqref{eq:mc_dgp_para}. ``Baseline'',
``Categorical $x$'' and ``Categorical $u$'' refer to DGP 1 to 3 as in
Section \ref{subsec:dgp}. Generically, bias, RMSE and size are calculated by 
$R^{-1}\sum_{r=1}^R \left( \hat{ \theta}^{(r)} - \theta_0 \right)$, $\sqrt{%
R^{-1}\sum_{r= 1}^R \left( \hat{\theta}^{(r)} -\theta_0 \right)^2}$, and $%
R^{-1}\sum_{r=1}^R \mathbf{1}\left[ \left\vert \hat{\theta}^{(r)} - \theta_0
\right\vert / \hat{\sigma}_{\hat{\theta}}^{(r)} > \mathrm{cv}_{0.05} \right] 
$, respectively, for true parameter $\theta_0$, its estimate $\hat{\theta}%
^{(r)}$, the estimated standard error of $\hat{\theta}^{(r)}$, $\hat{\sigma}%
_{\hat{\theta}}^{(r)}$, and the critical value $\mathrm{cv}_{0.05} =
\Phi^{-1}\left( 0.975 \right)$ across $R = 5,000$ replications, where $%
\Phi\left( \cdot \right)$ is the cumulative distribution function of
standard normal distribution.}
\end{table}

\begin{figure}[tbp]
\caption{Empirical power functions for the least square estimator $\hat{%
\mathbf{\protect\phi}}$ with the \textit{high variance} parametrization ($%
\mathrm{var}\left( \protect\beta_i \right) = 0.25$)}
\label{fig:power_function_gamma_high}
\begin{center}
\begin{subfigure}[b]{\textwidth}
            \centering
            \caption{Baseline}
            \includegraphics[width=\textwidth, height=1.9in]{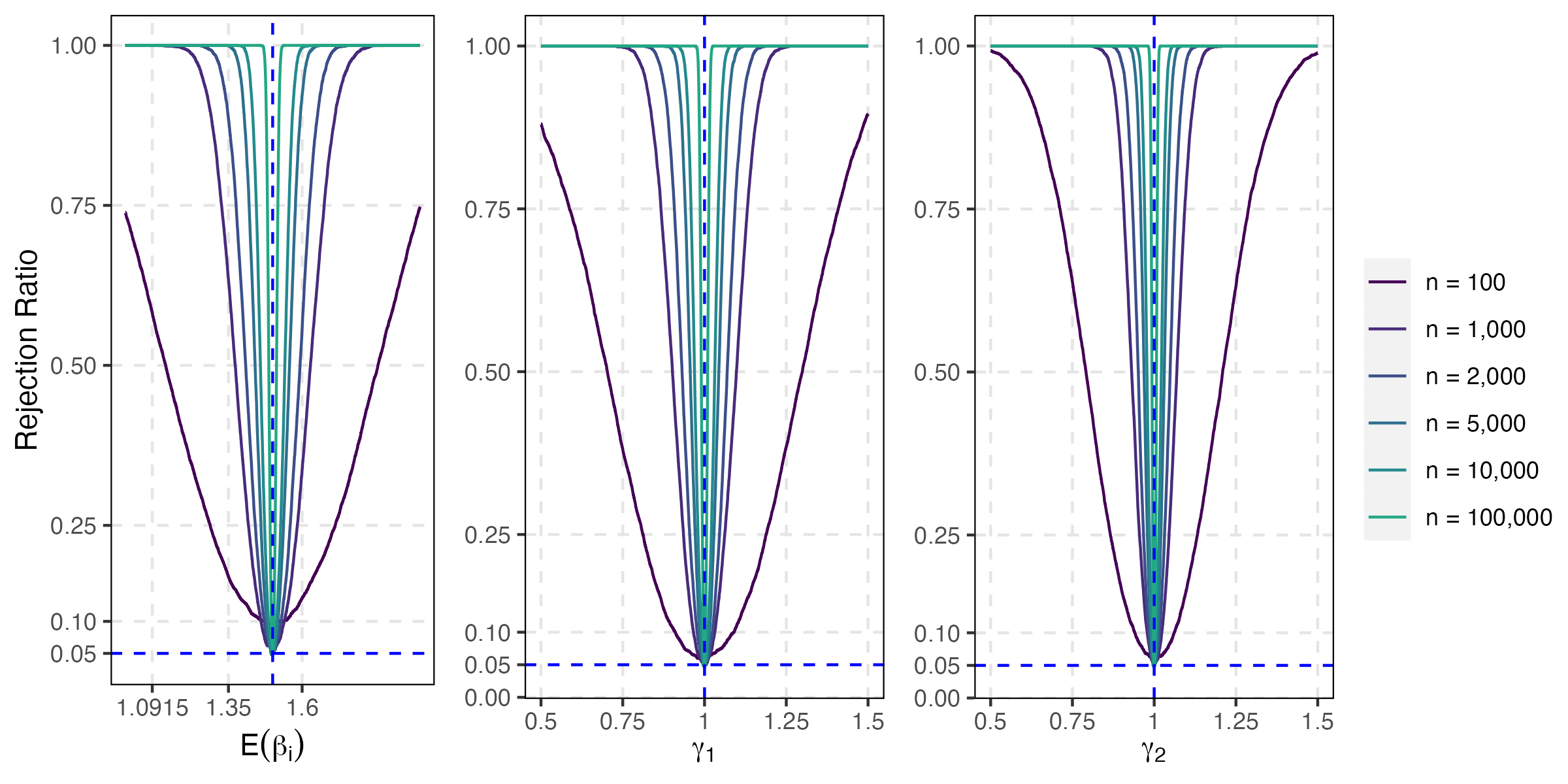}
            \label{fig:power_function_gamma_baseline_high}
        \end{subfigure}\\[0pt]
\begin{subfigure}[b]{\textwidth}
            \centering
            \caption{Categorical $x$}
            \includegraphics[width=\textwidth, height=1.9in]{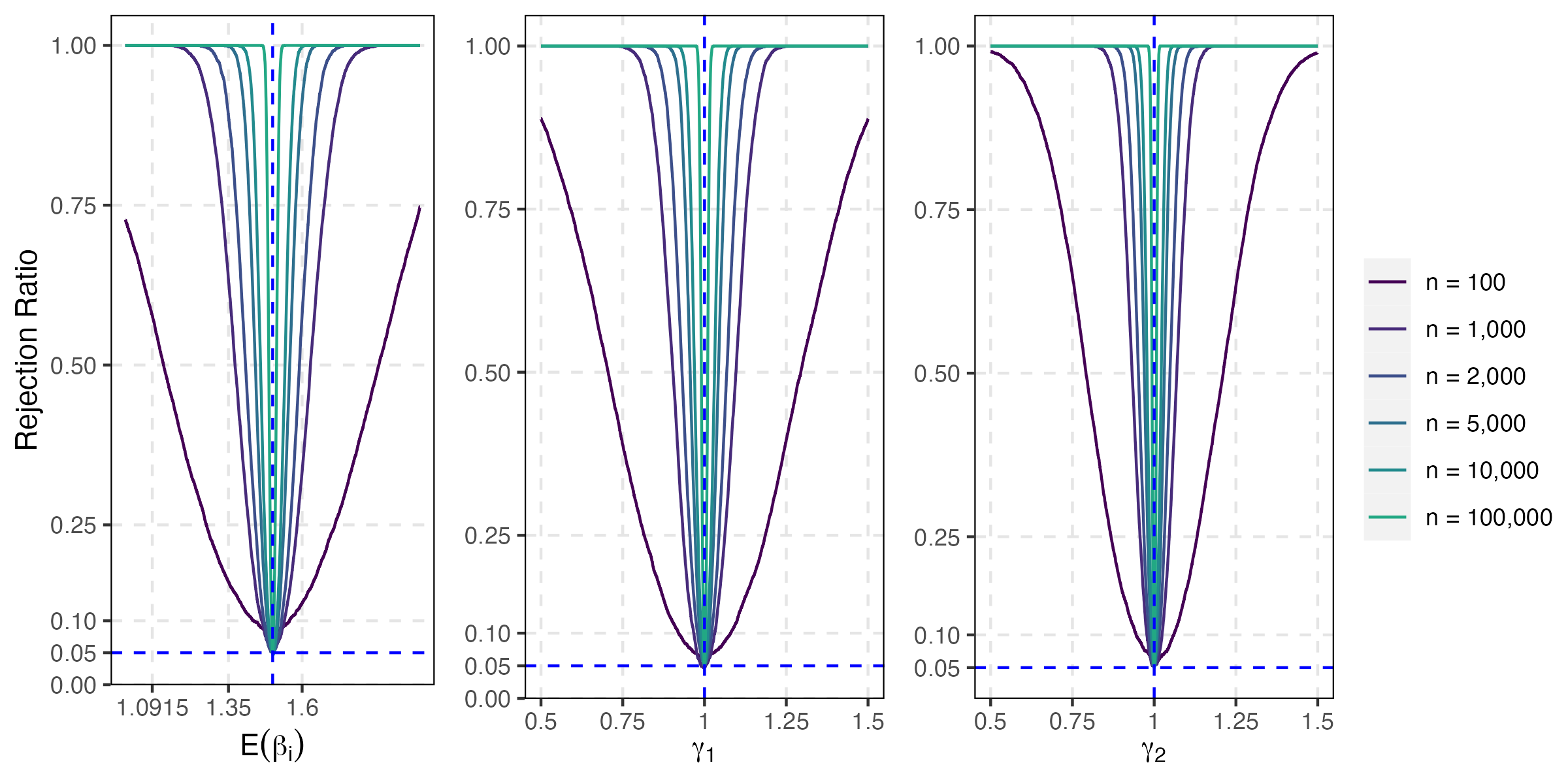}
            \label{fig:power_function_gamma_cat_x_high}
        \end{subfigure}\\[0pt]
\begin{subfigure}[b]{\textwidth}
            \centering
            \caption{Categorical $u$}
            \includegraphics[width=\textwidth, height=1.9in]{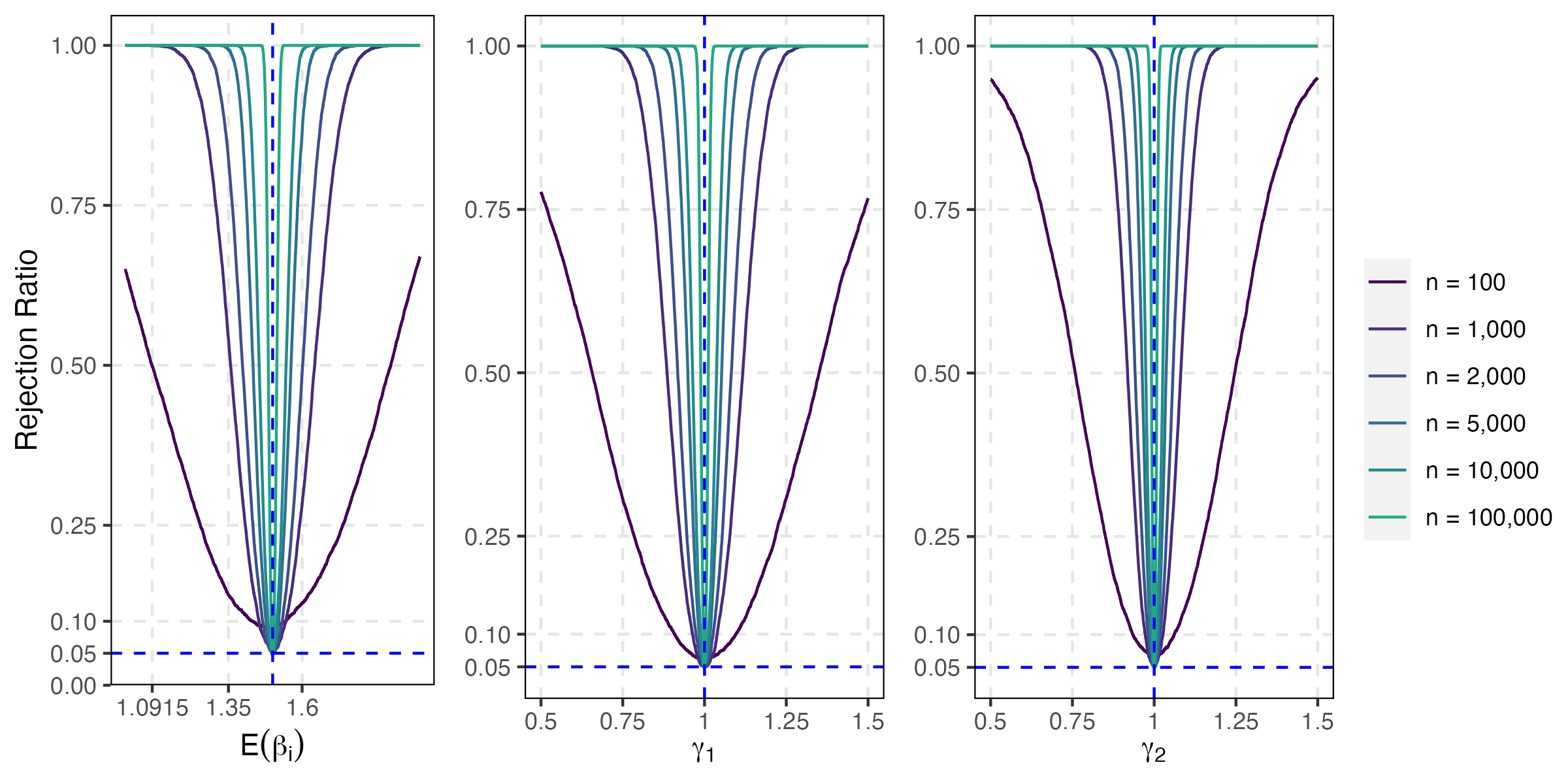}
            \label{fig:power_function_gamma_cat_u_high}
        \end{subfigure}
\end{center}
\par
{\footnotesize \textit{Notes:} The data generating process is %
\eqref{eq:mc_dgp} with \textit{high variance} parametrization that is
described in \eqref{eq:mc_dgp_para}. ``Baseline'', ``Categorical $x$'' and
``Categorical $u$'' refer to DGP 1 to 3 as in Section \ref{subsec:dgp}.
Generically, power is calculated by $R^{-1}\sum_{r=1}^R \mathbf{1}\left[
\left\vert \hat{\theta}^{(r)} - \theta_\delta \right\vert / \hat{\sigma}_{%
\hat{\theta}}^{(r)} > \mathrm{cv}_{0.05} \right] $, for $\theta_\delta$ in a
symmetric neighborhood of the true parameter $\theta_0$, the estimate $\hat{%
\theta}^{(r)}$, the estimated standard error of $\hat{\theta}^{(r)}$, $\hat{%
\sigma}_{\hat{\theta}}^{(r)}$, and the critical value $\mathrm{cv}_{0.05} =
\Phi^{-1}\left( 0.975 \right)$ across $R = 5,000$ replications, where $%
\Phi\left( \cdot \right)$ is the cumulative distribution function of
standard normal distribution. }
\end{figure}

\begin{figure}[tbp]
\caption{Empirical power functions for the least square estimator $\hat{%
\mathbf{\protect\phi}}$ with the \textit{low variance} parametrization ($%
\mathrm{var}\left( \protect\beta_i \right) = 0.15$)}
\label{fig:power_function_gamma_low}
\begin{center}
\begin{subfigure}[b]{\textwidth}
            \centering
            \caption{Baseline}
            \includegraphics[width=\textwidth, height=1.9in]{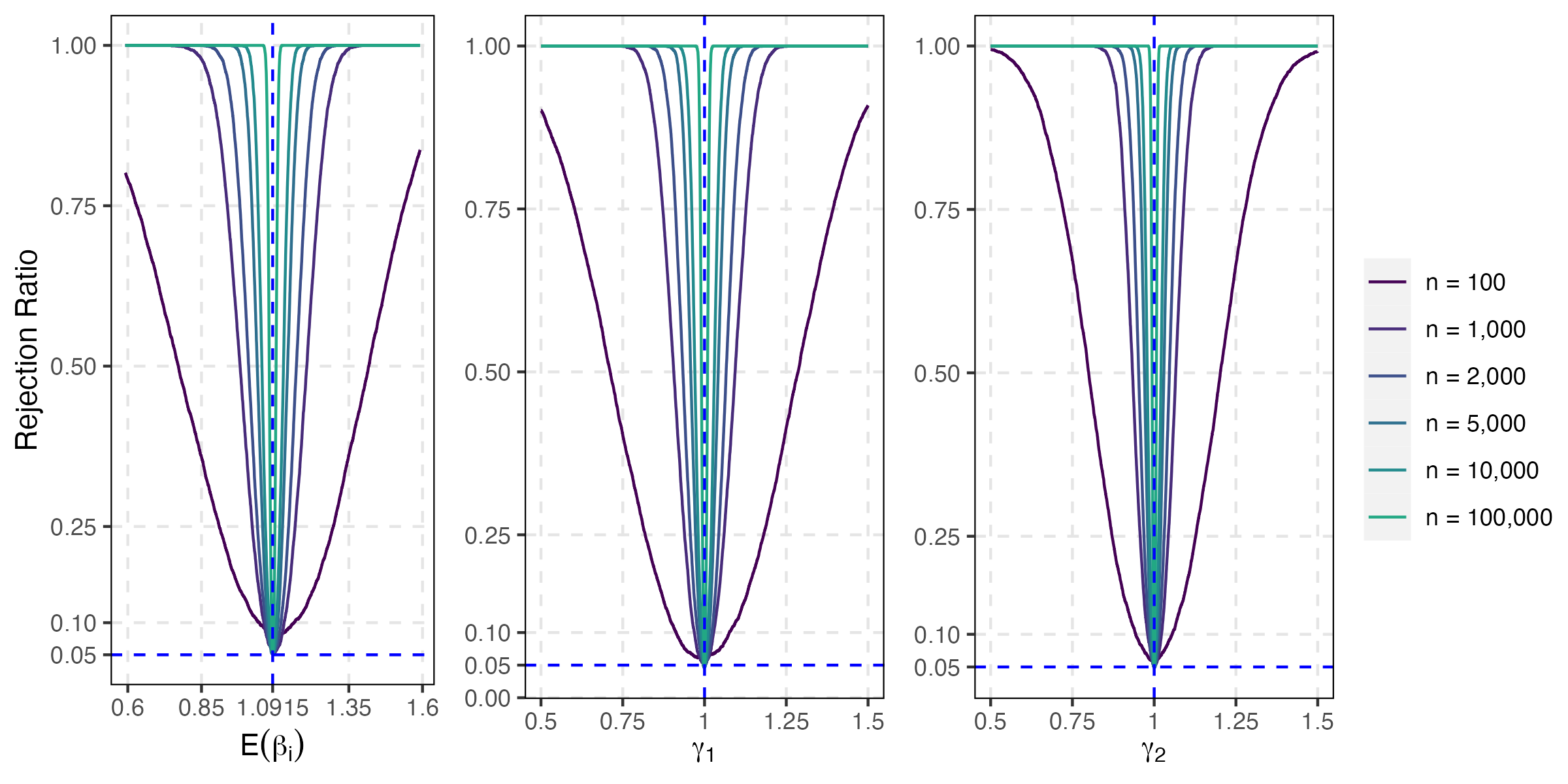}
            \label{fig:power_function_gamma_baseline_low}
        \end{subfigure}\\[0pt]
\begin{subfigure}[b]{\textwidth}
            \centering
            \caption{Categorical $x$}
            \includegraphics[width=\textwidth, height=1.9in]{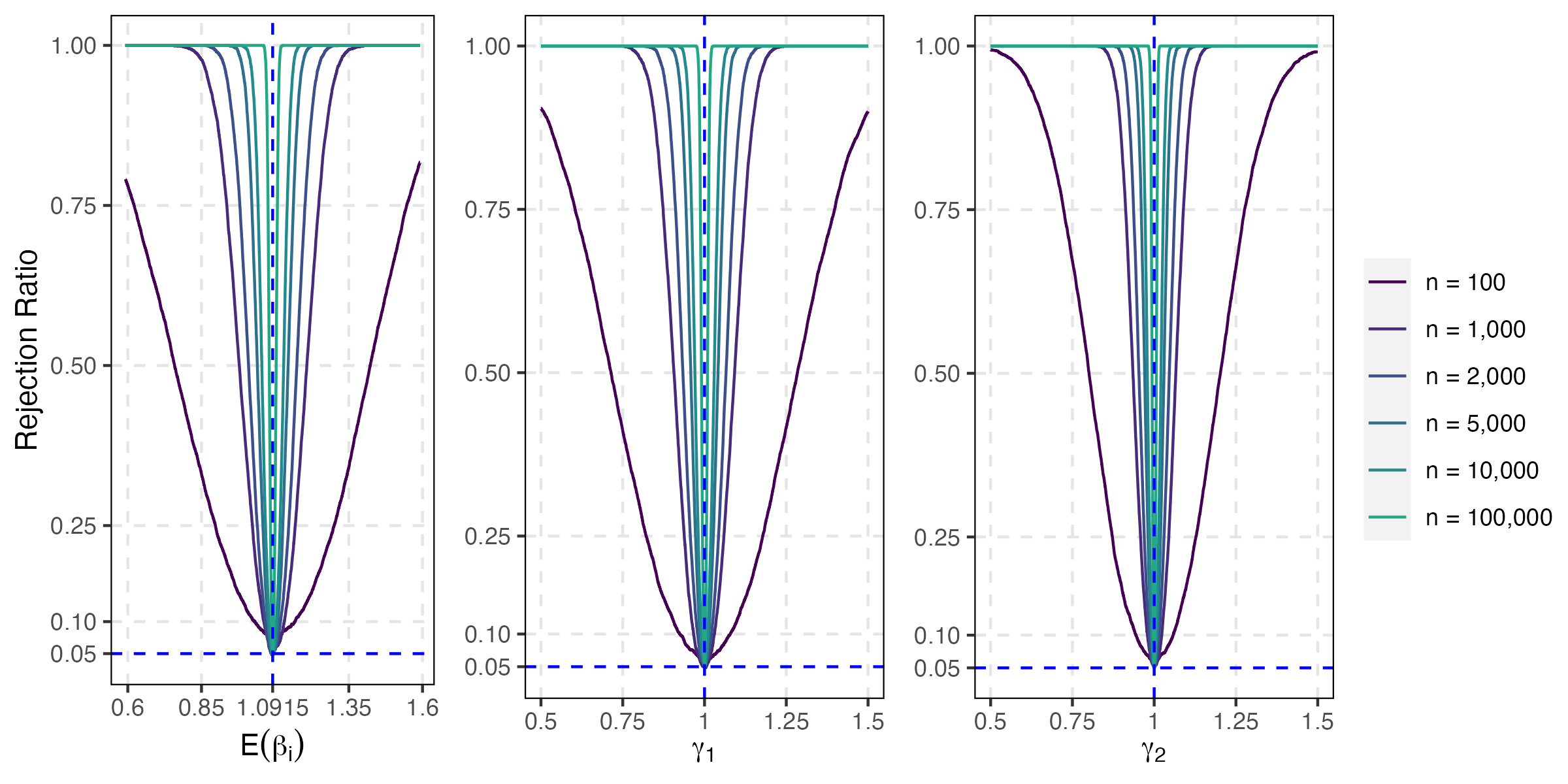}
            \label{fig:power_function_gamma_cat_x_low}
        \end{subfigure}\\[0pt]
\begin{subfigure}[b]{\textwidth}
            \centering
            \caption{Categorical $u$}
            \includegraphics[width=\textwidth, height=1.9in]{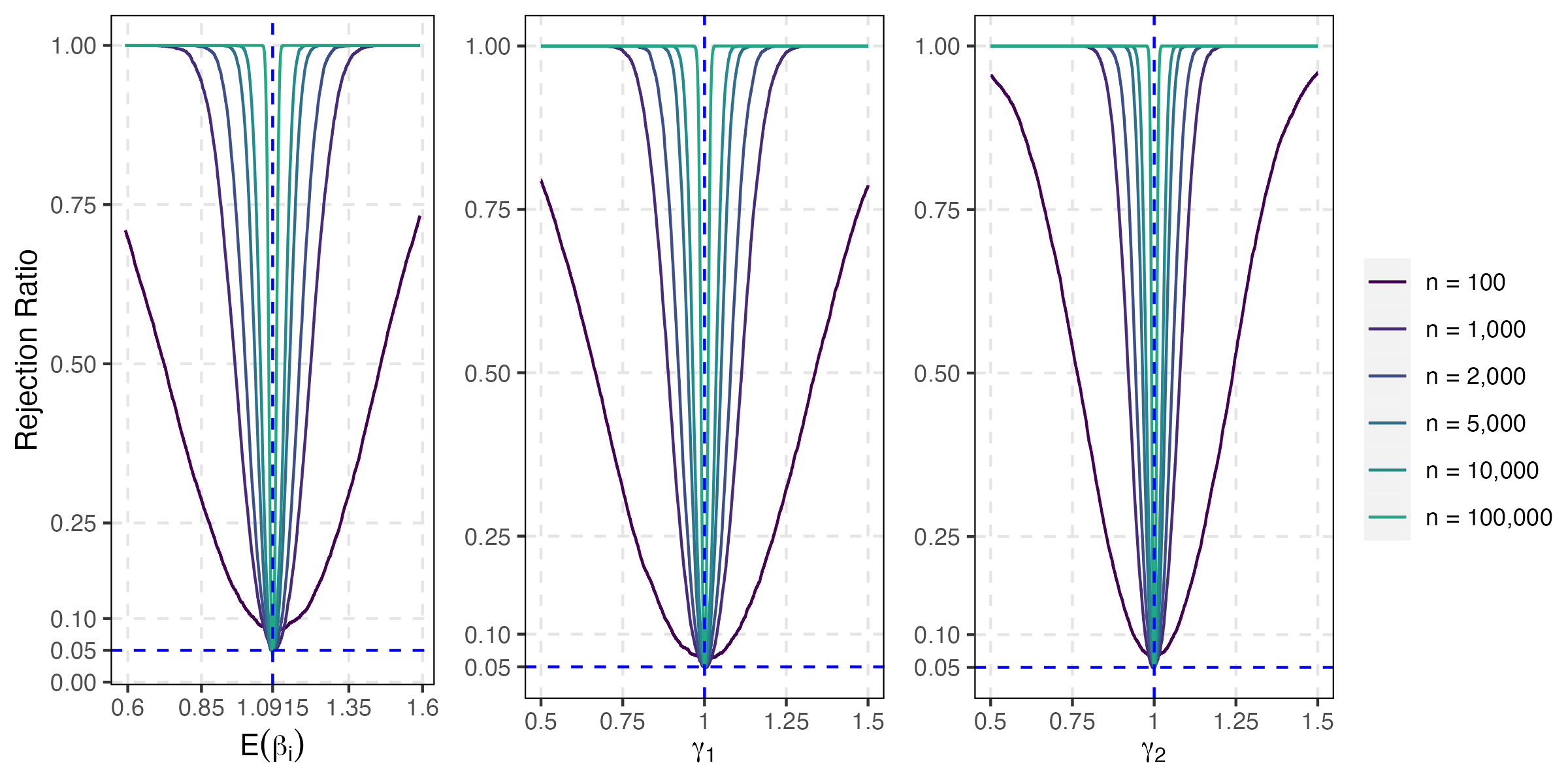}
            \label{fig:power_function_gamma_cat_u_low}
        \end{subfigure}
\end{center}
\par
{\footnotesize \textit{Notes:} The data generating process is %
\eqref{eq:mc_dgp} with \textit{low variance} parametrization that is
described in \eqref{eq:mc_dgp_para}. ``Baseline'', ``Categorical $x$'' and
``Categorical $u$'' refer to DGP 1 to 3 as in Section \ref{subsec:dgp}.
Generically, power is calculated by $R^{-1}\sum_{r=1}^R \mathbf{1}\left[
\left\vert \hat{\theta}^{(r)} - \theta_\delta \right\vert / \hat{\sigma}_{%
\hat{\theta}}^{(r)} > \mathrm{cv}_{0.05} \right] $, for $\theta_\delta$ in a
symmetric neighborhood of the true parameter $\theta_0$, the estimate $\hat{%
\theta}^{(r)}$, the estimated standard error of $\hat{\theta}^{(r)}$, $\hat{%
\sigma}_{\hat{\theta}}^{(r)}$, and the critical value $\mathrm{cv}_{0.05} =
\Phi^{-1}\left( 0.975 \right)$ across $R = 5,000$ replications, where $%
\Phi\left( \cdot \right)$ is the cumulative distribution function of
standard normal distribution. }
\end{figure}

\begin{table}[tbp]
\caption{Bias, RMSE and size of the GMM estimator for distributional
parameters of $\protect\beta$ }
\label{tab:mc_S4}
\begin{center}
{\small \ 
\begin{tabular}{rrrrrrrrrrr}
\hline
\multicolumn{2}{r|}{DGP} & \multicolumn{3}{c|}{Baseline} & 
\multicolumn{3}{c|}{Categorical $x$} & \multicolumn{3}{c}{Categorical $u$}
\\ \hline
\multicolumn{2}{r|}{Sample size $n$} & \multicolumn{1}{c}{Bias} & 
\multicolumn{1}{c}{RMSE} & \multicolumn{1}{c|}{Size} & \multicolumn{1}{c}{
Bias} & \multicolumn{1}{c}{RMSE} & \multicolumn{1}{c|}{Size} & 
\multicolumn{1}{c}{Bias} & \multicolumn{1}{c}{RMSE} & \multicolumn{1}{c}{Size
} \\ \hline
\multicolumn{11}{c}{\textit{high variance}: $\mathrm{var}\left( \beta_i
\right) = 0.25$} \\ \hline
\multirow{7}{*}{\begin{turn}{90} $\pi=0.5$ \end{turn}} & \multicolumn{1}{r|}{
100} & 0.0457 & 0.2291 & \multicolumn{1}{r|}{0.1737} & 0.0363 & 0.2410 & 
\multicolumn{1}{r|}{0.2130} & 0.0235 & 0.2361 & 0.2231 \\ 
& \multicolumn{1}{r|}{1,000} & 0.0018 & 0.1019 & \multicolumn{1}{r|}{0.1308}
& 0.0033 & 0.1178 & \multicolumn{1}{r|}{0.1437} & -0.0270 & 0.1741 & 0.2033
\\ 
& \multicolumn{1}{r|}{2,000} & 0.0017 & 0.0688 & \multicolumn{1}{r|}{0.1084}
& 0.0015 & 0.0826 & \multicolumn{1}{r|}{0.1199} & -0.0174 & 0.1273 & 0.1545
\\ 
& \multicolumn{1}{r|}{5,000} & -0.0003 & 0.0416 & \multicolumn{1}{r|}{0.0936}
& -0.0015 & 0.0495 & \multicolumn{1}{r|}{0.0908} & -0.0089 & 0.0810 & 0.1048
\\ 
& \multicolumn{1}{r|}{10,000} & 0.0002 & 0.0301 & \multicolumn{1}{r|}{0.0774}
& -0.0006 & 0.0351 & \multicolumn{1}{r|}{0.0780} & -0.0052 & 0.0582 & 0.0864
\\ 
& \multicolumn{1}{r|}{100,000} & -0.0001 & 0.0096 & \multicolumn{1}{r|}{
0.0550} & 0.0002 & 0.0114 & \multicolumn{1}{r|}{0.0576} & -0.0009 & 0.0194 & 
0.0582 \\ \hline
\multirow{7}{*}{\begin{turn}{90} $\beta_L = 1$ \end{turn}} & 
\multicolumn{1}{r|}{100} & 0.1415 & 0.4749 & \multicolumn{1}{r|}{0.2472} & 
0.1099 & 0.5110 & \multicolumn{1}{r|}{0.2138} & 0.1151 & 0.5961 & 0.1820 \\ 
& \multicolumn{1}{r|}{1,000} & 0.0207 & 0.1242 & \multicolumn{1}{r|}{0.1501}
& 0.0200 & 0.1454 & \multicolumn{1}{r|}{0.1433} & -0.0256 & 0.2373 & 0.1225
\\ 
& \multicolumn{1}{r|}{2,000} & 0.0129 & 0.0819 & \multicolumn{1}{r|}{0.1344}
& 0.0116 & 0.1007 & \multicolumn{1}{r|}{0.1355} & -0.0094 & 0.1486 & 0.1094
\\ 
& \multicolumn{1}{r|}{5,000} & 0.0048 & 0.0512 & \multicolumn{1}{r|}{0.1052}
& 0.0027 & 0.0607 & \multicolumn{1}{r|}{0.1000} & -0.0053 & 0.0897 & 0.0850
\\ 
& \multicolumn{1}{r|}{10,000} & 0.0031 & 0.0365 & \multicolumn{1}{r|}{0.0854}
& 0.0021 & 0.0428 & \multicolumn{1}{r|}{0.0900} & -0.0020 & 0.0633 & 0.0714
\\ 
& \multicolumn{1}{r|}{100,000} & 0.0002 & 0.0112 & \multicolumn{1}{r|}{0.0534
} & 0.0007 & 0.0135 & \multicolumn{1}{r|}{0.0584} & -0.0002 & 0.0207 & 0.0574
\\ \hline
\multirow{7}{*}{\begin{turn}{90} $\beta_H = 2$ \end{turn}} & 
\multicolumn{1}{r|}{100} & -0.0996 & 0.5609 & \multicolumn{1}{r|}{0.2014} & 
-0.0873 & 0.6154 & \multicolumn{1}{r|}{0.1963} & -0.1071 & 0.6996 & 0.1866
\\ 
& \multicolumn{1}{r|}{1,000} & -0.0193 & 0.1407 & \multicolumn{1}{r|}{0.1864}
& -0.0128 & 0.1581 & \multicolumn{1}{r|}{0.1661} & -0.0319 & 0.2400 & 0.2093
\\ 
& \multicolumn{1}{r|}{2,000} & -0.0099 & 0.0893 & \multicolumn{1}{r|}{0.1486}
& -0.0099 & 0.1094 & \multicolumn{1}{r|}{0.1467} & -0.0239 & 0.1663 & 0.1673
\\ 
& \multicolumn{1}{r|}{5,000} & -0.0053 & 0.0519 & \multicolumn{1}{r|}{0.1092}
& -0.0072 & 0.0622 & \multicolumn{1}{r|}{0.1082} & -0.0127 & 0.1019 & 0.1156
\\ 
& \multicolumn{1}{r|}{10,000} & -0.0020 & 0.0362 & \multicolumn{1}{r|}{0.0878
} & -0.0033 & 0.0430 & \multicolumn{1}{r|}{0.0880} & -0.0080 & 0.0718 & 
0.0986 \\ 
& \multicolumn{1}{r|}{100,000} & -0.0005 & 0.0114 & \multicolumn{1}{r|}{
0.0530} & -0.0003 & 0.0134 & \multicolumn{1}{r|}{0.0548} & -0.0017 & 0.0236
& 0.0646 \\ \hline
\multicolumn{11}{c}{\textit{low variance}: $\mathrm{var}\left( \beta_i
\right) = 0.15$} \\ \hline
\multirow{7}{*}{\begin{turn}{90} $\pi=0.3$ \end{turn}} & \multicolumn{1}{r|}{
100} & 0.2175 & 0.3084 & \multicolumn{1}{r|}{0.2183} & 0.2227 & 0.3187 & 
\multicolumn{1}{r|}{0.2464} & 0.2294 & 0.3157 & 0.2500 \\ 
& \multicolumn{1}{r|}{1,000} & 0.0170 & 0.1536 & \multicolumn{1}{r|}{0.1873}
& 0.0307 & 0.1837 & \multicolumn{1}{r|}{0.2063} & 0.0511 & 0.2295 & 0.2493
\\ 
& \multicolumn{1}{r|}{2,000} & 0.0014 & 0.1010 & \multicolumn{1}{r|}{0.1426}
& 0.0105 & 0.1290 & \multicolumn{1}{r|}{0.1601} & 0.0181 & 0.1815 & 0.2102
\\ 
& \multicolumn{1}{r|}{5,000} & -0.0002 & 0.0590 & \multicolumn{1}{r|}{0.1084}
& 0.0010 & 0.0737 & \multicolumn{1}{r|}{0.1158} & 0.0085 & 0.1232 & 0.1468
\\ 
& \multicolumn{1}{r|}{10,000} & -0.0001 & 0.0415 & \multicolumn{1}{r|}{0.0894
} & 0.0005 & 0.0515 & \multicolumn{1}{r|}{0.0928} & 0.0067 & 0.0906 & 0.1046
\\ 
& \multicolumn{1}{r|}{100,000} & -0.0001 & 0.0129 & \multicolumn{1}{r|}{
0.0594} & 0.0003 & 0.0158 & \multicolumn{1}{r|}{0.0536} & 0.0108 & 0.0349 & 
0.0776 \\ \hline
\multirow{7}{*}{\begin{turn}{90} $\beta_L = 0.5$ \end{turn}} & 
\multicolumn{1}{r|}{100} & 0.3365 & 0.5905 & \multicolumn{1}{r|}{0.2426} & 
0.3153 & 0.6042 & \multicolumn{1}{r|}{0.2432} & 0.3384 & 0.6746 & 0.2005 \\ 
& \multicolumn{1}{r|}{1,000} & 0.0352 & 0.2334 & \multicolumn{1}{r|}{0.1560}
& 0.0290 & 0.2813 & \multicolumn{1}{r|}{0.1544} & 0.0131 & 0.4141 & 0.1233
\\ 
& \multicolumn{1}{r|}{2,000} & 0.0175 & 0.1414 & \multicolumn{1}{r|}{0.1310}
& 0.0131 & 0.1835 & \multicolumn{1}{r|}{0.1382} & -0.0157 & 0.2988 & 0.1037
\\ 
& \multicolumn{1}{r|}{5,000} & 0.0085 & 0.0830 & \multicolumn{1}{r|}{0.1082}
& 0.0041 & 0.1052 & \multicolumn{1}{r|}{0.1118} & -0.0057 & 0.1798 & 0.0928
\\ 
& \multicolumn{1}{r|}{10,000} & 0.0055 & 0.0577 & \multicolumn{1}{r|}{0.0966}
& 0.0031 & 0.0730 & \multicolumn{1}{r|}{0.0934} & 0.0019 & 0.1231 & 0.0760
\\ 
& \multicolumn{1}{r|}{100,000} & 0.0005 & 0.0180 & \multicolumn{1}{r|}{0.0596
} & 0.0011 & 0.0222 & \multicolumn{1}{r|}{0.0582} & 0.0130 & 0.0443 & 0.0962
\\ \hline
\multirow{7}{*}{\begin{turn}{90} $\beta_H = 1.345$ \end{turn}} & 
\multicolumn{1}{r|}{100} & 0.0023 & 0.4727 & \multicolumn{1}{r|}{0.1377} & 
0.0238 & 0.5290 & \multicolumn{1}{r|}{0.1453} & 0.0185 & 0.6500 & 0.1461 \\ 
& \multicolumn{1}{r|}{1,000} & -0.0081 & 0.1265 & \multicolumn{1}{r|}{0.1737}
& 0.0042 & 0.1621 & \multicolumn{1}{r|}{0.1655} & 0.0120 & 0.2353 & 0.1738
\\ 
& \multicolumn{1}{r|}{2,000} & -0.0092 & 0.0828 & \multicolumn{1}{r|}{0.1428}
& -0.0026 & 0.1045 & \multicolumn{1}{r|}{0.1475} & 0.0029 & 0.1607 & 0.1710
\\ 
& \multicolumn{1}{r|}{5,000} & -0.0048 & 0.0489 & \multicolumn{1}{r|}{0.1028}
& -0.0041 & 0.0586 & \multicolumn{1}{r|}{0.1034} & 0.0006 & 0.0970 & 0.1172
\\ 
& \multicolumn{1}{r|}{10,000} & -0.0025 & 0.0340 & \multicolumn{1}{r|}{0.0808
} & -0.0024 & 0.0412 & \multicolumn{1}{r|}{0.0942} & 0.0019 & 0.0706 & 0.0958
\\ 
& \multicolumn{1}{r|}{100,000} & -0.0004 & 0.0105 & \multicolumn{1}{r|}{
0.0486} & -0.0002 & 0.0125 & \multicolumn{1}{r|}{0.0548} & 0.0073 & 0.0262 & 
0.0696 \\ \hline
\end{tabular}
}
\end{center}
\par
{\footnotesize \textit{Notes:} The data generating process is %
\eqref{eq:mc_dgp}. \textit{high variance} and \textit{low variance}
parametrization are described in \eqref{eq:mc_dgp_para}. ``Baseline'',
``Categorical $x$'' and ``Categorical $u$'' refer to DGP 1 to 3 as in
Section \ref{subsec:dgp}. Generically, bias, RMSE and size are calculated by 
$R^{-1}\sum_{r=1}^R \left( \hat{ \theta}^{(r)} - \theta_0 \right)$, $\sqrt{%
R^{-1}\sum_{r= 1}^R \left( \hat{\theta}^{(r)} -\theta_0 \right)^2}$, and $%
R^{-1}\sum_{r=1}^R \mathbf{1}\left[ \left\vert \hat{\theta}^{(r)} - \theta_0
\right\vert / \hat{\sigma}_{\hat{\theta}}^{(r)} > \mathrm{cv}_{0.05} \right] 
$, respectively, for true parameter $\theta_0$, its estimate $\hat{\theta}%
^{(r)}$, the estimated standard error of $\hat{\theta}^{(r)}$, $\hat{\sigma}%
_{\hat{\theta}}^{(r)}$, and the critical value $\mathrm{cv}_{0.05} =
\Phi^{-1}\left( 0.975 \right)$ across $R = 5,000$ replications, where $%
\Phi\left( \cdot \right)$ is the cumulative distribution function of
standard normal distribution. }
\end{table}

\begin{figure}[tbp]
\caption{Empirical power functions for the GMM estimator of distributional
parameters of $\protect\beta$ with the \textit{high variance}
parametrization($\mathrm{var}\left( \protect\beta_i \right) = 0.25$)}
\label{fig:power_function_S4_high}
\begin{center}
\begin{subfigure}[b]{\textwidth}
            \centering
            \caption{Baseline}
            \includegraphics[width=\textwidth, height=1.9in]{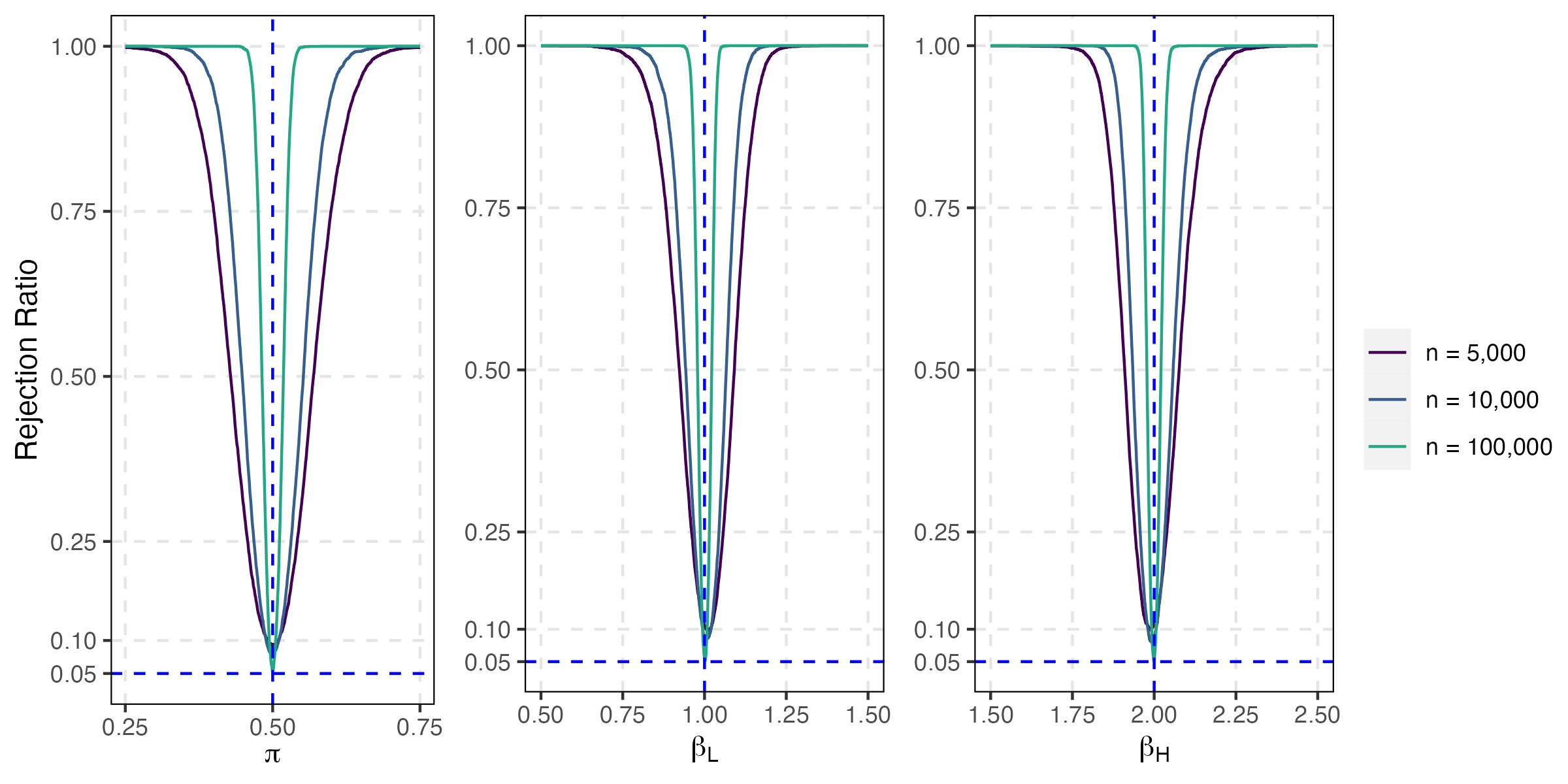}
            \label{fig:power_function_S4_baseline_high}
        \end{subfigure}\\[0pt]
\begin{subfigure}[b]{\textwidth}
            \centering
            \caption{Categorical $x$}
            \includegraphics[width=\textwidth, height=1.9in]{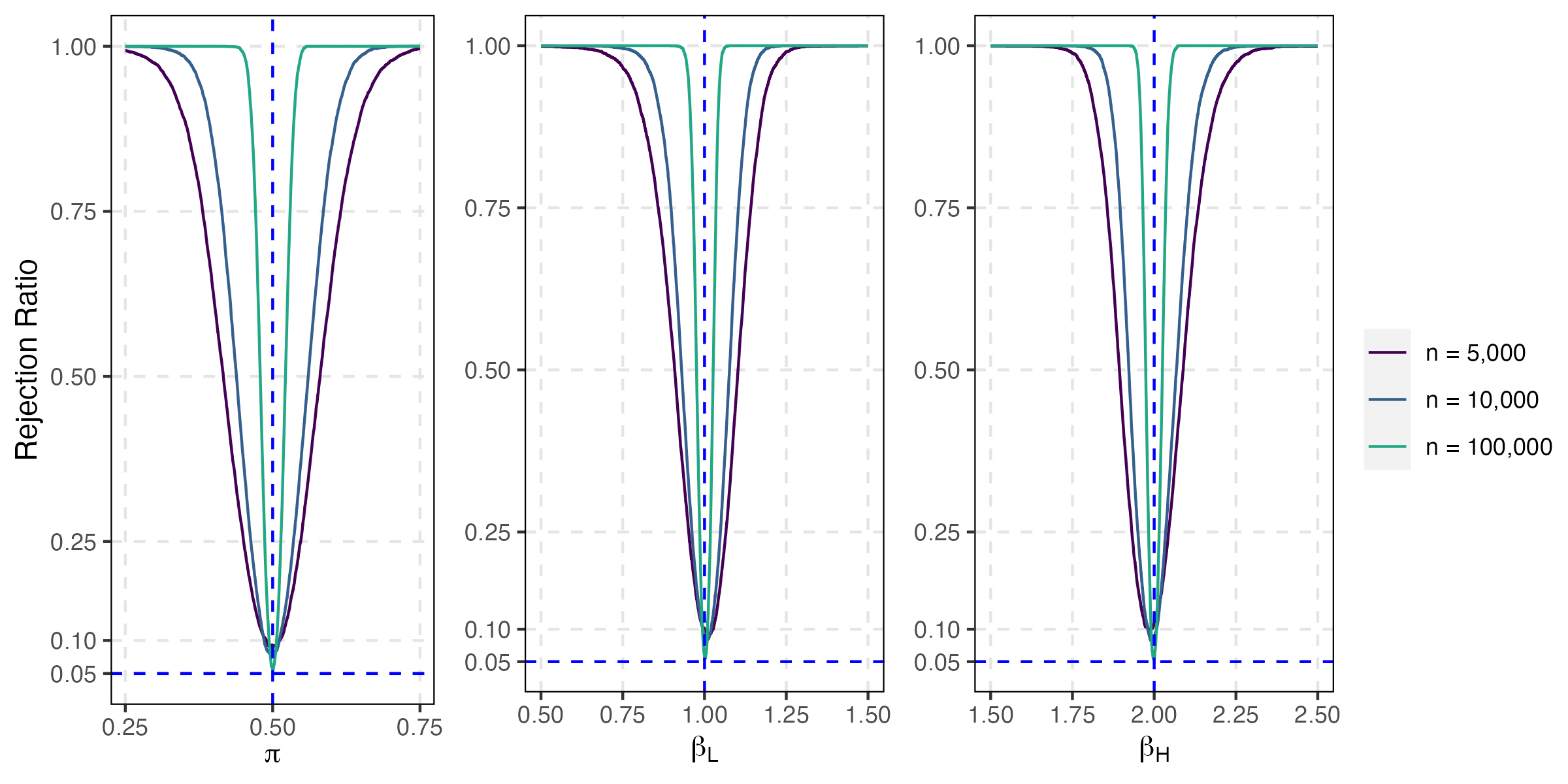}
            \label{fig:power_function_S4_cat_x_high}
        \end{subfigure}\\[0pt]
\begin{subfigure}[b]{\textwidth}
            \centering
            \caption{Categorical $u$}
            \includegraphics[width=\textwidth, height=1.9in]{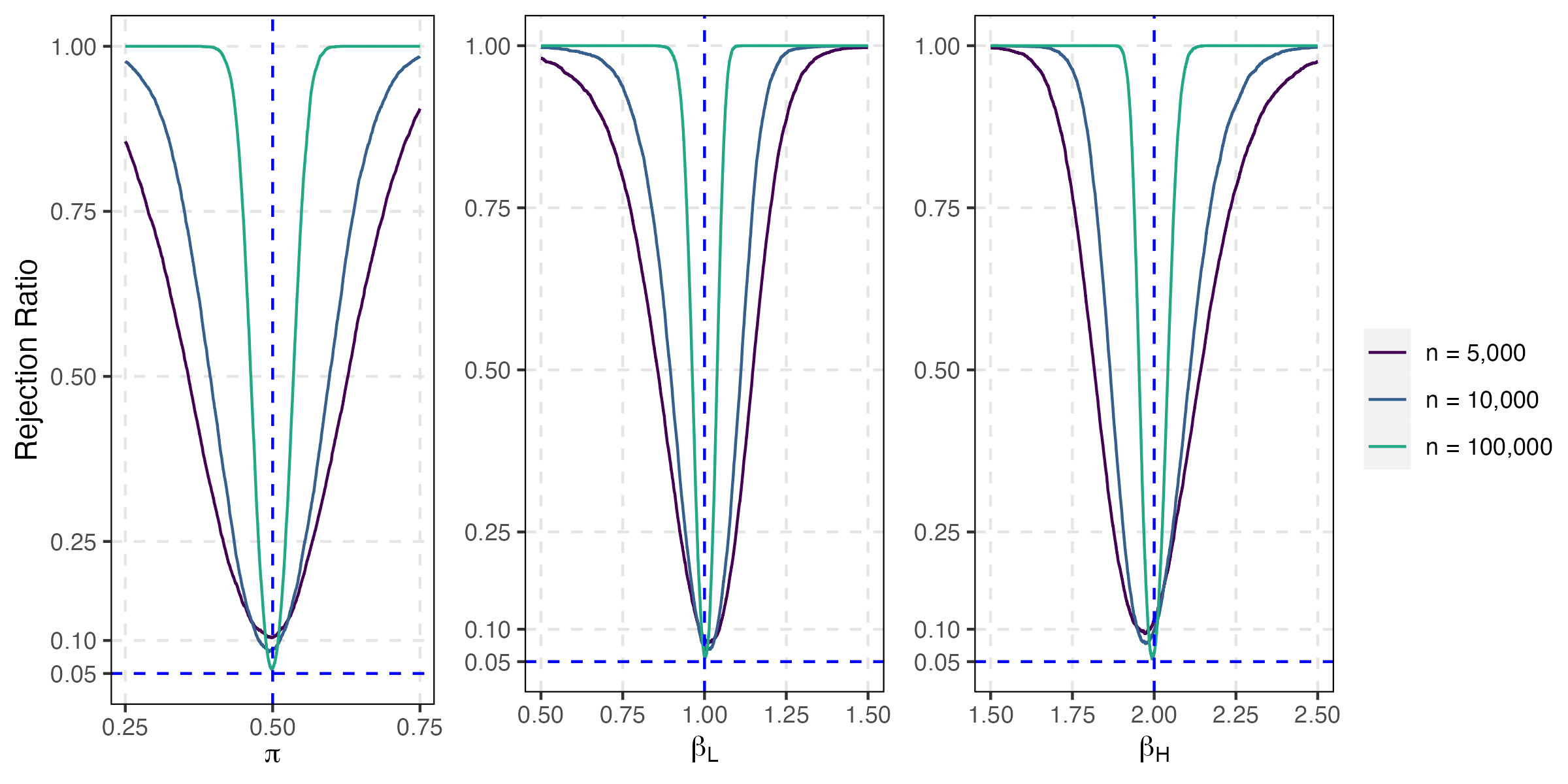}
            \label{fig:power_function_S4_cat_u_high}
        \end{subfigure}
\end{center}
\par
{\footnotesize \textit{Notes:} The data generating process is %
\eqref{eq:mc_dgp} with \textit{high variance} parametrization that is
described in \eqref{eq:mc_dgp_para}. ``Baseline'', ``Categorical $x$'' and
``Categorical $u$'' refer to DGP 1 to 3 as in Section \ref{subsec:dgp}. The
model is estimated with $S = 4$, the highest order of moments of $x_i$ used
in estimation. Generically, power is calculated by $R^{-1}\sum_{r=1}^R 
\mathbf{1}\left[ \left\vert \hat{\theta}^{(r)} - \theta_\delta \right\vert / 
\hat{\sigma}_{\hat{\theta}}^{(r)} > \mathrm{cv}_{0.05} \right] $, for $%
\theta_\delta$ in a symmetric neighborhood of the true parameter $\theta_0$,
the estimate $\hat{\theta}^{(r)}$, the estimated standard error of $\hat{%
\theta}^{(r)}$, $\hat{\sigma}_{\hat{\theta}}^{(r)}$, and the critical value $%
\mathrm{cv}_{0.05} = \Phi^{-1}\left( 0.975 \right)$ across $R = 5,000$
replications, where $\Phi\left( \cdot \right)$ is the cumulative
distribution function of standard normal distribution. }
\end{figure}

\begin{figure}[tbp]
\caption{Empirical power functions for the GMM estimator of distributional
parameters of $\protect\beta$ with the \textit{low variance} parametrization
($\mathrm{var}\left( \protect\beta_i \right) = 0.15$)}
\label{fig:power_function_S4_low}
\begin{center}
\begin{subfigure}[b]{\textwidth}
            \centering
            \caption{Baseline}
            \includegraphics[width=\textwidth, height=1.9in]{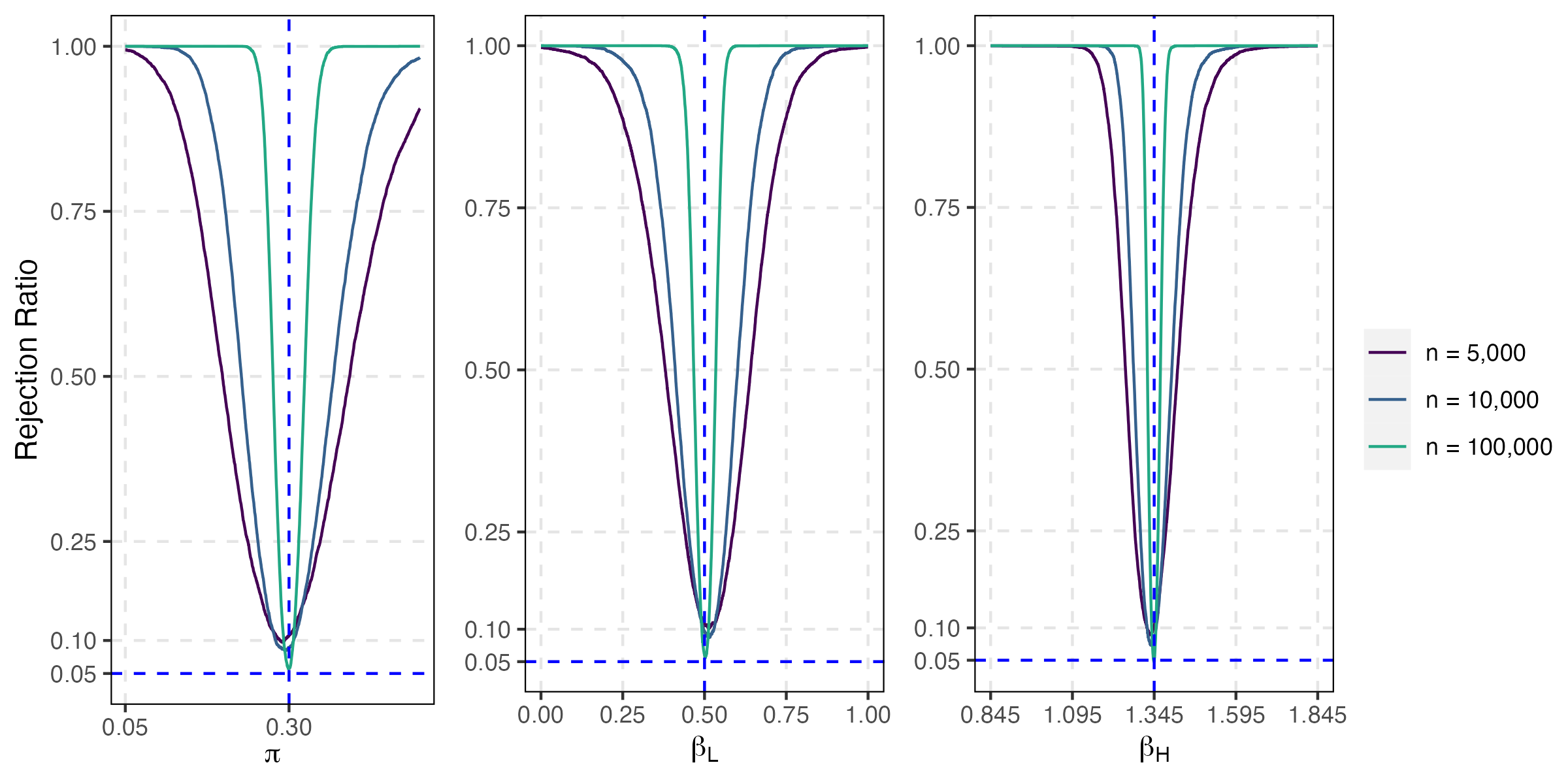}
            \label{fig:power_function_S4_baseline_low}
        \end{subfigure}\\[0pt]
\begin{subfigure}[b]{\textwidth}
            \centering
            \caption{Categorical $x$}
            \includegraphics[width=\textwidth, height=1.9in]{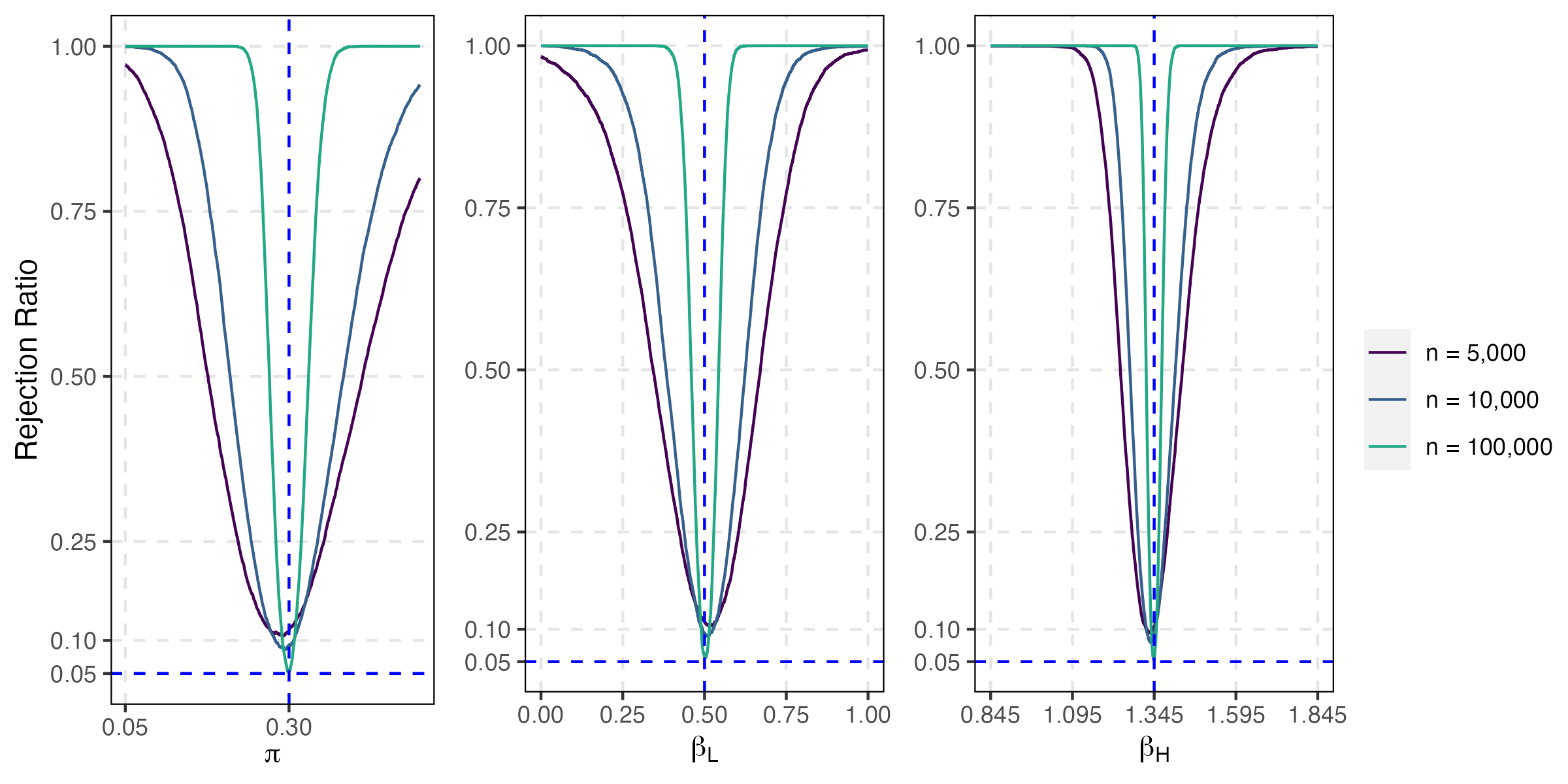}
            \label{fig:power_function_S4_cat_x_low}
        \end{subfigure}\\[0pt]
\begin{subfigure}[b]{\textwidth}
            \centering
            \caption{Categorical $u$}
            \includegraphics[width=\textwidth, height=1.9in]{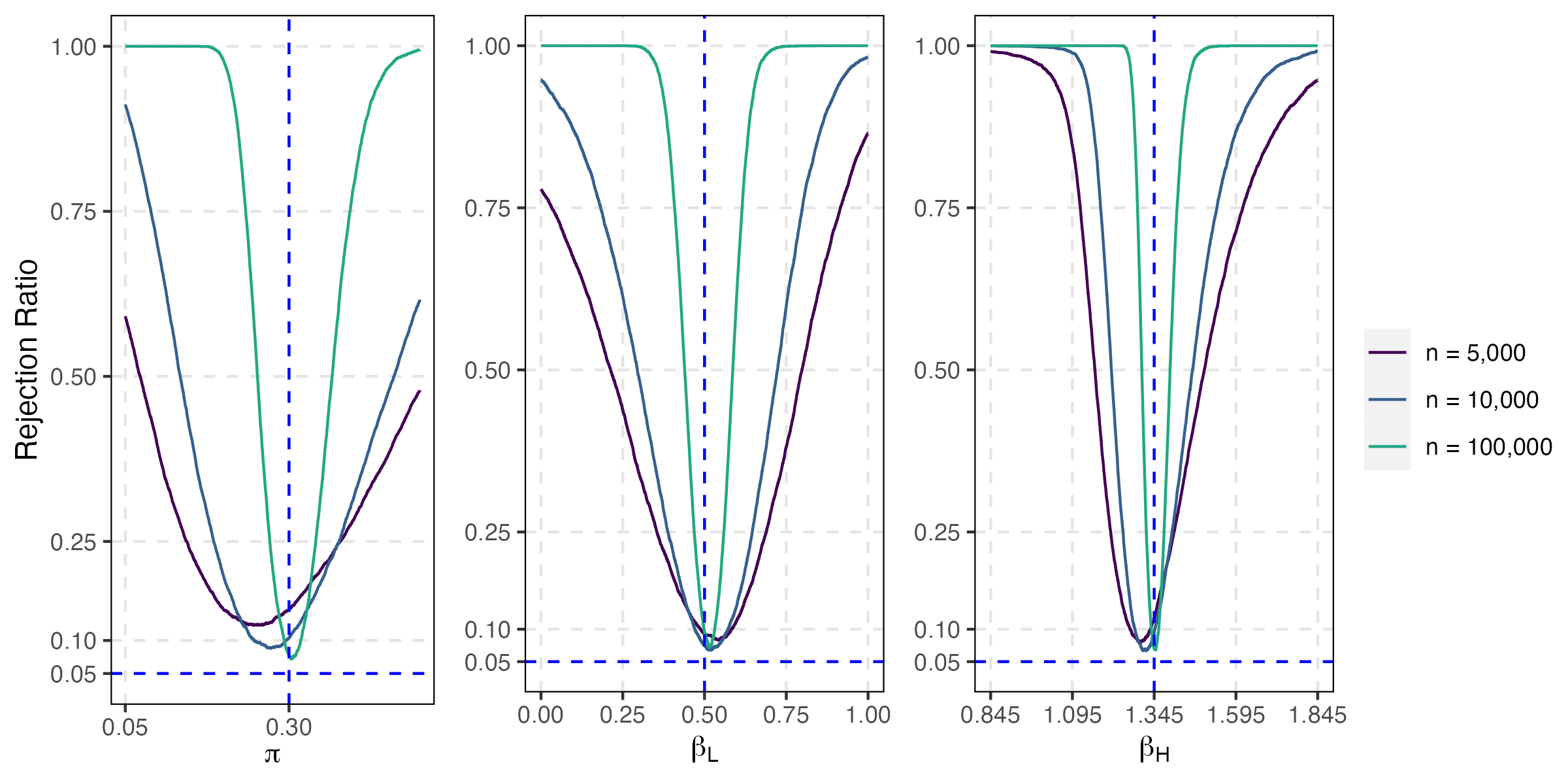}
            \label{fig:power_function_S4_cat_u_low}
        \end{subfigure}
\end{center}
\par
{\footnotesize \textit{Notes:} The data generating process is %
\eqref{eq:mc_dgp} with \textit{low variance} parametrization that is
described in \eqref{eq:mc_dgp_para}. ``Baseline'', ``Categorical $x$'' and
``Categorical $u$'' refer to DGP 1 to 3 as in Section \ref{subsec:dgp}. The
model is estimated with $S = 4$, the highest order of moments of $x_i$ used
in estimation. Generically, power is calculated by $R^{-1}\sum_{r=1}^R 
\mathbf{1}\left[ \left\vert \hat{\theta}^{(r)} - \theta_\delta \right\vert / 
\hat{\sigma}_{\hat{\theta}}^{(r)} > \mathrm{cv}_{0.05} \right] $, for $%
\theta_\delta$ in a symmetric neighborhood of the true parameter $\theta_0$,
the estimate $\hat{\theta}^{(r)}$, the estimated standard error of $\hat{%
\theta}^{(r)}$, $\hat{\sigma}_{\hat{\theta}}^{(r)}$, and the critical value $%
\mathrm{cv}_{0.05} = \Phi^{-1}\left( 0.975 \right)$ across $R = 5,000$
replications, where $\Phi\left( \cdot \right)$ is the cumulative
distribution function of standard normal distribution. }
\end{figure}

For each sample size $n = 100$, $1,000$, $2,000$, $5,000$, $10,000$ and $%
100,000$ we run $5,000$ replications of experiments for DGP 1 (baseline),
DGP 2 (categorical $x$) and DGP 3 (categorical $u$) with \textit{high
variance} and \textit{low variance} parametrization, as set out in %
\eqref{eq:mc_dgp_para}.

We first investigate the finite sample performance of $\hat{\mathbf{\phi }}$%
, as an estimator of $\mathbf{\phi }=\left( \mathrm{E}\left( \beta
_{i}\right) ,\mathbf{\gamma }^{\prime }\right) ^{\prime }$. Bias, root mean
squared errors (RMSE) for estimation of $\mathrm{E}\left( \beta _{i}\right) $%
, $\gamma _{1}$ and $\gamma _{2}$, as well as size of testing of the null
values at the 5 per cent nominal value are reported in Table \ref%
{tab:mc_bias_rmse_gamma}. In addition, we plot the associated empirical
power functions in Figure \ref{fig:power_function_gamma_high} and \ref%
{fig:power_function_gamma_low}, for cases of high and low $var(\beta _{i})$.
The results show that $\hat{\mathbf{\phi }}$ has very good small sample
properties with small bias and RMSEs, with size very close to the nominal
value of 5 per cent across all DGPs and parametrization, even when sample
size is relatively small. The power of the test increases steadily as the
sample size increases.

Then, we turn to the GMM estimator for the distributional parameters of $%
\beta _{i}$ proposed in Section \ref{subsec:estimation_beta}. The bias,
RMSE, and the test size based on the asymptotic distribution given in
Theorem \ref{thm:normality}, for $\pi $, $\beta _{L}$ and $\beta _{H}$, are
reported in Table \ref{tab:mc_S4}. The empirical power functions are
reported in Figure \ref{fig:power_function_S4_high} and \ref%
{fig:power_function_S4_low}. The reported results are based on $S=4$, where $%
S$ $(>2K-1=3)$ denotes the highest order of moments of $x_{i}$ included in
estimation.\footnote{%
We also tried estimation based on a larger number of moments (using $S=5$
and $S=6$). In the case of current Monte Carlo results, adding more moments
does not seem to add much to the precision of the estimates and could be
counter-productive when $n$ is not sufficiently large. The results are
available in Section \ref{subsec:mc_s5_s6} in the online supplement.}

The upper panel of this table reports the results of the high variance and
the lower panel for the low variance parametrization, as set out in (\ref%
{eq:mc_dgp_para}). For all parameters and under all DGPs, the bias and RMSE
decline steadily with the sample size as predicted by Theorem \ref%
{thm:consisteny}, and confirm the robustness of the GMM estimates to the
heterogeneity in the regressor and the error processes. But for a given
sample size, the relative precision of the estimates depends on the
variability of $\beta _{i}$, as characterized by the true value of $\mathrm{%
var} (\beta _{i})$. The precision of the estimates with \textit{high variance%
} parametrization is relatively higher than that with \textit{low variance}
parametrization. This is to be expected since, unlike $\mathrm{E} (\beta
_{i}),$ the distributional parameters are only identified if $\mathrm{var}
(\beta _{i})>0$. As shown in \eqref{sys_3} and \eqref{sys_4} for the current
case of $K=2$, $\mathrm{var}(\beta _{i})$ is in the denominator when we
recover the distributional parameters from the moments of $\beta _{i}$. When 
$\mathrm{var}(\beta _{i})$ is small, estimation errors in the moments of $%
\beta _{i}$ can be amplified in the estimation of $\pi $, $\beta _{L}$ and $%
\beta _{H}$. On the other hand, the larger the variance the more precisely $%
\pi $, $\beta _{H}$ and $\beta _{L}$ can be estimated for a given $n$.%
\footnote{%
Section \ref{subsec:mc_highvar} in the online supplement presents
parametrization with $\mathrm{var}\left( \beta_i \right) = 6.35$ and $18.95$%
, which further confirms the pattern that the larger the variance the more
precisely $\pi $, $\beta _{H}$ and $\beta _{L}$ can be estimated for a given 
$n$.} The size and power also depends on the parametrization. With both 
\textit{high variance} and \textit{low variance} parametrization, we can
achieve correct size and reasonable power when $n$ is quite large ($%
n=100,000 $). We plot the empirical power functions for $n\geq 5,000$ for $%
\pi $, $\beta _{H}$ and $\beta _{L}$ since the size is far above 5 per cent
for smaller values of $n$, and power comparisons are not meaningful in such
cases.

\begin{remark}
\label{rem:practical} Note that GMM estimators of moments of $\beta _{i}$,
namely $\mathbf{m}_{\mathbf{\beta }}$, can be obtained using the moment
conditions in \eqref{eq:mc_limit},and the transformations $\mathbf{m}_{%
\mathbf{\beta }}=h\left( \mathbf{\theta }\right) $ in \eqref{eq:mbeta_mat}
are required only to derive the estimators of $\mathbf{\theta }$, the
parameters of the underlying categorical distribution. The Monte Carlo
results in Section \ref{subsec:mc_gmm_moments} in the online supplement show
that $\mathbf{m}_{\mathbf{\beta }}$ can be accurately estimated with
relatively small sample sizes. In the estimation of both $\mathbf{m}_{%
\mathbf{\beta }}$ and $\mathbf{\theta }$, the same set of moment conditions
are included, so the estimation of distributional parameters $\mathbf{\theta 
}$ essentially relies on the relation $\mathbf{\theta }=h^{-1}\left( \mathbf{%
m}_{\mathbf{\beta }}\right) $. Sampling uncertainties in the estimation of $%
\mathbf{m}_{\mathbf{\beta }}$, particularly in higher order moments, are
potentially amplified through the inverse transformation $h^{-1}$ that
involves matrix inversion, which causes the difficulties in estimation and
inference of $\mathbf{\theta }$ when sample sizes are small. This is
analogous to the problem of precision matrix estimation from an estimated
covariance matrix. In practice, estimation of the categorical parameters is
recommended for applications where the sample size is relatively large,
otherwise it is advisable to focus on estimates of the lower order moments
of $\beta _{i}$.
\end{remark}


\section{Heterogeneous return to education: An empirical application \label%
{sec:Empirical-Application}}

Since the pioneering work by \citet{becker1962investment,becker1964book} on
the effects of investments in human capital, estimating returns to education
has been one of the focal points of labor economics research. In his
pioneering contribution \citet{mincer1974book} models the logarithm of
earnings as a function of years of education and years of potential labor
market experience (age minus years of education minus six), which can be
written in a generic form: 
\begin{equation}
\log \text{wage}_{i}=\alpha _{i}+\beta _{i}\text{edu}_{i}+\phi \left( 
\mathbf{z}_{i}\right) +\varepsilon _{i},  \label{eq:heckman2018eq1}
\end{equation}%
as in \citet*[Equation (1)]{heckman2018returns}, where $\mathbf{z}_{i}$
includes the labor market experience and other relevant control variables.
The above wage equation, also known as the \textquotedblleft Mincer
equation\textquotedblright, has become of the workhorse of the empirical
works on estimating the return to education. In the most widely used
specification of the Mincer equation \eqref{eq:heckman2018eq1}, 
\begin{equation*}
\phi \left( \mathbf{z}_{i}\right) =\rho _{1}\text{exper}_{i}+\rho _{2}\text{%
exper}_{i}^{2}+\tilde{\mathbf{z}}_{i}^{\prime }\tilde{\mathbf{\gamma }},
\end{equation*}%
where $\tilde{\mathbf{z}}_{i}$ is the vector of control variables other than
potential labor market experience.

Along with the advancement of empirical research on this topic, there has
been a growing awareness of the importance of heterogeneity in individual
cognitive and non-cognitive abilities \citep{heckman2001jpenobellecture} and
their significance for explaining the observed heterogeneity in return to
education. Accordingly, it is important to allow the parameters of the wage
equation to differ across individuals. In equation \eqref{eq:heckman2018eq1}
we allow $\alpha _{i}$ and $\beta _{i}$ to differ across individuals, but
assume that $\phi \left( \mathbf{z}_{i}\right) $ can be approximated as
non-linear functions of experience and other control variables with
homogeneous coefficients.

Specifically, following \citet{lemieux2006postnber,lemieux2006postsecondary}
we also allow for time variations in the parameters of the wage equation and
consider the following categorical coefficient model over a given
cross-section sample indexed by $t$:\footnote{%
Some investigators have suggested including higher powers of the experience
variable in the wage equation. \citet{lemieux2006mincerbookchapter}, for
example, proposes using a quartic rather than a quadratic function. As a
robustness check we also provide estimation results with quartic experience
specification in Section \ref{sec:empirical_supp} in the online supplement.} 
\begin{equation}
\log \text{wage}_{it}=\alpha _{it}+\beta _{it}\text{edu}_{it}+\rho _{1t}%
\text{exper}_{it}+\rho _{2t}\text{exper}_{it}^{2}+\tilde{\mathbf{z}}%
_{it}^{\prime }\tilde{\mathbf{\gamma }}_{t}+\varepsilon _{it},
\label{eq:ccrm_spec_alpha_i}
\end{equation}%
where the return to education follows the categorical distribution, 
\begin{equation*}
\beta _{it}=%
\begin{cases}
b_{tL} & \text{w.p. }\pi _{t}, \\ 
b_{tH} & \text{w.p. }1-\pi _{t},%
\end{cases}%
\end{equation*}%
and $\tilde{\mathbf{z}}_{it}$ includes gender, martial status and race. $%
\alpha _{it}=\alpha _{t}+\delta _{it}$ where $\delta _{it}$ is mean $0$
random variable assumed to be distributed independently of $\text{edu}_{it}$
and $\mathbf{z}_{it}=\left( \text{exper}_{it},\text{exper}_{it}^{2}\text{, }%
\tilde{\mathbf{z}}_{t}^{\prime }\right) ^{\prime }$. Let $u_{it}=\varepsilon
_{it}+\delta _{it}$, and write \eqref{eq:ccrm_spec_alpha_i} as 
\begin{equation}
\log \text{wage}_{it}=\alpha _{t}+\beta _{it}\text{edu}_{it}+\rho _{1t}\text{%
exper}_{it}+\rho _{2t}\text{exper}_{it}^{2}+\tilde{\mathbf{z}}_{it}^{\prime }%
\tilde{\mathbf{\gamma }}_{t}+u_{it}.  \label{eq:ccrm_spec}
\end{equation}%
The correlation between $\alpha _{it}$ and $\text{edu}_{it}$ in %
\eqref{eq:heckman2018eq1} is the source of \textquotedblleft ability
bias\textquotedblright\ \citep{griliches1977estimating}. Given the pure
cross-sectional nature of our analysis, we do not allow for the endogeneity
from \textquotedblleft ability bias\textquotedblright\ or dynamics. To allow
for non-zero correlations between $\alpha _{it}$, edu$_{it}$ and $\mathbf{z}%
_{it}$, a panel data approach is required, which has its own challenges, as
education and experience variables tend to very slow moving (if at all) for
many individuals in the panel. Time delays between changes in education and
experience, and the wage outcomes also further complicate the interpretation
of the mean estimates of $\beta _{it}$ which we shall be reporting. To
partially address the possible dynamic spillover effects, we provide
estimates of the distribution of $\beta _{it}$ using cross-sectional data
from two different sample periods, and investigate the extent to which the
distribution of return to education has changed over time, by gender and the
level of educational achievements.\footnote{%
Time variations in return to education has also been investigated in the
literature as a possible explanation of increasing wage inequality in the
U.S. See, for example, the papers by %
\citet{lemieux2006postnber,lemieux2006postsecondary}.}

We estimate the categorical distribution of the return to education in %
\eqref{eq:ccrm_spec} using the May and Outgoing Rotation Group (ORG)
supplements of the Current Population Survey (CPS) data, as in %
\citet{lemieux2006postnber,lemieux2006postsecondary}.\footnote{%
The data is retrieved from %
\url{https://www.openicpsr.org/openicpsr/project/116216/version/V1/view}.}
We pool observations from 1973 to 1975 for the first sample period, $%
t=\left\{ 1973-1975\right\} $ and observations from 2001 to 2003 for the
second sample period, $t=\left\{ 2001-2003\right\} $. Following %
\citet{lemieux2006postnber}, we consider sub-samples of those with less than
12 years of education, \textquotedblleft high school or less", and those
with more than 12 years of education, \textquotedblleft postsecondary
education", as well as the combined sample. We also present results by
gender. The summary statistics are reported in Table \ref%
{tab:Descriptive_Statistics}. As to be expected, the mean log wages are
higher for those with postsecondary education (for male and female), with
the number of years of schooling and experience rising by about one year
across the two sub-period samples. There are also important differences
across male and female, and the two educational groupings, which we hope to
capture in our estimation.

\begin{table}[htbp]
\caption{Summary Statistics of the May and Outgoing Rotation Group (ORG)
supplements of the Current Population Survey (CPS) data across two periods,
1973 - 75 and 2001 - 03, by years of education and gender}
\label{tab:Descriptive_Statistics}\vspace{-10pt}
\par
\begin{center}
{\small 
\begin{tabular}{rccccccc}
\hline
& \multicolumn{3}{c}{1973 - 75} &  & \multicolumn{3}{c}{2001 - 03} \\ 
\cline{2-4}\cline{6-8}
& \multicolumn{1}{c}{High School} & Postsecondary & \multirow{2}{*}{All} & 
& High School & Postsecondary & \multirow{2}{*}{All} \\ 
& \multicolumn{1}{c}{or Less} & Education &  &  & or Less & Education &  \\ 
\hline
& \multicolumn{7}{c}{\textit{Both male and female}} \\ \hline
\texttt{log wage} & 1.59 & 1.94 & 1.69 &  & 1.47 & 1.88 & 1.71 \\ 
& {\footnotesize \textit{(0.50)}} & {\footnotesize \textit{(0.53)}} & 
{\footnotesize \textit{(0.53)}} &  & {\footnotesize \textit{(0.47)}} & 
{\footnotesize \textit{(0.57)}} & {\footnotesize \textit{(0.57)}} \\ 
\texttt{edu.} & 10.64 & 15.21 & 12.02 &  & 11.29 & 14.96 & 13.41 \\ 
& {\footnotesize \textit{(2.11)}} & {\footnotesize \textit{(1.65)}} & 
{\footnotesize \textit{(2.89)}} &  & {\footnotesize \textit{(1.68)}} & 
{\footnotesize \textit{(1.82)}} & {\footnotesize \textit{(2.53)}} \\ 
\texttt{age} & 36.74 & 34.90 & 36.18 &  & 37.96 & 39.87 & 39.06 \\ 
& {\footnotesize \textit{(13.85)}} & {\footnotesize \textit{(11.58)}} & (%
{\footnotesize \textit{{13.23})}} &  & {\footnotesize \textit{(12.93)}} & 
{\footnotesize \textit{(11.33)}} & {\footnotesize \textit{(12.07)}} \\ 
\texttt{expr.} & 20.10 & 13.69 & 18.17 &  & 20.67 & 18.91 & 19.65 \\ 
& {\footnotesize \textit{(14.44)}} & {\footnotesize \textit{(11.41)}} & 
{\footnotesize \textit{(13.91)}} &  & {\footnotesize \textit{(12.95)}} & 
{\footnotesize \textit{(11.17)}} & {\footnotesize \textit{(11.98)}} \\ 
\texttt{marriage} & 0.67 & 0.70 & 0.68 &  & 0.52 & 0.60 & 0.57 \\ 
& {\footnotesize \textit{(0.47)}} & {\footnotesize \textit{(0.46)}} & 
{\footnotesize \textit{(0.47)}} &  & {\footnotesize \textit{(0.50)}} & 
{\footnotesize \textit{(0.49)}} & {\footnotesize \textit{(0.50)}} \\ 
\texttt{nonwhite} & 0.11 & 0.08 & 0.10 &  & 0.15 & 0.14 & 0.15 \\ 
& {\footnotesize \textit{(0.32)}} & {\footnotesize \textit{(0.27)}} & 
{\footnotesize \textit{(0.30)}} &  & {\footnotesize \textit{(0.36)}} & 
{\footnotesize \textit{(0.35)}} & {\footnotesize \textit{(0.35)}} \\ 
$n$ & 77,899 & 33,733 & 111,632 &  & 216,136 & 295,683 & 511,819 \\ \hline
& \multicolumn{7}{c}{\textit{Male}} \\ \hline
\texttt{log wage} & 1.76 & 2.07 & 1.86 &  & 1.57 & 2.00 & 1.81 \\ 
& {\footnotesize \textit{(0.48)}} & {\footnotesize \textit{(0.53)}} & 
{\footnotesize \textit{(0.52)}} &  & {\footnotesize \textit{(0.48)}} & 
{\footnotesize \textit{(0.58)}} & {\footnotesize \textit{(0.58)}} \\ 
\texttt{edu.} & 10.44 & 15.29 & 12.00 &  & 11.19 & 15.02 & 13.31 \\ 
& {\footnotesize \textit{(2.26)}} & {\footnotesize \textit{(1.69)}} & 
{\footnotesize \textit{(3.08)}} &  & {\footnotesize \textit{(1.82)}} & 
{\footnotesize \textit{(1.84)}} & {\footnotesize \textit{(2.64)}} \\ 
\texttt{age} & 36.79 & 35.29 & 36.31 &  & 37.21 & 40.24 & 38.89 \\ 
& {\footnotesize \textit{(13.82)}} & {\footnotesize \textit{(11.24)}} & 
{\footnotesize \textit{(13.07)}} &  & {\footnotesize \textit{(12.70)}} & 
{\footnotesize \textit{(11.30)}} & {\footnotesize \textit{(12.04)}} \\ 
\texttt{expr.} & 20.35 & 14.00 & 18.32 &  & 20.02 & 19.22 & 19.58 \\ 
& {\footnotesize \textit{(14.49)}} & {\footnotesize \textit{(11.06)}} & 
{\footnotesize \textit{(13.81)}} &  & {\footnotesize \textit{(12.75)}} & 
{\footnotesize \textit{(11.08)}} & {\footnotesize \textit{(11.86)}} \\ 
\texttt{marriage} & 0.73 & 0.76 & 0.74 &  & 0.53 & 0.64 & 0.59 \\ 
& {\footnotesize \textit{(0.44)}} & {\footnotesize \textit{(0.43)}} & 
{\footnotesize \textit{(0.44)}} &  & {\footnotesize \textit{(0.50)}} & 
{\footnotesize \textit{(0.48)}} & {\footnotesize \textit{(0.49)}} \\ 
\texttt{nonwhite} & 0.10 & 0.06 & 0.09 &  & 0.14 & 0.13 & 0.13 \\ 
& {\footnotesize \textit{(0.30)}} & {\footnotesize \textit{(0.24)}} & 
{\footnotesize \textit{(0.29)}} &  & {\footnotesize \textit{(0.34)}} & 
{\footnotesize \textit{(0.33)}} & {\footnotesize \textit{(0.34)}} \\ 
$n$ & 44,299 & 20,851 & 65,150 &  & 116,129 & 144,138 & 260,267 \\ \hline
& \multicolumn{7}{c}{\textit{Female}} \\ \hline
\texttt{log wage} & 1.35 & 1.71 & 1.45 &  & 1.77 & 1.36 & 1.61 \\ 
& {\footnotesize \textit{(0.41)}} & {\footnotesize \textit{(0.47)}} & 
{\footnotesize \textit{(0.46)}} &  & {\footnotesize \textit{(0.54)}} & 
{\footnotesize \textit{(0.43)}} & {\footnotesize \textit{(0.54)}} \\ 
\texttt{edu.} & 10.89 & 15.08 & 12.05 &  & 14.90 & 11.42 & 13.52 \\ 
& {\footnotesize \textit{(1.87)}} & {\footnotesize \textit{(1.59)}} & 
{\footnotesize \textit{(2.60)}} &  & {\footnotesize \textit{(1.79)}} & 
{\footnotesize \textit{(1.49)}} & {\footnotesize \textit{(2.40)}} \\ 
\texttt{age} & 36.67 & 34.27 & 36.01 &  & 38.83 & 39.52 & 39.24 \\ 
& {\footnotesize \textit{(13.88)}} & {\footnotesize \textit{(12.09)}} & 
{\footnotesize \textit{(13.45)}} &  & {\footnotesize \textit{(13.14)}} & 
{\footnotesize \textit{(11.35)}} & {\footnotesize \textit{(12.10)}} \\ 
\texttt{expr.} & 19.78 & 13.19 & 17.96 &  & 18.61 & 21.41 & 19.73 \\ 
& {\footnotesize \textit{(14.36)}} & {\footnotesize \textit{(11.94)}} & 
{\footnotesize \textit{(14.04)}} &  & {\footnotesize \textit{(11.24)}} & 
{\footnotesize \textit{(13.13)}} & {\footnotesize \textit{(12.11)}} \\ 
\texttt{marriage} & 0.60 & 0.60 & 0.60 &  & 0.56 & 0.51 & 0.54 \\ 
& {\footnotesize \textit{(0.49)}} & {\footnotesize \textit{(0.49)}} & 
{\footnotesize \textit{(0.49)}} &  & {\footnotesize \textit{(0.50)}} & 
{\footnotesize \textit{(0.50)}} & {\footnotesize \textit{(0.50)}} \\ 
\texttt{nonwhite} & 0.13 & 0.10 & 0.12 &  & 0.15 & 0.17 & 0.16 \\ 
& {\footnotesize \textit{(0.33)}} & {\footnotesize \textit{(0.30)}} & 
{\footnotesize \textit{(0.33)}} &  & {\footnotesize \textit{(0.36)}} & 
{\footnotesize \textit{(0.38)}} & {\footnotesize \textit{(0.37)}} \\ 
$n$ & 33,600 & 12,882 & 46,482 &  & 151,545 & 100,007 & 251,552 \\ \hline
\end{tabular}%
}
\end{center}
\par
{\footnotesize \textit{Notes:} ``Postsecondary Education'' stands for the
sub-sample with years of education higher than 12 and ``High School or
Less'' stands for sub-sample with years of education less than or equal to
12). \texttt{edu.} and \texttt{exper.} are in years. \texttt{marriage} and 
\texttt{nonwhite} are dummy variables. $n$ is the sample size. We report
mean and standard deviation (in parentheses) of each variable. The data is
from the May and Outgoing Rotation Group (ORG) supplements of the Current
Population Survey (CPS) data retrived from %
\url{https://www.openicpsr.org/openicpsr/project/116216/version/V1/view}.}
\end{table}

\begin{table}[htbp]
\caption{Estimates of the distribution of the return to education across two
periods, 1973 - 75 and 2001 - 03, by years of education and gender}
\label{tab:Return-to-Education}
\begin{center}
\begin{tabular}{rrrrrrrrr}
\hline
& \multicolumn{2}{c}{High School or Less} &  & \multicolumn{2}{c}{
Postsecondary Edu.} &  & \multicolumn{2}{c}{All} \\ 
\cline{2-3}\cline{5-6}\cline{8-9}
& \multicolumn{1}{c}{1973 - 75} & \multicolumn{1}{c}{2001 - 03} &  & 
\multicolumn{1}{c}{1973 - 75} & \multicolumn{1}{c}{2001 - 03} &  & 
\multicolumn{1}{c}{1973 - 75} & \multicolumn{1}{c}{2001 - 03} \\ \hline
& \multicolumn{8}{c}{Both Male and Female} \\ \hline
$\pi$ & 0.4843 & 0.5069 &  & 0.4398 & 0.3537 &  & 0.4719 & 0.3463 \\ 
& {\small \textit{(4188.8)}} & {\small \textit{(0.0269)}} &  & {\small 
\textit{(0.0502)}} & {\small \textit{(0.0091)}} &  & {\small \textit{(0.0485)%
}} & \textit{(0.0047)} \\ 
$\beta_L$ & 0.0608 & 0.0382 &  & 0.0624 & 0.0866 &  & 0.0558 & 0.0645 \\ 
& {\small \textit{(5.0939)}} & {\small \textit{(0.0014)}} &  & {\small 
\textit{(0.0035)}} & {\small \textit{(0.0009)}} &  & {\small \textit{(0.0020)%
}} & {\small \textit{(0.0004)}} \\ 
$\beta_H$ & 0.0619 & 0.0920 &  & 0.1103 & 0.1401 &  & 0.0941 & 0.1263 \\ 
& {\small \textit{(4.8132)}} & {\small \textit{(0.0019)}} &  & {\small 
\textit{(0.0032)}} & {\small \textit{(0.0007)}} &  & {\small \textit{(0.0022)%
}} & {\small \textit{(0.0004)}} \\ 
$\beta_H / \beta_L$ & 1.0194 & 2.4102 &  & 1.7680 & 1.6178 &  & 1.6879 & 
1.9567 \\ 
& {\small \textit{(6.2938)}} & {\small \textit{(0.0428)}} &  & {\small 
\textit{(0.0618)}} & {\small \textit{(0.0111)}} &  & {\small \textit{(0.0295)%
}} & {\small \textit{(0.0080)}} \\ 
$\mathrm{E}\left(\beta_i\right)$ & 0.0614 & 0.0647 &  & 0.0893 & 0.1212 &  & 
0.0760 & 0.1049 \\ 
$\mathrm{s.d.}\left(\beta_i\right)$ & 0.0006 & 0.0269 &  & 0.0238 & 0.0256 & 
& 0.0191 & 0.0294 \\ 
$n$ & 77,899 & 216,136 &  & 33,733 & 295,683 &  & 111,632 & 511,819 \\ \hline
& \multicolumn{8}{c}{Male} \\ \hline
$\pi$ & n/a & 0.4939 &  & 0.4706 & 0.3201 &  & 0.4802 & 0.3290 \\ 
& n/a & {\small \textit{(0.0399)}} &  & {\small \textit{(0.0707)}} & {\small 
\textit{(0.0104)}} &  & {\small \textit{(0.0815)}} & {\small \textit{(0.0053)%
}} \\ 
$\beta_L$ & 0.0637 & 0.0404 &  & 0.0534 & 0.0743 &  & 0.0536 & 0.0548 \\ 
& n/a & {\small \textit{(0.0019)}} &  & {\small \textit{(0.0046)}} & {\small 
\textit{(0.0012)}} &  & {\small \textit{(0.0030)}} & {\small \textit{(0.0005)%
}} \\ 
$\beta_H$ & 0.0637 & 0.0911 &  & 0.0995 & 0.1308 &  & 0.0875 & 0.1192 \\ 
& n/a & {\small \textit{(0.0026)}} &  & {\small \textit{(0.0042)}} & {\small 
\textit{(0.0009)}} &  & {\small \textit{(0.0031)}} & {\small \textit{(0.0005)%
}} \\ 
$\beta_H / \beta_L$ & 1.0000 & 2.2526 &  & 1.8641 & 1.7603 &  & 1.6312 & 
2.1772 \\ 
& n/a & {\small \textit{(0.0534)}} &  & {\small \textit{(0.1038)}} & {\small 
\textit{(0.0209)}} &  & {\small \textit{(0.0459)}} & {\small \textit{(0.0144)%
}} \\ 
$\mathrm{E}\left(\beta_i\right)$ & 0.0637 & 0.0661 &  & 0.0778 & 0.1128 &  & 
0.0712 & 0.0980 \\ 
$\mathrm{s.d.}\left(\beta_i\right)$ & 0.0000 & 0.0253 &  & 0.0230 & 0.0264 & 
& 0.0169 & 0.0303 \\ 
$n$ & 44,299 & 116,129 &  & 20,851 & 144,138 &  & 65,150 & 260,267 \\ \hline
& \multicolumn{8}{c}{Female} \\ \hline
$\pi$ & 0.4999 & 0.5166 &  & 0.4526 & 0.3906 &  & 0.4566 & 0.3608 \\ 
& {\small \textit{(0.5047)}} & {\small \textit{(0.0283)}} &  & {\small 
\textit{(0.0829)}} & {\small \textit{(0.0167)}} &  & {\small \textit{(0.0810)%
}} & {\small \textit{(0.0086)}} \\ 
$\beta_L$ & 0.0441 & 0.0348 &  & 0.0823 & 0.0979 &  & 0.0628 & 0.0751 \\ 
& {\small \textit{(0.0133)}} & {\small \textit{(0.0016)}} &  & {\small 
\textit{(0.0053)}} & {\small \textit{(0.0013)}} &  & {\small \textit{(0.0033)%
}} & {\small \textit{(0.0007)}} \\ 
$\beta_H$ & 0.0723 & 0.0972 &  & 0.1310 & 0.1473 &  & 0.1028 & 0.1333 \\ 
& {\small \textit{(0.0159)}} & {\small \textit{(0.0025)}} &  & {\small 
\textit{(0.0055)}} & {\small \textit{(0.0011)}} &  & {\small \textit{(0.0038)%
}} & {\small \textit{(0.0007)}} \\ 
$\beta_H / \beta_L$ & 1.6392 & 2.7934 &  & 1.5913 & 1.5048 &  & 1.6357 & 
1.7756 \\ 
& {\small \textit{(0.1565)}} & {\small \textit{(0.0700)}} &  & {\small 
\textit{(0.0539)}} & {\small \textit{(0.0121)}} &  & {\small \textit{(0.0353)%
}} & \textit{(0.0090)} \\ 
$\mathrm{E}\left(\beta_i\right)$ & 0.0582 & 0.0650 &  & 0.1090 & 0.1280 &  & 
0.0845 & 0.1123 \\ 
$\mathrm{s.d.}\left(\beta_i\right)$ & 0.0141 & 0.0312 &  & 0.0242 & 0.0241 & 
& 0.0199 & 0.0280 \\ 
$n$ & 33,600 & 100,007 &  & 12,882 & 151,545 &  & 46,482 & 251,552 \\ \hline
\end{tabular}%
\end{center}
\par
{\footnotesize \textit{Notes:} This table reports the estimates of the
distribution of $\beta _{i}$ with the quadratic in experience specification %
\eqref{eq:ccrm_spec_alpha_i}, using $S = 4$ order moments of $\text{edu}_i$.
\textquotedblleft Postsecondary Edu.\textquotedblright\ stands for the
sub-sample with years of education higher than 12 and \textquotedblleft High
School or Less\textquotedblright\ stands for those with years of education
less than or equal to 12. $\mathrm{s.d.}\left(\beta _{i}\right)$ corresponds
to the square root of estimated $\mathrm{var}\left( \beta _{i}\right) $. $n$
is the sample size. \textquotedblleft n/a" is inserted when the estimates
show homogeneity of $\beta _{i}$ and $\pi $ is not identified and cannot be
estimated.}
\end{table}

\begin{table}[p!]
\caption{Estimates of $\mathbf{\protect\gamma}$ associated with control
variables $\mathbf{z}_i$ with specification \eqref{eq:ccrm_spec_alpha_i}
across two periods, 1973 - 75 and 2001 - 03, by years of education and
gender, which complements Table \protect\ref{tab:Return-to-Education}}
\label{tab:gamma_est_quadratic}
\begin{center}
\begin{tabular}{rrrrrrlrr}
\hline
& \multicolumn{2}{c}{High School or Less} & \multicolumn{1}{l}{} & 
\multicolumn{2}{c}{Postsecondary Edu.} &  & \multicolumn{2}{c}{All} \\ 
\cline{2-3}\cline{5-6}\cline{8-9}
& \multicolumn{1}{c}{1973 - 75} & \multicolumn{1}{c}{2001 - 03} & 
\multicolumn{1}{l}{} & \multicolumn{1}{c}{1973 - 75} & \multicolumn{1}{c}{
2001 - 03} &  & \multicolumn{1}{c}{1973 - 75} & \multicolumn{1}{c}{2001 - 03}
\\ \hline
& \multicolumn{8}{c}{\textit{Both male and female}} \\ \hline
\texttt{exper.} & 0.0305 & 0.0319 &  & 0.0415 & 0.0354 &  & 0.0310 & 0.0321
\\ 
& {\small \textit{(0.0004)}} & {\small \textit{(0.0002)}} &  & {\small 
\textit{(0.0008)}} & {\small \textit{(0.0003)}} &  & {\small \textit{%
(0.0003) }} & {\small \textit{(0.0002)}} \\ 
$\mathtt{exper.}^2$ ($\times 10^2$) & -0.0490 & -0.0505 &  & -0.0826 & 
-0.0652 &  & -0.0499 & -0.0537 \\ 
& {\small \textit{(0.0009)}} & {\small \textit{(0.0005)}} &  & {\small 
\textit{(0.0022)}} & {\small \textit{(0.0007)}} &  & {\small \textit{%
(0.0008) }} & {\small \textit{(0.0005)}} \\ 
\texttt{marriage} & 0.1120 & 0.0751 &  & 0.0886 & 0.0770 &  & 0.1085 & 0.0818
\\ 
& {\small \textit{(0.0036)}} & {\small \textit{(0.0020)}} &  & {\small 
\textit{(0.0059)}} & {\small \textit{(0.0020)}} &  & {\small \textit{%
(0.0031) }} & {\small \textit{(0.0014)}} \\ 
\texttt{nonwhite} & -0.0922 & -0.0775 &  & -0.0424 & -0.0571 &  & -0.0715 & 
-0.0667 \\ 
& {\small \textit{(0.0047)}} & {\small \textit{(0.0024)}} &  & {\small 
\textit{(0.0088)}} & {\small \textit{(0.0025)}} &  & {\small \textit{%
(0.0042) }} & {\small \textit{(0.0018)}} \\ 
\texttt{gender} & 0.4157 & 0.2298 &  & 0.2962 & 0.2023 &  & 0.3892 & 0.2167
\\ 
& {\small \textit{(0.0029)}} & {\small \textit{(0.0017)}} &  & {\small 
\textit{(0.0050)}} & {\small \textit{(0.0018)}} &  & {\small \textit{%
(0.0025) }} & {\small \textit{(0.0013)}} \\ 
$n$ & 77,899 & 216,136 &  & 33,733 & 295,683 &  & 111,632 & 511,819 \\ \hline
& \multicolumn{8}{c}{\textit{Male}} \\ \hline
\texttt{exper.} & 0.0369 & 0.0366 &  & 0.0516 & 0.0405 &  & 0.0389 & 0.0371
\\ 
& {\small \textit{(0.0005)}} & {\small \textit{(0.0003)}} &  & {\small 
\textit{(0.0011)}} & {\small \textit{(0.0005)}} &  & {\small \textit{%
(0.0005) }} & {\small \textit{(0.0003)}} \\ 
$\mathtt{exper.}^2$ ($\times 10^2$) & -0.0589 & -0.0589 &  & -0.1016 & 
-0.0752 &  & -0.0635 & -0.0629 \\ 
& {\small \textit{(0.0012)}} & {\small \textit{(0.0008)}} &  & {\small 
\textit{(0.0029)}} & {\small \textit{(0.0011)}} &  & {\small \textit{%
(0.0010) }} & {\small \textit{(0.0007)}} \\ 
\texttt{marriage} & 0.1940 & 0.1123 &  & 0.1497 & 0.1344 &  & 0.1828 & 0.1316
\\ 
& {\small \textit{(0.0053)}} & {\small \textit{(0.0028)}} &  & {\small 
\textit{(0.0085)}} & {\small \textit{(0.0031)}} &  & {\small \textit{%
(0.0045) }} & {\small \textit{(0.0021)}} \\ 
\texttt{nonwhite} & -0.1241 & -0.1165 &  & -0.1172 & -0.1010 &  & -0.1178 & 
-0.1093 \\ 
& {\small \textit{(0.0065)}} & {\small \textit{(0.0035)}} &  & {\small 
\textit{(0.0127)}} & {\small \textit{(0.0039)}} &  & {\small \textit{%
(0.0058) }} & {\small \textit{(0.0027)}} \\ 
$n$ & 44,299 & 116,129 &  & 20,851 & 144,138 &  & 65,150 & 260,267 \\ \hline
& \multicolumn{8}{c}{\textit{Female}} \\ \hline
\texttt{exper.} & 0.0223 & 0.0265 &  & 0.0271 & 0.0313 &  & 0.0208 & 0.0272
\\ 
& {\small \textit{(0.0006)}} & {\small \textit{(0.0003)}} &  & {\small 
\textit{(0.0011)}} & {\small \textit{(0.0004)}} &  & {\small \textit{%
(0.0005) }} & {\small \textit{(0.0003)}} \\ 
$\mathtt{exper.}^2$ ($\times 10^2$) & -0.0376 & -0.0411 &  & -0.0564 & 
-0.0576 &  & -0.0338 & -0.0450 \\ 
& {\small \textit{(0.0013)}} & {\small \textit{(0.0008)}} &  & {\small 
\textit{(0.0030)}} & {\small \textit{(0.0010)}} &  & {\small \textit{%
(0.0012) }} & {\small \textit{(0.0006)}} \\ 
\texttt{marriage} & 0.0115 & 0.0317 &  & -0.0005 & 0.0262 &  & 0.0118 & 
0.0322 \\ 
& {\small \textit{(0.0048)}} & {\small \textit{(0.0028)}} &  & {\small 
\textit{(0.0079)}} & {\small \textit{(0.0026)}} &  & {\small \textit{%
(0.0041) }} & {\small \textit{(0.0019)}} \\ 
\texttt{nonwhite} & -0.0581 & -0.0441 &  & 0.0395 & -0.0236 &  & -0.0202 & 
-0.0315 \\ 
& {\small \textit{(0.0065)}} & {\small \textit{(0.0033)}} &  & {\small 
\textit{(0.0117)}} & {\small \textit{(0.0033)}} &  & {\small \textit{%
(0.0058) }} & {\small \textit{(0.0024)}} \\ 
$n$ & 33,600 & 100,007 &  & 12,882 & 151,545 &  & 46,482 & 251,552 \\ \hline
\end{tabular}%
\end{center}
\par
{\footnotesize \textit{Notes:} This table reports the estimates of $\mathbf{%
\ \gamma}$ in \eqref{eq:ccrm_spec_alpha_i}. ``Postsecondary Edu.'' stands
for the sub-sample with years of education higher than 12 and ``High School
or Less'' stands for those with years of education less than or equal to 12.
The standard error of estimates of coefficients associated with control
variables are estimated based on Theorem \ref{lem:gamma_est_consistency} and
reported in parentheses. $n$ is the sample size.}
\end{table}

We treat the cross-section observations in the two sample periods, $%
t=\left\{ 1973-1975\right\} $ and $\left\{ 2001-2003\right\} $, as \textit{%
repeated} cross-sections, rather than a panel data since the data in these
two periods do not cover the same individuals, and represent random samples
from the population of wage earners in two periods. It should also be noted
that sample sizes $(n_{t})$, although quite large, are much larger during $%
\left\{ 2001-2003\right\} $, which could be a factor when we come to compare
estimates from the two sample periods. For example, for both male and female 
$n_{73-75}=111,632$ as compared to $n_{01-03}=511,819$, a difference which
becomes more pronounced when we consider the number observations in
postsecondary/female category - which rises from $12,882$ for the first
period to $100,007$ in the second period.

We report estimates of $\pi_{t}$, $\beta _{L,t}$ and $\beta _{H,t}$, as well
as corresponding mean and standard deviations (denoted by s.d.($\hat{\beta}%
_{it}$)) of the return to education ($\beta_{it}$) for $t=\left\{
1973-1975\right\} $ and $\left\{ 2001-2003\right\} $. For a given $\pi _{t}$%
, the ratio $\beta _{H,t}/\beta _{L,t}$ provides a measure of within group
heterogeneity and allows us to augment information on changes in mean with
changes in the distribution of return of education. The estimates for the
distribution of the return to education ($\beta _{it}$) are summarized in
Table \ref{tab:Return-to-Education}, with the estimation results for control
variables (such as experience, experienced squared, and other individual
specific characteristic) reported in Table \ref{tab:gamma_est_quadratic}.

As can be seen from Table \ref{tab:Return-to-Education}, estimates of $%
\mathrm{s.d.}\left( \beta _{it}\right) $ are strictly positive for all
sub-groups, except for the \textquotedblleft high school or
less\textquotedblright\ group during the first sample period. For this group
during the first period the estimate of $\mathrm{s.d.}\left( \beta
_{it}\right) $ for the male sub-sample is zero, $\pi $ is not identified,
and we have identical estimates for $\beta _{L}$ and $\beta _{H}$. For this
sub-sample, the associated estimates and their standard errors are shown as
unavailable ($n/a$). In case of the female sub-sample as well as both male
and female sub-samples where the estimates of s.d.($\hat{\beta}_{it}$) are
close to zero and $\pi $ is poorly estimated, only the mean of the return to
education is informative. In the case of the samples where the estimates of $%
\mathrm{s.d.}\left( \beta _{it}\right) $ are strictly positive, the estimate
of the ratio $\beta _{H,t}/\beta _{L,t}$ provides a good measure of within
group heterogeneity of return to education. The estimates of $\beta
_{H,t}/\beta _{L,t}$, lie between $1.50$ to $2.79$, with the high estimate
obtained for the females with high school or less education during $\left\{ {%
2001-03}\right\} $, and the low estimate is obtained for females with
postsecondary education during the same period.

As our theory suggests the mean estimates of return to education, $E\left(
\beta _{it}\right) $, are very precisely estimated and inferences involving
them tend to be robust to conditional error heteroskedasticity. The results
in Table \ref{tab:Return-to-Education} show that estimates of $E\left( \beta
_{it}\right) $ have increased over the two sample periods $t=\left\{
1973-75\right\} $ to $t=\left\{ 2001-03\right\} $, regardless of gender or
educational grouping. The postsecondary educational group show larger
increases in the estimates of $E\left( \beta _{it}\right) $ as compared to
those with high school or less. Estimates of $\mathrm{E}\left( \beta
_{it}\right) $ increases by $36$ per cent for the postsecondary group while
the estimates of mean return to education rises only by around $5$ per cent
in the case of those with high school or less. This result holds for both
genders. Comparing the mean returns across the two educational groups, we
find that mean return to education of individuals with postsecondary
education is $45$ per cent higher than those with high school or less in the 
$\{1973 - 75\}$ period, but this gap increases to $87$ per cent in the
second period, $\left\{ 2001-03\right\} $. Similar patterns are observed in
the sub-samples by gender. The estimates suggest rising between group
heterogeneity, which is mainly due to the increasing returns to education
for the postsecondary group.

Turning to within group heterogeneity, we focus on the estimates of $\beta
_{H,t}/\beta _{L,t}$ and first note that over the two periods, within group
heterogeneity has been rising mainly in the case of those with high school
or less, for both male and female. For the combined male and female samples
and the male sub-sample, there is little evidence of within group
heterogeneity for the first period $\left\{ 1973-75\right\} $. However, for
the second\ period $\left\{ 2001-03\right\} $ we find a sizeable degree of
within group heterogeneity where $\beta _{H,t}/\beta _{L,t}$ is estimated to
be around $2.41$, with $\text{s.d.}\left( \beta _{it}\right) \approx 0.03$.
For the female sub-sample with high school or less, little evidence of
heterogeneity was found for the first period, estimates of $\beta
_{H,t}/\beta _{L,t}$ increases to $2.79$ for the second sample period, that
corresponds to a commensurate rise in $\text{s.d.}\left( \beta _{i}\right) $
to $0.032$. The pattern of within group heterogeneity is very different for
those with postsecondary educational. For this group we in fact observe a
slight decline in the estimates of $\beta _{H,t}/\beta _{L,t}$ by gender and
over two sample periods.

Overall, our between and within estimates of mean return to education are in
line with the evidence of rising wage inequality documented in the
literature \citep{corak2013income}.


\section{Conclusion\label{sec:Conclusion}}

In this paper we consider random coefficient models for repeated
cross-sections in which the random coefficients follow categorical
distributions. Identification is established using moments of the random
coefficients in terms of the moments of the underlying observations. We
propose two-step generalized method of moments to estimate the parameters of
the categorical distributions. The consistency and asymptotic normality of
the GMM estimators are established without the IID assumption typically
assumed in the literature. Small sample properties of the proposed estimator
are investigated by means of Monte Carlo experiments and shown to be robust
to heterogeneously generated regressors and errors, although relatively
large samples are required to estimate the parameters of the underling
categorical distributions. This is largely due to the highly non-linear
mapping between the parameters of the categorical distribution and the
higher order moments of the coefficients. This problem is likely to become
more pronounced with a larger number of categories and coefficients.

In the empirical application, we apply the model to study the evolution of
returns to education over two sub-periods, also considered in the literature
by \citet{lemieux2006postnber}. Our estimates show that mean (ex post)
returns to education have risen over the periods from 1973 - 75 to 2001 -
2003 mainly in the case of individuals with postsecondary education, and
this result is robust by gender. We find evidence of within group
heterogeneity in the case of high school or less educational group as
compared to those with postsecondary education.

In our model specification, the number of categories, $K$, is treated as a
tuning parameter and assumed to be known. An information criterion, as in %
\citet{bonhomme_manersa2015fixedeffect} and \citet*{ssp2016classo}, to
determine $K$ could be considered. Further investigation of models with
multiple regressors subject to parameter heterogeneity is also required.
These and other related issues are topics for future research.

\newpage \appendix

\begin{center}
{\LARGE Appendix}
\end{center}

\setcounter{table}{0} \renewcommand{\thetable}{A.\arabic{table}} %
\setcounter{section}{0} \renewcommand{\thesection}{A.\arabic{section}} %
\setcounter{figure}{0} \renewcommand{\thefigure}{A.\arabic{figure}}

\section{Proofs\label{sec:Proofs}}

We include proofs and technical details in this section.

\begin{proof}[Proof of Theorem \protect\ref{lem: identification_moments}]
Sum \eqref{eq:mc_n_r} over $i$ and rearrange terms,

\begin{equation}
\left( \frac{1}{n}\sum_{i=1}^{n}\mathrm{E}\left( x_{i}^{r}\right) \right) 
\mathrm{E}\left( \beta _{i}^{r}\right) +\frac{1}{n}\sum_{i=1}^{n}\mathrm{E}%
\left( u_{i}^{r}\right) =\frac{1}{n}\sum_{i=1}^{n}\mathrm{E}\left( \tilde{y}%
_{i}^{r}\right) -\sum_{q=2}^{r-1}\binom{r}{q}\left( \frac{1}{n}\sum_{i=1}^{n}%
\mathrm{E}\left( x_{i}^{r-q}\right) \mathrm{E}\left( u_{i}^{q}\right)
\right) \mathrm{E}\left( \beta _{i}^{r-q}\right) .  \label{eq:mc_n_sum_r}
\end{equation}

Note that 
\begin{equation*}
\frac{1}{n}\sum_{i=1}^n \mathrm{E}\left(x_i^{r-q}\right) \mathrm{E} \left(
u_{i}^{q}\right) = \left( \frac{1}{n}\sum_{i=1}^n \mathrm{E}%
\left(x_i^{r-q}\right)\right) \sigma_q + \frac{1}{n}\sum_{i=1}^n \mathrm{E}%
\left(x_i^{r-q}\right) \left( \mathrm{E} \left( u_{i}^{q}\right) - \sigma_q
\right),
\end{equation*}
and 
\begin{equation*}
\left\vert \frac{1}{n}\sum_{i=1}^n \mathrm{E}\left(x_i^{r-q}\right) \left( 
\mathrm{E} \left( u_{i}^{q}\right) - \sigma_q \right) \right\vert \leq
\sup_i \left\vert \mathrm{E}\left(x_i^{r-q}\right) \right\vert \left\vert 
\frac{1}{n}\sum_{i=1}^n \left( \mathrm{E} \left( u_{i}^{q}\right) - \sigma_q
\right) \right\vert = O(n^{-1/2}),
\end{equation*}
by Assumption \ref{assu:identification_regularity_condition}(b) and \ref%
{assu:conv_moment}(b), then by taking $n\to \infty$ on both sides of %
\eqref{eq:mc_n_sum_r}, we have \eqref{eq:mc_limit_r}. Similar steps for %
\eqref{eq:mc_n_2r} give \eqref{eq:mc_limit_2r}.
\end{proof}

\medskip

\begin{proof}[Proof of Theorem \protect\ref{prop:identification_of_beta_L_H}]

Let $m_{r}=\mathrm{E}\left( \beta _{i}^{r}\right) $, $r=1,2,\cdots ,2K-1$,
which are taken as known. We show that 
\begin{equation}
m_{r}=\sum_{k=1}^{K}\pi _{k}b_{k}^{r},  \label{eq:id_general}
\end{equation}%
$r=0,1,2,\cdots ,2K-1$, has a unique solution $\mathbf{\theta }=\left( 
\mathbf{\pi }^{\prime },\mathbf{b}^{\prime }\right) ^{\prime }$, with $%
b_{1}<b_{2}<\cdots <b_{K}$ and $\pi _{k}\in \left( 0,1\right) $ imposed.

Let 
\begin{equation}
q\left( \lambda \right) =\prod_{k=1}^{K}\left( \lambda -b_{k}\right)
=\lambda ^{K}+\left( -1\right) ^{1}b^{\ast}_{1}\lambda ^{K-1}+\cdots +\left(
-1\right) ^{K}b^{\ast}_{K} ,  \label{eq:characteristic_polynomial}
\end{equation}%
be the polynomial with $K$ distinct roots $b_{1}$, $b_{2}$, $\cdots $, $%
b_{K} $. Note that for each $k$, $\left( b_{k}^{r}\right) _{r=0}^{2K-1}$
satisfies the linear homogeneous recurrence relation, 
\begin{equation}
b_{k}^{K+r}=b^{\ast}_{1}b_{k}^{K+r-1}+\left( -1\right)
^{1}b^{\ast}_{2}b_{k}^{K+r-2}+\cdots +\left( -1\right)
^{K-1}b^{\ast}_{K}b_{k}^{r},  \label{eq:linear_recurrence_relation}
\end{equation}%
for $r=0,1,\cdots K-1$, since $q$ is the characteristic polynomial of the
linear recurrence relation (\ref{eq:linear_recurrence_relation}) and $b_{k}$
is a root of $q$ \citep[Chapter 5.2]{rosen2011discretemath}. $\left(
m_{r}\right) _{r=0}^{2K-1}$ is a linear combination of $\left(
b_{1}^{r}\right) _{r=0}^{2K-1}$, $\left( b_{2}^{r}\right) _{r=0}^{2K-1}$, $%
\cdots $, $\left( b_{K}^{r}\right) _{r=0}^{2K-1}$ by (\ref{eq:id_general}),
then $\left( m_{r}\right) _{r=0}^{2K-1}$ also satisfies the linear
recurrence relation \eqref{eq:linear_recurrence_relation}, i.e., 
\begin{equation}
m_{K+r}=b^{\ast}_{1}m_{K+r-1}+\left( -1\right)
^{1}b^{\ast}_{2}m_{K+r-2}+\cdots +\left( -1\right) ^{K-1}b^{\ast}_{K}m_{r},
\label{eq:linear_recurrence_m}
\end{equation}%
for $r=0,1,\cdots ,K-1$. \eqref{eq:linear_recurrence_m} is a linear system
of $K $ equations in terms of $\left( b^{\ast}_{k}\right) _{k=1}^{K}$. In
matrix form, 
\begin{equation}
\mathbf{M} \mathbf{D} \mathbf{b}^{\ast} = \mathbf{m},
\label{eq: linear_recurrence_matrix}
\end{equation}
where 
\begin{equation*}
\mathbf{M} = 
\begin{pmatrix}
1 & m_1 & \cdots & m_{K-1} \\ 
m_{1} & m_{2} & \cdots & m_{K} \\ 
\vdots & \vdots & \ddots & \vdots \\ 
m_{K-1} & m_{K} & \cdots & m_{2K-2}%
\end{pmatrix}%
,
\end{equation*}
$\mathbf{D} = \mathrm{diag}\left( \left( -1\right) ^{K-1},\left( -1\right)
^{K-2},\cdots ,1\right)$, $\mathbf{b}^\ast = \left( b^{\ast}_K,
b^{\ast}_{K-1}, \cdots, b^{\ast}_1 \right)^{\prime}$, and $\mathbf{m} =
\left( m_{K}, m_{K+1}, \cdots, m_{2K-1} \right)^{\prime}$.

Denote $\mathbf{\psi }_{k}=\left( 1,b_{k},b_{k}^{2}\cdots
,b_{k}^{K-1}\right) ^{\prime }$ and $\mathbf{\Psi }=\left( \mathbf{\psi }%
_{1},\mathbf{\psi }_{2},\cdots ,\mathbf{\psi }_{K}\right) $. Then 
\begin{equation*}
\mathbf{M}_{k}=%
\begin{pmatrix}
1 & b_{k} & \cdots & b_{k}^{K-1} \\ 
b_{k} & b_{k}^{2} & \cdots & b_{k}^{K} \\ 
\vdots & \vdots & \ddots & \vdots \\ 
b_{k}^{K-1} & b_{k}^{K} & \cdots & b_{k}^{2K-2}%
\end{pmatrix}%
=\mathbf{\psi }_{k}\mathbf{\psi }_{k}^{\prime } ,
\end{equation*}%
and $\mathbf{M}=\sum_{k=1}^{K}\pi _{k}\mathbf{M}_{k}=\mathbf{\Psi }\mathrm{%
diag}\left( \mathbf{\pi }\right) \mathbf{\Psi } ^{\prime }$. Note that $%
\mathbf{\Psi }^{\prime }$ is a Vandermonde matrix then $\det \left( \Psi
\right) =\prod_{1\leq k<k^{\prime }\leq K}\left( b_{k^{\prime
}}-b_{k}\right) >0$ since $b_{1}<b_{2}<\cdots <b_{K}$. 
\begin{align*}
\det \left( \mathbf{MD}\right) & =\det \left( \mathbf{\Psi }\mathrm{diag}%
\left( \mathbf{\pi }\right) \mathbf{\Psi }^{\prime }\right) \det \left( 
\mathbf{D}\right) \\
& =\left( \prod_{1\leq k<k^{\prime }\leq K}\left( b_{k^{\prime
}}-b_{k}\right) \right) ^{2}\left( \prod_{k=1}^{K}\pi _{k}\right) \left(
\left( -1\right) ^{\frac{1}{2}K\left( K-1\right) }\right) \neq 0,
\end{align*}%
since $\pi _{k}\in \left( 0,1\right) $ for any $k$. Then we can identify $%
\left(b^{\ast}_{k}\right) _{k=1}^{K}$ by $\left( m_{r}\right) _{r=0}^{2K-1}$
in \eqref{eq: linear_recurrence_matrix}, and hence the characteristic
polynomial is determined, and we can identify $\left( b_{k}\right)
_{k=1}^{K} $ by \eqref{eq:characteristic_polynomial}.

Since both $\left(b_{k}\right)_{k=1}^{K}$ and $\left(m_{r}%
\right)_{r=1}^{2K-1}$ are identified, the first $K$ equations of (\ref%
{eq:id_general}) is 
\begin{equation*}
\mathbf{\Psi}^{\prime}\mathbf{\pi}=\left(1,m_{1},m_{2},\cdots,m_{K-1}%
\right)^{\prime},
\end{equation*}
and $\mathbf{\pi}$ is identified by inverting the Vandermonde matrix $%
\mathbf{\Psi}^{\prime}$, which completes the proof.
\end{proof}

\medskip

\begin{proof}[Proof of Theorem \protect\ref{thm:consisteny}]
Denote 
\begin{equation*}
\Phi_0\left( \mathbf{\theta }, \mathbf{\sigma}, \mathbf{\gamma } \right) = 
\mathbf{g}_0\left( \mathbf{\theta }, \mathbf{\sigma}, \mathbf{\gamma }
\right)^\prime \mathbf{A} \mathbf{g}_0\left( \mathbf{\theta }, \mathbf{\sigma%
}, \mathbf{\gamma } \right),
\end{equation*}
where we stack the left-hand side of \eqref{eq:mc_limit} and transform $%
\mathbf{m}_\beta = h\left( \mathbf{\theta} \right)$ to get $\mathbf{g}%
_0\left(\mathbf{\theta }, \mathbf{\sigma}, \mathbf{\gamma } \right)$. We
suppress and the argument $\hat{\gamma}$ and denote $\mathbf{\eta} = \left( 
\mathbf{\theta}^\prime, \mathbf{\sigma}^\prime \right)^\prime$ for notation
simplicity and proceed by verifying the conditions of 
\citet[Theorem
2.1]{newey1994large}. Theorem \ref{prop:identification_of_beta_L_H} provides
the identification results which together with the positive definiteness of $%
\mathbf{A}$ verifies that $\Phi _{0}\left( \mathbf{\eta }, \mathbf{\gamma }%
\right) $ is uniquely minimized to 0 at $\mathbf{\eta }_{0}$. The
compactness of the parameter space holds by Assumption \ref{assu:Consistency
Assumption}(a). Note that $\mathbf{g}_0\left( \mathbf{\eta }, \mathbf{\gamma 
}\right) $ is a polynomial in $\mathbf{\eta }$, which is continuous in $%
\mathbf{\eta }$. $\mathbf{g}_0\left( \mathbf{\eta }, \mathbf{\gamma }\right) 
$ is bounded on $\Theta \times \mathcal{S}$. We proceed by verify the
uniform convergence condition. The additive terms in $\hat{\mathbf{g}}%
_n\left( \mathbf{\eta }, \hat{\mathbf{\gamma }}\right) - \mathbf{g}_0\left( 
\mathbf{\eta },\mathbf{\gamma }\right) $ are of the form $H_{n,1} h^{\left(
r, q\right) }\left( \mathbf{\eta}\right)$ or $H_{n,2}$, where 
\begin{align*}
\left\vert H_{n, 1} \right\vert & = \left\vert \frac{1}{n} \sum_{i=1}^{n}
x_{i}^{r-q+s_{r}} - \rho_{0, r-q+s_r} \right\vert \\
& \leq \left\vert \frac{1}{n} \sum_{i=1}^{n} x_{i}^{r-q+s_{r}} - \frac{1}{n}
\sum_{i=1}^{n} \mathrm{E}\left(x_{i}^{r-q+s_{r}}\right) \right\vert +
\left\vert \frac{1}{n} \sum_{i=1}^{n} \mathrm{E}\left(x_{i}^{r-q+s_{r}}%
\right) - \rho_{0, r-q+s_r} \right\vert \\
& = O_{p}\left(n^{-1/2}\right),
\end{align*}
$h^{\left( r,q\}\right) }\left( \mathbf{\eta }\right) $ is a polynomial in $%
\mathbf{\eta }$, and 
\begin{align*}
\left\vert H_{n,2} \right\vert & = \left\vert \frac{1}{n}\sum_{i=1}^{n} \hat{%
\tilde{y}} _{i}^{r}x_{i}^{s_{r}} - \rho_{r,s_r} \right\vert \\
& \leq \left\vert \frac{1}{n}\sum_{i=1}^{n} \hat{\tilde{y}}
_{i}^{r}x_{i}^{s_{r}} - \frac{1}{n}\sum_{i=1}^{n}\mathrm{E}\left( \tilde{y}%
_{i}^{r}x_{i}^{s_{r}}\right) \right\vert + \left\vert \frac{1}{n}%
\sum_{i=1}^{n} \mathrm{E}\left( \tilde{y}_{i}^{r}x_{i}^{s_{r}} \right) -
\rho_{r,s_r} \right\vert \\
& = O_{p}\left( n^{-1/2}\right).
\end{align*}
$H_{n,1} = O_{p}\left(n^{-1/2}\right)$ and $H_{n,2} =
O_{p}\left(n^{-1/2}\right)$ are due to Assumption \ref{assu:conv_moment}(a)
and \ref{assu:Consistency Assumption}(c).

By the compactness of $\Theta \times \mathcal{S} $, $\sup_{\mathbf{\eta }\in
\Theta\times \mathcal{S} }h^{\left( r,q\right) }\left( \mathbf{\eta }\right)
<C<\infty $ for some positive constant $C$. By triangle inequality, we have 
\begin{equation}
\sup_{\mathbf{\eta }\in \Theta \times \mathcal{S} }\left\Vert \hat{\mathbf{g}%
}_n \left( \mathbf{\eta },\hat{\mathbf{\gamma }}\right) - \mathbf{g}_0\left( 
\mathbf{\eta },\mathbf{\gamma }\right) \right\Vert \rightarrow _{p}0 ,
\label{eq:uni-conv-g-func}
\end{equation}
as $n\rightarrow \infty $. Following the proof of 
\citet[Theorem
    2.1]{newey1994large}, 
\begin{align*}
& \left\vert \hat{\Phi}_{n}\left( \mathbf{\eta },\hat{\mathbf{\ \gamma }}%
\right) -\Phi_0\left( \mathbf{\eta },\mathbf{\gamma } \right) \right\vert \\
\leq & \left\vert \left[ \hat{\mathbf{g}}_n\left( \mathbf{\eta },\hat{ 
\mathbf{\gamma }}\right) -\mathbf{g}_0 \left( \mathbf{\eta },\mathbf{\
\gamma }\right) \right] ^{\prime }\mathbf{A}_{n}\left[ \hat{\mathbf{g}}%
_n\left( \mathbf{\eta },\hat{\mathbf{\gamma }}\right) -\mathbf{g}_0 \left( 
\mathbf{\eta },\mathbf{\gamma }\right) \right] \right\vert +\left\vert 
\mathbf{g}_0 \left( \mathbf{\eta },\mathbf{\gamma }\right) ^{\prime }\left( 
\mathbf{A}_{n}+\mathbf{A}_{n}^{\prime }\right) \left[ \hat{\mathbf{g}}%
_{n}\left( \mathbf{\eta },\hat{\mathbf{\gamma }} \right) -\mathbf{g}_0
\left( \mathbf{\eta },\mathbf{\gamma }\right) \right] \right\vert \\
& \quad \quad +\left\vert \mathbf{g}_0 \left( \mathbf{\eta },\mathbf{\
\gamma }\right) ^{\prime }\left( \mathbf{A}_{n}-\mathbf{A}\right) \mathbf{g}%
_0 \left( \mathbf{\eta },\mathbf{\gamma }\right) \right\vert \\
\leq & \left\Vert \hat{\mathbf{g}}_n\left( \mathbf{\eta },\hat{\mathbf{\
\gamma }}\right) -\mathbf{g}_0 \left( \mathbf{\eta },\mathbf{\gamma }
\right) \right\Vert ^{2}\left\Vert \mathbf{A}_{n}\right\Vert +2\left\Vert 
\mathbf{g}_0 \left( \mathbf{\eta },\mathbf{\gamma }\right) \right\Vert
\left\Vert \hat{\mathbf{g}}_n\left( \mathbf{\eta },\hat{ \mathbf{\gamma }}%
\right) -\mathbf{g}_0 \left( \mathbf{\eta },\mathbf{\ \gamma }\right)
\right\Vert \left\Vert \mathbf{A}_{n}\right\Vert +\left\Vert \mathbf{g}_0
\left( \mathbf{\eta },\mathbf{\gamma }\right) \right\Vert ^{2}\left\Vert 
\mathbf{A}_{n}-\mathbf{A}\right\Vert.
\end{align*}
By \eqref{eq:uni-conv-g-func} and the boundedness of $\mathbf{g}_0 $, $%
\sup_{ \mathbf{\eta }\in \eta }\left\vert \hat{\Phi}_{n}\left( \mathbf{\
\eta },\hat{\mathbf{\gamma }}\right) -\Phi _{n}\left( \mathbf{\ \eta },%
\mathbf{\gamma }\right) \right\vert \rightarrow _{p}0$, which completes the
proof.
\end{proof}

\medskip

\begin{proof}[Proof of Theorem \protect\ref{thm:normality}]
We denote $\mathbf{\eta }=\left( \mathbf{\theta }^{\prime },\mathbf{\sigma }%
^{\prime }\right) ^{\prime }$ for notation simplicity. The first-order
condition, $\nabla _{\mathbf{\eta }}\hat{\mathbf{g}}_{n}\left( \hat{\mathbf{%
\eta }},\hat{\mathbf{\gamma }}\right) \mathbf{A}_{n}\hat{\mathbf{g}}%
_{n}\left( \hat{\mathbf{\eta }},\hat{\mathbf{\gamma }}\right) =\mathbf{0}$,
holds with probability 1. Denote $\hat{\mathbf{G}}\left( \mathbf{\eta },%
\mathbf{\gamma }\right) =\nabla _{\mathbf{\eta }}\hat{\mathbf{g}}_{n}\left( 
\mathbf{\eta },\mathbf{\gamma }\right) $ and expand $\hat{\mathbf{g}}%
_{n}\left( \mathbf{\hat{\eta}},\hat{\mathbf{\gamma }}\right) $ in the
first-order condition around $\mathbf{\eta }_{0}$, we have 
\begin{align*}
\sqrt{n}\left( \hat{\mathbf{\eta }}-\mathbf{\eta }_{0}\right) & =-\left[ 
\hat{\mathbf{G}}\left( \hat{\mathbf{\eta }},\hat{\mathbf{\gamma }}\right)
^{\prime }\mathbf{A}_{n}\hat{\mathbf{G}}\left( \bar{\mathbf{\eta }},\hat{%
\mathbf{\gamma }}\right) \right] ^{-1}\hat{\mathbf{G}}\left( \hat{\mathbf{%
\eta }},\hat{\mathbf{\ \gamma }}\right) ^{\prime }\mathbf{A}_{n}\left( \sqrt{%
n}\hat{\mathbf{g}}_{n}\left( \mathbf{\eta }_{0},\hat{\mathbf{\gamma }}%
\right) \right) \\
& =-\left[ \hat{\mathbf{G}}\left( \hat{\mathbf{\eta }},\hat{\mathbf{\gamma }}%
\right) ^{\prime }\mathbf{A}_{n}\hat{\mathbf{G}}\left( \bar{\mathbf{\eta }},%
\hat{\mathbf{\gamma }}\right) \right] ^{-1}\hat{\mathbf{G}}\left( \hat{%
\mathbf{\eta }},\hat{\mathbf{\ \gamma }}\right) ^{\prime }\mathbf{A}_{n}%
\left[ \sqrt{n}\hat{\mathbf{g}}_{n}\left( \mathbf{\eta }_{0},\mathbf{\gamma }%
_{0}\right) +\nabla _{\gamma }\hat{\mathbf{g}}_{n}\left( \mathbf{\eta }_{0},%
\bar{\mathbf{\gamma }}\right) \sqrt{n}\left( \hat{\mathbf{\gamma }}-\mathbf{%
\gamma }_{0}\right) \right],
\end{align*}%
where $\bar{\mathbf{\eta }}$ and $\bar{\mathbf{\gamma }}$ are between $\hat{%
\mathbf{\eta }}$ and $\mathbf{\eta }_{0}$; and $\hat{\mathbf{\gamma }}$ and $%
\mathbf{\gamma }_{0}$, respectively. Note that by term-by-term convergence,
we have $\hat{\mathbf{G}}\left( \hat{\mathbf{\eta }},\hat{\mathbf{\gamma }}%
\right) ,\hat{\mathbf{G}}\left( \bar{\mathbf{\eta }},\hat{\mathbf{\gamma }}%
\right) \rightarrow _{p}\mathbf{G}_{0}$ and $\nabla _{\mathbf{\gamma }}\hat{%
\mathbf{g} }_{n}\left( \mathbf{\eta }_{0},\bar{\mathbf{\gamma }}\right)
\rightarrow_{p} \nabla _{\mathbf{\gamma }}\mathbf{g}_{0}\left( \mathbf{\eta }%
_{0,}\mathbf{\gamma }_{0}\right) =\mathbf{G}_{0, \gamma }$. By Assumption %
\ref{assu:Consistency Assumption}(b), $\mathbf{A}_{n}\rightarrow _{p}\mathbf{%
A}$. By Assumption \ref{assu:normality}(a) and (b) and Slutsky theorem, 
\begin{equation*}
\sqrt{n}\left( \hat{\mathbf{\eta }}-\mathbf{\eta }_{0}\right) \rightarrow
_{d}\left( \mathbf{G}_{0}^{\prime }\mathbf{A}\mathbf{G}_{0}\right) ^{-1}%
\mathbf{G}_{0}^{\prime }\mathbf{A}\left( \mathbf{\zeta }+\mathbf{G}_{0,
\gamma}\mathbf{\zeta }_{\gamma }\right) ,
\end{equation*}%
which completes the proof.
\end{proof}

\medskip

\begin{proof}[Further details for Example 4]
We need to verify the invertibility of the matrix 
\begin{equation*}
\mathbf{B}=%
\begin{pmatrix}
1 & 1 & 0 & 0 \\ 
0 & 0 & 1 & 1 \\ 
1 & 0 & 1 & 0 \\ 
b_{1L}b_{2L} & b_{1L}b_{2H} & b_{1H}b_{2L} & b_{1H}b_{2H}%
\end{pmatrix}%
.
\end{equation*}%
The span of first three rows of $\mathbf{B}$ is 
\begin{equation*}
\mathcal{S}=\left\{ \left( \alpha _{1}+\alpha _{3},\alpha _{1},\alpha
_{2}+\alpha _{3},\alpha _{3}\right) ^{\prime }:\alpha _{1},\alpha
_{2},\alpha _{3}\in \mathbb{R}\right\} .
\end{equation*}%
$\left( b_{1L}b_{2L},b_{1L}b_{2H},b_{1H}b_{2L},b_{1H}b_{2H}\right) ^{\prime
}\notin \mathcal{S}$ is equivalent to $b_{1H}b_{2H}-b_{1H}b_{2L}\neq
b_{1L}b_{2H}-b_{1L}b_{2L}$. This can be verified by 
\begin{equation*}
\left( b_{1H}b_{2H}-b_{1H}b_{2L}\right) -\left(
b_{1L}b_{2H}-b_{1L}b_{2L}\right) =\left( b_{1H}-b_{1L}\right) \left(
b_{2H}-b_{2L}\right) >0,
\end{equation*}%
given that $b_{1L}<b_{1H}$ and $b_{2L}<b_{2H}$ hold.
\end{proof}




\clearpage

\begin{thebibliography}{}

    \bibitem[\protect\citeauthoryear{Ahn, Lee, and Schmidt}{Ahn
      et~al.}{2001}]{ahn2001gmm}
    Ahn, S.~C., Y.~H. Lee, and P.~Schmidt (2001).
    \newblock Gmm estimation of linear panel data models with time-varying
      individual effects.
    \newblock {\em Journal of econometrics\/}~{\em 101\/}(2), 219--255.
    
    \bibitem[\protect\citeauthoryear{Ahn, Lee, and Schmidt}{Ahn
      et~al.}{2013}]{ahn2013panel}
    Ahn, S.~C., Y.~H. Lee, and P.~Schmidt (2013).
    \newblock Panel data models with multiple time-varying individual effects.
    \newblock {\em Journal of econometrics\/}~{\em 174\/}(1), 1--14.
    
    \bibitem[\protect\citeauthoryear{Andrews}{Andrews}{2001}]{andrews2001testing}
    Andrews, D. W.~K. (2001).
    \newblock Testing when a parameter is on the boundary of the maintained
      hypothesis.
    \newblock {\em Econometrica\/}~{\em 69\/}(3), 683--734.
    
    \bibitem[\protect\citeauthoryear{Arellano and Bonhomme}{Arellano and
      Bonhomme}{2012}]{arellano2012identifying}
    Arellano, M. and S.~Bonhomme (2012).
    \newblock Identifying distributional characteristics in random coefficients
      panel data models.
    \newblock {\em The Review of Economic Studies\/}~{\em 79\/}(3), 987--1020.
    
    \bibitem[\protect\citeauthoryear{Becker}{Becker}{1962}]{becker1962investment}
    Becker, G.~S. (1962).
    \newblock Investment in human capital: A theoretical analysis.
    \newblock {\em Journal of Political Economy\/}~{\em 70\/}(5, Part 2), 9--49.
    
    \bibitem[\protect\citeauthoryear{Becker}{Becker}{1964}]{becker1964book}
    Becker, G.~S. (1964).
    \newblock {\em Human Capital: A Theoretical and Empirical Analysis, with
      Special Reference to Education}.
    \newblock The University of Chicago Press, Chicago.
    
    \bibitem[\protect\citeauthoryear{Beran}{Beran}{1993}]{beran1993semiparametric}
    Beran, R. (1993).
    \newblock Semiparametric random coefficient regression models.
    \newblock {\em Annals of the Institute of Statistical Mathematics\/}~{\em
      45\/}(4), 639--654.
    
    \bibitem[\protect\citeauthoryear{Beran, Feuerverger, and Hall}{Beran
      et~al.}{1996}]{beran1996nonparametric}
    Beran, R., A.~Feuerverger, and P.~Hall (1996).
    \newblock On nonparametric estimation of intercept and slope distributions in
      random coefficient regression.
    \newblock {\em The Annals of Statistics\/}~{\em 24\/}(6), 2569--2592.
    
    \bibitem[\protect\citeauthoryear{Beran and Hall}{Beran and
      Hall}{1992}]{beran1992estimating}
    Beran, R. and P.~Hall (1992).
    \newblock Estimating coefficient distributions in random coefficient
      regressions.
    \newblock {\em The Annals of Statistics\/}~{\em 20\/}(4), 1970--1984.
    
    \bibitem[\protect\citeauthoryear{Beran and Millar}{Beran and
      Millar}{1994}]{beran1994minimum}
    Beran, R. and P.~W. Millar (1994).
    \newblock Minimum distance estimation in random coefficient regression models.
    \newblock {\em The Annals of Statistics\/}~{\em 22\/}(4), 1976--1992.
    
    \bibitem[\protect\citeauthoryear{Bick, Blandin, and Rogerson}{Bick
      et~al.}{2022}]{bick2020hours}
    Bick, A., A.~Blandin, and R.~Rogerson (2022).
    \newblock {Hours and wages}.
    \newblock {\em The Quarterly Journal of Economics\/}.
    \newblock Forthcoming.
    
    \bibitem[\protect\citeauthoryear{Bonhomme and Manresa}{Bonhomme and
      Manresa}{2015}]{bonhomme_manersa2015fixedeffect}
    Bonhomme, S. and E.~Manresa (2015).
    \newblock Grouped patterns of heterogeneity in panel data.
    \newblock {\em Econometrica\/}~{\em 83\/}(3), 1147--1184.
    
    \bibitem[\protect\citeauthoryear{Breunig and Hoderlein}{Breunig and
      Hoderlein}{2018}]{breunig2018specification}
    Breunig, C. and S.~Hoderlein (2018).
    \newblock Specification testing in random coefficient models.
    \newblock {\em Quantitative Economics\/}~{\em 9\/}(3), 1371--1417.
    
    \bibitem[\protect\citeauthoryear{Corak}{Corak}{2013}]{corak2013income}
    Corak, M. (2013).
    \newblock Income inequality, equality of opportunity, and intergenerational
      mobility.
    \newblock {\em Journal of Economic Perspectives\/}~{\em 27\/}(3), 79--102.
    
    \bibitem[\protect\citeauthoryear{Foster and Hahn}{Foster and
      Hahn}{2000}]{foster_hahn2000consistent}
    Foster, A. and J.~Hahn (2000).
    \newblock A consistent semiparametric estimation of the consumer surplus
      distribution.
    \newblock {\em Economics Letters\/}~{\em 69\/}(3), 245--251.
    
    \bibitem[\protect\citeauthoryear{Gautier and Hoderlein}{Gautier and
      Hoderlein}{2015}]{gautier2011triangular}
    Gautier, E. and S.~Hoderlein (2015).
    \newblock A triangular treatment effect model with random coefficients in the
      selection equation.
    \newblock Working Paper, arXiv preprint arXiv:1109.0362.
    
    \bibitem[\protect\citeauthoryear{Gautier and Kitamura}{Gautier and
      Kitamura}{2013}]{gautier2013nonparametric}
    Gautier, E. and Y.~Kitamura (2013).
    \newblock Nonparametric estimation in random coefficients binary choice models.
    \newblock {\em Econometrica\/}~{\em 81\/}(2), 581--607.
    
    \bibitem[\protect\citeauthoryear{Griliches}{Griliches}{1977}]{griliches1977estimating}
    Griliches, Z. (1977).
    \newblock Estimating the returns to schooling: Some econometric problems.
    \newblock {\em Econometrica\/}~{\em 45\/}(1), 1--22.
    
    \bibitem[\protect\citeauthoryear{Hausman}{Hausman}{1981}]{hausman1981exact}
    Hausman, J.~A. (1981).
    \newblock Exact consumer's surplus and deadweight loss.
    \newblock {\em The American Economic Review\/}~{\em 71\/}(4), 662--676.
    
    \bibitem[\protect\citeauthoryear{Hausman and Newey}{Hausman and
      Newey}{1995}]{hausman1995nonparametric}
    Hausman, J.~A. and W.~K. Newey (1995).
    \newblock Nonparametric estimation of exact consumers surplus and deadweight
      loss.
    \newblock {\em Econometrica\/}~{\em 63\/}(6), 1445--1476.
    
    \bibitem[\protect\citeauthoryear{Heckman}{Heckman}{2001}]{heckman2001jpenobellecture}
    Heckman, J.~J. (2001).
    \newblock Micro data, heterogeneity, and the evaluation of public policy: Nobel
      lecture.
    \newblock {\em Journal of Political Economy\/}~{\em 109\/}(4), 673--748.
    
    \bibitem[\protect\citeauthoryear{Heckman, Humphries, and Veramendi}{Heckman
      et~al.}{2018}]{heckman2018returns}
    Heckman, J.~J., J.~E. Humphries, and G.~Veramendi (2018).
    \newblock Returns to education: The causal effects of education on earnings,
      health, and smoking.
    \newblock {\em Journal of Political Economy\/}~{\em 126\/}(S1), S197--S246.
    
    \bibitem[\protect\citeauthoryear{Hoderlein, Holzmann, and Meister}{Hoderlein
      et~al.}{2017}]{hoderlein2017triangular}
    Hoderlein, S., H.~Holzmann, and A.~Meister (2017).
    \newblock The triangular model with random coefficients.
    \newblock {\em Journal of Econometrics\/}~{\em 201\/}(1), 144--169.
    
    \bibitem[\protect\citeauthoryear{Hoderlein, Klemel{\"a}, and Mammen}{Hoderlein
      et~al.}{2010}]{hoderlein2010analyzing}
    Hoderlein, S., J.~Klemel{\"a}, and E.~Mammen (2010).
    \newblock Analyzing the random coefficient model nonparametrically.
    \newblock {\em Econometric Theory\/}~{\em 26\/}(3), 804--837.
    
    \bibitem[\protect\citeauthoryear{Hsiao and Pesaran}{Hsiao and
      Pesaran}{2008}]{hsiao2008random}
    Hsiao, C. and M.~H. Pesaran (2008).
    \newblock Random coefficient models.
    \newblock In L.~M{\'a}ty{\'a}s and P.~Sevestre (Eds.), {\em The Econometrics of
      Panel Data}, Chapter~6, pp.\  185--213. Springer, Berlin, Heidelberg.
    
    \bibitem[\protect\citeauthoryear{Ichimura and Thompson}{Ichimura and
      Thompson}{1998}]{ichimura1998maximum}
    Ichimura, H. and T.~S. Thompson (1998).
    \newblock Maximum likelihood estimation of a binary choice model with random
      coefficients of unknown distribution.
    \newblock {\em Journal of Econometrics\/}~{\em 86\/}(2), 269--295.
    
    \bibitem[\protect\citeauthoryear{Lemieux}{Lemieux}{2006a}]{lemieux2006mincerbookchapter}
    Lemieux, T. (2006a).
    \newblock The ``mincer equation'' thirty years after schooling, experience, and
      earnings.
    \newblock In S.~Grossbard (Ed.), {\em Jacob Mincer a Pioneer of Modern Labor
      Economics}, Chapter~11, pp.\  127--145. Springer, New York.
    
    \bibitem[\protect\citeauthoryear{Lemieux}{Lemieux}{2006b}]{lemieux2006postnber}
    Lemieux, T. (2006b).
    \newblock Post-secondary education and increasing wage inequality.
    \newblock Working Paper No. 12077, National Bureau of Economic Research.
    
    \bibitem[\protect\citeauthoryear{Lemieux}{Lemieux}{2006c}]{lemieux2006postsecondary}
    Lemieux, T. (2006c).
    \newblock Postsecondary education and increasing wage inequality.
    \newblock {\em American Economic Review\/}~{\em 96\/}(2), 195--199.
    
    \bibitem[\protect\citeauthoryear{Masten}{Masten}{2018}]{masten2018random}
    Masten, M.~A. (2018).
    \newblock Random coefficients on endogenous variables in simultaneous equations
      models.
    \newblock {\em The Review of Economic Studies\/}~{\em 85\/}(2), 1193--1250.
    
    \bibitem[\protect\citeauthoryear{Matzkin}{Matzkin}{2012}]{matzkin2012identification}
    Matzkin, R.~L. (2012).
    \newblock Identification in nonparametric limited dependent variable models
      with simultaneity and unobserved heterogeneity.
    \newblock {\em Journal of Econometrics\/}~{\em 166\/}(1), 106--115.
    
    \bibitem[\protect\citeauthoryear{Mincer}{Mincer}{1974}]{mincer1974book}
    Mincer, J. (1974).
    \newblock {\em Schooling, Experience and Earnings}.
    \newblock National Bureau of Economic Research, New York.
    \newblock {ISBN:} 0-87014-265-8.
    
    \bibitem[\protect\citeauthoryear{Newey and McFadden}{Newey and
      McFadden}{1994}]{newey1994large}
    Newey, K. and D.~McFadden (1994).
    \newblock Large sample estimation and hypothesis.
    \newblock In R.~F. Engle and D.~L. McFadden (Eds.), {\em Handbook of
      Econometrics}, Volume~4, Chapter~36, pp.\  2112--2245. Elsevier B.V.
    
    \bibitem[\protect\citeauthoryear{Nicholls and Pagan}{Nicholls and
      Pagan}{1985}]{nicholls198516}
    Nicholls, D. and A.~Pagan (1985).
    \newblock Varying coefficient regression.
    \newblock In E.~J. Hannan, P.~R. Krishnaiah, and M.~M. Rao (Eds.), {\em
      Handbook of Statistics}, Volume~5, Chapter~16, pp.\  413--449. Elsevier B.V.
    
    \bibitem[\protect\citeauthoryear{Pesaran and Zhou}{Pesaran and
      Zhou}{2018}]{pesaran2018pool}
    Pesaran, M.~H. and Q.~Zhou (2018).
    \newblock To pool or not to pool: revisited.
    \newblock {\em Oxford Bulletin of Economics and Statistics\/}~{\em 80\/}(2),
      185--217.
    
    \bibitem[\protect\citeauthoryear{Rosen}{Rosen}{2006}]{rosen2011discretemath}
    Rosen, K. (2006).
    \newblock {\em Discrete Mathematics and Its Applications\/} (6th ed.).
    \newblock McGraw-Hill Education, New York.
    
    \bibitem[\protect\citeauthoryear{Su, Shi, and Phillips}{Su
      et~al.}{2016}]{ssp2016classo}
    Su, L., Z.~Shi, and P.~C. Phillips (2016).
    \newblock Identifying latent structures in panel data.
    \newblock {\em Econometrica\/}~{\em 84\/}(6), 2215--2264.
    
\end{thebibliography}

\begin{thebibliography}{}

    \bibitem[\protect\citeauthoryear{Hansen}{Hansen}{2022}]{hansen2022book}
    Hansen, E.~B. (2022).
    \newblock {\em Econometrics}.
    \newblock Princeton University Press, Princeton.
    
    \bibitem[\protect\citeauthoryear{Pesaran}{Pesaran}{2015}]{pesaran2015time}
    Pesaran, M.~H. (2015).
    \newblock {\em Time Series and Panel Data Econometrics}.
    \newblock Oxford University Press, Oxford.
    
\end{thebibliography}

\pagebreak

\begin{center}
Online Supplement to

\bigskip

\textbf{\ Identification and Estimation of Categorical Random Coefficient
Models}

\bigskip

by

\bigskip

Zhan Gao and M. Hashem Pesaran

\bigskip

February 2023 \thispagestyle{empty}

\pagebreak
\end{center}

\setcounter{table}{0} \renewcommand{\thetable}{S.\arabic{table}} %
\setcounter{section}{0} \renewcommand{\thesection}{S.\arabic{section}} %
\setcounter{figure}{0} \renewcommand{\thefigure}{S.\arabic{figure}} %
\setcounter{page}{1} \renewcommand{\thepage}{S.\arabic{page}}

\section{Introduction}

This online supplement is composed of four sections. Section \ref%
{suppsec:Proofs} provides additional proofs and technical details omitted
from the main text. Section \ref{suppsec:simulation} provides additional
simulation results. Section \ref{sec:empirical_supp} gives additional
empirical results. Details of the computational algorithm used are described
in Section \ref{sec:computation}.


\section{Proofs\label{suppsec:Proofs}}

We include omitted proofs and technical details in this section.

\begin{proof}[Proof of Theorem 3]
From \eqref{eq: model_w_phi}, we have 
\begin{equation*}
\frac{1}{n}\sum_{i=1}^{n}\mathbf{w}_{i}y_{i}=\frac{1}{n}\sum_{i=1}^{n}%
\mathbf{w}_{i}\mathbf{w}_{i}^{\prime }\mathbf{\phi }+\frac{1}{n}%
\sum_{i=1}^{n}\mathbf{w}_{i}\xi _{i},
\end{equation*}%
where $\mathbf{\phi }=\mathrm{E}\left( \mathbf{\phi }_{i}\right) =\left( 
\mathrm{E}\left( \beta _{i}\right) ,\mathbf{\gamma }^{\prime }\right)
^{\prime },$ and $\xi _{i}=u_{i}+x_{i}v_{i}$, which can be written
equivalently as 
\begin{equation*}
\mathbf{q}_{n,wy}=\mathbf{Q}_{n,ww}\mathbf{\phi }+\frac{1}{n}\sum_{i=1}^{n}%
\mathbf{w}_{i}\xi _{i}.
\end{equation*}%
Taking expectations of both sides and rearrange terms, we have 
\begin{equation*}
\mathbf{\phi }=\mathrm{E}\left( \mathbf{Q}_{n,ww}\right) ^{-1}\mathrm{E}%
\left( \mathbf{q}_{n,wy}\right) .
\end{equation*}%
Consider 
\begin{align*}
\mathbf{\hat{\phi}}-\mathbf{\phi }& =\mathbf{Q}_{n,ww}^{-1}\mathbf{q}_{n,wy}-%
\mathrm{E}\left( \mathbf{Q}_{n,ww}\right) ^{-1}\mathrm{E}\left( \mathbf{q}%
_{n,wy}\right)  \\
& =\left[ \mathbf{Q}_{n,ww}^{-1}-E\left( \mathbf{Q}_{n,ww}\right)
^{-1}+E\left( \mathbf{Q}_{n,ww}\right) ^{-1}\right] \left[ \mathbf{q}_{n,wy}-%
\mathrm{E}\left( \mathbf{q}_{n,wy}\right) +\mathrm{E}\left( \mathbf{q}%
_{n,wy}\right) \right] -\mathrm{E}\left( \mathbf{Q}_{n,ww}\right) ^{-1}%
\mathrm{E}\left( \mathbf{q}_{n,wy}\right)  \\
& =\left[ \mathbf{Q}_{n,ww}^{-1}-E\left( \mathbf{Q}_{n,ww}\right) ^{-1}%
\right] \left[ \mathbf{q}_{n,wy}-\mathrm{E}\left( \mathbf{q}_{n,wy}\right) %
\right] +\left[ \mathbf{Q}_{n,ww}^{-1}-\mathrm{E}\left( \mathbf{Q}%
_{n,ww}\right) ^{-1}\right] \mathrm{E}\left( \mathbf{q}_{n,wy}\right)  \\
& \quad \quad \quad +\mathrm{E}\left( \mathbf{Q}_{n,ww}\right) ^{-1}\left[ 
\mathbf{q}_{n,wy}-\mathrm{E}\left( \mathbf{q}_{n,wy}\right) \right] .
\end{align*}%
Then, 
\begin{align*}
\left\Vert \hat{\mathbf{\phi }}-\mathbf{\phi }\right\Vert & \leq \left\Vert 
\mathbf{Q}_{n,ww}^{-1}-\mathrm{E}\left( \mathbf{Q}_{n,ww}\right)
^{-1}\right\Vert \left\Vert \mathbf{q}_{n,wy}-\mathrm{E}\left( \mathbf{q}%
_{n,wy}\right) \right\Vert +\left\Vert \mathbf{Q}_{n,ww}^{-1}-\mathrm{E}%
\left( \mathbf{Q}_{n,ww}\right) ^{-1}\right\Vert \left\Vert \mathrm{E}\left( 
\mathbf{q}_{n,wy}\right) \right\Vert  \\
& \quad \quad +\left\Vert \mathrm{E}\left( \mathbf{Q}_{n,ww}\right)
^{-1}\right\Vert \left\Vert \mathbf{q}_{n,wy}-\mathrm{E}\left( \mathbf{q}%
_{n,wy}\right) \right\Vert .
\end{align*}%
By Assumption \ref{assu:identification_regularity_condition}(c), we have $%
\left\Vert \mathbf{Q}_{n,ww}^{-1}-\mathrm{E}\left( \mathbf{Q}_{n,ww}\right)
^{-1}\right\Vert =O_{p}\left( n^{-1/2}\right) $, $\left\Vert \mathbf{q}%
_{n,wy}-\mathrm{E}\left( \mathbf{q}_{n,wy}\right) \right\Vert =O_{p}\left(
n^{-1/2}\right) $, and by Assumption \ref%
{assu:identification_regularity_condition}(b), $\left\Vert \mathrm{E}\left( 
\mathbf{q}_{n,wy}\right) \right\Vert $ and $\left\Vert \mathrm{E}\left( 
\mathbf{Q}_{n,ww}\right) ^{-1}\right\Vert $ are bounded. Thus, 
\begin{equation}
\left\Vert \hat{\mathbf{\phi }}-\mathbf{\phi }\right\Vert =O_{p}\left(
n^{-1/2}\right) .  \label{phihatorder}
\end{equation}

To establish the asymptotic distribution of $\hat{\mathbf{\phi }}$, we first
note that 
\begin{equation*}
\sqrt{n}\left( \hat{\mathbf{\phi }}-\mathbf{\phi }\right) =\mathbf{Q}%
_{n,ww}^{-1}\left( n^{-1/2}\sum_{i=1}^{n}\mathbf{w}_{i}\xi _{i}\right) .
\end{equation*}%
By Assumption \ref{assu:gamma_est_normality}, we have 
\begin{equation*}
\mathrm{var}\left( n^{-1/2}\sum_{i=1}^{n}\mathbf{w}_{i}\xi _{i}\right) =%
\frac{1}{n}\sum_{i=1}^{n}\mathrm{var}\left( \mathbf{w}_{i}\xi _{i}\right) =%
\frac{1}{n}\sum_{i=1}^{n}\mathrm{E}\left( \mathbf{w}_{i}\mathbf{w}%
_{i}^{\prime }\xi _{i}^{2}\right) \rightarrow \mathbf{V}_{w\xi }\succ 0.
\end{equation*}%
Note that $\xi _{i}=u_{i}+x_{i}v_{i}$, and $\mathbf{w}_{i}$ is distributed
independently of $u_{i}$ and $v_{i}$. Then%
\begin{equation*}
\mathbf{w}_{i}\xi _{i}=\mathbf{w}_{i}\left( u_{i}+x_{i}v_{i}\right) =\mathbf{%
w}_{i}u_{i}+\left( \mathbf{w}_{i}x_{i}\right) v_{i},
\end{equation*}%
and by Minkowski's inequality, for $r=2+\delta $ with $0<\delta <1$, 
\begin{equation*}
\left[ E\left\Vert \mathbf{w}_{i}\xi _{i}\right\Vert ^{r}\right] ^{1/r}\leq %
\left[ E\left\Vert \mathbf{w}_{i}u_{i}\right\Vert ^{r}\right] ^{1/r}+\left[
E\left\Vert \left( \mathbf{w}_{i}x_{i}\right) v_{i}\right\Vert ^{r}\right]
^{1/r}.
\end{equation*}%
Due to the independence of $u_{i}$ and $v_{i}$ from $\mathbf{w}_{i}$, we have%
\begin{equation*}
\mathrm{E}\left( \left\Vert \mathbf{w}_{i}u_{i}\right\Vert ^{r}\right) \leq
E\left\Vert \mathbf{w}_{i}\right\Vert ^{r}E\left\Vert u_{i}\right\Vert ^{r},%
\text{ and }E\left\Vert \left( \mathbf{w}_{i}x_{i}^{\prime }\right)
v_{i}\right\Vert ^{r}\leq E\left\Vert \mathbf{w}_{i}x_{i}\right\Vert
^{r}E\left\Vert v_{i}\right\Vert ^{r}.
\end{equation*}%
Also, $E\left\Vert \mathbf{w}_{i}x_{i}\right\Vert ^{r}\leq E\left\Vert
\left( x_{i}^{2},x_{i}\mathbf{z}_{i}^{\prime }\right) ^{\prime }\right\Vert
^{r}\leq E\left\Vert \mathbf{w}_{i}\mathbf{w}_{i}^{\prime }\right\Vert
^{r}\leq E\left\Vert \mathbf{w}_{i}\right\Vert ^{2r}$, where $2<r<3$, and
hence $2r<6$. By Assumptions \ref{assu:identification_regularity_condition}%
(a.ii) and \ref{assu:identification_regularity_condition}(b.ii), we have $%
\sup_{i}\mathrm{E}\left( \left\Vert \mathbf{w}_{i}\right\Vert ^{6}\right) <C$%
, $\sup_{i}\mathrm{E}\left( \left\Vert u_{i}\right\Vert ^{3}\right) <C$, and 
$\mathrm{E}\left( \left\Vert v_{i}\right\Vert ^{3}\right) \leq \max_{1\leq
k\leq K}\left\vert b_{k}-\mathrm{E}\left( \beta _{i}\right) \right\vert
^{3}<C.$ Then, we verified that $\sup_{i}\mathrm{E}\left( \left\Vert \mathbf{%
w}_{i}u_{i}\right\Vert ^{r}\right) <C$, and $E\left\Vert \left( \mathbf{w}%
_{i}x_{i}^{\prime }\right) v_{i}\right\Vert ^{r}<C$, and hence the Lyapunov
condition that $\sup_{i}\mathrm{E}\left( \left\Vert \mathbf{w}_{i}\xi
_{i}\right\Vert ^{r}\right) <C$, where $r=2+\delta \in (2,3)$. By the
central limit theorem for independent but not necessarily identically
distributed random vectors (see Pesaran (2015, Theorem 18) or Hansen (2022,
Theorem 6.5)), we have 
\begin{equation*}
\frac{1}{\sqrt{n}}\sum_{i=1}^{n}\mathbf{w}_{i}\xi _{i}\rightarrow _{d}N(%
\mathbf{0},\mathbf{V}_{w\xi }),
\end{equation*}%
as $n\rightarrow \infty $, and by Assumption \ref%
{assu:identification_regularity_condition} and continuous mapping theorem, 
\begin{equation*}
\sqrt{n}(\hat{\mathbf{\phi }}-\mathbf{\phi })\rightarrow _{d}N\left( \mathbf{%
0},\mathbf{Q}_{ww}^{-1}\mathbf{V}_{w\xi }\mathbf{Q}_{ww}^{-1}\right) .
\end{equation*}

We then turn to the consistent estimation of the variance matrix. By
Assumption \ref{assu:gamma_est_normality}, we have 
\begin{align}
\left\Vert \hat{\mathbf{V}}_{w\xi }-\mathbf{V}_{w\xi }\right\Vert &
=\left\Vert \frac{1}{n}\sum_{i=1}^{n}\mathbf{w}_{i}\mathbf{w}_{i}^{\prime }%
\hat{\xi}_{i}^{2}-\frac{1}{n}\sum_{i=1}^{n}\mathrm{E}\left( \mathbf{w}_{i}%
\mathbf{w}_{i}\xi _{i}^{2}\right) +\frac{1}{n}\sum_{i=1}^{n}\mathrm{E}\left( 
\mathbf{w}_{i}\mathbf{w}_{i}\xi _{i}^{2}\right) -\mathbf{V}_{w\xi
}\right\Vert   \notag \\
& \leq \left\Vert \frac{1}{n}\sum_{i=1}^{n}\mathbf{w}_{i}\mathbf{w}%
_{i}^{\prime }\xi _{i}^{2}-\frac{1}{n}\sum_{i=1}^{n}\mathrm{E}\left( \mathbf{%
w}_{i}\mathbf{w}_{i}\xi _{i}^{2}\right) \right\Vert +\left\Vert \frac{1}{n}%
\sum_{i=1}^{n}\mathrm{E}\left( \mathbf{w}_{i}\mathbf{w}_{i}\xi
_{i}^{2}\right) -\mathbf{V}_{w\xi }\right\Vert   \notag \\
& \quad \quad +\left\Vert \frac{1}{n}\sum_{i=1}^{n}\mathbf{w}_{i}\mathbf{w}%
_{i}^{\prime }\left( \hat{\xi}_{i}^{2}-\xi _{i}^{2}\right) \right\Vert  
\notag \\
& \leq \frac{1}{n}\sum_{i=1}^{n}\left\Vert \mathbf{w}_{i}\right\Vert
^{2}\left\vert \hat{\xi}_{i}^{2}-\xi _{i}^{2}\right\vert +O_{p}(n^{-1/2}).
\label{eq:bound_V_hat}
\end{align}%
Note that $\hat{\xi}_{i}=\xi _{i}-\left( \hat{\mathbf{\phi }}-\mathbf{\phi }%
\right) ^{\prime }\mathbf{w}_{i}$, then 
\begin{align}
\left\vert \hat{\xi}_{i}^{2}-\xi _{i}^{2}\right\vert & \leq 2\left\vert \xi
_{i}\mathbf{w}_{i}^{\prime }\left( \hat{\mathbf{\phi }}-\mathbf{\phi }%
\right) \right\vert +\left( \hat{\mathbf{\phi }}-\mathbf{\phi }\right)
^{\prime }\left( \mathbf{w}_{i}\mathbf{w}_{i}^{\prime }\right) \left( \hat{%
\mathbf{\phi }}-\mathbf{\phi }\right)   \notag \\
& \leq 2\left\vert \xi _{i}\right\vert \left\Vert \mathbf{w}_{i}\right\Vert
\left\Vert \hat{\mathbf{\phi }}-\mathbf{\phi }\right\Vert +\left\Vert 
\mathbf{w}_{i}\right\Vert ^{2}\left\Vert \hat{\mathbf{\phi }}-\mathbf{\phi }%
\right\Vert ^{2}.  \label{eq:bound_xi_hat_square}
\end{align}%
Combine \eqref{eq:bound_V_hat} and \eqref{eq:bound_xi_hat_square}, we have 
\begin{equation}
\left\Vert \hat{\mathbf{V}}_{w\xi }-\mathbf{V}_{w\xi }\right\Vert \leq
2\left( \frac{1}{n}\sum_{i=1}^{n}\left\Vert \mathbf{w}_{i}\right\Vert
^{3}\left\vert \xi _{i}\right\vert \right) \left\Vert \hat{\mathbf{\phi }}-%
\mathbf{\phi }\right\Vert +\left( \frac{1}{n}\sum_{i=1}^{n}\left\Vert 
\mathbf{w}_{i}\right\Vert ^{4}\right) \left\Vert \hat{\mathbf{\phi }}-%
\mathbf{\phi }\right\Vert ^{2} + O_p\left( n^{-1/2} \right).  \label{eq:bound_V_hat_cont}
\end{equation}
By H\"{o}lder's inequality, 
\begin{equation}
\frac{1}{n}\sum_{i=1}^{n}\left\Vert \mathbf{w}_{i}\right\Vert ^{3}\left\vert
\xi _{i}\right\vert \leq \left( \frac{1}{n}\sum_{i=1}^{n}\left\Vert \mathbf{w%
}_{i}\right\Vert ^{4}\right) ^{3/4}\left( \frac{1}{n}\sum_{i=1}^{n}\xi
_{i}^{4}\right) ^{1/4}.  \label{eq:sum_w3_xi}
\end{equation}%
By Assumption \ref{assu:identification_regularity_condition}(b.iii), $%
n^{-1}\sum_{i=1}^{n}\left\Vert \mathbf{w}_{i}\right\Vert ^{4}=O_{p}(1)$. By
Minkowski inequality, 
\begin{align*}
\left( \frac{1}{n}\sum_{i=1}^{n}\xi _{i}^{4}\right) ^{1/4}& =\left( \frac{1}{%
n}\sum_{i=1}^{n}\left( u_{i}+x_{i}v_{i}\right) ^{4}\right) ^{1/4}\leq \left( 
\frac{1}{n}\sum_{i=1}^{n}u_{i}^{4}\right) ^{1/4}+\left( \frac{1}{n}%
\sum_{i=1}^{n}x_{i}^{4}v_{i}^{4}\right) ^{1/4} \\
& \leq \left( \frac{1}{n}\sum_{i=1}^{n}u_{i}^{4}\right)
^{1/4}+\max_{k}\left\{ \left\vert b_{k}-\mathrm{E}\left( \beta _{i}\right)
\right\vert \right\} \left( \frac{1}{n}\sum_{i=1}^{n}x_{i}^{4}\right) ^{1/4}
\\
& =O_{p}(1),
\end{align*}%
where the last inequality is from Assumptions \ref%
{assu:identification_regularity_condition}(a.iii) and (b.iii) that $%
n^{-1}\sum_{i=1}^{n}u_{i}^{4}=O_{p}\left( 1\right) $, and $%
n^{-1}\sum_{i=1}^{n}x_{i}^{4}\leq n^{-1}\sum_{i=1}^{n}\left\Vert \mathbf{w}%
_{i}\right\Vert ^{4}=O_{p}(1)$. Then we verified in \eqref{eq:sum_w3_xi}
that $n^{-1} \sum_{i=1}^{n}\left\Vert \mathbf{w}_{i}\right\Vert
^{3}\left\vert \xi _{i}\right\vert =O_{p}(1)$. Then using the above results
in \eqref{eq:bound_V_hat_cont}, and noting from (\ref{phihatorder}) that $%
\left\Vert \hat{\mathbf{\phi }}-\mathbf{\phi }\right\Vert =O_{p}\left(
n^{-1/2}\right) $, we have $\left\Vert \hat{\mathbf{V}}_{w\xi }-\mathbf{V}%
_{w\xi }\right\Vert =O_{p}\left( n^{-1/2}\right) $, as required.
\end{proof}

\section{Monte Carlo Simulation\label{suppsec:simulation}}

\subsection{Results with $S = 5$ and $S = 6$\label{subsec:mc_s5_s6}}

Tables \ref{tab:mc_S5} and \ref{tab:mc_S6} present the summary results
corresponding to $S=5$ and $S=6$, for the data generating processes
described in Section 5.1. These results show that adding more moments does
not necessarily improve the estimation accuracy but could be
counter-productive.

\begin{table}[tbp]
\caption{Bias, RMSE and size of the GMM estimator for distributional
parameters of $\protect\beta$ with $S= 5$ }
\label{tab:mc_S5}
\begin{center}
{\small \ 
\begin{tabular}{crlllllllll}
\hline
\multicolumn{2}{r|}{DGP} & \multicolumn{3}{c|}{Baseline} & 
\multicolumn{3}{c|}{Categorical $x$} & \multicolumn{3}{c}{Categorical $u$}
\\ \hline
\multicolumn{2}{c|}{Sample size $n$} & \multicolumn{1}{c}{Bias} & 
\multicolumn{1}{c}{RMSE} & \multicolumn{1}{c|}{Size} & \multicolumn{1}{c}{
Bias} & \multicolumn{1}{c}{RMSE} & \multicolumn{1}{c|}{Size} & 
\multicolumn{1}{c}{Bias} & \multicolumn{1}{c}{RMSE} & \multicolumn{1}{c}{Size
} \\ \hline
\multicolumn{11}{c}{\textit{high variance}: $\mathrm{var}\left( \beta_i
\right) = 0.25$} \\ \hline
\multirow{7}{*}{\begin{turn}{90} $\pi=0.5$ \end{turn}} & \multicolumn{1}{r|}{
100} & 0.0308 & 0.1869 & \multicolumn{1}{l|}{0.1021} & 0.0259 & 0.1986 & 
\multicolumn{1}{l|}{0.1276} & 0.0106 & 0.1944 & 0.1050 \\ 
& \multicolumn{1}{r|}{1,000} & 0.0048 & 0.1235 & \multicolumn{1}{l|}{0.1950}
& 0.0054 & 0.1334 & \multicolumn{1}{l|}{0.2112} & -0.0364 & 0.1638 & 0.2239
\\ 
& \multicolumn{1}{r|}{2,000} & -0.0006 & 0.0875 & \multicolumn{1}{l|}{0.1641}
& -0.0009 & 0.0962 & \multicolumn{1}{l|}{0.1887} & -0.0238 & 0.1172 & 0.2059
\\ 
& \multicolumn{1}{r|}{5,000} & -0.0005 & 0.0484 & \multicolumn{1}{l|}{0.1339}
& -0.0001 & 0.0591 & \multicolumn{1}{l|}{0.1602} & -0.0125 & 0.0740 & 0.1667
\\ 
& \multicolumn{1}{r|}{10,000} & -0.0002 & 0.0334 & \multicolumn{1}{l|}{0.1152
} & -0.0005 & 0.0373 & \multicolumn{1}{l|}{0.1246} & -0.0080 & 0.0519 & 
0.1386 \\ 
& \multicolumn{1}{r|}{100,000} & -0.0002 & 0.0096 & \multicolumn{1}{l|}{
0.0636} & 0.0001 & 0.0116 & \multicolumn{1}{l|}{0.0738} & -0.0008 & 0.0174 & 
0.0766 \\ \hline
\multirow{7}{*}{\begin{turn}{90} $\beta_L = 1$ \end{turn}} & 
\multicolumn{1}{r|}{100} & 0.2234 & 0.4541 & \multicolumn{1}{l|}{0.3205} & 
0.1992 & 0.4777 & \multicolumn{1}{l|}{0.2843} & 0.1780 & 0.5090 & 0.2519 \\ 
& \multicolumn{1}{r|}{1,000} & 0.0503 & 0.1609 & \multicolumn{1}{l|}{0.3060}
& 0.0475 & 0.1812 & \multicolumn{1}{l|}{0.2963} & 0.0100 & 0.2024 & 0.2141
\\ 
& \multicolumn{1}{r|}{2,000} & 0.0265 & 0.1148 & \multicolumn{1}{l|}{0.2501}
& 0.0257 & 0.1262 & \multicolumn{1}{l|}{0.2501} & 0.0088 & 0.1337 & 0.1905
\\ 
& \multicolumn{1}{r|}{5,000} & 0.0108 & 0.0606 & \multicolumn{1}{l|}{0.1926}
& 0.0130 & 0.0702 & \multicolumn{1}{l|}{0.2042} & 0.0031 & 0.0803 & 0.1641
\\ 
& \multicolumn{1}{r|}{10,000} & 0.0054 & 0.0409 & \multicolumn{1}{l|}{0.1408}
& 0.0061 & 0.0456 & \multicolumn{1}{l|}{0.1510} & 0.0008 & 0.0527 & 0.1338
\\ 
& \multicolumn{1}{r|}{100,000} & 0.0004 & 0.0114 & \multicolumn{1}{l|}{0.0716
} & 0.0006 & 0.0134 & \multicolumn{1}{l|}{0.0790} & 0.0002 & 0.0184 & 0.0834
\\ \hline
\multirow{7}{*}{\begin{turn}{90} $\beta_H = 2$ \end{turn}} & 
\multicolumn{1}{r|}{100} & -0.1956 & 0.5486 & \multicolumn{1}{l|}{0.2448} & 
-0.1941 & 0.5638 & \multicolumn{1}{l|}{0.2386} & -0.2029 & 0.5801 & 0.2269
\\ 
& \multicolumn{1}{r|}{1,000} & -0.0418 & 0.2080 & \multicolumn{1}{l|}{0.3299}
& -0.0414 & 0.2300 & \multicolumn{1}{l|}{0.3384} & -0.0752 & 0.2583 & 0.3620
\\ 
& \multicolumn{1}{r|}{2,000} & -0.0264 & 0.1379 & \multicolumn{1}{l|}{0.2799}
& -0.0286 & 0.1554 & \multicolumn{1}{l|}{0.2860} & -0.0529 & 0.1789 & 0.3048
\\ 
& \multicolumn{1}{r|}{5,000} & -0.0113 & 0.0696 & \multicolumn{1}{l|}{0.2008}
& -0.0116 & 0.0883 & \multicolumn{1}{l|}{0.2170} & -0.0254 & 0.1038 & 0.2411
\\ 
& \multicolumn{1}{r|}{10,000} & -0.0053 & 0.0432 & \multicolumn{1}{l|}{0.1502
} & -0.0064 & 0.0520 & \multicolumn{1}{l|}{0.1642} & -0.0156 & 0.0690 & 
0.2002 \\ 
& \multicolumn{1}{r|}{100,000} & -0.0007 & 0.0113 & \multicolumn{1}{l|}{
0.0662} & -0.0004 & 0.0135 & \multicolumn{1}{l|}{0.0764} & -0.0016 & 0.0209
& 0.0818 \\ \hline
\multicolumn{11}{c}{\textit{low variance}: $\mathrm{var}\left( \beta_i
\right) = 0.15$} \\ \hline
\multirow{7}{*}{\begin{turn}{90} $\pi=0.3$ \end{turn}} & \multicolumn{1}{r|}{
100} & 0.2214 & 0.2820 & \multicolumn{1}{l|}{0.1063} & 0.2291 & 0.2942 & 
\multicolumn{1}{l|}{0.1328} & 0.2212 & 0.2876 & 0.1221 \\ 
& \multicolumn{1}{r|}{1,000} & 0.0477 & 0.1746 & \multicolumn{1}{l|}{0.2235}
& 0.0605 & 0.1928 & \multicolumn{1}{l|}{0.2430} & 0.0348 & 0.2039 & 0.2900
\\ 
& \multicolumn{1}{r|}{2,000} & 0.0217 & 0.1198 & \multicolumn{1}{l|}{0.2020}
& 0.0262 & 0.1331 & \multicolumn{1}{l|}{0.2246} & -0.0080 & 0.1608 & 0.2822
\\ 
& \multicolumn{1}{r|}{5,000} & 0.0112 & 0.0709 & \multicolumn{1}{l|}{0.1732}
& 0.0154 & 0.0828 & \multicolumn{1}{l|}{0.1956} & -0.0115 & 0.1072 & 0.2289
\\ 
& \multicolumn{1}{r|}{10,000} & 0.0063 & 0.0465 & \multicolumn{1}{l|}{0.1588}
& 0.0106 & 0.0576 & \multicolumn{1}{l|}{0.1649} & -0.0075 & 0.0761 & 0.1890
\\ 
& \multicolumn{1}{r|}{100,000} & 0.0001 & 0.0130 & \multicolumn{1}{l|}{0.0810
} & 0.0014 & 0.0158 & \multicolumn{1}{l|}{0.0882} & 0.0040 & 0.0280 & 0.0978
\\ \hline
\multirow{7}{*}{\begin{turn}{90} $\beta_L = 0.5$ \end{turn}} & 
\multicolumn{1}{r|}{100} & 0.4245 & 0.5722 & \multicolumn{1}{l|}{0.2938} & 
0.4048 & 0.5818 & \multicolumn{1}{l|}{0.2612} & 0.3827 & 0.6052 & 0.2278 \\ 
& \multicolumn{1}{r|}{1,000} & 0.1300 & 0.2692 & \multicolumn{1}{l|}{0.3058}
& 0.1300 & 0.2890 & \multicolumn{1}{l|}{0.3057} & 0.0882 & 0.3673 & 0.1970
\\ 
& \multicolumn{1}{r|}{2,000} & 0.0763 & 0.1746 & \multicolumn{1}{l|}{0.3147}
& 0.0735 & 0.1903 & \multicolumn{1}{l|}{0.2820} & 0.0149 & 0.2523 & 0.1964
\\ 
& \multicolumn{1}{r|}{5,000} & 0.0378 & 0.1018 & \multicolumn{1}{l|}{0.2690}
& 0.0410 & 0.1155 & \multicolumn{1}{l|}{0.2695} & 0.0034 & 0.1417 & 0.1905
\\ 
& \multicolumn{1}{r|}{10,000} & 0.0202 & 0.0674 & \multicolumn{1}{l|}{0.2344}
& 0.0257 & 0.0822 & \multicolumn{1}{l|}{0.2404} & 0.0013 & 0.0961 & 0.1690
\\ 
& \multicolumn{1}{r|}{100,000} & 0.0013 & 0.0184 & \multicolumn{1}{l|}{0.0952
} & 0.0026 & 0.0221 & \multicolumn{1}{l|}{0.1042} & 0.0060 & 0.0347 & 0.1112
\\ \hline
\multirow{7}{*}{\begin{turn}{90} $\beta_H = 1.345$ \end{turn}} & 
\multicolumn{1}{r|}{100} & -0.0646 & 0.3773 & \multicolumn{1}{l|}{0.1781} & 
-0.0616 & 0.4058 & \multicolumn{1}{l|}{0.1668} & -0.0564 & 0.4357 & 0.1688
\\ 
& \multicolumn{1}{r|}{1,000} & -0.0180 & 0.1523 & \multicolumn{1}{l|}{0.2496}
& -0.0119 & 0.1804 & \multicolumn{1}{l|}{0.2615} & -0.0476 & 0.2022 & 0.2721
\\ 
& \multicolumn{1}{r|}{2,000} & -0.0104 & 0.1021 & \multicolumn{1}{l|}{0.2375}
& -0.0101 & 0.1147 & \multicolumn{1}{l|}{0.2414} & -0.0381 & 0.1448 & 0.2830
\\ 
& \multicolumn{1}{r|}{5,000} & -0.0027 & 0.0549 & \multicolumn{1}{l|}{0.1680}
& -0.0016 & 0.0680 & \multicolumn{1}{l|}{0.1936} & -0.0193 & 0.0927 & 0.2369
\\ 
& \multicolumn{1}{r|}{10,000} & -0.0001 & 0.0368 & \multicolumn{1}{l|}{0.1458
} & 0.0007 & 0.0438 & \multicolumn{1}{l|}{0.1458} & -0.0115 & 0.0634 & 0.1976
\\ 
& \multicolumn{1}{r|}{100,000} & -0.0002 & 0.0102 & \multicolumn{1}{l|}{
0.0726} & 0.0005 & 0.0120 & \multicolumn{1}{l|}{0.0688} & 0.0021 & 0.0214 & 
0.0902 \\ \hline
\end{tabular}
}
\end{center}
\par
{\footnotesize \textit{Notes:} The data generating process is %
\eqref{eq:mc_dgp}. \textit{high variance} and \textit{low variance}
parametrization are described in \eqref{eq:mc_dgp_para}. ``Baseline'',
``Categorical $x$'' and ``Categorical $u$'' refer to DGP 1 to 3 as in
Section \ref{subsec:dgp}. Generically, bias, RMSE and size are calculated by 
$R^{-1}\sum_{r=1}^R \left( \hat{ \theta}^{(r)} - \theta_0 \right)$, $\sqrt{
R^{-1}\sum_{r= 1}^R \left( \hat{\theta}^{(r)} -\theta_0 \right)^2}$, and $%
R^{-1}\sum_{r=1}^R \mathbf{1}\left[ \left\vert \hat{\theta}^{(r)} - \theta_0
\right\vert / \hat{\sigma}_{\hat{\theta}}^{(r)} > \mathrm{cv}_{0.05} \right] 
$, respectively, for true parameter $\theta_0$, its estimate $\hat{\theta}
^{(r)}$, the estimated standard error of $\hat{\theta}^{(r)}$, $\hat{\sigma}
_{\hat{\theta}}^{(r)}$, and the critical value $\mathrm{cv}_{0.05} =
\Phi^{-1}\left( 0.975 \right)$ across $R = 5,000$ replications, where $%
\Phi\left( \cdot \right)$ is the cumulative distribution function of
standard normal distribution. }
\end{table}

\begin{table}[tbp]
\caption{Bias, RMSE and size of the GMM estimator for distributional
parameters of $\protect\beta$ with $S= 6$ }
\label{tab:mc_S6}
\begin{center}
{\small \ 
\begin{tabular}{crlllllllll}
\hline
\multicolumn{2}{r|}{DGP} & \multicolumn{3}{c|}{Baseline} & 
\multicolumn{3}{c|}{Categorical $x$} & \multicolumn{3}{c}{Categorical $u$}
\\ \hline
\multicolumn{2}{c|}{Sample size $n$} & \multicolumn{1}{c}{Bias} & 
\multicolumn{1}{c}{RMSE} & \multicolumn{1}{c|}{Size} & \multicolumn{1}{c}{
Bias} & \multicolumn{1}{c}{RMSE} & \multicolumn{1}{c|}{Size} & 
\multicolumn{1}{c}{Bias} & \multicolumn{1}{c}{RMSE} & \multicolumn{1}{c}{Size
} \\ \hline
\multicolumn{11}{c}{\textit{high variance}: $\mathrm{var}\left( \beta_i
\right) = 0.25$} \\ \hline
\multirow{7}{*}{\begin{turn}{90} $\pi=0.5$ \end{turn}} & \multicolumn{1}{r|}{
100} & 0.0337 & 0.1472 & \multicolumn{1}{l|}{0.0456} & 0.0293 & 0.1645 & 
\multicolumn{1}{l|}{0.0695} & 0.0227 & 0.1498 & 0.0469 \\ 
& \multicolumn{1}{r|}{1,000} & 0.0021 & 0.1405 & \multicolumn{1}{l|}{0.2545}
& 0.0015 & 0.1469 & \multicolumn{1}{l|}{0.2543} & -0.0265 & 0.1635 & 0.2551
\\ 
& \multicolumn{1}{r|}{2,000} & 0.0008 & 0.1071 & \multicolumn{1}{l|}{0.2614}
& 0.0006 & 0.1185 & \multicolumn{1}{l|}{0.2789} & -0.0201 & 0.1281 & 0.2732
\\ 
& \multicolumn{1}{r|}{5,000} & -0.0020 & 0.0661 & \multicolumn{1}{l|}{0.2261}
& -0.0016 & 0.0765 & \multicolumn{1}{l|}{0.2518} & -0.0142 & 0.0836 & 0.2510
\\ 
& \multicolumn{1}{r|}{10,000} & -0.0005 & 0.0444 & \multicolumn{1}{l|}{0.1844
} & -0.0011 & 0.0505 & \multicolumn{1}{l|}{0.2155} & -0.0093 & 0.0587 & 
0.2323 \\ 
& \multicolumn{1}{r|}{100,000} & 0.0000 & 0.0097 & \multicolumn{1}{l|}{0.0732
} & 0.0000 & 0.0118 & \multicolumn{1}{l|}{0.0912} & -0.0020 & 0.0178 & 0.1162
\\ \hline
\multirow{7}{*}{\begin{turn}{90} $\beta_L = 1$ \end{turn}} & 
\multicolumn{1}{r|}{100} & 0.2226 & 0.4373 & \multicolumn{1}{l|}{0.3341} & 
0.2151 & 0.4658 & \multicolumn{1}{l|}{0.3237} & 0.1879 & 0.4841 & 0.2896 \\ 
& \multicolumn{1}{r|}{1,000} & 0.0721 & 0.2081 & \multicolumn{1}{l|}{0.4485}
& 0.0780 & 0.2197 & \multicolumn{1}{l|}{0.4318} & 0.0531 & 0.2283 & 0.3576
\\ 
& \multicolumn{1}{r|}{2,000} & 0.0443 & 0.1464 & \multicolumn{1}{l|}{0.4056}
& 0.0455 & 0.1609 & \multicolumn{1}{l|}{0.4157} & 0.0342 & 0.1536 & 0.3271
\\ 
& \multicolumn{1}{r|}{5,000} & 0.0175 & 0.0806 & \multicolumn{1}{l|}{0.3035}
& 0.0203 & 0.0923 & \multicolumn{1}{l|}{0.3341} & 0.0150 & 0.0933 & 0.2770
\\ 
& \multicolumn{1}{r|}{10,000} & 0.0092 & 0.0510 & \multicolumn{1}{l|}{0.2350}
& 0.0098 & 0.0594 & \multicolumn{1}{l|}{0.2723} & 0.0081 & 0.0629 & 0.2403
\\ 
& \multicolumn{1}{r|}{100,000} & 0.0010 & 0.0114 & \multicolumn{1}{l|}{0.0850
} & 0.0013 & 0.0136 & \multicolumn{1}{l|}{0.0982} & 0.0002 & 0.0186 & 0.1116
\\ \hline
\multirow{7}{*}{\begin{turn}{90} $\beta_H = 2$ \end{turn}} & 
\multicolumn{1}{r|}{100} & -0.2495 & 0.5629 & \multicolumn{1}{l|}{0.2563} & 
-0.2580 & 0.5681 & \multicolumn{1}{l|}{0.2608} & -0.2589 & 0.5782 & 0.2248
\\ 
& \multicolumn{1}{r|}{1,000} & -0.0618 & 0.2530 & \multicolumn{1}{l|}{0.4938}
& -0.0686 & 0.2733 & \multicolumn{1}{l|}{0.4867} & -0.0962 & 0.2814 & 0.4874
\\ 
& \multicolumn{1}{r|}{2,000} & -0.0334 & 0.1729 & \multicolumn{1}{l|}{0.4454}
& -0.0365 & 0.1951 & \multicolumn{1}{l|}{0.4461} & -0.0625 & 0.2017 & 0.4643
\\ 
& \multicolumn{1}{r|}{5,000} & -0.0189 & 0.1010 & \multicolumn{1}{l|}{0.3457}
& -0.0203 & 0.1178 & \multicolumn{1}{l|}{0.3638} & -0.0383 & 0.1223 & 0.3946
\\ 
& \multicolumn{1}{r|}{10,000} & -0.0080 & 0.0634 & \multicolumn{1}{l|}{0.2670
} & -0.0109 & 0.0732 & \multicolumn{1}{l|}{0.3011} & -0.0246 & 0.0830 & 
0.3347 \\ 
& \multicolumn{1}{r|}{100,000} & -0.0013 & 0.0114 & \multicolumn{1}{l|}{
0.0842} & -0.0012 & 0.0141 & \multicolumn{1}{l|}{0.1070} & -0.0043 & 0.0220
& 0.1396 \\ \hline
\multicolumn{11}{c}{\textit{low variance}: $\mathrm{var}\left( \beta_i
\right) = 0.15$} \\ \hline
\multirow{7}{*}{\begin{turn}{90} $\pi=0.3$ \end{turn}} & \multicolumn{1}{r|}{
100} & 0.2374 & 0.2757 & \multicolumn{1}{l|}{0.0591} & 0.2352 & 0.2816 & 
\multicolumn{1}{l|}{0.0829} & 0.2330 & 0.2771 & 0.0801 \\ 
& \multicolumn{1}{r|}{1,000} & 0.1071 & 0.2107 & \multicolumn{1}{l|}{0.2608}
& 0.1114 & 0.2244 & \multicolumn{1}{l|}{0.2775} & 0.0764 & 0.2158 & 0.2772
\\ 
& \multicolumn{1}{r|}{2,000} & 0.0702 & 0.1661 & \multicolumn{1}{l|}{0.2994}
& 0.0786 & 0.1815 & \multicolumn{1}{l|}{0.3258} & 0.0242 & 0.1806 & 0.3291
\\ 
& \multicolumn{1}{r|}{5,000} & 0.0452 & 0.1101 & \multicolumn{1}{l|}{0.3217}
& 0.0519 & 0.1260 & \multicolumn{1}{l|}{0.3466} & 0.0092 & 0.1263 & 0.3329
\\ 
& \multicolumn{1}{r|}{10,000} & 0.0300 & 0.0816 & \multicolumn{1}{l|}{0.3060}
& 0.0390 & 0.0933 & \multicolumn{1}{l|}{0.3389} & 0.0108 & 0.0954 & 0.3161
\\ 
& \multicolumn{1}{r|}{100,000} & 0.0018 & 0.0164 & \multicolumn{1}{l|}{0.1128
} & 0.0041 & 0.0234 & \multicolumn{1}{l|}{0.1482} & 0.0055 & 0.0298 & 0.1688
\\ \hline
\multirow{7}{*}{\begin{turn}{90} $\beta_L = 0.5$ \end{turn}} & 
\multicolumn{1}{r|}{100} & 0.4146 & 0.5479 & \multicolumn{1}{l|}{0.3137} & 
0.4191 & 0.5636 & \multicolumn{1}{l|}{0.2965} & 0.3844 & 0.5678 & 0.2532 \\ 
& \multicolumn{1}{r|}{1,000} & 0.2445 & 0.3459 & \multicolumn{1}{l|}{0.4601}
& 0.2436 & 0.3579 & \multicolumn{1}{l|}{0.4561} & 0.2080 & 0.3872 & 0.3187
\\ 
& \multicolumn{1}{r|}{2,000} & 0.1663 & 0.2539 & \multicolumn{1}{l|}{0.4809}
& 0.1684 & 0.2620 & \multicolumn{1}{l|}{0.4797} & 0.1108 & 0.2830 & 0.3203
\\ 
& \multicolumn{1}{r|}{5,000} & 0.0977 & 0.1648 & \multicolumn{1}{l|}{0.4800}
& 0.1051 & 0.1788 & \multicolumn{1}{l|}{0.4938} & 0.0590 & 0.1731 & 0.3606
\\ 
& \multicolumn{1}{r|}{10,000} & 0.0613 & 0.1182 & \multicolumn{1}{l|}{0.4230}
& 0.0730 & 0.1315 & \multicolumn{1}{l|}{0.4717} & 0.0417 & 0.1251 & 0.3667
\\ 
& \multicolumn{1}{r|}{100,000} & 0.0050 & 0.0242 & \multicolumn{1}{l|}{0.1420
} & 0.0086 & 0.0333 & \multicolumn{1}{l|}{0.1808} & 0.0101 & 0.0386 & 0.1906
\\ \hline
\multirow{7}{*}{\begin{turn}{90} $\beta_H = 1.345$ \end{turn}} & 
\multicolumn{1}{r|}{100} & -0.0817 & 0.3703 & \multicolumn{1}{l|}{0.1601} & 
-0.0883 & 0.3842 & \multicolumn{1}{l|}{0.1687} & -0.0806 & 0.4136 & 0.1614
\\ 
& \multicolumn{1}{r|}{1,000} & -0.0086 & 0.1726 & \multicolumn{1}{l|}{0.3174}
& -0.0144 & 0.1907 & \multicolumn{1}{l|}{0.3295} & -0.0560 & 0.2029 & 0.3239
\\ 
& \multicolumn{1}{r|}{2,000} & 0.0022 & 0.1194 & \multicolumn{1}{l|}{0.3267}
& 0.0029 & 0.1368 & \multicolumn{1}{l|}{0.3401} & -0.0395 & 0.1582 & 0.3736
\\ 
& \multicolumn{1}{r|}{5,000} & 0.0093 & 0.0722 & \multicolumn{1}{l|}{0.2899}
& 0.0099 & 0.0876 & \multicolumn{1}{l|}{0.3254} & -0.0189 & 0.0998 & 0.3570
\\ 
& \multicolumn{1}{r|}{10,000} & 0.0092 & 0.0535 & \multicolumn{1}{l|}{0.2642}
& 0.0117 & 0.0601 & \multicolumn{1}{l|}{0.2889} & -0.0076 & 0.0733 & 0.3141
\\ 
& \multicolumn{1}{r|}{100,000} & -0.0002 & 0.0116 & \multicolumn{1}{l|}{
0.0972} & 0.0012 & 0.0157 & \multicolumn{1}{l|}{0.1326} & 0.0019 & 0.0220 & 
0.1454 \\ \hline
\end{tabular}
}
\end{center}
\par
{\footnotesize \textit{Notes:} The data generating process is %
\eqref{eq:mc_dgp}. \textit{high variance} and \textit{low variance}
parametrization are described in \eqref{eq:mc_dgp_para}. ``Baseline'',
``Categorical $x$'' and ``Categorical $u$'' refer to DGP 1 to 3 as in
Section \ref{subsec:dgp}. Generically, bias, RMSE and size are calculated by 
$R^{-1}\sum_{r=1}^R \left( \hat{ \theta}^{(r)} - \theta_0 \right)$, $\sqrt{
R^{-1}\sum_{r= 1}^R \left( \hat{\theta}^{(r)} -\theta_0 \right)^2}$, and $%
R^{-1}\sum_{r=1}^R \mathbf{1}\left[ \left\vert \hat{\theta}^{(r)} - \theta_0
\right\vert / \hat{\sigma}_{\hat{\theta}}^{(r)} > \mathrm{cv}_{0.05} \right] 
$, respectively, for true parameter $\theta_0$, its estimate $\hat{\theta}
^{(r)}$, the estimated standard error of $\hat{\theta}^{(r)}$, $\hat{\sigma}
_{\hat{\theta}}^{(r)}$, and the critical value $\mathrm{cv}_{0.05} =
\Phi^{-1}\left( 0.975 \right)$ across $R = 5,000$ replications, where $%
\Phi\left( \cdot \right)$ is the cumulative distribution function of
standard normal distribution. }
\end{table}

\subsection{GMM Estimation of Moments of $\protect\beta_i$\label%
{subsec:mc_gmm_moments}}

With the data generating processes described in Section \ref{subsec:dgp}, we
report the bias, RMSE and size of the GMM estimator for moments of $\beta
_{i}$ in Table \ref{tab:mc_gmm_moments_S4}. The GMM estimator for moments of 
$\beta _{i}$ achieve better small sample performance as compared to those
for the distributional parameters $\pi ,\beta _{L}$ and $\beta _{H}$.

\begin{table}[tbp]
\caption{Bias, RMSE and size of the GMM estimator for moments of $\protect%
\beta$ }
\label{tab:mc_gmm_moments_S4}
\begin{center}
{\small \ 
\begin{tabular}{rrrrrrrrrrr}
\hline
\multicolumn{2}{r|}{DGP} & \multicolumn{3}{c|}{Baseline} & 
\multicolumn{3}{c|}{Categorical $x$} & \multicolumn{3}{c}{Categorical $u$}
\\ \hline
\multicolumn{2}{c|}{Sample size $n$} & \multicolumn{1}{c}{Bias} & 
\multicolumn{1}{c}{RMSE} & \multicolumn{1}{c|}{Size} & \multicolumn{1}{c}{
Bias} & \multicolumn{1}{c}{RMSE} & \multicolumn{1}{c|}{Size} & 
\multicolumn{1}{c}{Bias} & \multicolumn{1}{c}{RMSE} & \multicolumn{1}{c}{Size
} \\ \hline
\multicolumn{11}{c}{\textit{high variance}: $\mathrm{var}\left( \beta_i
\right) = 0.25$} \\ \hline
\multirow{7}{*}{\begin{turn}{90} $\mathrm{E}\left(\beta_i\right) = 1.5$
\end{turn}} & \multicolumn{1}{r|}{100} & -0.0080 & 0.2262 & 
\multicolumn{1}{r|}{0.1922} & -0.0117 & 0.2297 & \multicolumn{1}{r|}{0.1940}
& -0.0030 & 0.2418 & 0.1800 \\ 
& \multicolumn{1}{r|}{1,000} & -0.0029 & 0.0663 & \multicolumn{1}{r|}{0.0936}
& -0.0015 & 0.0673 & \multicolumn{1}{r|}{0.0848} & -0.0037 & 0.0725 & 0.0804
\\ 
& \multicolumn{1}{r|}{2,000} & -0.0012 & 0.0431 & \multicolumn{1}{r|}{0.0688}
& -0.0015 & 0.0463 & \multicolumn{1}{r|}{0.0700} & -0.0021 & 0.0494 & 0.0656
\\ 
& \multicolumn{1}{r|}{5,000} & -0.0003 & 0.0263 & \multicolumn{1}{r|}{0.0566}
& -0.0009 & 0.0276 & \multicolumn{1}{r|}{0.0588} & -0.0013 & 0.0303 & 0.0622
\\ 
& \multicolumn{1}{r|}{10,000} & 0.0004 & 0.0183 & \multicolumn{1}{r|}{0.0530}
& -0.0001 & 0.0186 & \multicolumn{1}{r|}{0.0498} & -0.0003 & 0.0206 & 0.0492
\\ 
& \multicolumn{1}{r|}{100,000} & 0.0000 & 0.0056 & \multicolumn{1}{r|}{0.0434
} & 0.0000 & 0.0058 & \multicolumn{1}{r|}{0.0472} & 0.0000 & 0.0066 & 0.0514
\\ \hline
\multirow{7}{*}{\begin{turn}{90} $\mathrm{E}\left(\beta_i^2\right) = 2.5$
\end{turn}} & \multicolumn{1}{r|}{100} & -0.0627 & 0.9082 & 
\multicolumn{1}{r|}{0.3464} & -0.0826 & 0.8821 & \multicolumn{1}{r|}{0.3166}
& -0.0629 & 0.9459 & 0.3122 \\ 
& \multicolumn{1}{r|}{1,000} & -0.0300 & 0.2909 & \multicolumn{1}{r|}{0.1518}
& -0.0275 & 0.2837 & \multicolumn{1}{r|}{0.1382} & -0.0362 & 0.3112 & 0.1512
\\ 
& \multicolumn{1}{r|}{2,000} & -0.0160 & 0.1751 & \multicolumn{1}{r|}{0.0976}
& -0.0188 & 0.1868 & \multicolumn{1}{r|}{0.1074} & -0.0255 & 0.1900 & 0.1048
\\ 
& \multicolumn{1}{r|}{5,000} & -0.0067 & 0.0916 & \multicolumn{1}{r|}{0.0658}
& -0.0090 & 0.0993 & \multicolumn{1}{r|}{0.0710} & -0.0124 & 0.1091 & 0.0754
\\ 
& \multicolumn{1}{r|}{10,000} & -0.0015 & 0.0580 & \multicolumn{1}{r|}{0.0506
} & -0.0036 & 0.0609 & \multicolumn{1}{r|}{0.0530} & -0.0061 & 0.0704 & 
0.0566 \\ 
& \multicolumn{1}{r|}{100,000} & -0.0005 & 0.0179 & \multicolumn{1}{r|}{
0.0462} & -0.0005 & 0.0185 & \multicolumn{1}{r|}{0.0498} & -0.0011 & 0.0219
& 0.0542 \\ \hline
\multirow{7}{*}{\begin{turn}{90} $\mathrm{E}\left(\beta_i^3\right) = 4.5$
\end{turn}} & \multicolumn{1}{r|}{100} & -0.2511 & 2.3755 & 
\multicolumn{1}{r|}{0.3698} & -0.2990 & 2.3416 & \multicolumn{1}{r|}{0.3424}
& -0.2940 & 2.6179 & 0.3522 \\ 
& \multicolumn{1}{r|}{1,000} & -0.1155 & 0.7641 & \multicolumn{1}{r|}{0.1734}
& -0.1092 & 0.7613 & \multicolumn{1}{r|}{0.1606} & -0.1478 & 0.8856 & 0.1904
\\ 
& \multicolumn{1}{r|}{2,000} & -0.0667 & 0.4683 & \multicolumn{1}{r|}{0.1166}
& -0.0745 & 0.5058 & \multicolumn{1}{r|}{0.1234} & -0.1066 & 0.5485 & 0.1378
\\ 
& \multicolumn{1}{r|}{5,000} & -0.0290 & 0.2475 & \multicolumn{1}{r|}{0.0800}
& -0.0365 & 0.2696 & \multicolumn{1}{r|}{0.0788} & -0.0507 & 0.3178 & 0.0942
\\ 
& \multicolumn{1}{r|}{10,000} & -0.0099 & 0.1559 & \multicolumn{1}{r|}{0.0516
} & -0.0163 & 0.1699 & \multicolumn{1}{r|}{0.0602} & -0.0282 & 0.2088 & 
0.0660 \\ 
& \multicolumn{1}{r|}{100,000} & -0.0020 & 0.0488 & \multicolumn{1}{r|}{
0.0462} & -0.0023 & 0.0515 & \multicolumn{1}{r|}{0.0526} & -0.0052 & 0.0653
& 0.0520 \\ \hline
\multicolumn{11}{c}{\textit{low variance}: $\mathrm{var}\left( \beta_i
\right) = 0.15$} \\ \hline
\multirow{7}{*}{\begin{turn}{90} $\mathrm{E}\left(\beta_i\right) = 1.0915$
\end{turn}} & \multicolumn{1}{r|}{100} & 0.0165 & 0.1943 & 
\multicolumn{1}{r|}{0.1618} & 0.0089 & 0.1983 & \multicolumn{1}{r|}{0.1514}
& 0.0169 & 0.2112 & 0.1416 \\ 
& \multicolumn{1}{r|}{1,000} & 0.0045 & 0.0577 & \multicolumn{1}{r|}{0.0800}
& 0.0042 & 0.0584 & \multicolumn{1}{r|}{0.0702} & 0.0033 & 0.0655 & 0.0734
\\ 
& \multicolumn{1}{r|}{2,000} & 0.0019 & 0.0384 & \multicolumn{1}{r|}{0.0594}
& 0.0016 & 0.0410 & \multicolumn{1}{r|}{0.0698} & 0.0010 & 0.0452 & 0.0632
\\ 
& \multicolumn{1}{r|}{5,000} & 0.0008 & 0.0243 & \multicolumn{1}{r|}{0.0562}
& 0.0003 & 0.0250 & \multicolumn{1}{r|}{0.0540} & -0.0003 & 0.0283 & 0.0574
\\ 
& \multicolumn{1}{r|}{10,000} & 0.0007 & 0.0171 & \multicolumn{1}{r|}{0.0502}
& 0.0001 & 0.0175 & \multicolumn{1}{r|}{0.0476} & 0.0000 & 0.0194 & 0.0442
\\ 
& \multicolumn{1}{r|}{100,000} & 0.0000 & 0.0052 & \multicolumn{1}{r|}{0.0430
} & 0.0000 & 0.0054 & \multicolumn{1}{r|}{0.0476} & 0.0000 & 0.0062 & 0.0472
\\ \hline
\multirow{7}{*}{\begin{turn}{90} $\mathrm{E}\left(\beta_i^2\right) = 1.3413$
\end{turn}} & \multicolumn{1}{r|}{100} & -0.0121 & 0.5119 & 
\multicolumn{1}{r|}{0.2440} & -0.0280 & 0.5095 & \multicolumn{1}{r|}{0.2330}
& -0.0236 & 0.5724 & 0.2340 \\ 
& \multicolumn{1}{r|}{1,000} & -0.0061 & 0.1528 & \multicolumn{1}{r|}{0.1232}
& -0.0084 & 0.1566 & \multicolumn{1}{r|}{0.1126} & -0.0163 & 0.1776 & 0.1246
\\ 
& \multicolumn{1}{r|}{2,000} & -0.0072 & 0.0973 & \multicolumn{1}{r|}{0.0836}
& -0.0080 & 0.1053 & \multicolumn{1}{r|}{0.0922} & -0.0143 & 0.1154 & 0.0964
\\ 
& \multicolumn{1}{r|}{5,000} & -0.0037 & 0.0565 & \multicolumn{1}{r|}{0.0658}
& -0.0044 & 0.0603 & \multicolumn{1}{r|}{0.0698} & -0.0088 & 0.0699 & 0.0720
\\ 
& \multicolumn{1}{r|}{10,000} & -0.0018 & 0.0381 & \multicolumn{1}{r|}{0.0582
} & -0.0027 & 0.0401 & \multicolumn{1}{r|}{0.0590} & -0.0054 & 0.0476 & 
0.0618 \\ 
& \multicolumn{1}{r|}{100,000} & -0.0004 & 0.0119 & \multicolumn{1}{r|}{
0.0496} & -0.0005 & 0.0125 & \multicolumn{1}{r|}{0.0538} & -0.0009 & 0.0152
& 0.0506 \\ \hline
\multirow{7}{*}{\begin{turn}{90} $\mathrm{E}\left(\beta_i^3\right) = 1.7407$
\end{turn}} & \multicolumn{1}{r|}{100} & -0.0759 & 0.9761 & 
\multicolumn{1}{r|}{0.2806} & -0.0995 & 1.0052 & \multicolumn{1}{r|}{0.2672}
& -0.1277 & 1.2814 & 0.2718 \\ 
& \multicolumn{1}{r|}{1,000} & -0.0364 & 0.2925 & \multicolumn{1}{r|}{0.1486}
& -0.0396 & 0.3112 & \multicolumn{1}{r|}{0.1456} & -0.0687 & 0.3973 & 0.1720
\\ 
& \multicolumn{1}{r|}{2,000} & -0.0297 & 0.1927 & \multicolumn{1}{r|}{0.1040}
& -0.0310 & 0.2126 & \multicolumn{1}{r|}{0.1178} & -0.0526 & 0.2650 & 0.1324
\\ 
& \multicolumn{1}{r|}{5,000} & -0.0148 & 0.1141 & \multicolumn{1}{r|}{0.0798}
& -0.0168 & 0.1252 & \multicolumn{1}{r|}{0.0860} & -0.0301 & 0.1619 & 0.0964
\\ 
& \multicolumn{1}{r|}{10,000} & -0.0078 & 0.0771 & \multicolumn{1}{r|}{0.0654
} & -0.0097 & 0.0846 & \multicolumn{1}{r|}{0.0722} & -0.0188 & 0.1126 & 
0.0828 \\ 
& \multicolumn{1}{r|}{100,000} & -0.0013 & 0.0242 & \multicolumn{1}{r|}{
0.0478} & -0.0016 & 0.0262 & \multicolumn{1}{r|}{0.0554} & -0.0031 & 0.0360
& 0.0566 \\ \hline
\end{tabular}
}
\end{center}
\par
{\footnotesize \textit{Notes:} The data generating process is %
\eqref{eq:mc_dgp}. $S = 4$ is used. \textit{high variance} and \textit{low
variance} parametrization are described in \eqref{eq:mc_dgp_para}.
``Baseline'', ``Categorical $x$'' and ``Categorical $u$'' refer to DGP 1 to
3 as in Section \ref{subsec:dgp}. Generically, bias, RMSE and size are
calculated by $R^{-1}\sum_{r=1}^R \left( \hat{ \theta}^{(r)} - \theta_0
\right)$, $\sqrt{ R^{-1}\sum_{r= 1}^R \left( \hat{\theta}^{(r)} -\theta_0
\right)^2}$, and $R^{-1}\sum_{r=1}^R \mathbf{1}\left[ \left\vert \hat{\theta}%
^{(r)} - \theta_0 \right\vert / \hat{\sigma}_{\hat{\theta}}^{(r)} > \mathrm{%
cv}_{0.05} \right] $, respectively, for true parameter $\theta_0$, its
estimate $\hat{\theta} ^{(r)}$, the estimated standard error of $\hat{\theta}%
^{(r)}$, $\hat{\sigma} _{\hat{\theta}}^{(r)}$, and the critical value $%
\mathrm{cv}_{0.05} = \Phi^{-1}\left( 0.975 \right)$ across $R = 5,000$
replications, where $\Phi\left( \cdot \right)$ is the cumulative
distribution function of standard normal distribution. }
\end{table}

\subsection{Three Estimators of $\mathrm{E} \left( \protect\beta_i \right) $ 
\label{subsec:three_estimators}}

Table \ref{tab:mc_m1_compare} compares the finite sample performance of
three estimators of $\mathrm{E} \left( \beta_i \right)$ with the data
generating processes described in Section \ref{subsec:dgp}.

\begin{itemize}
\item {\ The OLS estimator $\hat{\mathbf{\phi }}$ studied in Section \ref%
{subsec:Estimation-of-gamma}}

\item {\ The GMM estimator of $\mathrm{E}\left( \beta _{i}\right) $ with
moment conditions given by \eqref{eq:mc_limit}. }

\item {\ $\widehat{\mathrm{E}\left( \beta _{i}\right) }=\hat{\pi}\hat{\beta}%
_{L}+\left( 1-\hat{\pi}\right) \hat{\beta}_{H}$, where $\hat{\pi},\hat{\beta}%
_{L},\hat{\beta}_{H}$ are the GMM estimators of $\pi ,\beta _{L},$ and $%
\beta _{H}$. }
\end{itemize}

According to Table \ref{tab:mc_m1_compare}, three estimators perform
comparably well in terms of bias and RMSE, whereas the OLS estimator, along
with the standard error from Theorem \ref{lem:gamma_est_consistency},
controls size well when $n$ is small.

\begin{table}[tbp]
\caption{Bias, RMSE and size of three estimators for $\mathrm{E}\left( 
\protect\beta_i \right)$ }
\label{tab:mc_m1_compare}
\begin{center}
{\small \ 
\begin{tabular}{rrrrrrrrrrr}
\hline
\multicolumn{2}{r|}{DGP} & \multicolumn{3}{c|}{Baseline} & 
\multicolumn{3}{c|}{Categorical $x$} & \multicolumn{3}{c}{Categorical $u$}
\\ \hline
\multicolumn{2}{r|}{Sample size $n$} & \multicolumn{1}{c}{Bias} & 
\multicolumn{1}{c}{RMSE} & \multicolumn{1}{c|}{Size} & \multicolumn{1}{c}{
Bias} & \multicolumn{1}{c}{RMSE} & \multicolumn{1}{c|}{Size} & 
\multicolumn{1}{c}{Bias} & \multicolumn{1}{c}{RMSE} & \multicolumn{1}{c}{Size
} \\ \hline
\multicolumn{11}{c}{\textit{high variance}: $\mathrm{E}\left(\beta_i\right)
= 1.5$, $\mathrm{var}\left( \beta_i \right) = 0.25$} \\ \hline
\multirow{7}{*}{\begin{turn}{90} OLS \end{turn}} & \multicolumn{1}{r|}{100}
& -0.0024 & 0.2035 & \multicolumn{1}{r|}{0.0966} & -0.0037 & 0.2035 & 
\multicolumn{1}{r|}{0.0858} & -0.0042 & 0.2268 & 0.0920 \\ 
& \multicolumn{1}{r|}{1,000} & -0.0017 & 0.0669 & \multicolumn{1}{r|}{0.0568}
& -0.0002 & 0.0657 & \multicolumn{1}{r|}{0.0540} & -0.0019 & 0.0738 & 0.0540
\\ 
& \multicolumn{1}{r|}{2,000} & -0.0008 & 0.0463 & \multicolumn{1}{r|}{0.0512}
& -0.0015 & 0.0475 & \multicolumn{1}{r|}{0.0534} & -0.0010 & 0.0523 & 0.0522
\\ 
& \multicolumn{1}{r|}{5,000} & -0.0004 & 0.0301 & \multicolumn{1}{r|}{0.0540}
& -0.0008 & 0.0300 & \multicolumn{1}{r|}{0.0546} & -0.0007 & 0.0335 & 0.0560
\\ 
& \multicolumn{1}{r|}{10,000} & 0.0002 & 0.0214 & \multicolumn{1}{r|}{0.0508}
& 0.0000 & 0.0212 & \multicolumn{1}{r|}{0.0510} & 0.0000 & 0.0229 & 0.0456
\\ 
& \multicolumn{1}{r|}{100,000} & -0.0001 & 0.0066 & \multicolumn{1}{r|}{
0.0472} & 0.0000 & 0.0066 & \multicolumn{1}{r|}{0.0460} & 0.0000 & 0.0075 & 
0.0506 \\ \hline
\multirow{7}{*}{\begin{turn}{90} GMM \end{turn}} & \multicolumn{1}{r|}{100}
& -0.0080 & 0.2262 & \multicolumn{1}{r|}{0.1922} & -0.0117 & 0.2297 & 
\multicolumn{1}{r|}{0.1940} & -0.0030 & 0.2418 & 0.1800 \\ 
& \multicolumn{1}{r|}{1,000} & -0.0029 & 0.0663 & \multicolumn{1}{r|}{0.0936}
& -0.0015 & 0.0673 & \multicolumn{1}{r|}{0.0848} & -0.0037 & 0.0725 & 0.0804
\\ 
& \multicolumn{1}{r|}{2,000} & -0.0012 & 0.0431 & \multicolumn{1}{r|}{0.0688}
& -0.0015 & 0.0463 & \multicolumn{1}{r|}{0.0700} & -0.0021 & 0.0494 & 0.0656
\\ 
& \multicolumn{1}{r|}{5,000} & -0.0003 & 0.0263 & \multicolumn{1}{r|}{0.0566}
& -0.0009 & 0.0276 & \multicolumn{1}{r|}{0.0588} & -0.0013 & 0.0303 & 0.0622
\\ 
& \multicolumn{1}{r|}{10,000} & 0.0004 & 0.0183 & \multicolumn{1}{r|}{0.0530}
& -0.0001 & 0.0186 & \multicolumn{1}{r|}{0.0498} & -0.0003 & 0.0206 & 0.0492
\\ 
& \multicolumn{1}{r|}{100,000} & 0.0000 & 0.0056 & \multicolumn{1}{r|}{0.0434
} & 0.0000 & 0.0058 & \multicolumn{1}{r|}{0.0472} & 0.0000 & 0.0066 & 0.0514
\\ \hline
\multirow{7}{*}{\begin{turn}{90} $\hat{\pi}\hat{\beta}_L +
(1-\hat{\pi})\hat{\beta}_H$ \end{turn}} & \multicolumn{1}{r|}{100} & -0.0087
& 0.2922 & \multicolumn{1}{r|}{0.1961} & -0.1232 & 0.2347 & 
\multicolumn{1}{r|}{0.1809} & -0.0037 & 0.2947 & 0.1894 \\ 
& \multicolumn{1}{r|}{1,000} & -0.0012 & 0.0648 & \multicolumn{1}{r|}{0.0709}
& -0.0237 & 0.0783 & \multicolumn{1}{r|}{0.0665} & -0.0023 & 0.0713 & 0.0652
\\ 
& \multicolumn{1}{r|}{2,000} & -0.0004 & 0.0410 & \multicolumn{1}{r|}{0.0556}
& -0.0140 & 0.0537 & \multicolumn{1}{r|}{0.0597} & -0.0015 & 0.0479 & 0.0558
\\ 
& \multicolumn{1}{r|}{5,000} & 0.0000 & 0.0259 & \multicolumn{1}{r|}{0.0536}
& -0.0063 & 0.0296 & \multicolumn{1}{r|}{0.0546} & -0.0011 & 0.0299 & 0.0590
\\ 
& \multicolumn{1}{r|}{10,000} & 0.0004 & 0.0183 & \multicolumn{1}{r|}{0.0526}
& -0.0035 & 0.0205 & \multicolumn{1}{r|}{0.0496} & -0.0003 & 0.0205 & 0.0488
\\ 
& \multicolumn{1}{r|}{100,000} & 0.0000 & 0.0056 & \multicolumn{1}{r|}{0.0436
} & -0.0006 & 0.0062 & \multicolumn{1}{r|}{0.0472} & 0.0000 & 0.0066 & 0.0514
\\ \hline
\multicolumn{11}{c}{\textit{low variance}: $\mathrm{E}\left(\beta_i\right) =
1.0915$, $\mathrm{var}\left( \beta_i \right) = 0.15$} \\ \hline
\multirow{7}{*}{\begin{turn}{90} OLS \end{turn}} & \multicolumn{1}{r|}{100}
& -0.0006 & 0.1829 & \multicolumn{1}{r|}{0.0810} & -0.0023 & 0.1855 & 
\multicolumn{1}{r|}{0.0766} & -0.0025 & 0.2094 & 0.0828 \\ 
& \multicolumn{1}{r|}{1,000} & -0.0005 & 0.0597 & \multicolumn{1}{r|}{0.0610}
& 0.0005 & 0.0590 & \multicolumn{1}{r|}{0.0478} & -0.0006 & 0.0670 & 0.0542
\\ 
& \multicolumn{1}{r|}{2,000} & -0.0002 & 0.0408 & \multicolumn{1}{r|}{0.0516}
& -0.0007 & 0.0427 & \multicolumn{1}{r|}{0.0606} & -0.0004 & 0.0475 & 0.0544
\\ 
& \multicolumn{1}{r|}{5,000} & -0.0002 & 0.0264 & \multicolumn{1}{r|}{0.0530}
& -0.0006 & 0.0266 & \multicolumn{1}{r|}{0.0480} & -0.0005 & 0.0302 & 0.0538
\\ 
& \multicolumn{1}{r|}{10,000} & 0.0000 & 0.0189 & \multicolumn{1}{r|}{0.0546}
& -0.0002 & 0.0188 & \multicolumn{1}{r|}{0.0486} & -0.0002 & 0.0208 & 0.0482
\\ 
& \multicolumn{1}{r|}{100,000} & -0.0001 & 0.0059 & \multicolumn{1}{r|}{
0.0474} & 0.0000 & 0.0059 & \multicolumn{1}{r|}{0.0494} & 0.0000 & 0.0068 & 
0.0508 \\ \hline
\multirow{7}{*}{\begin{turn}{90} GMM \end{turn}} & \multicolumn{1}{r|}{100}
& -0.0121 & 0.5119 & \multicolumn{1}{r|}{0.2440} & -0.0280 & 0.5095 & 
\multicolumn{1}{r|}{0.2330} & -0.0236 & 0.5724 & 0.2340 \\ 
& \multicolumn{1}{r|}{1,000} & -0.0061 & 0.1528 & \multicolumn{1}{r|}{0.1232}
& -0.0084 & 0.1566 & \multicolumn{1}{r|}{0.1126} & -0.0163 & 0.1776 & 0.1246
\\ 
& \multicolumn{1}{r|}{2,000} & -0.0072 & 0.0973 & \multicolumn{1}{r|}{0.0836}
& -0.0080 & 0.1053 & \multicolumn{1}{r|}{0.0922} & -0.0143 & 0.1154 & 0.0964
\\ 
& \multicolumn{1}{r|}{5,000} & -0.0037 & 0.0565 & \multicolumn{1}{r|}{0.0658}
& -0.0044 & 0.0603 & \multicolumn{1}{r|}{0.0698} & -0.0088 & 0.0699 & 0.0720
\\ 
& \multicolumn{1}{r|}{10,000} & -0.0018 & 0.0381 & \multicolumn{1}{r|}{0.0582
} & -0.0027 & 0.0401 & \multicolumn{1}{r|}{0.0590} & -0.0054 & 0.0476 & 
0.0618 \\ 
& \multicolumn{1}{r|}{100,000} & -0.0004 & 0.0119 & \multicolumn{1}{r|}{
0.0496} & -0.0005 & 0.0125 & \multicolumn{1}{r|}{0.0538} & -0.0009 & 0.0152
& 0.0506 \\ \hline
\multirow{7}{*}{\begin{turn}{90} { $\hat{\pi}\hat{\beta}_L +
(1-\hat{\pi})\hat{\beta}_H$} \end{turn}} & \multicolumn{1}{r|}{100} & 0.0166
& 0.2392 & \multicolumn{1}{r|}{0.1496} & 0.0063 & 0.2342 & 
\multicolumn{1}{r|}{0.1412} & 0.0182 & 0.2432 & 0.1586 \\ 
& \multicolumn{1}{r|}{1,000} & 0.0078 & 0.0621 & \multicolumn{1}{r|}{0.0827}
& 0.0068 & 0.0615 & \multicolumn{1}{r|}{0.0677} & 0.0064 & 0.0674 & 0.0693
\\ 
& \multicolumn{1}{r|}{2,000} & 0.0024 & 0.0388 & \multicolumn{1}{r|}{0.0559}
& 0.0021 & 0.0414 & \multicolumn{1}{r|}{0.0672} & 0.0019 & 0.0454 & 0.0627
\\ 
& \multicolumn{1}{r|}{5,000} & 0.0009 & 0.0241 & \multicolumn{1}{r|}{0.0554}
& 0.0003 & 0.0247 & \multicolumn{1}{r|}{0.0524} & 0.0001 & 0.0282 & 0.0548
\\ 
& \multicolumn{1}{r|}{10,000} & 0.0007 & 0.0170 & \multicolumn{1}{r|}{0.0502}
& 0.0002 & 0.0174 & \multicolumn{1}{r|}{0.0478} & 0.0003 & 0.0193 & 0.0438
\\ 
& \multicolumn{1}{r|}{100,000} & 0.0000 & 0.0052 & \multicolumn{1}{r|}{0.0430
} & 0.0000 & 0.0054 & \multicolumn{1}{r|}{0.0480} & 0.0004 & 0.0063 & 0.0494
\\ \hline
\end{tabular}
}
\end{center}
\par
{\footnotesize \textit{Notes:} The data generating process is %
\eqref{eq:mc_dgp}. \textit{high variance} and \textit{low variance}
parametrization are described in \eqref{eq:mc_dgp_para}. ``Baseline'',
``Categorical $x$'' and ``Categorical $u$'' refer to DGP 1 to 3 as in
Section \ref{subsec:dgp}. Generically, bias, RMSE and size are calculated by 
$R^{-1}\sum_{r=1}^R \left( \hat{ \theta}^{(r)} - \theta_0 \right)$, $\sqrt{
R^{-1}\sum_{r= 1}^R \left( \hat{\theta}^{(r)} -\theta_0 \right)^2}$, and $%
R^{-1}\sum_{r=1}^R \mathbf{1}\left[ \left\vert \hat{\theta}^{(r)} - \theta_0
\right\vert / \hat{\sigma}_{\hat{\theta}}^{(r)} > \mathrm{cv}_{0.05} \right] 
$, respectively, for true parameter $\theta_0$, its estimate $\hat{\theta}
^{(r)}$, the estimated standard error of $\hat{\theta}^{(r)}$, $\hat{\sigma}
_{\hat{\theta}}^{(r)}$, and the critical value $\mathrm{cv}_{0.05} =
\Phi^{-1}\left( 0.975 \right)$ across $R = 5,000$ replications, where $%
\Phi\left( \cdot \right)$ is the cumulative distribution function of
standard normal distribution. }
\end{table}

\subsection{Experiments with higher $\mathrm{var}\left( \protect\beta_i
\right) $\label{subsec:mc_highvar}}

Following the data generating processes in Section \ref{subsec:dgp}, we
increase the variance of $\beta _{i}$ by considering the following two
parametrizations: 
\begin{equation}
\left( \pi ,\beta _{L},\beta _{H},\mathrm{E}\left( \beta _{i}\right) ,\func{%
var}\left( \beta _{i}\right) \right) =%
\begin{cases}
(0.3,0.5,6,4.35,6.3525), \\ 
(0.3,0.5,10,7.15,18.9525).%
\end{cases}
\label{eq:mc_para}
\end{equation}%
Table \ref{tab:mc_highvar} presents the results, which show that using
larger values of $\mathrm{var}\left( \beta _{i}\right) $ improves the small
sample performance of the GMM estimators.

\begin{table}[tbp]
\caption{Bias, RMSE and size of the GMM estimator for distributional
parameters of $\protect\beta$ }
\label{tab:mc_highvar}
\begin{center}
{\small \ 
\begin{tabular}{rrlllllllll}
\hline
\multicolumn{2}{r|}{DGP} & \multicolumn{3}{c|}{Baseline} & 
\multicolumn{3}{c|}{Categorical $x$} & \multicolumn{3}{c}{Categorical $u$}
\\ \hline
\multicolumn{2}{r|}{Sample size $n$} & \multicolumn{1}{c}{Bias} & 
\multicolumn{1}{c}{RMSE} & \multicolumn{1}{c|}{Size} & \multicolumn{1}{c}{
Bias} & \multicolumn{1}{c}{RMSE} & \multicolumn{1}{c|}{Size} & 
\multicolumn{1}{c}{Bias} & \multicolumn{1}{c}{RMSE} & \multicolumn{1}{c}{Size
} \\ \hline
\multicolumn{11}{c}{$\mathrm{var}\left( \beta_i \right) = 6.35$} \\ \hline
\multirow{7}{*}{\begin{turn}{90} $\pi=0.3$ \end{turn}} & \multicolumn{1}{r|}{
100} & 0.0755 & 0.3014 & \multicolumn{1}{l|}{0.1885} & 0.0628 & 0.2829 & 
\multicolumn{1}{l|}{0.1601} & 0.0760 & 0.2967 & 0.1795 \\ 
& \multicolumn{1}{r|}{1,000} & -0.0113 & 0.1058 & \multicolumn{1}{l|}{0.1485}
& -0.0002 & 0.0882 & \multicolumn{1}{l|}{0.1406} & -0.0092 & 0.1043 & 0.1509
\\ 
& \multicolumn{1}{r|}{2,000} & -0.0103 & 0.0646 & \multicolumn{1}{l|}{0.1025}
& -0.0016 & 0.0495 & \multicolumn{1}{l|}{0.1072} & -0.0077 & 0.0598 & 0.1104
\\ 
& \multicolumn{1}{r|}{5,000} & -0.0026 & 0.0276 & \multicolumn{1}{l|}{0.0718}
& -0.0009 & 0.0197 & \multicolumn{1}{l|}{0.0726} & -0.0021 & 0.0245 & 0.0742
\\ 
& \multicolumn{1}{r|}{10,000} & -0.0008 & 0.0095 & \multicolumn{1}{l|}{0.0576
} & -0.0005 & 0.0093 & \multicolumn{1}{l|}{0.0608} & -0.0010 & 0.0099 & 
0.0588 \\ 
& \multicolumn{1}{r|}{100,000} & -0.0002 & 0.0027 & \multicolumn{1}{l|}{
0.0490} & -0.0001 & 0.0026 & \multicolumn{1}{l|}{0.0518} & -0.0002 & 0.0028
& 0.0504 \\ \hline
\multirow{7}{*}{\begin{turn}{90} $\beta_L = 0.5$ \end{turn}} & 
\multicolumn{1}{r|}{100} & 2.7277 & 3.5109 & \multicolumn{1}{l|}{0.2385} & 
2.3640 & 3.2861 & \multicolumn{1}{l|}{0.2207} & 2.6810 & 3.4783 & 0.2292 \\ 
& \multicolumn{1}{r|}{1,000} & 0.2951 & 1.1688 & \multicolumn{1}{l|}{0.2743}
& 0.1539 & 0.9017 & \multicolumn{1}{l|}{0.2521} & 0.2473 & 1.1016 & 0.2725
\\ 
& \multicolumn{1}{r|}{2,000} & 0.0933 & 0.6394 & \multicolumn{1}{l|}{0.1916}
& 0.0460 & 0.5158 & \multicolumn{1}{l|}{0.1988} & 0.0698 & 0.5904 & 0.1951
\\ 
& \multicolumn{1}{r|}{5,000} & 0.0159 & 0.2570 & \multicolumn{1}{l|}{0.1236}
& -0.0005 & 0.1786 & \multicolumn{1}{l|}{0.1306} & 0.0066 & 0.2080 & 0.1225
\\ 
& \multicolumn{1}{r|}{10,000} & 0.0009 & 0.0607 & \multicolumn{1}{l|}{0.0884}
& -0.0005 & 0.0504 & \multicolumn{1}{l|}{0.0998} & -0.0014 & 0.0585 & 0.0830
\\ 
& \multicolumn{1}{r|}{100,000} & 0.0000 & 0.0130 & \multicolumn{1}{l|}{0.0572
} & 0.0005 & 0.0135 & \multicolumn{1}{l|}{0.0630} & -0.0003 & 0.0148 & 0.0622
\\ \hline
\multirow{7}{*}{\begin{turn}{90} $\beta_H = 6$ \end{turn}} & 
\multicolumn{1}{r|}{100} & 0.1286 & 1.1700 & \multicolumn{1}{l|}{0.0978} & 
0.0482 & 1.1467 & \multicolumn{1}{l|}{0.1057} & 0.1395 & 1.3662 & 0.0970 \\ 
& \multicolumn{1}{r|}{1,000} & 0.0031 & 0.2840 & \multicolumn{1}{l|}{0.1320}
& 0.0062 & 0.2695 & \multicolumn{1}{l|}{0.1200} & 0.0043 & 0.3197 & 0.1382
\\ 
& \multicolumn{1}{r|}{2,000} & -0.0108 & 0.1392 & \multicolumn{1}{l|}{0.0982}
& 0.0007 & 0.1552 & \multicolumn{1}{l|}{0.1094} & -0.0108 & 0.1519 & 0.1088
\\ 
& \multicolumn{1}{r|}{5,000} & -0.0041 & 0.0621 & \multicolumn{1}{l|}{0.0746}
& -0.0024 & 0.0608 & \multicolumn{1}{l|}{0.0736} & -0.0054 & 0.0652 & 0.0794
\\ 
& \multicolumn{1}{r|}{10,000} & -0.0018 & 0.0340 & \multicolumn{1}{l|}{0.0550
} & -0.0012 & 0.0347 & \multicolumn{1}{l|}{0.0678} & -0.0034 & 0.0386 & 
0.0642 \\ 
& \multicolumn{1}{r|}{100,000} & -0.0003 & 0.0109 & \multicolumn{1}{l|}{
0.0530} & 0.0001 & 0.0107 & \multicolumn{1}{l|}{0.0518} & -0.0006 & 0.0125 & 
0.0588 \\ \hline
\multicolumn{11}{c}{$\mathrm{var}\left( \beta_i \right) = 18.95$} \\ \hline
\multirow{7}{*}{\begin{turn}{90} $\pi=0.3$ \end{turn}} & \multicolumn{1}{r|}{
100} & 0.0575 & 0.2896 & \multicolumn{1}{l|}{0.1761} & 0.0530 & 0.2762 & 
\multicolumn{1}{l|}{0.1524} & 0.0554 & 0.2889 & 0.1646 \\ 
& \multicolumn{1}{r|}{1,000} & -0.0136 & 0.1070 & \multicolumn{1}{l|}{0.1217}
& -0.0025 & 0.0892 & \multicolumn{1}{l|}{0.1306} & -0.0110 & 0.1024 & 0.1369
\\ 
& \multicolumn{1}{r|}{2,000} & -0.0101 & 0.0650 & \multicolumn{1}{l|}{0.0850}
& -0.0032 & 0.0488 & \multicolumn{1}{l|}{0.0969} & -0.0077 & 0.0610 & 0.0957
\\ 
& \multicolumn{1}{r|}{5,000} & -0.0027 & 0.0291 & \multicolumn{1}{l|}{0.0668}
& -0.0010 & 0.0217 & \multicolumn{1}{l|}{0.0625} & -0.0023 & 0.0247 & 0.0713
\\ 
& \multicolumn{1}{r|}{10,000} & -0.0009 & 0.0122 & \multicolumn{1}{l|}{0.0549
} & -0.0005 & 0.0097 & \multicolumn{1}{l|}{0.0600} & -0.0009 & 0.0100 & 
0.0570 \\ 
& \multicolumn{1}{r|}{100,000} & -0.0002 & 0.0025 & \multicolumn{1}{l|}{
0.0480} & -0.0001 & 0.0024 & \multicolumn{1}{l|}{0.0514} & -0.0002 & 0.0025
& 0.0484 \\ \hline
\multirow{7}{*}{\begin{turn}{90} $\beta_L = 0.5$ \end{turn}} & 
\multicolumn{1}{r|}{100} & 4.5691 & 5.9597 & \multicolumn{1}{l|}{0.2001} & 
4.0139 & 5.6053 & \multicolumn{1}{l|}{0.1750} & 4.4575 & 5.8827 & 0.1991 \\ 
& \multicolumn{1}{r|}{1,000} & 0.5104 & 1.8908 & \multicolumn{1}{l|}{0.2327}
& 0.2907 & 1.5133 & \multicolumn{1}{l|}{0.2146} & 0.4062 & 1.7517 & 0.2522
\\ 
& \multicolumn{1}{r|}{2,000} & 0.1678 & 1.0260 & \multicolumn{1}{l|}{0.1683}
& 0.0929 & 0.8581 & \multicolumn{1}{l|}{0.1714} & 0.1178 & 0.9144 & 0.1736
\\ 
& \multicolumn{1}{r|}{5,000} & 0.0292 & 0.3901 & \multicolumn{1}{l|}{0.1069}
& 0.0073 & 0.3040 & \multicolumn{1}{l|}{0.1095} & 0.0186 & 0.3400 & 0.1036
\\ 
& \multicolumn{1}{r|}{10,000} & 0.0058 & 0.1638 & \multicolumn{1}{l|}{0.0719}
& 0.0014 & 0.0899 & \multicolumn{1}{l|}{0.0834} & 0.0000 & 0.0919 & 0.0740
\\ 
& \multicolumn{1}{r|}{100,000} & 0.0000 & 0.0171 & \multicolumn{1}{l|}{0.0572
} & 0.0006 & 0.0171 & \multicolumn{1}{l|}{0.0614} & -0.0004 & 0.0185 & 0.0576
\\ \hline
\multirow{7}{*}{\begin{turn}{90} $\beta_H = 10$ \end{turn}} & 
\multicolumn{1}{r|}{100} & 0.0520 & 1.5471 & \multicolumn{1}{l|}{0.0926} & 
-0.0530 & 1.4858 & \multicolumn{1}{l|}{0.0944} & 0.0460 & 1.6879 & 0.0888 \\ 
& \multicolumn{1}{r|}{1,000} & -0.0078 & 0.4047 & \multicolumn{1}{l|}{0.1185}
& -0.0108 & 0.4158 & \multicolumn{1}{l|}{0.1020} & -0.0100 & 0.4178 & 0.1195
\\ 
& \multicolumn{1}{r|}{2,000} & -0.0093 & 0.2058 & \multicolumn{1}{l|}{0.0936}
& -0.0005 & 0.2067 & \multicolumn{1}{l|}{0.0975} & -0.0129 & 0.2546 & 0.0933
\\ 
& \multicolumn{1}{r|}{5,000} & -0.0037 & 0.0944 & \multicolumn{1}{l|}{0.0727}
& -0.0034 & 0.0922 & \multicolumn{1}{l|}{0.0709} & -0.0052 & 0.0871 & 0.0709
\\ 
& \multicolumn{1}{r|}{10,000} & -0.0023 & 0.0512 & \multicolumn{1}{l|}{0.0555
} & -0.0010 & 0.0504 & \multicolumn{1}{l|}{0.0684} & -0.0034 & 0.0529 & 
0.0580 \\ 
& \multicolumn{1}{r|}{100,000} & -0.0005 & 0.0160 & \multicolumn{1}{l|}{
0.0522} & 0.0002 & 0.0154 & \multicolumn{1}{l|}{0.0526} & -0.0007 & 0.0171 & 
0.0560 \\ \hline
\end{tabular}
}
\end{center}
\par
{\footnotesize \textit{Notes:} The data generating process is %
\eqref{eq:mc_dgp}. Parametrization are described in \eqref{eq:mc_para}. $S=4$
is used. ``Baseline'', ``Categorical $x$'' and ``Categorical $u$'' refer to
DGP 1 to 3 as in Section \ref{subsec:dgp}. Generically, bias, RMSE and size
are calculated by $R^{-1}\sum_{r=1}^R \left( \hat{ \theta}^{(r)} - \theta_0
\right)$, $\sqrt{ R^{-1}\sum_{r= 1}^R \left( \hat{\theta}^{(r)} -\theta_0
\right)^2}$, and $R^{-1}\sum_{r=1}^R \mathbf{1}\left[ \left\vert \hat{\theta}%
^{(r)} - \theta_0 \right\vert / \hat{\sigma}_{\hat{\theta}}^{(r)} > \mathrm{%
cv}_{0.05} \right] $, respectively, for true parameter $\theta_0$, its
estimate $\hat{\theta} ^{(r)}$, the estimated standard error of $\hat{\theta}%
^{(r)}$, $\hat{\sigma} _{\hat{\theta}}^{(r)}$, and the critical value $%
\mathrm{cv}_{0.05} = \Phi^{-1}\left( 0.975 \right)$ across $R = 5,000$
replications, where $\Phi\left( \cdot \right)$ is the cumulative
distribution function of standard normal distribution. }
\end{table}

\subsection{Experiments with three categories ($K=3)$\label{subsec:mc_k3}}

\subsubsection{Data generating processes\label{subsec:dgp_supp}}

We generate $y_{i}$ as 
\begin{equation}
y_{i}=\alpha +x_{i}\beta _{i}+z_{i1}\gamma _{1}+z_{i2}\gamma _{2}+u_{i},%
\text{ for }i=1,2,...,n,  \label{eq:mc_dgp_3}
\end{equation}%
with $\beta _{i}$ distributed as in \eqref{eq:category_dist} with $K=3$, 
\begin{equation*}
\beta_i= 
\begin{cases}
\beta_L, & \text { w.p. } \pi_L \\ 
\beta_M, & \text { w.p. } \pi_M \\ 
\beta_L, & \text { w.p. } 1- \pi_L - \pi_M,%
\end{cases}%
\end{equation*}
where w.p. denotes ``with probability''. The parameters take values $\left(
\pi_L, \pi_M, \beta _{L}, \beta _{M}, \beta_{H} \right) = \left( 0.3, 0.3,
1, 2, 3 \right)$. Corresponding, the moments of $\beta_i$ are $\left( 
\mathrm{E}\left( \beta_i \right), \mathrm{E}\left( \beta_i^2 \right), 
\mathrm{E}\left( \beta_i^3 \right), \mathrm{E}\left( \beta_i^4 \right), 
\mathrm{E}\left( \beta_i^5 \right) \right) = \left( 2.1, 5.1, 13.5, 37.5,
107.1 \right) $. The remaining parameters are set as $\alpha =0.25$, and $%
\mathbf{\gamma }=\left( 1,1\right) ^{\prime }$.

We first generate $\tilde{x}_{i}\sim \text{IID}\chi ^{2}(2)$, and then set $%
x_{i}=(\tilde{x}_{i}-2)/2$ so that $x_{i}$ has $0$ mean and unit variance.
The additional regressors, $z_{ij}$, for $j=1,2$ with homogeneous slopes are
generated as 
\begin{equation*}
z_{i1}=x_{i}+v_{i1}\text{ and }z_{i2}=z_{i1}+v_{i2},
\end{equation*}
with $v_{ij}\sim \text{IID }N\left( 0,1\right) $, for $j=1,2$. The error
term, $u_{i}$, is generated as $u_{i}=\sigma _{i}\varepsilon _{i}$, where $%
\sigma _{i}^{2}$ are generated as $0.5(1+\text{IID}\chi ^{2}(1))$, and $%
\varepsilon _{i}\sim \text{IID}N(0,1)$.

\subsubsection{Results}

Table \ref{tab:mc_k3} reports the bias, RMSE and size of the GMM estimator
for distributional parameters and moments of $\beta _{i}$. The results are
based on $5,000$ replications and $S=6$. The results show that even larger
sample sizes are needed for the GMM estimators (both the moments of $\beta
_{i}$ and its distributional parameters) to achieve reasonable finite sample
performance, since higher order of moments are involved.

In additional to the results of jointly estimating distributional parameters
and moments of $\beta _{i}$ by GMM, Table \ref{tab:mc_k3_m1} reports the
results of GMM estimation of moments of $\beta _{i}$ up to order 3 using the
moment conditions as in the $K=2$ case where $S=4$ in the left panel, and
the results of OLS estimation of $\mathbf{\phi }$ in the right panel. These
results show that we are still able to obtain accurate estimation of lower
order moments of $\beta _{i}$ when the fourth and fifth moments of $\beta
_{i}$ are not used, confirming the lower information content of the higher
order moments for estimation of the lower order moments of $\beta _{i}$.

\begin{table}[tbp]
\caption{Bias, RMSE and size of the GMM estimator for distributional
parameters and moments of $\protect\beta$ with $K = 3$ }
\label{tab:mc_k3}
\begin{center}
{\small \ 
\begin{tabular}{r|r|rrr|r|rrr}
\hline
\multicolumn{1}{l|}{} &  & \multicolumn{3}{c|}{Distribution of $\beta_i$} & 
& \multicolumn{3}{c}{Moments of $\beta_i$} \\ \hline
\multicolumn{1}{l|}{Sample size $n$} & \multicolumn{1}{c|}{} & Bias & RMSE & 
Size & \multicolumn{1}{c|}{} & Bias & RMSE & Size \\ \hline
100 & \multirow{8}{*}{\begin{turn}{90} $\pi_L=0.3$ \end{turn}} & -0.0405 & 
0.1910 & 0.1319 & \multirow{8}{*}{\begin{turn}{90} $\mathrm{E}(\beta_i)=2.1$
\end{turn}} & 0.1484 & 0.7471 & 0.6451 \\ 
1,000 &  & -0.0417 & 0.1633 & 0.1915 &  & -0.0711 & 0.5415 & 0.6128 \\ 
2,000 &  & -0.0383 & 0.1474 & 0.2354 &  & -0.1112 & 0.4408 & 0.5264 \\ 
5,000 &  & -0.0299 & 0.1186 & 0.3098 &  & -0.0904 & 0.3712 & 0.4034 \\ 
10,000 &  & -0.0209 & 0.0949 & 0.3371 &  & -0.0523 & 0.2740 & 0.2910 \\ 
100,000 &  & -0.0074 & 0.0314 & 0.2295 &  & -0.0026 & 0.0400 & 0.0678 \\ 
200,000 &  & -0.0050 & 0.0208 & 0.1917 &  & -0.0004 & 0.0202 & 0.0568 \\ 
\hline
100 & \multirow{8}{*}{\begin{turn}{90} $\beta_M = 0.3$ \end{turn}} & 0.2166
& 0.2995 & 0.0492 & \multirow{8}{*}{\begin{turn}{90}
$\mathrm{E}(\beta_i^2)=5.1$ \end{turn}} & 0.2841 & 2.8452 & 0.7223 \\ 
1,000 &  & 0.1404 & 0.2378 & 0.1364 &  & -0.6374 & 1.9507 & 0.6456 \\ 
2,000 &  & 0.1035 & 0.2117 & 0.1901 &  & -0.7163 & 1.7408 & 0.5472 \\ 
5,000 &  & 0.0615 & 0.1645 & 0.2381 &  & -0.5478 & 1.4628 & 0.4472 \\ 
10,000 &  & 0.0364 & 0.1292 & 0.2477 &  & -0.3391 & 1.1394 & 0.3432 \\ 
100,000 &  & 0.0013 & 0.0322 & 0.1305 &  & -0.0209 & 0.2300 & 0.0932 \\ 
200,000 &  & 0.0006 & 0.0185 & 0.1033 &  & -0.0046 & 0.1128 & 0.0620 \\ 
\hline
100 & \multirow{8}{*}{\begin{turn}{90} $\beta_L = 1$ \end{turn}} & 0.6881 & 
1.1994 & 0.1110 & \multirow{8}{*}{\begin{turn}{90}
$\mathrm{E}(\beta_i^3)=13.5$ \end{turn}} & 0.4897 & 10.0757 & 0.7189 \\ 
1,000 &  & 0.2588 & 0.7438 & 0.1994 &  & -2.7735 & 7.0573 & 0.6718 \\ 
2,000 &  & 0.1096 & 0.5372 & 0.2607 &  & -2.9100 & 6.3988 & 0.5894 \\ 
5,000 &  & 0.0205 & 0.4184 & 0.3426 &  & -2.1889 & 5.4307 & 0.5078 \\ 
10,000 &  & 0.0070 & 0.2733 & 0.3360 &  & -1.3454 & 4.3382 & 0.4042 \\ 
100,000 &  & -0.0064 & 0.0556 & 0.2213 &  & -0.0942 & 1.0263 & 0.1132 \\ 
200,000 &  & -0.0047 & 0.0320 & 0.1775 &  & -0.0236 & 0.5035 & 0.0738 \\ 
\hline
100 & \multirow{8}{*}{\begin{turn}{90} $\beta_M = 2$ \end{turn}} & 0.1249 & 
0.7256 & 0.0642 & \multirow{8}{*}{\begin{turn}{90}
$\mathrm{E}(\beta_i^4)=37.5$ \end{turn}} & 0.9092 & 35.1538 & 0.7235 \\ 
1,000 &  & -0.1190 & 0.6298 & 0.1531 &  & -10.1071 & 24.1521 & 0.6944 \\ 
2,000 &  & -0.1935 & 0.5762 & 0.2303 &  & -10.7108 & 21.5751 & 0.6268 \\ 
5,000 &  & -0.1662 & 0.4777 & 0.3670 &  & -8.2675 & 18.7735 & 0.5464 \\ 
10,000 &  & -0.1261 & 0.3703 & 0.4414 &  & -5.5310 & 15.4382 & 0.4406 \\ 
100,000 &  & -0.0326 & 0.1175 & 0.2681 &  & -0.4433 & 3.5927 & 0.1240 \\ 
200,000 &  & -0.0193 & 0.0682 & 0.2203 &  & -0.1114 & 1.6644 & 0.0810 \\ 
\hline
100 & \multirow{8}{*}{\begin{turn}{90} $\beta_H = 3$ \end{turn}} & 0.8514 & 
3.1645 & 0.1064 & \multirow{8}{*}{\begin{turn}{90}
$\mathrm{E}(\beta_i^5)=107.1$ \end{turn}} & 2.4059 & 121.1286 & 0.6989 \\ 
1,000 &  & 1.6632 & 4.5208 & 0.3124 &  & -34.0298 & 77.5508 & 0.7012 \\ 
2,000 &  & 1.7929 & 4.6701 & 0.4000 &  & -35.4018 & 69.5876 & 0.6424 \\ 
5,000 &  & 1.3425 & 4.0152 & 0.4539 &  & -27.3828 & 60.4373 & 0.5638 \\ 
10,000 &  & 0.9637 & 3.3831 & 0.4333 &  & -18.1022 & 50.3990 & 0.4590 \\ 
100,000 &  & 0.0474 & 0.8321 & 0.2046 &  & -1.5330 & 11.7796 & 0.1314 \\ 
200,000 &  & 0.0033 & 0.3237 & 0.1573 &  & -0.4226 & 5.9529 & 0.0812 \\ 
\hline
\end{tabular}%
}
\end{center}
\par
{\footnotesize \textit{Notes:} The data generating process is %
\eqref{eq:mc_dgp_3}. Generically, bias, RMSE and size are calculated by $%
R^{-1}\sum_{r=1}^R \left( \hat{ \theta}^{(r)} - \theta_0 \right)$, $\sqrt{
R^{-1}\sum_{r= 1}^R \left( \hat{\theta}^{(r)} -\theta_0 \right)^2}$, and $%
R^{-1}\sum_{r=1}^R \mathbf{1}\left[ \left\vert \hat{\theta}^{(r)} - \theta_0
\right\vert / \hat{\sigma}_{\hat{\theta}}^{(r)} > \mathrm{cv}_{0.05} \right] 
$, respectively, for true parameter $\theta_0$, its estimate $\hat{\theta}
^{(r)}$, the estimated standard error of $\hat{\theta}^{(r)}$, $\hat{\sigma}
_{\hat{\theta}}^{(r)}$, and the critical value $\mathrm{cv}_{0.05} =
\Phi^{-1}\left( 0.975 \right)$ across $R = 5,000$ replications, where $%
\Phi\left( \cdot \right)$ is the cumulative distribution function of
standard normal distribution. }
\end{table}

\begin{table}[tbp]
\caption{Bias, RMSE and size of estimation of $\protect\phi$ and moments of $%
\protect\beta_i$ (using $S = 4$) with $K = 3$ }
\label{tab:mc_k3_m1}
\begin{center}
{\small \ 
\begin{tabular}{r|r|rrr|r|rrr}
\hline
&  & \multicolumn{3}{c|}{Moments of $\beta_i$ ($S=4$)} &  & 
\multicolumn{3}{c}{OLS Estimate $hat{\phi}i$} \\ \hline
$n$ &  & \multicolumn{1}{c}{Bias} & \multicolumn{1}{c}{RMSE} & 
\multicolumn{1}{c|}{Size} &  & \multicolumn{1}{c}{Bias} & \multicolumn{1}{c}{
RMSE} & \multicolumn{1}{c}{Size} \\ \hline
100 & \multirow{8}{*}{\begin{turn}{90} $\mathrm{E}(\beta_i)=2.1$ \end{turn}}
& 0.0025 & 0.2867 & 0.2088 & \multirow{8}{*}{\begin{turn}{90}
$\mathrm{E}(\beta_i)=2.1$ \end{turn}} & -0.0031 & 0.2768 & 0.1042 \\ 
1,000 &  & -0.0006 & 0.0821 & 0.1008 &  & -0.0008 & 0.0939 & 0.0588 \\ 
2,000 &  & 0.0004 & 0.0537 & 0.0734 &  & 0.0000 & 0.0653 & 0.0550 \\ 
5,000 &  & 0.0004 & 0.0323 & 0.0610 &  & -0.0008 & 0.0422 & 0.0506 \\ 
10,000 &  & 0.0007 & 0.0224 & 0.0572 &  & -0.0001 & 0.0299 & 0.0510 \\ 
100,000 &  & 0.0000 & 0.0069 & 0.0454 &  & -0.0001 & 0.0093 & 0.0462 \\ 
200,000 &  & 0.0000 & 0.0050 & 0.0550 &  & 0.0000 & 0.0067 & 0.0498 \\ \hline
100 & \multirow{8}{*}{\begin{turn}{90} $\mathrm{E}(\beta_i^2)=5.1$
\end{turn}} & -0.1195 & 1.8290 & 0.3948 & \multirow{8}{*}{\begin{turn}{90}
$\gamma_1 = 1$ \end{turn}} & -0.0020 & 0.1817 & 0.0604 \\ 
1,000 &  & -0.0455 & 0.5965 & 0.1602 &  & 0.0000 & 0.0581 & 0.0474 \\ 
2,000 &  & -0.0196 & 0.3454 & 0.0902 &  & 0.0001 & 0.0409 & 0.0474 \\ 
5,000 &  & -0.0073 & 0.1630 & 0.0608 &  & -0.0001 & 0.0259 & 0.0494 \\ 
10,000 &  & -0.0004 & 0.1028 & 0.0544 &  & -0.0004 & 0.0183 & 0.0518 \\ 
100,000 &  & 0.0001 & 0.0311 & 0.0488 &  & -0.0001 & 0.0058 & 0.0490 \\ 
200,000 &  & -0.0002 & 0.0217 & 0.0492 &  & -0.0001 & 0.0041 & 0.0490 \\ 
\hline
100 & \multirow{8}{*}{\begin{turn}{90} $\mathrm{E}(\beta_i^3)=13.5$
\end{turn}} & -0.7404 & 6.7772 & 0.4396 & \multirow{8}{*}{\begin{turn}{90}
$\gamma_2 = 1$ \end{turn}} & 0.0011 & 0.1296 & 0.0672 \\ 
1,000 &  & -0.3116 & 2.2732 & 0.1964 &  & 0.0000 & 0.0414 & 0.0570 \\ 
2,000 &  & -0.1433 & 1.3285 & 0.1110 &  & 0.0000 & 0.0291 & 0.0478 \\ 
5,000 &  & -0.0524 & 0.6468 & 0.0702 &  & -0.0001 & 0.0183 & 0.0506 \\ 
10,000 &  & -0.0117 & 0.4052 & 0.0568 &  & 0.0002 & 0.0130 & 0.0526 \\ 
100,000 &  & 0.0001 & 0.1236 & 0.0528 &  & 0.0001 & 0.0041 & 0.0494 \\ 
200,000 &  & -0.0009 & 0.0850 & 0.0462 &  & 0.0000 & 0.0029 & 0.0542 \\ 
\hline
\end{tabular}
}
\end{center}
\par
{\footnotesize \textit{Notes:} The data generating process is %
\eqref{eq:mc_dgp_3}. Generically, bias, RMSE and size are calculated by $%
R^{-1}\sum_{r=1}^R \left( \hat{ \theta}^{(r)} - \theta_0 \right)$, $\sqrt{
R^{-1}\sum_{r= 1}^R \left( \hat{\theta}^{(r)} -\theta_0 \right)^2}$, and $%
R^{-1}\sum_{r=1}^R \mathbf{1}\left[ \left\vert \hat{\theta}^{(r)} - \theta_0
\right\vert / \hat{\sigma}_{\hat{\theta}}^{(r)} > \mathrm{cv}_{0.05} \right] 
$, respectively, for true parameter $\theta_0$, its estimate $\hat{\theta}
^{(r)}$, the estimated standard error of $\hat{\theta}^{(r)}$, $\hat{\sigma}
_{\hat{\theta}}^{(r)}$, and the critical value $\mathrm{cv}_{0.05} =
\Phi^{-1}\left( 0.975 \right)$ across $R = 5,000$ replications, where $%
\Phi\left( \cdot \right)$ is the cumulative distribution function of
standard normal distribution. }
\end{table}

\subsection{Experiments with idiosyncratic heterogeneity\label%
{subsec:mc_heterogeneity}}

In addition to the existing results, the following Monte Carlo experiment is
designed to examine the finite sample performance of the estimator under
different degrees of idiosyncratic heterogeneity. Following DGP 1 in Section %
\ref{subsec:dgp}, we generate $\tilde{x}_{i}\sim \text{IID}\chi ^{2}(2)$,
and then set $x_{i}=(\tilde{x}_{i}-2)/2$. The additional regressors, $z_{ij}$%
, for $j=1,2$ with homogeneous slopes are generated as 
\begin{equation*}
z_{i1}=x_{i}+v_{i1}\text{ and }z_{i2}=z_{i1}+v_{i2},
\end{equation*}
with $v_{ij}\sim \text{IID }N\left( 0,1\right) $, for $j=1,2$. The error
term, $u_{i}$, is generated as 
\begin{equation*}
u_{i} = 
\begin{cases}
\sigma _{i}\varepsilon _{i} + e_i & \text{if } i=1,2,\cdots, \lfloor
n^\alpha \rfloor \\ 
\sigma _{i}\varepsilon _{i} & \text{if } i=\lfloor n^\alpha \rfloor + 1,
\cdots, n%
\end{cases}%
\end{equation*}
where $\sigma _{i}^{2}$ are generated as $0.5(1+\text{IID}\chi ^{2}(1))$, $%
\varepsilon _{i}\sim \text{IID}N(0,1)$, and $e_i$ is the idiosyncratic
heterogeneity that is generated from the standard normal distribution and
then set to be fixed across Monte Carlo replications. Then in this case we
have 
\begin{equation*}
\left|n^{-1} \sum_{i=1}^n \mathrm{E}\left(u_i^2\right) - 1\right| = \left|
n^{-1} \sum_{i=1}^{\lfloor n^\alpha \rfloor} e_i^2 \right| \leq
n^{-1}\sum_{i=1}^{\lfloor n^\alpha \rfloor} \left\vert e_i^2 \right\vert
\leq \left( \max_{1\leq i \leq \lfloor n^\alpha \rfloor} \left\vert e_i^2
\right\vert \right) n^{\alpha - 1}.
\end{equation*}
Similar arguments can be made for $r = 3$.

Following the same parametrization as in Section \ref%
{sec:Monte-Carlo-Simulation}, we consider the degree of heterogeneity $%
\alpha = 0.25$, $0.4$, and $0.5$. The estimation results are reported in
Table \ref{tab:mc_gmm_hetero}. The results are similar to that of the
Baseline DGP as reported in Table \ref{tab:mc_S4}, which suggests that the
GMM estimator is robust to limited degrees of idiosyncratic heterogeneity.

\begin{table}[tbp]
\caption{Bias, RMSE and size of the GMM estimator for distributional
parameters of $\protect\beta$}
\label{tab:mc_gmm_hetero}
\begin{center}
{\small \ 
\begin{tabular}{rrrrrrrrrrr}
\hline
\multicolumn{2}{r|}{$\alpha$} & \multicolumn{3}{c|}{0.25} & 
\multicolumn{3}{c|}{0.40} & \multicolumn{3}{c}{0.50} \\ \hline
\multicolumn{2}{r|}{Sample size $n$} & \multicolumn{1}{c}{Bias} & 
\multicolumn{1}{c}{RMSE} & \multicolumn{1}{c|}{Size} & \multicolumn{1}{c}{
Bias} & \multicolumn{1}{c}{RMSE} & \multicolumn{1}{c|}{Size} & 
\multicolumn{1}{c}{Bias} & \multicolumn{1}{c}{RMSE} & \multicolumn{1}{c}{Size
} \\ \hline
\multicolumn{11}{c}{\textit{high variance}: $\mathrm{var}\left( \beta_i
\right) = 0.25$} \\ \hline
\multirow{6}{*}{\begin{turn}{90} $\pi=0.5$ \end{turn}} & \multicolumn{1}{r|}{
100} & 0.0292 & 0.2201 & \multicolumn{1}{r|}{0.1957} & 0.0293 & 0.2177 & 
\multicolumn{1}{r|}{0.1859} & 0.0297 & 0.2160 & 0.1609 \\ 
& \multicolumn{1}{r|}{1,000} & 0.0020 & 0.1273 & \multicolumn{1}{r|}{0.1943}
& 0.0039 & 0.1293 & \multicolumn{1}{r|}{0.2047} & 0.0037 & 0.1356 & 0.2150
\\ 
& \multicolumn{1}{r|}{2,000} & 0.0014 & 0.0879 & \multicolumn{1}{r|}{0.1585}
& 0.0003 & 0.0812 & \multicolumn{1}{r|}{0.1421} & 0.0020 & 0.0851 & 0.1455
\\ 
& \multicolumn{1}{r|}{5,000} & 0.0002 & 0.0440 & \multicolumn{1}{r|}{0.0980}
& 0.0010 & 0.0457 & \multicolumn{1}{r|}{0.0982} & -0.0003 & 0.0445 & 0.0946
\\ 
& \multicolumn{1}{r|}{10,000} & -0.0007 & 0.0301 & \multicolumn{1}{r|}{0.0764
} & 0.0003 & 0.0304 & \multicolumn{1}{r|}{0.0824} & -0.0001 & 0.0311 & 0.0910
\\ 
& \multicolumn{1}{r|}{100,000} & 0.0000 & 0.0098 & \multicolumn{1}{r|}{0.0610
} & 0.0000 & 0.0097 & \multicolumn{1}{r|}{0.0536} & -0.0002 & 0.0096 & 0.0556
\\ \hline
\multirow{6}{*}{\begin{turn}{90} $\beta_L = 1$ \end{turn}} & 
\multicolumn{1}{r|}{100} & 0.2027 & 0.5686 & \multicolumn{1}{r|}{0.1807} & 
0.1993 & 0.5706 & \multicolumn{1}{r|}{0.1738} & 0.2007 & 0.5662 & 0.1712 \\ 
& \multicolumn{1}{r|}{1,000} & 0.0104 & 0.1711 & \multicolumn{1}{r|}{0.2115}
& 0.0136 & 0.1750 & \multicolumn{1}{r|}{0.2156} & 0.0079 & 0.1827 & 0.2132
\\ 
& \multicolumn{1}{r|}{2,000} & 0.0094 & 0.1121 & \multicolumn{1}{r|}{0.1741}
& 0.0069 & 0.1025 & \multicolumn{1}{r|}{0.1529} & 0.0087 & 0.1109 & 0.1593
\\ 
& \multicolumn{1}{r|}{5,000} & 0.0040 & 0.0543 & \multicolumn{1}{r|}{0.1090}
& 0.0052 & 0.0557 & \multicolumn{1}{r|}{0.1136} & 0.0050 & 0.0546 & 0.1112
\\ 
& \multicolumn{1}{r|}{10,000} & 0.0023 & 0.0365 & \multicolumn{1}{r|}{0.0856}
& 0.0024 & 0.0365 & \multicolumn{1}{r|}{0.0882} & 0.0025 & 0.0367 & 0.0922
\\ 
& \multicolumn{1}{r|}{100,000} & 0.0004 & 0.0116 & \multicolumn{1}{r|}{0.0602
} & 0.0005 & 0.0115 & \multicolumn{1}{r|}{0.0604} & 0.0004 & 0.0115 & 0.0584
\\ \hline
\multirow{6}{*}{\begin{turn}{90} $\beta_H = 2$ \end{turn}} & 
\multicolumn{1}{r|}{100} & -0.1947 & 0.5616 & \multicolumn{1}{r|}{0.1307} & 
-0.1983 & 0.5545 & \multicolumn{1}{r|}{0.1421} & -0.2094 & 0.5510 & 0.1358
\\ 
& \multicolumn{1}{r|}{1,000} & -0.0096 & 0.1720 & \multicolumn{1}{r|}{0.1682}
& -0.0078 & 0.1729 & \multicolumn{1}{r|}{0.1710} & -0.0066 & 0.1802 & 0.1751
\\ 
& \multicolumn{1}{r|}{2,000} & -0.0060 & 0.1142 & \multicolumn{1}{r|}{0.1445}
& -0.0068 & 0.1066 & \multicolumn{1}{r|}{0.1523} & -0.0070 & 0.1060 & 0.1405
\\ 
& \multicolumn{1}{r|}{5,000} & -0.0047 & 0.0530 & \multicolumn{1}{r|}{0.1130}
& -0.0037 & 0.0545 & \multicolumn{1}{r|}{0.1110} & -0.0054 & 0.0559 & 0.1088
\\ 
& \multicolumn{1}{r|}{10,000} & -0.0031 & 0.0360 & \multicolumn{1}{r|}{0.0922
} & -0.0023 & 0.0370 & \multicolumn{1}{r|}{0.0826} & -0.0024 & 0.0372 & 
0.0896 \\ 
& \multicolumn{1}{r|}{100,000} & -0.0004 & 0.0116 & \multicolumn{1}{r|}{
0.0592} & -0.0003 & 0.0115 & \multicolumn{1}{r|}{0.0546} & -0.0005 & 0.0114
& 0.0600 \\ \hline
\multicolumn{11}{c}{\textit{low variance}: $\mathrm{var}\left( \beta_i
\right) = 0.15$} \\ \hline
\multirow{6}{*}{\begin{turn}{90} $\pi=0.3$ \end{turn}} & \multicolumn{1}{r|}{
100} & 0.2132 & 0.2951 & \multicolumn{1}{r|}{0.1851} & 0.2133 & 0.2912 & 
\multicolumn{1}{r|}{0.1797} & 0.2132 & 0.2945 & 0.1716 \\ 
& \multicolumn{1}{r|}{1,000} & 0.0133 & 0.1591 & \multicolumn{1}{r|}{0.1894}
& 0.0125 & 0.1613 & \multicolumn{1}{r|}{0.1872} & 0.0163 & 0.1637 & 0.1840
\\ 
& \multicolumn{1}{r|}{2,000} & -0.0051 & 0.1103 & \multicolumn{1}{r|}{0.1619}
& -0.0055 & 0.1048 & \multicolumn{1}{r|}{0.1553} & -0.0027 & 0.1083 & 0.1559
\\ 
& \multicolumn{1}{r|}{5,000} & -0.0046 & 0.0599 & \multicolumn{1}{r|}{0.1198}
& -0.0029 & 0.0607 & \multicolumn{1}{r|}{0.1070} & -0.0046 & 0.0620 & 0.1208
\\ 
& \multicolumn{1}{r|}{10,000} & -0.0038 & 0.0418 & \multicolumn{1}{r|}{0.0900
} & -0.0023 & 0.0418 & \multicolumn{1}{r|}{0.0932} & -0.0022 & 0.0423 & 
0.0930 \\ 
& \multicolumn{1}{r|}{100,000} & -0.0003 & 0.0132 & \multicolumn{1}{r|}{
0.0622} & -0.0003 & 0.0130 & \multicolumn{1}{r|}{0.0576} & -0.0004 & 0.0127
& 0.0532 \\ \hline
\multirow{6}{*}{\begin{turn}{90} $\beta_L = 0.5$ \end{turn}} & 
\multicolumn{1}{r|}{100} & 0.3935 & 0.6293 & \multicolumn{1}{r|}{0.1959} & 
0.3900 & 0.6353 & \multicolumn{1}{r|}{0.1853} & 0.3917 & 0.6236 & 0.1811 \\ 
& \multicolumn{1}{r|}{1,000} & 0.0310 & 0.2598 & \multicolumn{1}{r|}{0.1590}
& 0.0357 & 0.2634 & \multicolumn{1}{r|}{0.1589} & 0.0298 & 0.2653 & 0.1609
\\ 
& \multicolumn{1}{r|}{2,000} & 0.0025 & 0.1590 & \multicolumn{1}{r|}{0.1539}
& 0.0004 & 0.1478 & \multicolumn{1}{r|}{0.1274} & 0.0025 & 0.1565 & 0.1459
\\ 
& \multicolumn{1}{r|}{5,000} & -0.0008 & 0.0849 & \multicolumn{1}{r|}{0.1100}
& 0.0018 & 0.0849 & \multicolumn{1}{r|}{0.1122} & 0.0003 & 0.0854 & 0.1078
\\ 
& \multicolumn{1}{r|}{10,000} & -0.0001 & 0.0586 & \multicolumn{1}{r|}{0.0922
} & 0.0004 & 0.0586 & \multicolumn{1}{r|}{0.0958} & 0.0012 & 0.0576 & 0.0918
\\ 
& \multicolumn{1}{r|}{100,000} & 0.0005 & 0.0183 & \multicolumn{1}{r|}{0.0596
} & 0.0002 & 0.0181 & \multicolumn{1}{r|}{0.0582} & 0.0003 & 0.0177 & 0.0558
\\ \hline
\multirow{6}{*}{\begin{turn}{90} $\beta_H = 1.345$ \end{turn}} & 
\multicolumn{1}{r|}{100} & -0.0463 & 0.4194 & \multicolumn{1}{r|}{0.1128} & 
-0.0509 & 0.4224 & \multicolumn{1}{r|}{0.1147} & -0.0489 & 0.4386 & 0.1239
\\ 
& \multicolumn{1}{r|}{1,000} & -0.0097 & 0.1428 & \multicolumn{1}{r|}{0.1498}
& -0.0106 & 0.1427 & \multicolumn{1}{r|}{0.1523} & -0.0094 & 0.1486 & 0.1467
\\ 
& \multicolumn{1}{r|}{2,000} & -0.0107 & 0.0920 & \multicolumn{1}{r|}{0.1443}
& -0.0106 & 0.0917 & \multicolumn{1}{r|}{0.1439} & -0.0093 & 0.0915 & 0.1389
\\ 
& \multicolumn{1}{r|}{5,000} & -0.0065 & 0.0492 & \multicolumn{1}{r|}{0.1166}
& -0.0056 & 0.0500 & \multicolumn{1}{r|}{0.1092} & -0.0063 & 0.0532 & 0.1134
\\ 
& \multicolumn{1}{r|}{10,000} & -0.0045 & 0.0345 & \multicolumn{1}{r|}{0.0910
} & -0.0037 & 0.0344 & \multicolumn{1}{r|}{0.0902} & -0.0035 & 0.0344 & 
0.0900 \\ 
& \multicolumn{1}{r|}{100,000} & -0.0006 & 0.0108 & \multicolumn{1}{r|}{
0.0602} & -0.0004 & 0.0107 & \multicolumn{1}{r|}{0.0572} & -0.0005 & 0.0105
& 0.0560 \\ \hline
\end{tabular}
}
\end{center}
\par
{\footnotesize \textit{Notes:} The data generating process is %
\eqref{eq:mc_dgp_3}. \textit{high variance} and \textit{low variance}
parametrization are described in \eqref{eq:mc_dgp_para}. $\alpha$ is the
degree of heterogeneity as in Remark \ref{heterogeniety of moments}.
Generically, bias, RMSE and size are calculated by $R^{-1}\sum_{r=1}^R
\left( \hat{ \theta}^{(r)} - \theta_0 \right)$, $\sqrt{ R^{-1}\sum_{r= 1}^R
\left( \hat{\theta}^{(r)} -\theta_0 \right)^2}$, and $R^{-1}\sum_{r=1}^R 
\mathbf{1}\left[ \left\vert \hat{\theta}^{(r)} - \theta_0 \right\vert / \hat{%
\sigma}_{\hat{\theta}}^{(r)} > \mathrm{cv}_{0.05} \right] $, respectively,
for true parameter $\theta_0$, its estimate $\hat{\theta} ^{(r)}$, the
estimated standard error of $\hat{\theta}^{(r)}$, $\hat{\sigma} _{\hat{\theta%
}}^{(r)}$, and the critical value $\mathrm{cv}_{0.05} = \Phi^{-1}\left(
0.975 \right)$ across $R = 5,000$ replications, where $\Phi\left( \cdot
\right)$ is the cumulative distribution function of standard normal
distribution.}
\end{table}

\section{Additional empirical results\label{sec:empirical_supp}}

In this section, we provide additional results for the empirical
application. In addition to the quadratic in experience in Section \ref%
{sec:Empirical-Application}, we further consider the following quartic in
experience specification, 
\begin{equation}
\log \text{wage}_{i}=\alpha +\beta _{i}\text{edu}_{i}+\rho _{1}\text{exper}%
_{i}+\rho _{2}\text{exper}_{i}^{2}+\rho _{3}\text{exper}_{i}^{3}+\rho _{4}%
\text{exper}_{i}^{4}+\tilde{\mathbf{z}}_{i}^{\prime }\tilde{\mathbf{\gamma }}%
+u_{i},  \label{eq:ccrm_spec_quar}
\end{equation}%
where 
\begin{equation*}
\beta _{i}=%
\begin{cases}
b_{L} & \text{w.p. }\pi , \\ 
b_{H} & \text{w.p. }1-\pi .%
\end{cases}%
\end{equation*}%
Table \ref{tab:beta_est_quartic} and \ref{tab:gamma_est_quartic} report the
estimates of the distributional parameters of $\beta _{i}$ and the estimates
of $\mathbf{\gamma }$ with the specification \eqref{eq:ccrm_spec_quar}.

The estimates of parameter of interests with specification %
\eqref{eq:ccrm_spec_quar} are almost the same as that with quadratic in
experience specification (6.3), reported in Table 5. The qualitative
analysis and conclusion discussed in Section 6 remain robust to adding third
and fourth order powers of exper$_{i}$ in the regressions.

\begin{table}[p!]
\caption{Estimates of the distribution of the return to education with
specification \eqref{eq:ccrm_spec_quar} across two periods, 1973 - 75 and
2001 - 03, by years of education and gender}
\label{tab:beta_est_quartic}
\begin{center}
\begin{tabular}{rrrrrrrrr}
\hline
& \multicolumn{2}{c}{High School or Less} &  & \multicolumn{2}{c}{
Postsecondary Edu.} &  & \multicolumn{2}{c}{All} \\ 
\cline{2-3}\cline{5-6}\cline{8-9}
& 1973 - 75 & 2001 - 03 &  & 1973 - 75 & 2001 - 03 &  & 1973 - 75 & 2001 - 03
\\ \hline
& \multicolumn{8}{r}{Both Male and Female} \\ \hline
$\pi$ & 0.4841 & 0.5081 &  & 0.4281 & 0.3576 &  & 0.4689 & 0.3559 \\ 
& \textit{\small (5274.3)} & \textit{\small (0.0267)} &  & \textit{\small %
(0.0495)} & \textit{\small (0.0089)} &  & \textit{\small (0.0534)} & \textit%
{\small (0.0046)} \\ 
$\beta_L$ & 0.0617 & 0.0392 &  & 0.0627 & 0.0859 &  & 0.0567 & 0.0658 \\ 
& \textit{\small (5.9252)} & \textit{\small (0.0013)} &  & \textit{\small %
(0.0035)} & \textit{\small (0.0009)} &  & \textit{\small (0.0022)} & \textit%
{\small (0.0004)} \\ 
$\beta_H$ & 0.0628 & 0.0928 &  & 0.1108 & 0.1397 &  & 0.0938 & 0.1270 \\ 
& \textit{\small (5.5919)} & \textit{\small (0.0019)} &  & \textit{\small %
(0.0031)} & \textit{\small (0.0007)} &  & \textit{\small (0.0023)} & \textit%
{\small (0.0004)} \\ 
$\beta_H / \beta_L$ & 1.0177 & 2.3645 &  & 1.7675 & 1.6267 &  & 1.6533 & 
1.9299 \\ 
& \textit{\small (7.1413)} & \textit{\small (0.0400)} &  & \textit{\small %
(0.0629)} & \textit{\small (0.0111)} &  & \textit{\small (0.0305)} & \textit%
{\small (0.0076)} \\ 
$\mathrm{E}\left(\beta_i\right)$ & 0.0623 & 0.0656 &  & 0.0902 & 0.1205 &  & 
0.0764 & 0.1053 \\ 
$\mathrm{var}\left(\beta_i\right)$ & 0.0005 & 0.0268 &  & 0.0238 & 0.0258 & 
& 0.0185 & 0.0293 \\ 
$n$ & 77,899 & 216,136 &  & 33,733 & 295,683 &  & 111,632 & 511,819 \\ \hline
& \multicolumn{8}{r}{Male} \\ \hline
$\pi$ & 0.4835 & 0.4968 &  & 0.4478 & 0.3007 &  & 0.4856 & 0.3550 \\ 
& n/a & \textit{\small (0.0394)} &  & \textit{\small (0.0676)} & \textit%
{\small (0.0095)} &  & \textit{\small (0.0936)} & \textit{\small (0.0052)}
\\ 
$\beta_L$ & 0.0648 & 0.0419 &  & 0.0520 & 0.0733 &  & 0.0553 & 0.0581 \\ 
& n/a & \textit{\small (0.0019)} &  & \textit{\small (0.0047)} & \textit%
{\small (0.0012)} &  & \textit{\small (0.0033)} & \textit{\small (0.0005)}
\\ 
$\beta_H$ & 0.0651 & 0.0927 &  & 0.0988 & 0.1321 &  & 0.0875 & 0.1220 \\ 
& n/a & \textit{\small (0.0026)} &  & \textit{\small (0.0041)} & \textit%
{\small (0.0008)} &  & \textit{\small (0.0034)} & \textit{\small (0.0005)}
\\ 
$\beta_H / \beta_L$ & 1.0048 & 2.2143 &  & 1.9002 & 1.8015 &  & 1.5816 & 
2.1003 \\ 
& n/a & \textit{\small (0.0495)} &  & \textit{\small (0.1124)} & \textit%
{\small (0.0210)} &  & \textit{\small (0.0456)} & \textit{\small (0.0124)}
\\ 
$\mathrm{E}\left(\beta_i\right)$ & 0.0649 & 0.0675 &  & 0.0778 & 0.1144 &  & 
0.0719 & 0.0993 \\ 
$\mathrm{var}\left(\beta_i\right)$ & 0.0002 & 0.0254 &  & 0.0233 & 0.0269 & 
& 0.0161 & 0.0306 \\ 
$n$ & 44,299 & 116,129 &  & 20,851 & 144,138 &  & 65,150 & 260,267 \\ \hline
& \multicolumn{8}{r}{Female} \\ \hline
$\pi$ & 0.5000 & 0.5210 &  & 0.4512 & 0.3849 &  & 0.4733 & 0.3773 \\ 
& \textit{\small (0.5611)} & \textit{\small (0.0281)} &  & \textit{\small %
(0.0739)} & \textit{\small (0.0167)} &  & \textit{\small (0.0870)} & \textit%
{\small (0.0083)} \\ 
$\beta_L$ & 0.0453 & 0.0352 &  & 0.0804 & 0.0956 &  & 0.0644 & 0.0762 \\ 
& \textit{\small (0.0143)} & \textit{\small (0.0016)} &  & \textit{\small %
(0.0050)} & \textit{\small (0.0013)} &  & \textit{\small (0.0034)} & \textit%
{\small (0.0006)} \\ 
$\beta_H$ & 0.0724 & 0.0969 &  & 0.1307 & 0.1449 &  & 0.1032 & 0.1338 \\ 
& \textit{\small (0.0169)} & \textit{\small (0.0025)} &  & \textit{\small %
(0.0052)} & \textit{\small (0.0011)} &  & \textit{\small (0.0040)} & \textit%
{\small (0.0007)} \\ 
$\beta_H / \beta_L$ & 1.5994 & 2.7540 &  & 1.6252 & 1.5154 &  & 1.6012 & 
1.7564 \\ 
& \textit{\small (0.1537)} & \textit{\small (0.0666)} &  & \textit{\small %
(0.0551)} & \textit{\small (0.0125)} &  & \textit{\small (0.0323)} & \textit%
{\small (0.0084)} \\ 
$\mathrm{E}\left(\beta_i\right)$ & 0.0588 & 0.0648 &  & 0.1080 & 0.1260 &  & 
0.0848 & 0.1121 \\ 
$\mathrm{var}\left(\beta_i\right)$ & 0.0136 & 0.0308 &  & 0.0250 & 0.0240 & 
& 0.0193 & 0.0279 \\ 
$n$ & 33,600 & 100,007 &  & 12,882 & 151,545 &  & 46,482 & 251,552 \\ \hline
\end{tabular}%
\end{center}
\par
{\footnotesize \textit{Notes:} This table reports the estimates of the
distribution of $\beta_i$ with the quartic in experience specification %
\eqref{eq:ccrm_spec_quar}, using $S = 4$ order moments of $\text{edu}_i$.
``Postsecondary Edu.'' stands for the sub-sample with years of education
higher than 12 and ``High School or Less'' stands for those with years of
education less than or equal to 12. $\mathrm{s.d.}\left(\beta_{i}\right)$
corresponds to the square root of estimated $\mathrm{var}\left(\beta_{i}
\right)$. $n$ is the sample size. ``n/a" is inserted when the estimates show
homogeneity of $\beta_i$ and $\pi$ is not identified and cannot be
estimated. }
\end{table}

\begin{table}[p!]
\caption{Estimates of $\mathbf{\protect\gamma}$ associated with control
variables $\mathbf{z}_i$ with specification \eqref{eq:ccrm_spec_quar} across
two periods, 1973 - 75 and 2001 - 03, by years of education and gender,
which complements Table \protect\ref{tab:beta_est_quartic}}
\label{tab:gamma_est_quartic}
\begin{center}
{\small \ 
\begin{tabular}{rrrrrrlrr}
\hline
& \multicolumn{2}{c}{High School or Less} & \multicolumn{1}{l}{} & 
\multicolumn{2}{c}{Postsecondary Edu.} &  & \multicolumn{2}{c}{All} \\ 
\cline{2-3}\cline{5-6}\cline{8-9}
& \multicolumn{1}{c}{1973 - 75} & \multicolumn{1}{c}{2001 - 03} & 
\multicolumn{1}{l}{} & \multicolumn{1}{c}{1973 - 75} & \multicolumn{1}{c}{
2001 - 03} &  & \multicolumn{1}{c}{1973 - 75} & \multicolumn{1}{c}{2001 - 03}
\\ \hline
& \multicolumn{8}{c}{\textit{Both male and female}} \\ \hline
\texttt{exper.} & 0.0769 & 0.0526 &  & 0.0817 & 0.0763 &  & 0.0757 & 0.0603
\\ 
& \textit{(0.0015)} & \textit{(0.0009)} &  & \textit{(0.0029)} & \textit{\
(0.0012)} &  & \textit{(0.0013)} & \textit{(0.0007)} \\ 
$\mathtt{exper.}^2$ & -0.0040 & -0.0020 &  & -0.0045 & -0.0039 &  & -0.0038
& -0.0024 \\ 
& \textit{(0.0001)} & \textit{(0.0001)} &  & \textit{(0.0003)} & \textit{\
(0.0001)} &  & \textit{(0.0001)} & \textit{(0.0001)} \\ 
$\mathtt{exper.}^3$ ($\times 10^5$) & 9.2470 & 3.4329 &  & 11.2100 & 8.9370
&  & 8.3625 & 3.6521 \\ 
& \textit{(0.4146)} & \textit{(0.2882)} &  & \textit{(1.2538)} & \textit{\
(0.4460)} &  & \textit{(0.3677)} & \textit{(0.2412)} \\ 
$\mathtt{exper.}^4$ ($\times 10^5$) & -0.0768 & -0.0236 &  & -0.1074 & 
-0.0777 &  & -0.0654 & -0.0169 \\ 
& \textit{(0.0043)} & \textit{(0.0031)} &  & \textit{(0.0158)} & \textit{\
(0.0054)} &  & \textit{(0.0039)} & \textit{(0.0027)} \\ 
\texttt{marriage} & 0.0819 & 0.0700 &  & 0.0728 & 0.0674 &  & 0.0799 & 0.0718
\\ 
& \textit{(0.0037)} & \textit{(0.0020)} &  & \textit{(0.0060)} & \textit{\
(0.0020)} &  & \textit{(0.0031)} & \textit{(0.0014)} \\ 
\texttt{nonwhite} & -0.1052 & -0.0808 &  & -0.0486 & -0.0613 &  & -0.0855 & 
-0.0719 \\ 
& \textit{(0.0046)} & \textit{(0.0024)} &  & \textit{(0.0088)} & \textit{\
(0.0025)} &  & \textit{(0.0041)} & \textit{(0.0018)} \\ 
\texttt{gender} & 0.4146 & 0.2272 &  & 0.2933 & 0.2008 &  & 0.3854 & 0.2150
\\ 
& \textit{(0.0029)} & \textit{(0.0017)} &  & \textit{(0.0049)} & \textit{\
(0.0018)} &  & \textit{(0.0025)} & \textit{(0.0013)} \\ 
$n$ & 77,899 & 216,136 &  & 33,733 & 295,683 &  & 111,632 & 511,819 \\ \hline
& \multicolumn{8}{c}{\textit{Male}} \\ \hline
\texttt{exper.} & 0.0823 & 0.0620 &  & 0.0859 & 0.0780 &  & 0.0825 & 0.0664
\\ 
& \textit{(0.0020)} & \textit{(0.0012)} &  & \textit{(0.0040)} & \textit{\
(0.0018)} &  & \textit{(0.0017)} & \textit{(0.0010)} \\ 
$\mathtt{exper.}^2$ ($\times 10^2$) & -0.0039 & -0.0024 &  & -0.0041 & 
-0.0036 &  & -0.0037 & -0.0025 \\ 
& \textit{(0.0002)} & \textit{(0.0001)} &  & \textit{(0.0004)} & \textit{\
(0.0002)} &  & \textit{(0.0001)} & \textit{(0.0001)} \\ 
$\mathtt{exper.}^3$ ($\times 10^5$) & 8.2014 & 4.3686 &  & 9.2747 & 7.3170 & 
& 7.4306 & 3.6749 \\ 
& \textit{(0.5321)} & \textit{(0.3864)} &  & \textit{(1.7422)} & \textit{\
(0.6709)} &  & \textit{(0.4700)} & \textit{(0.3241)} \\ 
$\mathtt{exper.}^4$ ($\times 10^5$) & -0.0650 & -0.0314 &  & -0.0880 & 
-0.0582 &  & -0.0552 & -0.0161 \\ 
& \textit{(0.0054)} & \textit{(0.0042)} &  & \textit{(0.0223)} & \textit{\
(0.0081)} &  & \textit{(0.0049)} & \textit{(0.0036)} \\ 
\texttt{marriage} & 0.1493 & 0.1052 &  & 0.1310 & 0.1234 &  & 0.1421 & 0.1192
\\ 
& \textit{(0.0056)} & \textit{(0.0029)} &  & \textit{(0.0088)} & \textit{\
(0.0031)} &  & \textit{(0.0048)} & \textit{(0.0021)} \\ 
\texttt{nonwhite} & -0.1362 & -0.1191 &  & -0.1214 & -0.1040 &  & -0.1309 & 
-0.1136 \\ 
& \textit{(0.0064)} & \textit{(0.0035)} &  & \textit{(0.0126)} & \textit{\
(0.0039)} &  & \textit{(0.0057)} & \textit{(0.0027)} \\ 
$n$ & 44,299 & 116,129 &  & 20,851 & 144,138 &  & 65,150 & 260,267 \\ \hline
& \multicolumn{8}{c}{\textit{Female}} \\ \hline
\texttt{exper.} & 0.0713 & 0.0455 &  & 0.0911 & 0.0782 &  & 0.0729 & 0.0568
\\ 
& \textit{(0.0022)} & \textit{(0.0013)} &  & \textit{(0.0040)} & \textit{\
(0.0016)} &  & \textit{(0.0019)} & \textit{(0.0011)} \\ 
$\mathtt{exper.}^2$ ($\times 10^2$) & -0.0044 & -0.0018 &  & -0.0067 & 
-0.0045 &  & -0.0045 & -0.0025 \\ 
& \textit{(0.0002)} & \textit{(0.0001)} &  & \textit{(0.0004)} & \textit{\
(0.0002)} &  & \textit{(0.0002)} & \textit{(0.0001)} \\ 
$\mathtt{exper.}^3$ ($\times 10^5$) & 11.0325 & 3.4767 &  & 19.6859 & 11.2858
&  & 11.3406 & 4.4944 \\ 
& \textit{(0.6649)} & \textit{(0.4360)} &  & \textit{(1.7412)} & \textit{\
(0.5915)} &  & \textit{(0.6095)} & \textit{(0.3682)} \\ 
$\mathtt{exper.}^4$ ($\times 10^5$) & -0.0974 & -0.0264 &  & -0.1979 & 
-0.1046 &  & -0.0969 & -0.0272 \\ 
& \textit{(0.0071)} & \textit{(0.0048)} &  & \textit{(0.0216)} & \textit{\
(0.0071)} &  & \textit{(0.0066)} & \textit{(0.0042)} \\ 
\texttt{marriage} & -0.0078 & 0.0278 &  & -0.0175 & 0.0168 &  & -0.0082 & 
0.0234 \\ 
& \textit{(0.0048)} & \textit{(0.0028)} &  & \textit{(0.0080)} & \textit{\
(0.0026)} &  & \textit{(0.0041)} & \textit{(0.0020)} \\ 
\texttt{nonwhite} & -0.0714 & -0.0479 &  & 0.0276 & -0.0291 &  & -0.0356 & 
-0.0375 \\ 
& \textit{(0.0065)} & \textit{(0.0033)} &  & \textit{(0.0117)} & \textit{\
(0.0033)} &  & \textit{(0.0057)} & \textit{(0.0024)} \\ 
$n$ & 33,600 & 100,007 &  & 12,882 & 151,545 &  & 46,482 & 251,552 \\ \hline
\end{tabular}
}
\end{center}
\par
{\footnotesize \textit{Notes:} This table reports the estimates of $\mathbf{%
\ \gamma}$ in \eqref{eq:ccrm_spec_quar}. ``Postsecondary Edu.'' stands for
the sub-sample with years of education higher than 12 and ``High School or
Less'' stands for those with years of education less than or equal to 12.
The standard error of estimates of coefficients associated with control
variables are estimated based on Theorem \ref{lem:gamma_est_consistency} and
reported in parentheses. $n$ is the sample size.}
\end{table}

\section{Computational algorithm\label{sec:computation}}

In this section, we describe the computational procedure used for estimation
of $\mathbf{\gamma }$, moments of $\beta _{i}$, and distributional
parameters of $\beta _{i}$.

\begin{enumerate}
\item {\ Denote $\mathbf{w}_{i}=\left( x_{i},\mathbf{z}_{i}^{\prime }\right)
^{\prime }$. Compute the OLS estimator 
\begin{equation*}
\left( \widehat{\mathrm{E}\left( \beta _{i}\right) }^{(0)},\widehat{\mathbf{%
\gamma }}^{\prime }\right) ^{\prime }=\left( \frac{1}{n}\sum_{i=1}^{n}%
\mathbf{w}_{i}\mathbf{w}_{i}^{\prime }\right) ^{-1}\left( \frac{1}{n}%
\sum_{i=1}^{n}\mathbf{w}_{i}^{\prime }y_{i}\right) ,
\end{equation*}%
and $\hat{\tilde{y}}_{i}=y_{i}-\mathbf{z}_{i}^{\prime }\widehat{\gamma }$.}

\item {\ For $r=2,3,\cdots ,2K-1$, compute the sample version of the moment
conditions \eqref{eq:mc_limit_r} and \eqref{eq:mc_limit_2r} in the main
paper by replacing $\rho _{r,s}$ by $n^{-1}\sum_{i=1}^{n}\hat{\tilde{y}}%
_{i}^{r}x_{i}^{s}$, and solving for $\widehat{\mathrm{E}\left( \beta
_{i}^{r}\right) }^{(0)}$ and $\widehat{\sigma _{r}}^{(0)}$, recursively. }

\item {\ Use the initial estimates $\left\{ \widehat{\mathrm{E}\left( \beta
_{i}^{r}\right) }^{(0)}\right\} _{r=1}^{2K-1}$ and $\left\{ \widehat{\sigma
_{r}}^{(0)}\right\} _{r=2}^{2K-1}$ to construct the weighting matrix $\hat{%
\mathbf{A}}_{n}$ in \eqref{eq:efficient_weight_mat} and compute the GMM
estimators $\left\{ \widehat{\mathrm{E}\left( \beta _{i}^{r}\right) }%
^{(1)}\right\} _{r=1}^{2K-1}$ and $\left\{ \widehat{\sigma _{r}}%
^{(1)}\right\} _{r=2}^{2K-1}$ to compute the moments of $\beta _{i}$ and $%
\sigma _{r}$. Iterate the GMM estimation one more time with $\left\{ 
\widehat{\mathrm{E}\left( \beta _{i}^{r}\right) }^{(1)}\right\}
_{r=1}^{2K-1} $ and $\left\{ \widehat{\sigma _{r}}^{(1)}\right\}
_{r=2}^{2K-1}$ as initial estimates to obtain $\left\{ \widehat{\mathrm{E}%
\left( \beta _{i}^{r}\right) }\right\} _{r=1}^{2K-1}$ and $\left\{ \widehat{%
\sigma _{r}}\right\} _{r=2}^{2K-1}$. }

\item {\ Solve 
\begin{equation*}
\min_{\pi _{k},b_{k}}\left\{ \sum_{j=1}^{r}\left( \sum_{k=1}^{K}\pi
_{k}b_{k}^{r}-\widehat{\mathrm{E}\left( \beta _{i}^{r}\right) }\right)
^{2}\right\}
\end{equation*}%
to get the initial estimates, $\widehat{\mathbf{\theta }}^{(0)}=\left( 
\widehat{\mathbf{\pi }}^{(0)\prime },\widehat{\mathbf{b}}^{(0)\prime
}\right) ^{\prime }$. }

\item {\ Using $\widehat{\mathbf{\theta }}^{(0)}=\left( \widehat{\mathbf{\pi 
}}^{(0)\prime },\widehat{\mathbf{b}}^{(0)\prime }\right) ^{\prime }$
construct the weighting matrix $\hat{\mathbf{A}}_{n}$ and compute the GMM
estimator as $\widehat{\mathbf{\theta }}^{(1)}=\left( \widehat{\mathbf{\pi }}%
^{(1)\prime },\widehat{\mathbf{b}}^{(1)\prime }\right) ^{\prime }$ for $%
\mathbf{\theta }$. Iterate the GMM estimation one more time with $\widehat{%
\mathbf{\theta }}^{(1)}=\left( \widehat{\mathbf{\pi }}^{(1)\prime },\widehat{%
\mathbf{b}}^{(1)\prime }\right) ^{\prime }$ as initial estimates to obtain $%
\widehat{\mathbf{\theta }}=\left( \widehat{\mathbf{\pi }}^{\prime },\widehat{%
\mathbf{b}}^{\prime }\right) ^{\prime }$. In the setup of the optimization
problem for the optimization solver, imposing the constraint $%
b_{1}<b_{2}<\cdots <b_{K}$ is important to improve the numerical
performance, particularly when }$n$ is not sufficiently large (less than $%
5,000$).
\end{enumerate}

\end{document}